\documentclass[11pt]{article}


\usepackage{tikz}
\usetikzlibrary{backgrounds,positioning,fit,patterns,shadows,calc}

\usepackage{epsfig}
\usepackage{amsfonts}
\usepackage{amssymb}
\usepackage{amstext}
\usepackage{amsmath}

\usepackage{calligra}
\usepackage[T1]{fontenc}

\usepackage{amsthm, thmtools}

\usepackage{paralist}

\usepackage{nameref}
\definecolor{ForestGreen}{rgb}{0.1333,0.5451,0.1333}
\definecolor{DarkRed}{rgb}{0.8,0,0}
\definecolor{Red}{rgb}{1,0,0}
\usepackage[linktocpage=true,
pagebackref=true,colorlinks,
linkcolor=DarkRed,citecolor=ForestGreen,
bookmarks,bookmarksopen,bookmarksnumbered]
{hyperref}
\renewcommand*\backref[1]{\ifx#1\relax \else (cit. on p. #1) \fi} 

\usepackage{cleveref}

\usepackage{thm-restate} 


\usepackage{xspace}
\usepackage{color}
\usepackage{enumitem}
\usepackage{comment}
\usepackage{caption}
\usepackage{subcaption}

\usepackage[noend]{algorithmic}
\usepackage[section,boxed]{algorithm}

\usepackage{array}
\usepackage{multirow}
\usepackage{tablefootnote}

\usepackage{wrapfig}

\providecommand{\tabularnewline}{\\}


\makeatletter
\def\thmt@refnamewithcomma #1#2#3,#4,#5\@nil{%
	\@xa\def\csname\thmt@envname #1utorefname\endcsname{#3}%
	\ifcsname #2refname\endcsname
	\csname #2refname\expandafter\endcsname\expandafter{\thmt@envname}{#3}{#4}%
	\fi
}
\makeatother

\declaretheorem[numberwithin=section,refname={Theorem,Theorems},Refname={Theorem,Theorems},name={Theorem}]{thm}
\declaretheorem[numberlike=thm,refname={Lemma,Lemmas},Refname={Lemma,Lemmas},name={Lemma}]{lem}
\declaretheorem[numberlike=thm,refname={Corollary,Corollaries},Refname={Corollary,Corollaries},name={Corollary}]{cor}
\declaretheorem[numberlike=thm,refname={Fact,Facts},Refname={Fact,Facts},name={Fact}]{fact}

\declaretheorem[numberlike=thm,refname={Proposition,Propositions},Refname={Proposition,Propositions},name={Proposition}]{prop}

\declaretheorem[numberlike=thm,refname={Definition,Definitions},Refname={Definition,Definitions},name={Definition}]{defn}
\declaretheorem[style=remark,numberlike=thm,refname={Remark,Remarks},Refname={Remark,Remarks},name={Remark}]{rem}
\declaretheorem[style=remark,numberlike=thm,refname={Claim,Claims},Refname={Claim,Claims}]{claim}

%
%
%
%






\newcommand{\squishlist}{
	\begin{list}{$\bullet$}
		{ \setlength{\itemsep}{0pt}
			\setlength{\parsep}{2pt}
			\setlength{\topsep}{2pt}
			\setlength{\partopsep}{0pt}
			\setlength{\leftmargin}{1.5em}
			\setlength{\labelwidth}{1em}
			\setlength{\labelsep}{0.5em} } }
	\newcommand{\squishend}{
\end{list}  }

\def\*#1*\ {}




\ifdefined\ShowComment

\def\danupon#1{\marginpar{$\leftarrow$\fbox{D}}\footnote{$\Rightarrow$~{\sffamily #1 --Danupon}}}
\def\thatchaphol#1{\marginpar{$\leftarrow$\fbox{T}}\footnote{$\Rightarrow$~{\sffamily #1 --Thatchaphol}}}

\else

\def\danupon#1{}
\def\thatchaphol#1{}

\fi

\newboolean{short}
\setboolean{short}{true} 

\newcommand{\shortOnly}[1]{\ifthenelse{\boolean{short}}{#1}{}}
\newcommand{\longOnly}[1]{\ifthenelse{\boolean{short}}{}{#1}}





\usepackage[left=1in,top=1in,right=1in,bottom=1in]{geometry} 
\usepackage{booktabs}
\usepackage{threeparttable}



\makeatletter
\renewcommand{\paragraph}{%
	\@startsection{paragraph}{4}%
	{\z@}{1ex \@plus 1ex \@minus .2ex}{-1em}%
	{\normalfont\normalsize\bfseries}%
}
\makeatother




\newcommand{\patrascu}{P{\v a}tra{\c s}cu\xspace}


\AtBeginDocument{%
	\let\ref\Cref
}

\usepackage{authblk}
\author[1]{Danupon Nanongkai}
\author[1]{Thatchaphol Saranurak}
\author[2]{Christian Wulff-Nilsen}
\affil[1]{KTH Royal Institute of Technology, Sweden}
\affil[2]{University of Copenhagen, Denmark}

\begin{document}
	\global\long\def\cA{{\cal A}\xspace}
	\global\long\def\cB{{\cal B}\xspace}
	\global\long\def\cC{{\cal C}\xspace}
	\global\long\def\cD{{\cal D}\xspace}
	\global\long\def\cE{{\cal E}\xspace}
	\global\long\def\opt{\textsf{OPT}\xspace}
	\global\long\def\decomp{\textsf{decomp}\xspace}
	\global\long\def\msfdecomp{\textsf{MSFdecomp}\xspace}	
	\global\long\def\cP{{\cal P}\xspace}
	\global\long\def\cG{{\cal G}\xspace}
	\global\long\def\sf{\textsf{SF}\xspace}
	\global\long\def\msf{\textsf{MSF}\xspace}
	\global\long\def\cA{{\cal A}\xspace}
	\global\long\def\cB{{\cal B}\xspace}
	\global\long\def\cC{{\cal C}\xspace}
	\global\long\def\cT{{\cal T}\xspace}
	\global\long\def\cS{{\cal S}\xspace}
	\global\long\def\cQ{{\cal Q}\xspace}
	\global\long\def\cH{{\cal H}\xspace}
	\global\long\def\cR{{\cal R}\xspace}
	\global\long\def\cF{{\cal F}\xspace}
	\global\long\def\cL{{\cal L}\xspace}
	\global\long\def\cM{{\cal M}\xspace}
	\global\long\def\cN{{\cal N}\xspace}
	\global\long\def\rtensor{\otimes_{R}}
	\global\long\def\one{\mathbf{1}}
	\global\long\def\cO{{\cal O}\xspace}	
	\global\long\def\polylog{\mbox{polylog}}
	\global\long\def\poly{\mbox{poly}}
	\global\long\def\total#1{|#1(\cdot)|}
	\global\long\def\vol#1{vol(#1)}
	\global\long\def\contract{\textsf{Contract}}
	\global\long\def\disjunion{\textsf{\ensuremath{\dot{\cup}}}}
	\global\long\def\Disjunion{\textsf{\ensuremath{\dot{\bigcup}}}}
	\global\long\def\pruning{\textsf{\textsf{Pruning}}}
	\global\long\def\incident{\overline{E}}
	\global\long\def\largetext{\textnormal{large}}
	\global\long\def\smalltext{\textnormal{small}}

	\newcommand*\samethanks[1][\value{footnote}]{\footnotemark[#1]}
	
\title{Dynamic Minimum Spanning Forest with Subpolynomial Worst-case Update Time}

%

	\date{}
	\pagenumbering{roman}
	\maketitle

\begin{abstract}
We present a Las Vegas algorithm for dynamically maintaining a minimum	spanning forest of an $n$-node graph undergoing edge insertions and
deletions. Our algorithm guarantees an \emph{$O(n^{o(1)})$} \emph{worst-case} update time with high probability. This significantly improves the	two recent Las Vegas algorithms by Wulff-Nilsen \cite{Wulff-Nilsen16a} with update time $O(n^{0.5-\epsilon})$ for some constant $\epsilon>0$ and, independently, by Nanongkai and Saranurak \cite{NanongkaiS16} with update time $O(n^{0.494})$ (the latter works only for maintaining a spanning forest).
	
Our result is obtained by identifying the common framework that
	both two previous algorithms rely on, and then improve and combine
	the ideas from both works. There are two main algorithmic components
	of the framework that are newly improved and critical for obtaining
	our result. First, we improve the update time from $O(n^{0.5-\epsilon})$ in \cite{Wulff-Nilsen16a}
	to $O(n^{o(1)})$ for decrementally removing all low-conductance cuts in an expander undergoing edge deletions. 
	Second, by revisiting the ``contraction technique'' by Henzinger and King \cite{HenzingerK97b} and Holm et al. \cite{HolmLT01},
	we show a new approach for maintaining a minimum spanning
	forest in connected graphs with very few (at most $(1+o(1))n$) edges. This significantly improves the previous approach in \cite{Wulff-Nilsen16a,NanongkaiS16} which is based on Frederickson's 2-dimensional topology tree \cite{Frederickson85} and illustrates a new application to this old technique.
\end{abstract}

	\pagebreak{}
	
	\tableofcontents{}
	
	\pagebreak{}
	\pagenumbering{arabic}

\section{Introduction}

In the \emph{dynamic minimum spanning forest ($\msf$)} problem, we
want to maintain a minimum spanning forest $F$ of an undirected edge-weighted
graph $G$ undergoing edge insertions and deletions. In particular,
we want to construct an algorithm that supports the following
operations.
\begin{itemize}[noitemsep]
\item {\sc Preprocess($G$)}: Initialize the algorithm with an input
graph $G$. After this operation, the algorithm outputs a minimum
spanning forest $F$ of $G$. 
\item {\sc Insert($u, v, w$)}: Insert edge $(u,v)$ of weight $w$ to $G$. After this operation,
the algorithm outputs changes to $F$ (i.e. edges to be added to or removed from $F$), if any.
\item {\sc Delete($u, v$)}: Delete edge $(u,v)$ from $G$. After this
operation, the algorithm outputs changes to $F$, if any.
\end{itemize}
The goal is to minimize the \emph{update time}, i.e., the time needed
for outputting the changes to $F$ given each edge update. We call
an algorithm for this problem a \emph{dynamic $\msf$ algorithm}.
Below, we denote respectively by $n$ and $m$ the upper bounds of
the numbers of nodes and edges of $G$, and use $\tilde{O}$ to hide
$\polylog(n)$ factors.


The dynamic $\msf$ problem is one of the most fundamental dynamic
graph problems.
Its solutions have been used as a main subroutine for several static and dynamic
graph algorithms, such as tree packing value and edge connectivity approximation~\cite{ThorupK00}, dynamic $k$-connectivity certificate \cite{EppsteinGIN97},
dynamic minimum cut \cite{Thorup07mincut} and dynamic cut sparsifier
\cite{AbrahamDKKP16}.
%
%
More importantly, this problem together with its weaker
variants -- \emph{dynamic connectivity} and \emph{dynamic spanning forest ($\sf$)}%
\footnote{The \emph{dynamic $\sf$} problem is the same as the dynamic \msf problem but
	we only need to maintain some spanning forest of the graph.
	In the \emph{dynamic connectivity} problem, we need not to explicitly
maintain a spanning forest. We only need to answer the query,
given any nodes $u$ and $v$, whether $u$ and $v$ are connected
in the graph.}
-- have played a central role in the development in the area of dynamic
graph algorithms for more than three decades. 
The first dynamic $\msf$
algorithm dates back to Frederickson's algorithm from 1985 \cite{Frederickson85},
which provides an $O(\sqrt{m})$ update time. This bound,
combined with the general sparsification technique of Eppstein~et~al. from 1992 \cite{EppsteinGIN97}, implies an $O(\sqrt{n})$ 
update time.


Before explaining progresses after the above, it is important to note that the update time can be categorized into two types: An update time that holds for every single update is called {\em worst-case update time}. This is to contrast with an {\em amortized update time} which holds ``on average''\footnote{In particular, for any $t$, an algorithm is said to have an amortized update time
of $t$ if, for any $k$, the total time it spends to process the
first $k$ updates (edge insertions/deletions) is at most $kt$. Thus,
roughly speaking an algorithm with a small amortized update time is
fast \textquotedblleft on average\textquotedblright{} but may take
a long time to respond to a single update.}. 
Intuitively, worst-case update time bounds are generally more
preferable since in 
some applications, such as real-time systems, hard guarantees are needed to process a request before the next request arrives.
The $O(\sqrt{n})$ bound of Frederickson and  Eppstein~et~al.~\cite{Frederickson85,EppsteinGIN97} holds in the worst case. 
By allowing the update time to be \emph{amortized}, 
this bound was significantly improved: Henzinger and King \cite{HenzingerK99}
in 1995 showed Las Vegas randomized algorithms with $O(\log^{3}n)$ amortized update
time for the dynamic $\sf$.
%
The same authors
\cite{HenzingerK97} in 1997 provided an $O(\sqrt[3]{n}\log n)$ amortized
update time for the more general case of dynamic $\msf$.
Finally,
Holm et al. \cite{HolmLT01} in 1998 presented deterministic dynamic
$\sf$ and $\msf$ algorithms with $O(\log^{2}n)$ and $O(\log^{4}n)$
amortized update time respectively. 
Thus by the new millennium we already knew that, with amortization,
the dynamic $\msf$ problem admits an algorithm with polylogarithmic update
time.
In the following decade, this result has been refined in many
ways, including faster dynamic $\sf$ algorithms (see, e.g. \cite{HenzingerT97,Thorup00,HuangHKP-SODA17}
for randomized ones and \cite{Wulff-Nilsen13a} for a deterministic
one), a faster dynamic $\msf$ algorithm~\cite{HolmRW15}, and an $\Omega(\log n)$
lower bound for both problems \cite{PatrascuD06}.

Given that these problems were fairly well-understood from the perspective
of amortized update time, many researchers have turned their attention
back to the worst-case update time in a quest to reduce gaps between amortized and worst-case update time 
(one sign of this trend is the 2007 work of \patrascu and Thorup \cite{PatrascuT07}).
%
%
This quest was not limited to dynamic \msf and its variants (e.g. \cite{Sankowski04,Thorup05,AbrahamCK17,BodwinK16,BhattacharyaHN17soda}), 
but overall the progress was still limited and it has become a big technical challenge whether one can close the gaps.
%
%
%
In the context of dynamic $\msf$, the $O(\sqrt{n})$ worst-case
update time of \cite{Frederickson85,EppsteinGIN97} has remained the
best for decades until the breakthrough in 2013 by Kapron, King and
Mountjoy \cite{KapronKM13} who showed a Monte Carlo randomized algorithm
with polylogarithmic worst-case bound for the dynamic connectivity
problem (the bound was originally $O(\log^{5}n)$ in \cite{KapronKM13}
and was later improved to $O(\log^{4}n)$ in \cite{GibbKKT15}). Unfortunately,
the algorithmic approach in \cite{KapronKM13,GibbKKT15} seems insufficient for harder problems like dynamic $\sf$
and $\msf$\footnote{Note that the algorithms in \cite{KapronKM13,GibbKKT15} actually maintain
	a spanning forest; however, they cannot output such forest. In particular,  \cite{KapronKM13,GibbKKT15} assume the so-called \emph{oblivious adversary}.
	Thus,  \cite{KapronKM13,GibbKKT15}  do not solve dynamic \sf as we define here, as we require algorithms to report how the spanning forest changes.  
	See further discussions on the oblivious
	adversary in \cite{NanongkaiS16}.},
and the $O(\sqrt{n})$ barrier remained unbroken for both problems.

It was only very recently that the \emph{polynomial improvement} to the $O(\sqrt{n})$ worst-case update time
bound was presented \cite{Wulff-Nilsen16a,NanongkaiS16}\footnote{Prior to this, Kejlberg-Rasmussen et al. \cite{Kejlberg-Rasmussen16}
improved the bound slightly to $O(\sqrt{n(\log\log n)^{2}/\log n})$ for dynamic $\sf$ using word-parallelism. Their algorithm is deterministic.}.
%
%
Wulff-Nilsen \cite{Wulff-Nilsen16a} showed a
Las Vegas algorithm with $O(n^{0.5-\epsilon})$ update time for some
constant $\epsilon>0$ for the dynamic $\msf$ problem. 
Independently, Nanongkai and Saranurak \cite{NanongkaiS16} presented
two dynamic $\sf$ algorithms: one is Monte Carlo with $O(n^{0.4+o(1)})$
update time and another is Las Vegas with $O(n^{0.49306})$ update
time. 
Nevertheless,
the large gap between polylogarithmic amortized update time and the best
worst-case update time remains.

\paragraph{Our Result.}

We significantly reduce the gap by showing the dynamic $\msf$ algorithm
with \emph{subpolynomial} ($O(n^{o(1)})$) update time:



\begin{thm}
\label{thm:main intro} There is a Las Vegas randomized dynamic
$\msf$ algorithm on an $n$-node graph that can answer each update in 
$O(n^{o(1)})$ time both in expectation and with high probability.  
\end{thm}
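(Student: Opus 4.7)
The plan is to set up a common framework that subsumes the two earlier subpolynomial-exponent algorithms of \cite{Wulff-Nilsen16a,NanongkaiS16} and to improve each of their two shared subroutines to run in $n^{o(1)}$ time; composing the improved subroutines then yields the theorem. The overall reduction is the following: maintain an expander decomposition of the underlying graph $G$, that is, a partition of $V(G)$ whose pieces induce expanders of conductance $\phi = 1/n^{o(1)}$ and whose inter-cluster edge set $X$ has size at most $(1+o(1))n$. Observing that the MSF of $G$ can be reconstructed from (i) the MSFs of each expander piece together with (ii) the MSF of a ``core'' graph whose edge set is $X$ plus the piece-MSFs, it suffices to maintain each expander piece under updates and to maintain the MSF of a graph with only $(1+o(1))n$ edges.

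For (i), I would reduce the fully dynamic case to the decremental case by batch-rebuilding the expander decomposition once every roughly $n^{1-o(1)}$ updates and charging the rebuild cost to the intervening operations -- the same trick used in both \cite{Wulff-Nilsen16a,NanongkaiS16}. Within a rebuild phase each expander piece sees only deletions, and the algorithm runs the new decremental expander pruning routine: after each deletion, either certify that the conductance remains $\Omega(\phi)$ or cheaply carve off a low-conductance subset and migrate its incident edges into $X$. Replacing the $O(n^{0.5-\epsilon})$ pruning of \cite{Wulff-Nilsen16a} with an $n^{o(1)}$-time version is the first main algorithmic component: the idea is to apply expander decomposition recursively inside the pruning procedure itself, so that local cut searches (implemented by randomized local-clustering primitives that supply the ``with high probability'' guarantee) are confined to small pieces at each level, while the recursion depth is a slowly growing function of $n$.

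For (ii), because the core has only $(1+o(1))n$ edges it is essentially a forest plus $k = o(n)$ non-tree edges. Revisiting the contraction technique of \cite{HenzingerK97b,HolmLT01}, I would maintain an auxiliary graph obtained by contracting every maximal path of degree-$2$ MSF-tree vertices into a single super-edge whose weight equals the maximum edge weight along the path. This auxiliary graph has $O(k) = o(n)$ nodes and edges, so the entire dynamic MSF algorithm can be invoked recursively on it with input size strictly smaller than $n$. After appropriate top-tree bookkeeping, a single update in $G$ translates to $O(1)$ updates on the auxiliary graph, yielding a recurrence of the form $T(n) \le T(o(n)) + n^{o(1)}$, which solves to $T(n) = n^{o(1)}$.

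The main obstacle will be the decremental expander pruning. The challenge is to produce, after every single adversarial deletion and within subpolynomial worst-case time, either an expander certificate or a valid low-conductance cut, without amortizing across updates. The plan is to maintain a hierarchy of expander decompositions at geometrically decreasing conductance scales so that local cut searches need explore only a small neighborhood at each level, and to combine this with a potential argument tying the total pruned volume to the adversary's deletion budget, so that over an entire rebuild phase only $o(n)$ edges ever migrate into $X$ -- which is precisely what keeps (ii) well-defined. Coupling this potential argument with the recursion in (ii) and with the high-probability guarantees of the local cut primitives is where the bulk of the technical work will concentrate.
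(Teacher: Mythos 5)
Your high-level architecture matches the paper's (expanders plus a sparse ``core'', improved decremental expander pruning, and the contraction technique used to recurse on a smaller graph), but there is a genuine gap at the center of the plan: you never explain how to maintain the \emph{minimum} spanning forest of a large expander piece. Expander pruning only certifies that the unpruned part of a piece stays connected and lets you migrate the pruned vertices' incident edges into $X$; it does not tell you which edges of the piece belong to the global MSF, and a piece can have $\Theta(n)$ edges, so you cannot recurse on it (no size reduction) the way you recurse on the core. This is exactly the point where dynamic SF and dynamic MSF diverge, and it is why the paper does not use a flat two-level decomposition. Instead it builds a \emph{hierarchical} MSF decomposition of depth $d=n^{o(1)}$ in which a small fraction of edges is re-weighted so that the own-edges of each level-$i$ cluster occupy a restricted weight range; this forces the Kruskal order to respect the hierarchy and yields the key identity $\msf(C-D)=\Disjunion_{C'}\msf(C'-D)\disjunion(\msf(C-D)\cap E^{C})$ for every deletion set $D$. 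Combined with the connectivity guarantee of the pruning, each large cluster's contribution to the MSF can then be read off a ``compressed cluster'' (small children replaced by their MSFs, large children contracted to single nodes, pruned vertices' edges removed), which has few non-tree edges and hence is amenable to the contraction reduction and recursion; leaf clusters are small enough ($\le n/\gamma$ edges) to recurse on directly. Without the re-weighting and the hierarchy, your step ``$\msf(G)$ is reconstructed from the piece-MSFs plus the core MSF'' is a correct static sparsification statement but leaves the piece-MSFs unmaintained.

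Two smaller points. First, the engine that brings the one-shot pruning from $\tilde O(|D|^{1.5})$ down to $\tilde O(|D|^{1+\delta})$ is a \emph{local flow-based} (push-relabel / Unit-Flow) algorithm for locally balanced sparse cuts; ``randomized local-clustering primitives'' is too vague here, and in fact the paper's dynamic pruning is deterministic -- the randomness enters elsewhere (the expansion decomposition, and a dynamic-SF certificate used to make the algorithm Las Vegas on non-expander inputs). Second, ``a single update translates to $O(1)$ updates on the auxiliary graph'' undersells the contraction reduction: the worst-case version needs $O(\log k)$ periodically rebuilt instances, so the true recurrence is of the form $t_u(m)\le O(\log m)\cdot t_u(O(m/\gamma))+m^{o(1)}$ plus an amortized-preprocessing term $\frac{B\log k}{k}t_{pre}(O(k))$; it still resolves to $m^{o(1)}$ only because the shrinkage factor $\gamma=m^{\Theta(\sqrt{\log\log m/\log m})}$ keeps the recursion depth at $O(\sqrt{\log m/\log\log m})$, so the $(\log m)^{\text{depth}}$ blowup stays subpolynomial. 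Your $T(n)\le T(o(n))+n^{o(1)}$ hides both the multiplicative per-level factor and the depth control.
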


Needless to say, the above result completely subsumes the result in \cite{Wulff-Nilsen16a,NanongkaiS16}.
The $o(1)$ term above hides a $O(\log\log\log n/\log\log n)$ factor.\footnote{Note that by starting from an empty graph and inserting one edge at a time, the preprocessing time of our algorithm is clearly $O(m^{1+o(1)})$, where $m$ is the number of edges in the initial graph. However, note further that the $o(1)$ term in our preprocessing time can be slightly reduced to $O(\sqrt{\log\log m/\log m})$ if we analyze the preprocessing time explicitly instead.}
Recall that Las Vegas randomized algorithms always return correct
answers and the time guarantee is randomized. 
Also recall that an event holds {\em with high probability} (w.h.p.) if it holds with probability at least $1-1/n^c$, where $c$ is an arbitrarily large constant. 

%

\paragraph{Key Technical Contribution and Organization.}
We prove \ref{thm:main intro} 
by identifying the common framework
behind the results of Nanongkai-Saranurak~\cite{NanongkaiS16} and Wullf-Nilsen~\cite{Wulff-Nilsen16a} (thereafter NS and WN), and
significantly improving some components within this framework. In particular, in retrospect it can be said 
that at a high level NS~\cite{NanongkaiS16} and WN~\cite{Wulff-Nilsen16a} share the following three components: 
\begin{enumerate}[noitemsep]
	\item  \emph{Expansion decomposition:}
	This component decomposes the input graph into several expanders
	and the ``remaining'' part with few ($o(n)$) edges.
	\item \emph{Expander pruning:} This component helps maintaining an \msf/\sf in expanders from the first component
	by decrementally removing all low-conductance cuts in an expander undergoing edge deletions.
    \item 
	\emph{Dynamic \msf/\sf on ultra-sparse  graphs:} This components maintains \msf/\sf in the ``remaining'' part obtained from  the first component
	by exploiting the fact that this part has few
	edges\footnote{For the reader who are familiar with the
		results in \cite{NanongkaiS16} and \cite{Wulff-Nilsen16a}.
		The first component are shown in Theorem 4 in \cite{Wulff-Nilsen16a}
		and Theorem 5.1 in \cite{NanongkaiS16}. The second are shown in Theorem
		5 in \cite{Wulff-Nilsen16a} and Theorem 6.1 in \cite{NanongkaiS16}.
		The third are shown in Theorem 3 from \cite{Wulff-Nilsen16a} and
		Theorem 4.2 in \cite{NanongkaiS16}.	
}.
\end{enumerate}

The key difference is that while NS~\cite{NanongkaiS16} heavily relied on developing fast algorithms for these components using recent flow techniques (from, e.g., \cite{Peng14,OrecchiaZ14}),  WN~\cite{Wulff-Nilsen16a} focused on developing a sophisticated way to integrate all components together and used slower (diffusion-based) algorithms for the three components. In this paper we significantly improve algorithms for the second and third components from those in NS~\cite{NanongkaiS16}, and show how to adjust the integration method of WN~\cite{Wulff-Nilsen16a} to exploit these improvements; in particular, the method has to be carefully applied recursively.
%
Below we discuss how we do this in more detail.



\medskip\noindent{\em (i) Improved expander pruning (Details in \ref{sec:Almost-Flow,sec:LBS cut,sec:pruning,sec:pruning_lasvegas}).}
We significantly improve the running time of the \emph{one-shot}
expander pruning algorithm
by NS \cite{NanongkaiS16}\footnote{In \cite{NanongkaiS16}, the authors actually
	show the local expansion decomposition algorithm which is the same
	as one-shot expander pruning but it does not only prune the graph but also decompose the graph into components. 
	In retrospect, we can see that it is enough to instead
	use the one-shot expander pruning algorithm in \cite{NanongkaiS16}.} and the \emph{dynamic} expander
pruning by WN \cite{Wulff-Nilsen16a}. For the one-shot case, given a \emph{single batch}
of $d$ edge deletions to an expander, the one-shot expander pruning
algorithm by NS \cite{NanongkaiS16} takes $O(d^{1.5+o(1)})$ time for removing all low-conductance
cuts. We improve the running time to $O(d^{1+o(1)})$. 
To do this, in \Cref{sec:Almost-Flow} we first extend a new \emph{local flow-based algorithm}\footnote{By local algorithms, we means algorithms that can output its answer without reading the whole input graph. }
for finding
a low-conductance cut by Henzinger, Rao and Wang \cite{HenzingerRW17}, and then use this extension in \Cref{sec:LBS cut} to get another algorithm
for finding a {\em locally balanced sparse (LBS) cut}.
Then in \ref{sec:static pruning} we apply the reduction from LBS cut algorithms by NS \cite{NanongkaiS16} and obtain an improved one-shot expander pruning algorithm.

For the dynamic case, given a \emph{sequence} of edge deletions to
an expander, the dynamic expander pruning algorithm by WN \cite{Wulff-Nilsen16a} dynamically
removes all low-conductance cuts and takes $O(n^{0.5-\epsilon})$
time for each update. We improve the update time to $O(n^{o(1)})$.
Our algorithm is also arguably simpler
and differ significantly because we do not need random sampling as
in \cite{Wulff-Nilsen16a}. To obtain the dynamic
expander pruning algorithm, we use many instances of the static ones, where 
each instance is responsible on finding low-conductance cuts of different
sizes. Each instance is called periodically with different frequencies (instances for finding
larger cuts are called less frequently). See \ref{sec:dynamic pruning} for details.

\medskip\noindent{\em (ii) Improved dynamic  $\msf$ algorithm on ``ultra-sparse'' graphs (Details in \ref{sec:contraction}).}
We show a new way to maintain dynamic $\msf$ in a graph with
few ($o(n)$) ``non-tree'' edges that can also handle a batch of edge insertions.
Both NS and WN \cite{NanongkaiS16,Wulff-Nilsen16a} used a variant
of Frederickson's 2-dimensional topology tree \cite{Frederickson85}
to do this task\footnote{Unlike \cite{Wulff-Nilsen16a}, the algorithm in \cite{NanongkaiS16} cannot handle inserting a batch of many non-tree edges.}. In this paper, we change the approach to
reduce this problem on graphs with few \emph{non-tree edges} to the same
problem on graphs with few \emph{edges} and {\em fewer nodes}; this allows us to apply recursions later in \Cref{sec:Dynamic MSF}.
We do this by applying the classic ``contraction technique'' of Henzinger and King \cite{HenzingerK97b}
and Holm et al. \cite{HolmLT01} {\em in a new way}: This technique was used extensively previously (e.g.  \cite{HenzingerK97,HenzingerK97b,HolmLT01,HolmRW15,Wulff-Nilsen16a}) to reduce fully-dynamic algorithms to decremental
algorithms (that can only handle deletions). Here, we use this technique so that we can recurse.

%

\medskip
\noindent
In \Cref{sec:MST Decomposition,sec:Dynamic MSF}, we take a close look into the integration method
in WN~\cite{Wulff-Nilsen16a} which is used to compose the three components. 
%
%
We show
that it is possible to replace all the three components with the tools based on flow algorithms
from either this paper or from NS \cite{NanongkaiS16} instead. 

In particular, in \Cref{sec:MST Decomposition} we consider a subroutine implicit in WN~\cite{Wulff-Nilsen16a}, which is built on top of the expansion decomposition algorithm (the first component above).  
To make the presentation more modular, we explicitly state this subroutine and its needed properties and name it \emph{\msf decomposition} in \Cref{sec:MST Decomposition}. 
This subroutine can be used as it is constructed in \cite{Wulff-Nilsen16a}, but we further show that it can be slightly improved if 
we replace the diffusion-based expansion decomposition algorithm in \cite{Wulff-Nilsen16a}
with the flow-based expansion decomposition by NS \cite{NanongkaiS16} in the construction. This leads to a slight improvement in the $o(1)$ term in our claimed $O(n^{o(1)})$ update time.

%

%
%
%
%


%

Then, in \Cref{sec:Dynamic MSF}, we combine (using a method in WN~\cite{Wulff-Nilsen16a})
our improved \msf decomposition algorithm (from  \Cref{sec:MST Decomposition}) with our new dynamic expander pruning algorithm 
and our new dynamic \msf algorithm on ultra-sparse graphs (for the second and third components above). 
As our new algorithm on ultra-sparse graphs is actually a reduction to the dynamic \msf problem on a smaller graph, we recursively apply our new dynamic MSF algorithm on that graph. 
By a careful time analysis of our recursive algorithm, we eventually obtain the $O(n^{o(1)})$ update time.

\section{Preliminaries\label{sec:prelim}}

When the problem size is $n$, we denote $\tilde{O}(f(n))=O(f(n)\polylog(n))$,
for any function $f$. We denote by $\disjunion$ and $\Disjunion$
the disjoint union operations. We denote the set minus operation by
both $\setminus$ and $-$. For any set $S$ and an element $e$,
we write $S-e=S-\{e\}=S\setminus\{e\}$.

Let $G=(V,E,w)$ be any weighted graph where each edge $e\in E$ has
weight $w(e)$. We usually denote $n=|V|$ and $m=|E|$. We also just
write $G=(V,E)$ when the weight is clear from the context. We assume
that the weights are distinct. For any set $V'\subseteq V$ of nodes,
$G[V']$ denotes the subgraph of $G$ induced by $V'$. We denote
$V(G)$ the set of nodes in $G$ and $E(G)$ the set of edges in $G$.
In this case, $V(G)=V$ and $E(G)=E$. Let $\msf(G)$ denote the minimum
spanning tree of $G$. For any set $E'\subseteq E$, let $end(E')$
be the set of nodes which are endpoints of edges in $E'$. Sometimes,
we abuse notation and treat the set of edges in $E'$ as a graph $G'=(end(E'),E')$
and vice versa. For example, we have $\msf(E')=\msf(G')$ and $E-\msf(G')=E-E(\msf(G'))$.
The set of \emph{non-tree edges} of $G$ are the edges in $E-\msf(G)$.
However, when it is clear that we are talking about a forest $F$
in $G$, non-tree edges are edges in $E-F$. 

A cut $S\subseteq V$ is a set of nodes. A \emph{volume} of $S$ is
$vol(S)=\sum_{v\in S}\deg(v)$. The \emph{cut size} of $S$ is denoted
by $\delta(S)$ which is the number of edges crossing the cut $S$.
The \emph{conductance} of a cut $S$ is $\phi(S)=\frac{\delta(S)}{\min\{vol(S),vol(V-S)}$.
The conductance of a graph $G=(V,E)$ is $\phi(G)=\min_{\emptyset\neq S\subset V}\phi(S)$.

\danupon{@Thatchaphol: The remark below is changed. Please check.}
\begin{rem}[Local-style input]\label{rem:local input graph}
	Whenever a graph $G$ is given to any algorithm $A$ in this paper, we
	assume that a {\em pointer} to the adjacency list representing $G$ is given to $A$. This is necessary for some of our algorithms which are {\em local} in the sense that they do not even read the whole input graph.  
	Recall that in an adjacency list, for each node $v$  we have a list $\ell_v$ of edges incident to $v$ , and we can
	access the head $\ell_v$ in constant time.  (See details in, e.g., \cite[Section 22.1]{CLRS_book})
	Additionally, we assume that we have a list of nodes whose
	degrees are at least $1$ (so that we do not need to probe
	lists of single nodes). 
\end{rem}

We extensively use the following facts about $\msf$.
\begin{fact}
[\cite{EppsteinGIN97}]For any edge sets $E_{1}$ and $E_{2}$, $\msf(E_{1}\cup E_{2})\subseteq\msf(E_{1})\cup\msf(E_{2})$.\label{fact:sparsify}
\end{fact}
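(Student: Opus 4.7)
The plan is to prove the containment edge by edge via the cycle property of minimum spanning forests, exploiting the assumption from \Cref{sec:prelim} that edge weights are distinct.

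First I would fix an arbitrary $e \in \msf(E_1 \cup E_2)$ and aim to show $e \in \msf(E_1) \cup \msf(E_2)$. Since $e$ is in $E_1 \cup E_2$, it lies in at least one of $E_1, E_2$; by symmetry, assume $e \in E_1$. The whole argument then reduces to the claim that $e \in \msf(E_1)$, which I would establish by contradiction: suppose $e \notin \msf(E_1)$.

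Next, I would produce a cycle in $E_1$ in which $e$ is the heaviest edge. The two endpoints $u,v$ of $e$ both lie in $end(E_1)$, and because $\msf(E_1)$ is a spanning forest of $(end(E_1),E_1)$, the vertices $u,v$ must already be connected in $\msf(E_1)$ (otherwise we could enlarge the forest by adding $e$, contradicting maximality). Let $P$ be the unique $u$–$v$ path in $\msf(E_1)$; then $C := P \cup \{e\}$ is a cycle contained in $E_1$. The standard exchange argument shows that $e$ must be strictly heavier than every edge on $P$: if some $e' \in P$ had $w(e') > w(e)$, then $\msf(E_1) - e' + e$ would be a spanning forest of the same graph with strictly smaller total weight, contradicting the minimality of $\msf(E_1)$. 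Hence $e$ is the unique heaviest edge on $C$.

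The final step is to transport this cycle into $E_1 \cup E_2$: since $C \subseteq E_1 \subseteq E_1 \cup E_2$ and $e$ remains the uniquely heaviest edge on $C$, the same exchange argument applied to $\msf(E_1 \cup E_2)$ shows that $e$ cannot belong to $\msf(E_1 \cup E_2)$ (if it did, replacing $e$ by any other edge of $C$ would reconnect the same pair of components with strictly smaller weight). This contradicts the choice of $e$, completing the proof. There is no real technical obstacle here; the only point that needs care is the verification that $u$ and $v$ are in the same component of $\msf(E_1)$, which follows purely from $\msf(E_1)$ being a \emph{spanning} forest of $(end(E_1), E_1)$ rather than of the ambient vertex set.
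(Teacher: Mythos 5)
Your proof is correct. The paper does not actually prove this fact---it only cites Eppstein et al.---and your argument is the standard cycle-property proof: an edge $e\in E_1\setminus\msf(E_1)$ closes a cycle with $\msf(E_1)$ on which it is the (unique, by distinct weights) heaviest edge, and since that cycle lies entirely inside $E_1\cup E_2$, the cycle property excludes $e$ from $\msf(E_1\cup E_2)$. One small imprecision in your last step: not every edge of $C$ other than $e$ reconnects the two components obtained by deleting $e$ from $\msf(E_1\cup E_2)$; you should instead pick an edge of the path $C-e$ that crosses the resulting cut, which exists because $C-e$ joins the two sides of that cut.
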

Let $G'=(V',E')$ be a graph obtained from $G$ by contracting some
set of nodes into a single node. We always keeps parallel edges in
$G'$ but sometimes we do not keep all the self loops. We will specify
which self loops are preserved in $G'$ when we use contraction in
our algorithms. We usually assume that each edge in $G'$ ``remember''
its original endpoints in $G$. That is, there are two-way pointers
from each edge in $E'$ to its corresponding edge in $E$. So, we
can treat $E'$ as a subset of $E$. For example, for a set $D\subseteq E$
of edges in $G$, we can write $E'-D$ and this means the set of edges
in $G'$ excluding the ones which are originally edges in $D$. With
this notation, we have the following fact about $\msf$:
\begin{fact}
For any graph $G$ and (multi-)graph $G'$ obtained from $G$ by contracting
two nodes of $G$, $\msf(G')\subseteq\msf(G)$.\label{fact:contract}\end{fact}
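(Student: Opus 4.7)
The plan is to use the \emph{cut property} of minimum spanning forests: since edge weights are distinct, an edge $e$ lies in $\msf(H)$ if and only if there exists a cut $(S, V(H)\setminus S)$ of which $e$ is the unique minimum-weight crossing edge. This is the natural tool here because contraction does not affect edge weights and preserves the ``crossing'' structure of most cuts, so we can transport witnessing cuts between $G$ and $G'$.

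Let $u,v$ be the two nodes of $G$ contracted to a single node $w$ in $G'$. I would take an arbitrary $e \in \msf(G')$ and, using the correspondence in which every edge of $G'$ remembers its endpoints in $G$ (as specified right before the statement), view $e$ as an edge of $G$. By the cut property applied in $G'$, there is a cut $(A', B')$ of $V(G')$ whose unique lightest crossing edge is $e$.

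The main step is to lift $(A', B')$ to a cut $(A,B)$ of $V(G)$ by replacing $w$ with both $u$ and $v$ on whichever side originally contained $w$. The key claim is that the edges of $G$ crossing $(A,B)$ correspond bijectively, under the edge-correspondence, to the non-self-loop edges of $G'$ crossing $(A',B')$: an edge of $G$ with both endpoints in $\{u,v\}$ becomes a self-loop in $G'$ and crosses neither cut, while every other edge is on the same crossing/non-crossing side of both cuts since $u$ and $v$ were placed together. Hence $e$ remains the unique lightest edge of $G$ crossing $(A,B)$, and applying the cut property in $G$ yields $e \in \msf(G)$.

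The only (very minor) obstacle is the bookkeeping around the conventions for contraction stated in the preliminaries, namely that $G'$ may or may not retain self-loops and that edges between $u$ and $v$ become self-loops. However, self-loops never belong to any MSF and never cross any cut, so the argument is insensitive to which convention is in force; parallel edges are likewise unproblematic because the cut property only requires a single witnessing edge of minimum weight.
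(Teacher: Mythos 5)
Your proof is correct. The paper states \ref{fact:contract} without proof (it is a standard fact), so there is no authorial argument to compare against; your cut-property argument --- lifting the witnessing cut from $G'$ to $G$ by un-contracting the merged node and observing that the crossing edge sets correspond weight-for-weight, with edges between the two contracted nodes becoming irrelevant self-loops --- is exactly the standard justification and handles the parallel-edge and self-loop conventions from \ref{sec:prelim} correctly.
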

\begin{defn}
[Dynamic $\msf$]A \emph{(fully) dynamic $\msf$ algorithm} $\cA$
is given an initial graph $G$ to be preprocessed, and then $\cA$
must return an initial minimum spanning forest. Then there is an online
sequence of edge updates for $G$, both insertions and deletions.
After each update, $\cA$ must return the list of edges to be added
or removed from the previous spanning tree to obtain the new one.
We say $\cA$ is an \emph{incremental/decremental} $\msf$ algorithm
if the updates only contain insertions/deletions respectively.
\end{defn}
The time an algorithm uses for preprocessing the initial graph and
for updating a new $\msf$ is called \emph{preprocessing time }and\emph{
update time} respectively. In this paper, we consider the problem
where the update sequence is generated by an adversary\footnote{There are actually two kinds of adversaries: oblivious ones and adaptive
ones. In \cite{NanongkaiS16}, they formalize these definitions precisely
and discuss them in details. In this paper, however, we maintain $\msf$
which is uniquely determined by the underlying graph at any time (assuming
that the edge weights are distinct). So, there is no difference in
power of the two kinds of adversaries and we will not distinguish
them.}. We say that an algorithm has \emph{update time $t$ with probability
$p$}, if, \emph{for each }update, an algorithm need at most $t$
time to update the $\msf$ with probability at least $p$.

Let $G$ be a graph undergoing a sequence of edge updates. If we say
that $G$ has $n$ nodes, then $G$ has $n$ nodes at any time. However,
we say that $G$ has at most $m$ edges and $k$ non-tree edges, if
at any time, $G$ is updated in such a way that $G$ always has at
most $m$ edges and $k$ non-tree edges. We also say that $G$ is
an \emph{$m$-edge $k$-non-tree-edge graph}. Let $F=\msf(G)$. Suppose
that there is an update that deletes $e\in F$. We say that $f$ is
a \emph{replacement/reconnecting edge }if $F\cup f-e=\msf(G-e)$.

\subsection{Some Known Results for Dynamic $\protect\msf$}

We use the following basic ability of the top tree data structure
(see e.g. \cite{SleatorT83,AlstrupHLT05}).
\begin{lem}
There is an algorithm $\cA$ that runs on an $n$-node edge-weighted
forest $F$ undergoing edge updates. $\cA$ has preprocessing time
$O(n\log n)$ and update time $O(\log n)$. At any time, given two
nodes $u$ and $v$, then in time $O(\log n)$ $\cA$ can 1) return
the heaviest edge in the path from $u$ to $v$ in $F$, or 2) report
that $u$ and $v$ are not connected in $F$.\label{thm:top tree}
\end{lem}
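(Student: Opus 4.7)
The plan is to invoke the standard top tree framework of Alstrup, Holm, de Lichtenberg, and Thorup~\cite{AlstrupHLT05} (or equivalently the older ST-trees / link-cut trees of Sleator and Tarjan~\cite{SleatorT83}) and equip it with a simple ``heaviest edge'' label on each cluster. Recall that a top tree represents a forest $F$ as a hierarchy of \emph{clusters}, where each cluster $C$ is a connected subtree of $F$ with at most two distinguished \emph{boundary vertices}. The top tree supports the operations \textsf{link}, \textsf{cut}, and \textsf{expose}$(u,v)$ in $O(\log n)$ time per operation, where \textsf{expose}$(u,v)$ either returns a root cluster whose boundary is exactly $\{u,v\}$ and whose underlying subtree is the $u$-$v$ path in $F$, or detects that $u$ and $v$ lie in different trees.

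First, I would define the user data stored at each cluster $C$ as a single value $\max(C)$, namely the weight of the heaviest edge on the path between the boundary vertices of $C$ (or on the cluster's spine, when $C$ has two boundaries). For a base cluster consisting of a single edge $e$, set $\max(C):=w(e)$. For an internal cluster obtained by \emph{merging} two child clusters $C_1,C_2$, define $\max(C)$ to be the maximum of $\max(C_1)$ and $\max(C_2)$ restricted to those children whose spines lie on the $u$-$v$ path of the combined cluster (in the standard top tree formalism this is exactly the ``path cluster'' / ``point cluster'' case analysis, and the combine function is $O(1)$ time). This makes the label a valid top tree function in the sense of~\cite{AlstrupHLT05}, so the standard framework automatically maintains $\max(\cdot)$ under \textsf{link} and \textsf{cut} in $O(\log n)$ time per update, and the initial top tree on an $n$-node forest is built in $O(n\log n)$ time.

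For a query on $(u,v)$, I would call \textsf{expose}$(u,v)$. If it reports that $u,v$ are in different trees, output case (2). Otherwise the root cluster returned has boundary $\{u,v\}$ and its spine is exactly the $u$-$v$ path; reading $\max(\cdot)$ off the root gives the heaviest edge weight, and by storing at each cluster a pointer to a witness edge achieving $\max(C)$ (trivially propagated by the combine rule), we also recover the heaviest edge itself. The whole query costs $O(\log n)$.

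The main ``obstacle'' is really only a bookkeeping one: checking that $\max(\cdot)$ is well-defined as a top tree attribute, i.e.\ that it depends only on the child clusters' attributes and boundary structure and not on the whole subtree. This is immediate because the merge of two path/point clusters along a shared boundary produces a cluster whose spine is the concatenation (or a subset, in the point-cluster case) of the children's spines, and the maximum over a union is the maximum of maxima. Once this is verified, all complexity bounds follow directly from the black-box guarantees of the top tree data structure, so no further calculation is needed.
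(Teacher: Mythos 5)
Your proposal is correct and matches the paper's approach: the paper states this lemma as a standard capability of top trees and simply cites Sleator--Tarjan and Alstrup et al., and your construction (maintaining a heaviest-edge label per cluster with an $O(1)$ combine function, then using \textsf{expose}$(u,v)$ for queries) is exactly the standard instantiation those references provide.
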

A classic dynamic $\msf$ algorithm by Frederickson \cite{Frederickson85}
has $O(\sqrt{m})$ worst-case update time. Using the same
approach, it is easy to see the following algorithm which is useful
in a multi-graph where $m$ is much larger than $n^{2}$:
\begin{lem}
There is a deterministic fully dynamic $\msf$ algorithm for an $n$-node
graph with $m$ initial edges and has $\tilde{O}(m)$ preprocessing
time and $\tilde{O}(n)$ worst-case update time.\label{lem:MSF in multigraph}
\end{lem}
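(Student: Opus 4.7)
The plan is to reduce the multigraph problem to a simple-graph problem with at most $n^2$ edges, then apply Frederickson's $O(\sqrt{|E|})$-worst-case algorithm as a black box. The key observation is that for the MSF of any multigraph $G$, only the minimum-weight edge between each pair of endpoints can possibly belong to $\msf(G)$, since any cycle using a non-minimum parallel edge is redundant. Formally, if $H$ is the simple graph on $V(G)$ whose edge $(u,v)$ (when present) has weight equal to $\min\{w(e) : e \in E(G), e \text{ connects } u,v\}$, then $\msf(G) = \msf(H)$ under distinct edge weights.

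The data structure I would maintain is: (i) for each unordered pair $(u,v)$ with at least one edge between them in $G$, a min-heap $Q_{u,v}$ containing all such parallel edges keyed by weight, together with a hash map from edge-identifier to its heap location for $O(\log m)$ deletions; (ii) the simple graph $H$ as above; and (iii) an instance of Frederickson's dynamic MSF algorithm running on $H$. For preprocessing, I would bucket the $m$ initial edges by endpoint-pair (via hashing), build each $Q_{u,v}$ in time proportional to its size, extract the minimum of each to construct $H$ which has at most $\min(m, \binom{n}{2})$ edges, and then invoke Frederickson's preprocessing on $H$. This costs $\tilde O(m) + \tilde O(|E(H)|) = \tilde O(m)$.

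For an insertion of $(u,v,w)$: push the edge onto $Q_{u,v}$ in $O(\log m)$ time. If $w$ is now the new minimum in $Q_{u,v}$, then delete the previous representative edge $(u,v)$ from $H$ and insert the new one; otherwise $H$ is unchanged. For a deletion of edge $e$ between $u,v$: use the hash map to locate and remove $e$ from $Q_{u,v}$ in $O(\log m)$; if $e$ was the minimum, then if $Q_{u,v}$ becomes empty delete $(u,v)$ from $H$, otherwise replace the representative edge in $H$ with the new minimum. In all cases each update to $G$ triggers at most two updates to $H$. Since $|E(H)| \le \min(m, n^2)$, Frederickson's algorithm on $H$ has worst-case update time $\tilde O(\sqrt{|E(H)|}) \le \tilde O(n)$, and the heap/hash bookkeeping is $O(\log m)$, giving an overall $\tilde O(n)$ worst-case update time. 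Correctness follows from the invariant that $H$ always equals the pairwise-minimum simple graph of the current $G$, so $\msf(G)=\msf(H)$ is maintained and reported by Frederickson's algorithm.

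The main obstacle is really just the bookkeeping: handling deletions requires being able to locate a specific non-minimum edge inside $Q_{u,v}$, which is why I pair each heap with a hash map to its positions and use a heap supporting \emph{decrease-key / arbitrary-delete} (e.g.\ an indexed binary heap). A secondary point worth being careful about is that reporting the \emph{changes} to $\msf(G)$ must be done in terms of the original multi-edges, not the representatives in $H$; this is straightforward since each edge in $H$ carries a pointer back to the actual edge $e \in E(G)$ it represents, and swapping representatives inside $H$ (caused by heap updates alone) does not alter $\msf(G)$ as long as the endpoints and the choice of being in the forest are unchanged.
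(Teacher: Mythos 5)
Your proposal is correct, and it matches the paper's intent: the paper gives no explicit proof, stating only that the bound follows ``using the same approach'' as Frederickson's $O(\sqrt{m})$ algorithm, and your reduction---keep only the minimum-weight parallel edge per vertex pair (maintained via per-pair heaps), so that $\msf(G)=\msf(H)$ for a simple graph $H$ with at most $\min(m,\binom{n}{2})$ edges, then run Frederickson on $H$ with each update to $G$ inducing $O(1)$ updates to $H$---is the natural way to instantiate that claim and yields the stated $\tilde{O}(m)$ preprocessing and $\tilde{O}(\sqrt{n^2})=\tilde{O}(n)$ worst-case update time. The bookkeeping details you flag (arbitrary-delete heaps, mapping representatives back to original multi-edges when reporting changes) are exactly the right things to be careful about and are handled correctly.
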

Next, Wulff-Nilsen \cite{Wulff-Nilsen16a} implicitly showed a decremental
$\msf$ algorithms for some specific setting. In \ref{sec:Dynamic MSF},
we will use his algorithms in the same way he used. The precise statement
is as follows:
\begin{lem}
Let $G=(V,E)$ be any $m$-edge graph undergoing edge deletions, and
let $S\subseteq V$ be a set of nodes such that, for any time step,
every non-tree edge in $E(G)-\msf(G)$ has exactly one endpoint in
$S$. Moreover, every node $u\in V\setminus S$ has constant degree.
Then, there is a decremental $\msf$ algorithm $\cA$ that can preprocess
$G$ and $S$ in time $\tilde{O}(m)$ and handle each edge deletion
in $\tilde{O}(|S|)$ time.\label{lem:non-tree cover}\footnote{For those readers who are familiar with \cite{Wulff-Nilsen16a}, Wulff-Nilsen
showed in Section 3.2.2 of \cite{Wulff-Nilsen16a} how to maintain
the $\msf$ in the graph $G_{2}(\overline{C})$ which has the same
setting as in \ref{lem:non-tree cover}. $G_{2}(\overline{C})$ is
defined in \cite{Wulff-Nilsen16a}, which is the same as $\overline{C}_{2}$
in \ref{sec:Dynamic MSF}. In \cite{Wulff-Nilsen16a}, the set of
\emph{large-cluster vertices }(or\emph{ super nodes}) in $G_{2}(\overline{C})$
corresponds to the set $S$ in \ref{lem:non-tree cover} and every
non-tree edge in $G_{2}(\overline{C})$ has exactly one endpoint as
a large cluster node. There, the number of large cluster vertices
is $|S|=O(n^{\epsilon})$ and the algorithm has update time $|S|=O(n^{\epsilon})$
for some constant $\epsilon>0$.}\end{lem}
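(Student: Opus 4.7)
The plan is to exploit the hypothesis that every non-tree edge has exactly one endpoint in the small set $S$: any replacement edge for a deleted tree edge then decomposes into per-$s$ candidates, so a replacement query should be answerable by iterating over $s \in S$ and asking each, in $O(\log n)$ time, for its best cross-cut non-tree edge. First, I would compute $F = \msf(G)$ in $\tilde O(m)$ time and maintain an Euler-tour tree $\mathcal{T}$ on $F$ (or a top tree as in \ref{thm:top tree}) supporting $O(\log n)$ link/cut and subtree operations; cutting $\mathcal{T}$ at a tree edge $e$ yields two subtrees $T_1, T_2$ represented by disjoint intervals of the Euler tour. On top of $\mathcal{T}$, for each $s \in S$ I would maintain a search structure $D_s$ storing the non-tree edges $(s, u) \in E \setminus F$ keyed by the current Euler-tour position of $u$, augmented for range-min queries on edge weight. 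Since each non-tree edge belongs to a unique $D_s$, the total size is $O(m)$ and preprocessing (computing $F$, building $\mathcal{T}$, populating the $D_s$'s) fits in $\tilde O(m)$.

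An update splits into two cases. A non-tree edge deletion $(s,u)$ simply removes it from $D_s$ in $O(\log n)$. A tree edge deletion $e$ first cuts $\mathcal{T}$ at $e$; then for each $s \in S$, I locate which of $T_1, T_2$ contains $s$ and query $D_s$ for its minimum-weight non-tree edge whose other endpoint lies in the opposite subtree; finally, I take the global minimum $e'$ over these $|S|$ candidates, link $e'$ into $\mathcal{T}$, remove it from its $D_s$, and report the swap. Correctness is the standard MSF replacement property: any candidate replacement across the cut is a non-tree edge, and by hypothesis each such edge has a unique $S$-endpoint, so the iteration enumerates every viable candidate exactly once. The total cost per tree-edge deletion is $O(|S|\log n) = \tilde O(|S|)$, as required.

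The main obstacle is giving the $D_s$'s a consistent semantics while $F$ itself is dynamic: the Euler-tour ``position'' of a node is not an absolute integer but a structural position that shifts whenever tree edges are inserted or removed. The standard resolution is to represent $F$ as a balanced BST over its Euler tour so that link/cut become $O(\log n)$ split/concatenate operations, and to realize each $D_s$ as an augmentation of that shared BST, storing at each Euler-tour occurrence of $u$ the weight $w(s,u)$ (when the non-tree edge exists) together with subtree minima so that ``min-weight edge into a given interval'' reduces to a range-min query. This is a routine application of ET-tree augmentation techniques (as in the Holm--Lichtenberg--Thorup dynamic-connectivity line of work), giving $O(\log n)$ per elementary operation and thus the claimed $\tilde O(|S|)$ worst-case update time.
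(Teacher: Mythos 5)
Your proposal is correct, and it is worth noting that the paper itself does not prove \ref{lem:non-tree cover} at all: it is stated as implicit in Section 3.2.2 of \cite{Wulff-Nilsen16a} (the maintenance of $\msf$ in the graph $G_{2}(\overline{C})$), so you have supplied a self-contained argument where the paper only cites. Your construction follows the same high-level strategy as Wulff-Nilsen's --- since every non-tree edge has exactly one endpoint in $S$, a replacement query for a deleted tree edge reduces to $|S|$ independent ``best edge from $s$ across the cut'' queries, each answerable in polylogarithmic time --- but realizes it with ET-trees plus per-source range-minimum augmentation rather than the top-tree machinery of \cite{Wulff-Nilsen16a}. Two small remarks. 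First, your claim of ``$O(\log n)$ per elementary operation'' for the shared augmented BST is slightly optimistic: a split or concatenate touches $O(\log n)$ BST nodes, and a node near the root may carry augmentation entries for up to $|S|$ distinct sources, so recomputing its per-$s$ minima costs $O(|S|)$ there, giving $O(|S|\log n)$ per link/cut. Since a tree-edge deletion performs only $O(1)$ such operations, this still lands within the claimed $\tilde{O}(|S|)$ worst-case update time, so the bound survives; alternatively one can keep each $D_s$ as a separate balanced BST whose keys are compared via the current Euler-tour order, which sidesteps the shared-augmentation bookkeeping. Second, your argument never uses the hypothesis that nodes in $V\setminus S$ have constant degree; that hypothesis is an artifact of how the lemma is applied to $\overline{C}_{2}$ in \ref{sec:Dynamic MSF} (and of the top-tree realization), so proving the statement without it is harmless and in fact slightly more general.
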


\section{The Extended Unit Flow Algorithm\label{sec:Almost-Flow}}

In this section, we show an algorithm called \emph{Extended Unit Flow}
in \ref{thm:Extended Unit Flow}. It is the main tool for developing
an algorithm in \ref{sec:LBS cut} called\emph{ locally balanced sparse
cut}, which will be used in our dynamic algorithm. The theorem is
based on ideas of flow algorithms by Henzinger, Rao and Wang \cite{HenzingerRW17}.

\paragraph{Flow-related notions.}

We derive many notations from \cite{HenzingerRW17}, but note that
they are not exactly the same. (In particular, we do not consider
edge capacities, but instead use the notion of congestion.) A flow
is defined on an instance\emph{ }$\Pi=(G,\Delta,T)$ consisting of
(i) an unweighted undirected graph $G=(V,E)$, (ii) a \emph{source
function} $\Delta:V\rightarrow\mathbb{Z}_{\ge0}$, and (iii) a \emph{sink
function} $T:V\rightarrow\mathbb{Z}_{\ge0}$. A \emph{preflow} is
a function $f:V\times V\rightarrow\mathbb{Z}$ such that $f(u,v)=-f(v,u)$
for any $(u,v)\in V\times V$ and $f(u,v)=0$ for every \emph{$(u,v)\notin E$}.
Define $f(v)=\Delta(v)+\sum_{u\in V}f(u,v).$ A preflow $f$ is said
to be \emph{source-feasible} (respectively \emph{sink-feasible}) if,
for every node $v$, $\sum_{u}f(v,u)\le\Delta(v)$ (respectively $f(v)\leq T(v)$.).
If $f$ is both source- and sink-feasible, then we call it a \emph{flow.
}We define $cong(f)=\max_{(u,v)\in V\times V}f(u,v)$ as \emph{the
congestion of $f$.} We emphasize that the input and output functions
considered here (i.e. $\Delta,$ $T,$ $f$, and \emph{$cong$}) map
to \emph{integers. }

One way to view a flow is to imagine that each node $v$ initially
has $\Delta(v)$ units of \emph{supply} and an ability to \emph{absorb}
$T(v)$ units of supply. A preflow is a way to ``route'' the supply
from one node to another. Intuitively, in a valid routing the total
supply out of each node $v$ should be at most its initial supply
of $\Delta(v)$ (source-feasibility). A flow describes a way to route
such that all supply can be absorbed (sink feasibility); i.e. in the
end, each node $v$ has at most $T(v)$ units of supply. The congestion
measures how much supply we need to route through each edge.

\danupon{LATER: I use ``unit of supply'' as in {\tt http://encyclopedia2.thefreedictionary.com/Unit+of+Supply}. I don't think the way we use supply is really correct, but we can do it just to preserve the way used before. A better term might be ``commodity'' or ''mass''}

With the view above, we call $f(v)$ (defined earlier)\emph{ the amount
of supply ending at $v$ after $f$}. For every node $v$, we denote
$ex_{f}(v)=\max\{f(v)-T(v),0\}$ as \emph{the excess supply at $v$
after $f$ }and $ab_{f}(v)=\min\{T(v),f(v)\}$ as \emph{the absorbed
supply at $v$ after $f$}. Observe that $ex_{f}(v)+ab_{f}(v)=f(v)$,
for any $v$, and $f$ is a feasible flow iff $ex_{f}(v)=0$ for all
nodes $v\in V$. When $f$ is clear from the context, we simply use
$ex$ and $ab$ to denote $ex_{f}$ and $ab_{f}$. For convenience,
we denote $\total{\Delta}=\sum_{v}\Delta(v)$ as \emph{the total source
supply}, $\total T=\sum_{v}T(v)$ as \emph{the total sink capacity,
}$\total{ex_{f}}=\sum_{v}ex_{f}(v)$ as \emph{the total excess}, and
$\total{ab_{f}}=\sum_{v}ab_{f}(v)$ as \emph{the total supply absorbed}. 
\begin{rem}
[Input and output formats] \label{rem:compact souce sink}\label{rem:output flow}
The input graph $G$ is given to our algorithms as described in \ref{sec:prelim};
in particular, our algorithms do not need to read $G$ entirely. Functions
$\Delta$ and $T$ are input in the form of sets $\{(v,\Delta(v))\mid\Delta(v)>0\}$
and $\{(v,T(v))\mid T(v)<\deg(v)\}$, respectively. Our algorithms
will read both sets entirely.

Our algorithms output a preflow $f$ as a set $\{((u,v),f(u,v))\mid f(u,v)\neq0\}$.
When $f$ is outputted, we can assume that we also obtained functions
$ex_{f}$ and $ab_{f}$ which are represented as sets $\{(v,ex_{f}(v))\mid ex_{f}(v)>0\}$
and $\{(v,ab_{f}(v))\mid ab_{f}(v)>0\}$, respectively. This is because
the time for computing these sets is at most linear in the time for
reading $\Delta$ and $T$ plus the time for outputting~$f$.
\end{rem}
\begin{rem}
[$\overline{T}(\cdot)$]\label{rem:T bar} We need another notation
to state our result. Throughout, we only consider sink functions $T$
such that $T(v)\le\deg(v)$ for all nodes $v\in V$. When we compute
a preflow, we usually add to each node $v$ an \emph{artificial supply}
$\overline{T}(v)=\deg(v)-T(v)$ to both $\Delta(v)$ and $T(v)$ so
that $T(v)=\deg(v)$. Observe that adding the artificial apply does
not change the problem (i.e. a flow and preflow is feasible in the
new instance if and only if it is in the old one). We define $\total{\overline{T}}=\sum_{v}\overline{T}(v)=2m-\total T$
as the \emph{total artificial supply}. This term will appear in the
running time of our algorihtm. 
\end{rem}

\paragraph{The main theorem.}

Now, we are ready to state the main result of this section. 
\begin{thm}
[Extended Unit Flow Algorithm]\label{thm:Extended Unit Flow}There
exists an algorithm \emph{called Extended Unit Flow }which takes the
followings as input: 
\begin{itemize}
\item a graph $G=(V,E)$ with $m$ edges (possibly with parallel edges but
without self loops), 
\item positive integers $h\ge1$ and $F\ge1$, 
\item a source function $\Delta$ such that $\Delta(v)\le F\deg(v)$ for
all $v\in V$, and
\item a sink function $T$ such that $\total{\Delta}\le\total T$ (also
recall that $T(v)\le\deg(v)$, $\forall v\in V$, as in \ref{rem:T bar}).
\end{itemize}

In time $O(hF(\total{\Delta}+\total{\overline{T}})\log m)$ the algorithm
returns (i) a source-feasible preflow $f$ with congestion $cong(f)\leq2hF$
and (ii) $\total{ex_{f}}.$ Moreover, either
\begin{itemize}
\item (Case 1) $\total{ex_{f}}=0$, i.e. $f$ is a flow, or 
\item (Case 2) the algorithm returns a set $S\subseteq V$ such that $\phi_{G}(S)<\frac{1}{h}$
and $vol(S)\ge\frac{\total{ex_{f}}}{F}$. (All nodes in $S$ are outputted.) 
\end{itemize}
\end{thm}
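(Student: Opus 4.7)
The plan is to reduce the theorem to a careful generalization of the push-relabel-based Unit Flow algorithm of Henzinger--Rao--Wang. The key conceptual change from their setting is allowing each source value to be up to $F\deg(v)$ instead of $\deg(v)$, which forces both the edge ``capacity'' (really the congestion budget) and the cut volume loss to scale by a factor of $F$.

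First I would normalize the instance by adding the artificial supply $\overline{T}(v)=\deg(v)-T(v)$ to both $\Delta(v)$ and $T(v)$ (\ref{rem:T bar}). This makes $T(v)=\deg(v)$ everywhere while leaving the feasibility of every preflow unchanged, and it is the reason $\total{\overline{T}}$ appears in the running time. I would then run push-relabel with integer height labels capped at $h$, where the admissibility rule only allows pushing from a node to a strictly lower-labeled neighbor, and where each edge is permitted to carry at most $2hF$ units rather than the $2h$ used in HRW. Active nodes (those with positive excess and height below $h$) are discharged; a node whose admissible edges are all saturated relabels upward. The procedure halts when no active node can either push or relabel.

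For the running time I would reuse the HRW amortized analysis almost verbatim. Each edge can be saturated at most $O(hF)$ times because its capacity is $2hF$ and the two endpoints' labels differ by at least one, so the amortized work per unit of source supply is $O(hF\log m)$ once discharges are implemented with their link-cut / heap-based data structure; multiplying by the total source supply $\total{\Delta}+\total{\overline{T}}$ gives the claimed $O(hF(\total{\Delta}+\total{\overline{T}})\log m)$ bound. By construction the returned preflow $f$ is source-feasible and has $cong(f)\le 2hF$. If no excess remains, we are in Case 1.

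For Case 2, let $B_i=\{v:\mathrm{label}(v)\ge i\}$. Since all excess sits at active nodes (height $\ge 1$) and no push from $B_i$ to $V\setminus B_i$ is admissible, each of the $h$ cuts $B_1\supseteq\cdots\supseteq B_h$ has every crossing edge saturated in the ``downhill'' direction, hence carrying $2hF$ units. A standard averaging ``sweep'' argument over levels $1,\ldots,h$ then yields some $i$ with $\delta(B_i)<vol(B_i)/h$, i.e. $\phi_G(B_i)<1/h$; I set $S=B_i$. The volume lower bound is the one place the scaling enters visibly: all excess is contained in $B_1\supseteq S$, and because $\Delta(v)\le F\deg(v)$, the total excess satisfies $\total{ex_f}\le\sum_{v\in B_1}\Delta(v)\le F\cdot vol(B_1)\le F\cdot vol(S)$, giving $vol(S)\ge\total{ex_f}/F$. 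The main obstacle I anticipate is threading the factor $F$ consistently through both the congestion bookkeeping and the sweep-cut volume loss while keeping the push-relabel invariants intact; once those are adapted, the rest of the HRW argument transfers with only notational changes.
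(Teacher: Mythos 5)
Your overall route matches the paper's: normalize the sink via $\overline{T}$ so that $T(v)=\deg(v)$, run the Henzinger--Rao--Wang push-relabel with the congestion budget scaled to $\Theta(hF)$, and extract a sweep cut when excess remains. (The paper black-boxes HRW's Unit Flow lemma after reparameterizing it and only proves the $\overline{T}$-reduction, whereas you re-open the push-relabel analysis; that difference is immaterial.) However, your derivation of the key volume bound is wrong as written. You take $S=B_i$ for some $1\le i\le h$ and argue $\total{ex_{f}}\le\sum_{v\in B_1}\Delta(v)\le F\cdot vol(B_1)\le F\cdot vol(S)$; the last inequality is backwards, since $S=B_i\subseteq B_1$ gives $vol(B_1)\ge vol(S)$, so bounding the excess by $F\cdot vol(B_1)$ says nothing about $vol(S)$. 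The correct argument is local to $S$: at termination every node with positive excess has label exactly $h$, so all excess lies in $B_h\subseteq B_i=S$, and the push rule never fills a node beyond $F\deg(u)$ (a push sends at most $F\deg(u)-f(u)$ into $u$), so $ex_f(v)\le (F-1)\deg(v)$ pointwise. Summing over $v\in S$ gives $\total{ex_{f}}\le F\cdot vol(S)$. This is precisely the guarantee the paper imports from its Unit Flow lemma (all excess sits in $S$ with a per-node cap), and since the volume--excess relation is the one property the later sections actually use, it cannot be obtained from the source bound on $B_1$ alone.

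A second, smaller issue: your claim that adding $\overline{T}(v)$ to both $\Delta(v)$ and $T(v)$ ``leaves the feasibility of every preflow unchanged'' is true for flows but false for source-feasible preflows, because $\Delta'(v)=\Delta(v)+\overline{T}(v)\ge\Delta(v)$, so $\sum_u f(v,u)\le\Delta'(v)$ does not imply $\sum_u f(v,u)\le\Delta(v)$. To recover source-feasibility for the original instance one needs the push-relabel property that a node pushes only after its modified sink capacity $T'(v)=\deg(v)$ is saturated, whence $\sum_u f(v,u)\le\max\{0,\Delta'(v)-T'(v)\}=\max\{0,\Delta(v)-T(v)\}\le\Delta(v)$; the paper devotes a separate claim to exactly this. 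Relatedly, after normalization the source bound is $\Delta'(v)\le(F+1)\deg(v)$, so the subroutine must be run with parameter $F+1$, which is what makes the per-node excess cap come out as $F\deg(v)$ and hence the final bound as $vol(S)\ge\total{ex_{f}}/F$.
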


\paragraph{Interpretation of \ref{thm:Extended Unit Flow}. }

One way to interpret \ref{thm:Extended Unit Flow} is the following.
(Note: readers who already understand \ref{thm:Extended Unit Flow}
can skip this paragraph.) Besides graph $G$ and source and sink functions,
the algorithm in \ref{thm:Extended Unit Flow} takes integers $h$
and $F$ as inputs. These integers indicate the input that we consider
``good'': (i) the source function $\Delta$ is not too big at each
node, i.e. $\forall v\in V,\ \Delta(v)\le F\deg(v)$, and (ii) the
graph $G$ has high conductance; i.e. $\phi(G)>1/h$. Note that for
the good input it is possible to find a flow of congestion $\tilde{O}(hF)$:
each set $S\subseteq V$ there can be $\sum_{v\in S}\Delta(v)\leq F\cdot vol(S)$
initial supply (by (i)), while there are $\delta(S)>vol(S)/h$ edges
to route this supply out of $S$ (by (ii)); so, on average there is
$\frac{\sum_{v\in S}\Delta(v)}{\delta(S)}\leq hF$ supply routed through
each edge. This is essentially what our algorithm achieves in Case
1. If it does not manage to compute a flow, it computes some source-feasible
preflow and outputs a ``certificate'' that the input is bad, i.e.
a low-conductance cut $S$ as in Case 2. Moreover, the larger the
excess of the preflow, the higher the volume of $S$; i.e. $vol(S)$
is in the order of $ex_{f}(\cdot)/F$. In fact, this volume-excess
relationship is the key property that we will need later. One way
to make sense of this relationship is to notice that if $vol(S)\geq\total{ex_{f}}/F$,
then we can put as much as $F\cdot vol(S)\geq\total{ex_{f}}$ initial
supply in $S$. With conductance of $S$ low enough ($\phi_{G}(S)\le\frac{1}{2h}$
suffices), we can force most of the initial supply to remain in $S$
and become an excess. Note that this explanation is rather inaccurate,
but might be useful to intuitively understand the interplay between
$vol(S)$, $\total{ex_{f}}$ and $F$. 

Finally, we note again that our algorithm is \emph{local} in the sense
that its running time is lower than the size of $G$. For this algorithm
to be useful later, it is important that the running time is \emph{almost-linear}
in $(\total{\Delta}+\total{\overline{T}})$. Other than this, it can
have any polynomial dependency on $h$, $F$ and the logarithmic terms. 

\danupon{Should we note that we can in fact make the algorithm faster (by eliminating F)?} \danupon{Modify or extend?}

The rest of this section is devoted to proving \ref{thm:Extended Unit Flow}.
The main idea is to slightly extend the algorithm called \emph{Unit
Flow} by Henzinger, Rao and Wang \cite{HenzingerRW17}.%

\paragraph{The Unit Flow Algorithm.}

The following lemma, which is obtained by adjusting and simplifying
parameters from Theorem 3.1 and Lemma 3.1 in \cite{HenzingerRW17}.
See \ref{sec:proof unit flow} for the details.
\begin{lem}
\label{thm:simplifed unit flow}There exists an algorithm \emph{called
Unit Flow }which takes the same input as in \ref{thm:Extended Unit Flow}
(i.e. $G$, $h$, $F$, $\Delta$, and $T$) except an additional
condition that $T(v)=\deg(v)$ for all $v\in V$. In time $O(Fh\total{\Delta}\log m)$,
the algorithm returns a source-feasible preflow $f$ with congestion
at most $2hF$. Moreover, one of the followings holds. 
\begin{enumerate}
\item $\total{ex_{f}}=0$ i.e. $f$ is a flow.
\item A set $S$ is returned, where $\phi_{G}(S)<\frac{1}{h}$. Moreover,
$\forall v\in S$: $ex_{f}(v)\le(F-1)\deg(v)$ and $\forall v\notin S$:
$ex_{f}(v)=0$.
\end{enumerate}
\end{lem}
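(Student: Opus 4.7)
The plan is to instantiate the push-relabel based Unit Flow algorithm of Henzinger--Rao--Wang [HenzingerRW17, Theorem 3.1] on the instance $(G,\Delta,T)$ with edge capacity parameter $U := 2hF$ and height parameter $h$, and then read off the claimed guarantees from their analysis, using the special structure $T(v)=\deg(v)$ to sharpen the excess bounds. The HRW routine maintains a source-feasible preflow $f$ together with node labels in $\{0,1,\dots,h\}$, repeatedly pushing supply from higher-labeled to lower-labeled endpoints and relabeling any node that still carries excess but has no admissible downward neighbor. The output preflow has congestion bounded by the edge capacity, hence $\mathrm{cong}(f)\le 2hF$ as required.

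The running time will follow by the standard push-relabel accounting: each unit of supply can be pushed along at most $h$ edges before it is either absorbed or stuck at height $h$, so the total number of push operations is bounded by $h\cdot U\cdot (\text{edges carrying flow})$, which after amortization against the total initial supply gives $O(h\cdot F\cdot \total{\Delta})$ elementary operations; the extra $\log m$ factor comes from the linked-list/heap data structures HRW use to locate admissible edges. Because $\Delta(v)\le F\deg(v)$, each node is touched at most $O(F\deg(v))$ times in total, so the sum telescopes to $O(hF\total{\Delta}\log m)$.

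If at termination $\total{ex_f}=0$ we are in Case~1 and the algorithm simply returns $f$. Otherwise we extract a cut by a level-set sweep: let $B_i=\{v:\mathrm{label}(v)\ge i\}$ for $i=1,\dots,h$. The key observation is that every unit of excess must sit at a node of positive label (a node with label $0$ has saturated its sink since $T(v)=\deg(v)$), so $\total{ex_f}\le\sum_{v\in B_1}\Delta(v)\le F\cdot \vol{B_1}$. Summing $\delta(B_i)$ over the $h$ levels and charging each edge to at most one level gives $\sum_{i=1}^{h}\delta(B_i)\le\vol{B_1}$, so by averaging some level $i^\star$ satisfies $\delta(B_{i^\star})<\vol{B_{i^\star}}/h$; setting $S:=B_{i^\star}$ yields $\phi_G(S)<1/h$.

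It remains to verify the per-node excess bounds for Case~2. For $v\notin S$ the label is $0$, so $v$ never held excess after any relabel step and thus $ex_f(v)=0$; here I would use the HRW invariant that a node is relabeled only when its excess is strictly positive and all neighbors lie at the same or higher label. For $v\in S$, source-feasibility gives $f(v)\le\Delta(v)+\sum_{u}f(u,v)$, and since the algorithm absorbs $T(v)=\deg(v)$ units into the sink as soon as they arrive, $ex_f(v)=f(v)-T(v)\le \Delta(v)-T(v)\le (F-1)\deg(v)$. The main subtlety I expect is making sure the single level $i^\star$ produced by the sweep simultaneously witnesses both the conductance bound and contains every node with positive excess; if the naive averaging does not place all excess-carrying nodes inside $B_{i^\star}$, I would take $S$ as the union of all $B_i$ achieving the averaged cut bound, or equivalently pick the lowest qualifying level, which preserves both properties at the cost of only constants already absorbed into the $1/h$ factor.
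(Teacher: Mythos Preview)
Your proposal has the right overall shape—invoke HRW's push-relabel routine and read off the guarantees—but two of the key steps are wrong in ways that matter.

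First, the level-cut argument with height parameter $h$ does not give conductance below $1/h$. The bound $\sum_{i=1}^{h}\delta(B_i)\le \vol{B_1}$ that you assert is false: an edge whose endpoints differ in label by $d$ is counted $d$ times in the sum, and saturated edges can span many levels. HRW's actual bound (their Theorem~3.1/Lemma~3.1) is $\phi_G(S)\le \frac{20\log(2m)}{h'}+\frac{F'}{U}$, where the $\log m$ term comes from a more careful region-growing argument and the $F'/U$ term accounts for saturated edges. To make this less than $1/h$ with $U=2hF$, the paper sets the height parameter to $h'=\Theta(h\log m)$; \emph{this}, not data structures, is the source of the $\log m$ factor in the running time $O(F h' \total{\Delta})=O(Fh\total{\Delta}\log m)$. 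With your choice of height $h$ the conductance guarantee degrades to roughly $(\log m)/h$.

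Second, your derivation of $ex_f(v)\le (F-1)\deg(v)$ for $v\in S$ is incorrect. You write $f(v)-T(v)\le \Delta(v)-T(v)$, but $f(v)=\Delta(v)+\sum_u f(u,v)$ and the incoming term can be positive, so $f(v)$ may exceed $\Delta(v)$. The bound $f(v)\le F\deg(v)$ is an \emph{algorithmic invariant} that HRW explicitly enforce: their push step sends at most $F\deg(u)-f(u)$ units into $u$ so as never to exceed the cap. It is this invariant, quoted directly from HRW's statement, that gives $ex_f(v)=f(v)-\deg(v)\le (F-1)\deg(v)$. (Relatedly, ``$v\notin S$ has label $0$'' is not what holds; rather, at termination every node with positive excess has label equal to the height bound, hence lies in every $B_i$, so any level cut $S=B_{i^\star}$ contains all excess.)

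The paper's proof avoids all of this by treating HRW as a black box: it quotes their theorem with parameters $(h',F',U)$, restricts to $\total{\Delta}\le 2m$ (which follows from $\total{\Delta}\le\total{T}=2m$) so that HRW's Case~2 cannot occur, and then substitutes $F'=F$, $h'=41h\log(2m)$, $U=2hF$ to obtain the stated lemma.
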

We also note the following fact which holds because the Unit Flow
algorithm is based on the push-relabel framework, where each node
$v$ sends supply to its neighbors only when $ex(v)>0$ and pushes
at most $ex(v)$ units of supply.\danupon{DISCUSS: This claim looks sloppy. We should pinpoint where the claimed property above can be found in Henzinger et al.}
\begin{fact}
\label{rem:push excess}The returned preflow $f$ in \ref{thm:simplifed unit flow}
is such that, for any $v\in V$, $\sum_{u\in V}f(v,u)\le0$ if $f(v)<T(v)$.
\end{fact}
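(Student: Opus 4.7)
The plan is to exploit the elementary invariants of the push-relabel framework that underlies Unit Flow. Specifically, at any time $t$ during the execution, the quantity $s_t(v) := \Delta(v) + \sum_{u} f_t(u,v)$ (the supply currently residing at $v$) starts at $\Delta(v)$ and ends at $f(v)$. The two invariants I will rely on are: (i) a push out of $v$ only occurs when $v$ has strictly positive excess, i.e.\ $s_t(v) > T(v)$; and (ii) each such push transfers at most $s_t(v) - T(v)$ units, so immediately afterward one still has $s(v) \ge T(v)$. Pushes into $v$ can only increase $s(v)$, and $s(v)$ changes by no other mechanism.

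Given this setup, I would argue by contrapositive: fix any $v$ with $f(v) < T(v)$ at termination, and show that $v$ never pushed flow outward. Assume, for contradiction, that it did, and let $t^{*}$ be the time of the last outward push from $v$; by invariants (i) and (ii), $s_{t^{*}}(v) \ge T(v)$ immediately after $t^{*}$. From $t^{*}$ until termination no further pushes leave $v$, while incoming pushes can only raise $s(v)$, so $f(v) = s_{\infty}(v) \ge s_{t^{*}}(v) \ge T(v)$, contradicting the hypothesis. Hence every nonzero $f(v,u)$ corresponds to a push by $u$ into $v$, which under the antisymmetry convention $f(v,u) = -f(u,v)$ is nonpositive; summing gives $\sum_{u \in V} f(v,u) \le 0$.

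The only non-routine step is to confirm that the Unit Flow algorithm of \cite{HenzingerRW17}, as adapted in \ref{thm:simplifed unit flow}, does indeed obey the two push-relabel invariants above — in particular that no auxiliary mechanism inside the algorithm ever transfers supply out of a node whose current supply is already at most $T(v)$, and that individual pushes respect the ``at most $s(v)-T(v)$'' cap. This is essentially a matter of inspecting the pseudocode of Unit Flow, since it is a standard push-relabel procedure with uniform edge capacity $2hF$; once that check is done, the contradiction argument above is just a bookkeeping exercise and delivers the claim.
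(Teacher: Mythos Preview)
Your proposal is correct and takes essentially the same approach as the paper: the paper's entire justification is the single sentence that Unit Flow ``is based on the push-relabel framework, where each node $v$ sends supply to its neighbors only when $ex(v)>0$ and pushes at most $ex(v)$ units of supply,'' which is precisely your invariants (i) and (ii). You have simply fleshed out the monotonicity/last-push argument that the paper leaves implicit, so there is nothing to add.
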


\paragraph{The Extended Unit Flow Algorithm.}

Next, we slightly extend the Unit Flow algorithm so that it can handle
sink functions $T$ where for some node $v$, $T(v)<\deg(v)$.
\begin{proof}
[Proof of \ref{thm:Extended Unit Flow}]Given $G$, $h$, $F$, $\Delta$,
and $T$ as inputs, the Extended Unit Flow algorithm proceeds as follows.
Let $\Delta'$ and $T'$ be source and sink functions where, for all
nodes $v\in V,$ $\Delta'(v)=\Delta(v)+\overline{T}(v)$ and $T'(v)=T(v)+\overline{T}(v)=\deg(v)$,
respectively. Note that $\total{\Delta'}=\total{\Delta}+\total{\overline{T}}\le\total T+\total{\overline{T}}=2m$,
and $\Delta'(v)\le(F+1)\deg(v)$ for all $v\in V$. According to \ref{rem:compact souce sink},
we can construct the compact representation of $\Delta'$ and $T'$
in time $O(\total{\Delta}+\total{\overline{T}})$. 

Let $\cA$ denote the Unit Flow algorithm from \ref{thm:simplifed unit flow}.
We run $\cA$ with input $(G,h,F+1,\Delta',T')$. $\cA$ will run
in time $O(h(F+1)\total{\Delta'}\log m)=O(Fh(\total{\Delta}+\total{\overline{T}})\log m)$.
Then, $\cA$ returns a source-feasible preflow $f$ with congestion
$2hF$ for an instance $(G,\Delta',T')$ and the excess function $ex'_{f}$
(see \ref{rem:output flow}). If $\cA$ also returns a set $S$, then
the Extended Unit Flow algorithm also returns $S$; otherwise, we
claim that $f$ is a \emph{flow} (not just a preflow) for the instance\emph{
}$(G,\Delta,T)$, as follows. \danupon{TO DO: Above, the algorithm and proof are tied together. We should separate them.}

Recall that for any $v\in V$, the followings hold. 

\begin{align*}
f'(v) & =\Delta'(v)+\sum_{u}f(u,v)\\
f(v) & =\Delta(v)+\sum_{u}f(u,v)\\
ex'_{f}(v) & =\max\{f'(v)-T'(v),0\}\\
ex_{f}(v) & =\max\{f(v)-T(v),0\}
\end{align*}
We argue that $f$ is a source-feasible preflow for $(G,\Delta,T)$
with congestion $2hF$. As guaranteed by $\cA$, $f$ has congestion
$2hF$. The following claim shows that $f$ is source-feasible, i.e.
for all $v$, $\sum_{u}f(v,u)\le\Delta(v)$. 
\begin{claim}
$\sum_{u}f(v,u)\le\max\{0,\Delta'(v)-T'(v)\}\le\Delta(v)$ for any
$v\in V$.
\begin{proof}
There are two cases. If $f'(v)<T'(v)$, then $\sum_{u}f(v,u)\le0$
by \ref{rem:push excess}. If $f'(v)\ge T'(v)$, then $\sum_{u}f(v,u)=\Delta'(v)-f'(v)\le\Delta'(v)-T'(v)$.
So $\sum_{u}f(v,u)\le\max\{0,\Delta'(v)-T'(v)\}$. Since $\Delta(v)\ge0$,
we only need to show $\Delta'(v)-T'(v)\le\Delta(v)$. Indeed, $\Delta'(v)-T'(v)=\Delta(v)-T(v)\le\Delta(v)$. 
\end{proof}
\end{claim}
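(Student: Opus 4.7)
The plan is to bound $\sum_u f(v,u)$ by a case split on whether the total supply ending at $v$ under the extended instance exceeds its sink capacity. Recall that $f'(v)=\Delta'(v)+\sum_u f(u,v)$ and that by antisymmetry of $f$ we have $\sum_u f(u,v)=-\sum_u f(v,u)$, so $f'(v)=\Delta'(v)-\sum_u f(v,u)$. This identity is what will convert an upper bound on $f'(v)$ into the desired lower/upper bound on the net outflow.

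First I would handle the ``under-absorbing'' case $f'(v)<T'(v)$. Here I invoke Fact~\ref{rem:push excess} applied to the Unit Flow run on $(G,\Delta',T')$, which says that if the total supply arriving at $v$ is strictly below its sink capacity then the push-relabel procedure has no reason to send net flow out of $v$, giving $\sum_u f(v,u)\le 0$. Since $0\le\max\{0,\Delta'(v)-T'(v)\}$ trivially, this case is done.

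Next I would handle the ``saturated'' case $f'(v)\ge T'(v)$. Using the identity above, $\sum_u f(v,u)=\Delta'(v)-f'(v)\le\Delta'(v)-T'(v)$, which again is bounded above by $\max\{0,\Delta'(v)-T'(v)\}$. Combining the two cases gives the first inequality $\sum_u f(v,u)\le\max\{0,\Delta'(v)-T'(v)\}$.

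Finally, for the second inequality I would just unfold the definitions of $\Delta'$ and $T'$: since $\Delta'(v)=\Delta(v)+\overline{T}(v)$ and $T'(v)=T(v)+\overline{T}(v)$, the artificial supply cancels and $\Delta'(v)-T'(v)=\Delta(v)-T(v)\le\Delta(v)$ (using $T(v)\ge 0$), while $\Delta(v)\ge 0$ handles the $\max$. The only substantive step is the first case, which relies entirely on the push-relabel invariant captured in Fact~\ref{rem:push excess}; the rest is essentially bookkeeping with the definitions of $\Delta'$ and $T'$. Thus I do not expect a genuine obstacle here, so long as Fact~\ref{rem:push excess} is established correctly for the Unit Flow algorithm of \cite{HenzingerRW17}.
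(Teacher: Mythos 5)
Your proof is correct and follows essentially the same two-case argument as the paper: the case $f'(v)<T'(v)$ is handled by Fact~\ref{rem:push excess}, the case $f'(v)\ge T'(v)$ by the identity $\sum_u f(v,u)=\Delta'(v)-f'(v)$, and the final bound by cancelling $\overline{T}(v)$ in $\Delta'(v)-T'(v)$. The only difference is that you make the antisymmetry identity explicit, which the paper leaves implicit.
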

Therefore, we conclude that $f$ is a source-feasible preflow for
$(G,\Delta,T)$ with congestion $2hF$. 

Next, we show that the excess function $ex_{f}$ w.r.t. $\Delta$
and $T$ is the same function as the excess function $ex'_{f}$ w.r.t.
$\Delta'$ and $T'$, which is returned by $\cA$.
\begin{claim}
For any $v\in V$, $ex'_{f}(v)=ex_{f}(v)$.\label{claim:same excess}
\begin{proof}
It suffices to show $f'(v)-T'(v)=f(v)-T(v)$ for all $v\in V$. Indeed,
for every $v\in V$, $f'(v)-f(v)=\Delta'(v)-\Delta(v)=\overline{T}(v)=T(v)-T'(v)$.
\end{proof}
\end{claim}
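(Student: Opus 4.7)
The plan is to show that although we invoke the Unit Flow subroutine on the modified instance $(G, \Delta', T')$, the excess function returned by it coincides pointwise with the excess function for the original instance $(G, \Delta, T)$. The key observation is that $\Delta'$ and $T'$ are obtained by adding the \emph{same} quantity $\overline{T}(v)$ to both $\Delta(v)$ and $T(v)$, so this shift must cancel inside the excess computation.

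Concretely, I would unpack the definitions on both sides. The preflow $f$ is the same object in either view; only the source and sink labels differ. Thus the incoming aggregate $\sum_u f(u,v)$ is identical in $f(v) = \Delta(v) + \sum_u f(u,v)$ and $f'(v) = \Delta'(v) + \sum_u f(u,v)$. Since $\Delta'(v) - T'(v) = (\Delta(v) + \overline{T}(v)) - (T(v) + \overline{T}(v)) = \Delta(v) - T(v)$, I obtain
\[
f'(v) - T'(v) \;=\; \bigl(\Delta'(v) - T'(v)\bigr) + \sum_u f(u,v) \;=\; \bigl(\Delta(v) - T(v)\bigr) + \sum_u f(u,v) \;=\; f(v) - T(v).
\]
Applying $\max\{\cdot,\,0\}$ to both ends then yields $ex'_f(v) = ex_f(v)$, as required.

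There is no real obstacle here: the claim is a bookkeeping identity certifying that the reduction at the heart of the Extended Unit Flow algorithm --- namely, adding the artificial supply $\overline{T}(v)$ to both source and sink in order to reach the full-degree sink regime handled by Unit Flow --- preserves the excess at every vertex. Its importance is operational: it ensures that the excess (and, in Case 2, the certificate set $S$) reported by the Unit Flow subroutine can be passed through verbatim as the output guaranteed by \ref{thm:Extended Unit Flow}, including the volume bound $vol(S) \ge \total{ex_f}/F$ that will later drive the locally balanced sparse cut algorithm in \ref{sec:LBS cut}.
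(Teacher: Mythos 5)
Your proof is correct and is essentially the same as the paper's: both reduce the claim to the identity $f'(v)-T'(v)=f(v)-T(v)$, which follows because the artificial supply $\overline{T}(v)$ is added to both the source and the sink and therefore cancels. The only difference is cosmetic rearrangement of the algebra.
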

Therefore, by \ref{claim:same excess}, in the first case where $\cA$
guarantees $\total{ex'_{f}}=0$, we have $\total{ex_{f}}=\total{ex'_{f}}=0$,
i.e. $f$ is a flow for $(G,\Delta,T)$. In the second case where
$\cA$ returns a cut $S$ where $\phi_{G}(S)<\frac{1}{h}$, we have
that $\forall v\in S$, $ex_{f}(v)=ex'_{f}(v)\le((F+1)-1)\deg(v)=F\deg(v)$
and $\forall v\notin S$, $ex_{f}(v)=ex'_{f}(v)=0$. By summing over
all nodes $v$, we have $\total{ex_{f}}=\sum_{v\in S}ex_{f}(v)+\sum_{v\notin S}ex_{f}(s)\le F\sum_{v\in S}\deg(v)=Fvol(S)$.
That is, $vol(S)\ge\total{ex_{f}}/F$.\end{proof}

\section{Locally Balanced Sparse Cut\label{sec:LBS cut}}

In this section, we show an algorithm for finding a \emph{locally
balanced sparse cut}, which is a crucial tool in \ref{sec:pruning}.
The main theorem is \ref{thm:LBS cut alg}. First, we need this definition:

\begin{defn}
[Overlapping]For any graph $G=(V,E)$, set $A\subset V$, and real
$0\leq\sigma\leq1$, we say that a set $S\subset V$ is \emph{$(A,\sigma)$-overlapping}
in $G$ if $vol(S\cap A)/vol(S)\ge\sigma$.\label{def:overlapping}

\danupon{What's the difference between a "cut" $S$ and a "set" $S$?}
\end{defn}
Let $G=(V,E)$ be a graph. Recall that a cut $S$ is \emph{$\alpha$-sparse}
if it has conductance $\phi(S)=\frac{\delta(S)}{\min\{vol(S),vol(V-S)\}}<\alpha$.
Consider any set $A\subset V$, an overlapping parameter $0\leq\sigma\leq1$
and a conductance parameter $0\leq\alpha\leq1$. Let $S^{*}$ be the
set of largest volume that is $\alpha$-sparse $(A,\sigma)$-overlapping
and such that $vol(S^{*})\le vol(V-S^{*})$. We define $\opt(G,\alpha,A,\sigma)=vol(S^{*})$\emph{.}
If $S^{*}$ does not exist, then we define $\opt(G,\alpha,A,\sigma)=0$.
From this definition, observe that $\opt(G,\alpha,A,\sigma)\le\opt(G,\alpha',A,\sigma)$
for any $\alpha\le\alpha'$. Now, we define the locally balanced sparse
cut problem formally:

\danupon{DISCUSS: Should we simply define the volume of a cut to be $\min(vol(S), vol(V\setminus S))$? This will avoid the technical term $vol(S)\leq vol(V-S)$ many times. If this is the case, we should use a different notation to clearly distinguish set $S$ from cut $S$ (e.g. $(S, \bar{S})$).}
\begin{defn}
[Locally Balanced Sparse (LBS) Cut]\label{def:LBS cut} Consider
any graph $G=(V,E)$, a set $A\subset V$, and parameters $c_{size}\geq1,c_{con}\geq1,\sigma$
and $\alpha$. We say that a cut $S$ where $vol(S)\le vol(V-S)$
is a \emph{$(c_{size},c_{con})$-approximate locally balanced sparse
cut} \emph{with respect to $(G,\alpha,A,\sigma)$ }(in short, \emph{$(c_{size},c_{con},G,\alpha,A,\sigma)$-LBS
cut}) if 
\begin{align}
\phi(S) & <\alpha\quad\mbox{\quad\mbox{and\quad\quad}}c_{size}\cdot vol(S)\ge\opt(G,\alpha/c_{con},A,\sigma).\label{eq:LBS cut}
\end{align}

\end{defn}
In words, the \emph{$(c_{size},c_{con},G,\alpha,A,\sigma)$}-LBS\emph{
}cut can be thought of as a relaxed version of $\opt(G,\alpha,A,\sigma)$:
On the one hand, we define $\opt(G,\alpha,A,\sigma)$ to be a \emph{highest-volume
}cut with low enough conductance and high enough overlap with $A$
(determined by $\alpha$ and $\sigma$ respectively). On the other
hand, a \emph{$(c_{size},c_{con},G,\alpha,A,\sigma)$}-LBS cut does
\emph{not} need to overlap with $A$ at all; moreover, its volume
is only compared to $\opt(G,\alpha/c_{con},A,\sigma)$, which is at
most $\opt(G,\alpha,A,\sigma)$, and we also allow the gap of $c_{size}$
in such comparison. We note that the existence of a \emph{$(c_{size},c_{con},G,\alpha,A,\sigma)$}-LBS
cut $S$ implies that any $(A,\sigma)$-overlapping cut of volume
more than $c_{size}\cdot vol(S)$ must have conductance at least $\alpha/c_{con}$
(because any $(A,\sigma)$-overlapping cut with conductance less than
$\alpha/c_{con}$ has volume at most $c_{size}\cdot vol(S)$).

\danupon{TO DO: In definitions above and below, are parameters reals or integers? How about their range? vol should be operatorname{vol}}
\danupon{TO DO: I changed "with parameter" to w.r.t. in definition above}
\danupon{TO DO: Did we have an explanation for the above in previous paper?}
\danupon{Again, we have to be clear why we use "cut" and not "set" in the definition above.}
\begin{defn}
[LBS Cut Algorithm]\label{def:most balanced sparse cut alg-1}For
any parameters $c_{size}$ and $c_{con}$, \emph{a $(c_{size},c_{con})$-approximate
algorithm} for the LBS cut problem (in short, \emph{$(c_{size},c_{con})$-approximate
LBS cut algorithm}) takes as input a graph $G=(V,E)$, a set $A\subset V$,
an overlapping parameter $0\leq\sigma\leq1$, and an conductance parameter
$0\leq\alpha\leq1$. Then, the algorithm either 
\begin{itemize}
\item (Case 1) finds a $(c_{size},c_{con},G,\alpha,A,\sigma)$-LBS cut $S$,
or
\item (Case 2) reports that there is no $(\alpha/c_{con})$-sparse $(A,\sigma)$-overlapping
cut, i.e. $\opt(G,\alpha/c_{con},A,\sigma)=0$.
\end{itemize}
\end{defn}
From \ref{def:most balanced sparse cut alg-1}, if there exists an
$(\alpha/c_{con})$-sparse $(A,\sigma)$-overlapping cut, then a $(c_{size},c_{con})$-approximate
LBS cut algorithm $\cA$ can only do Case 1, or if there is no $\alpha$-sparse
cut, then $\cA$ must do Case 2. However, if there is no $(\alpha/c_{con})$-sparse
$(A,\sigma)$-overlapping cut but there is an $\alpha$-sparse cut,
then $\cA$ can either do Case 2, or Case 1 (which is to find any
$\alpha$-sparse cut in this case). 

The main result of this section is the following:
\begin{thm}
\label{thm:LBS cut alg}Consider the special case of the LBS cut problem
where the input ($G$, $A$, $\sigma,$ $\alpha$) is always such
that (i) $2vol(A)\le vol(V-A)$ and (ii) $\sigma\in[\frac{2vol(A)}{vol(V-A)},1]$.
In this case, there is a $(O(1/\sigma^{2}),O(1/\sigma^{2}))$-approximate
LBS cut algorithm that runs in $\tilde{O}(\frac{vol(A)}{\alpha\sigma^{2}})$
time. \danupon{Is it the case that this algorithm works under some conditions while previous ones work under any condition? Eig: they all have there same kind of condition}
\end{thm}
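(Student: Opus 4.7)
The plan is to reduce the LBS cut problem to a single call to the Extended Unit Flow algorithm of \ref{thm:Extended Unit Flow} with source concentrated on $A$ and sink spread uniformly according to degree. Concretely, I will set $h=\lceil 1/\alpha\rceil$, $F=\lceil 2/\sigma\rceil$, $c_{con}=\lceil 13/\sigma\rceil$, $c_{size}=\lceil 39/\sigma\rceil$, and define $\Delta(v)=F\deg(v)$ for $v\in A$, $\Delta(v)=0$ otherwise, and $T(v)=\deg(v)$ for every $v\in V$. The hypotheses $2\,vol(A)\le vol(V-A)$ and $\sigma\ge 2\,vol(A)/vol(V-A)$ give $(2/\sigma)\,vol(A)\le vol(V-A)$, hence $F\,vol(A)\le (2/\sigma+1)\,vol(A)\le vol(V-A)+vol(A)=2m=|T(\cdot)|$, so the input constraint $|\Delta(\cdot)|\le |T(\cdot)|$ is met; clearly $\Delta(v)\le F\deg(v)$ everywhere. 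Since $\overline T\equiv 0$, the call costs $O(hF\cdot|\Delta(\cdot)|\log m)=O(hF^2\,vol(A)\log m)=\tilde{O}(vol(A)/(\alpha\sigma^2))$, matching the claimed running time. The compact input format of \ref{rem:compact souce sink} is built in $O(vol(A))$ time because $\Delta$ is supported on $A$ and the compact list for $T$ is empty.

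The correctness is a standard flow-balance argument. Suppose there exists an $(\alpha/c_{con})$-sparse $(A,\sigma)$-overlapping cut $S^*$ with $vol(S^*)\le vol(V-S^*)$. The overlap hypothesis places at least $F\sigma\,vol(S^*)\ge 2\,vol(S^*)$ units of source mass inside $S^*$; the sinks in $S^*$ absorb at most $vol(S^*)$; and the $\delta(S^*)<(\alpha/c_{con})\,vol(S^*)$ boundary edges can collectively carry out at most $2hF\cdot\delta(S^*)\le (4F/c_{con})\,vol(S^*)\le (12/13)\,vol(S^*)$ units, using the congestion bound $2hF$ guaranteed by \ref{thm:Extended Unit Flow} together with $h\alpha\le 1+\alpha\le 2$ and $F\le 3/\sigma$ (since $\sigma\le 1$). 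Subtracting, any source-feasible preflow returned by Extended Unit Flow must leave residual excess at least $(2-1-12/13)\,vol(S^*)=vol(S^*)/13$ inside $S^*$.

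This dichotomy yields the two output cases of \ref{def:most balanced sparse cut alg-1}. If Extended Unit Flow reports Case 1 (a flow with zero excess), the lower bound just derived is violated, so no such $S^*$ exists and I output ``no $(\alpha/c_{con})$-sparse $(A,\sigma)$-overlapping cut''. If it returns a cut $S$, \ref{thm:Extended Unit Flow} gives $\phi_G(S)<1/h\le\alpha$ and $vol(S)\ge|ex_f(\cdot)|/F\ge vol(S^*)/(13F)\ge(\sigma/39)\,vol(S^*)$, so $c_{size}\,vol(S)\ge\opt(G,\alpha/c_{con},A,\sigma)$ with $c_{size}=\lceil 39/\sigma\rceil$ (the inequality is trivial when no $S^*$ exists). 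Both $c_{size}$ and $c_{con}$ are $O(1/\sigma)\subseteq O(1/\sigma^2)$, as required.

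The delicate point is the simultaneous calibration of $F$: it must be small enough for the constraints $F\,vol(A)\le 2m$ and $\Delta(v)\le F\deg(v)$, yet large enough to make $F\sigma-1-4F/c_{con}$ bounded below by a positive absolute constant. The two hypotheses on $\sigma$ and $vol(A)$ are what make the window $F\in[2/\sigma,\,2m/vol(A)]$ non-empty, so they are used essentially; without them $F$ cannot be pushed past the threshold $2/\sigma$ that the flow-balance contradiction requires. Everything else, such as handling the degenerate case $A=\emptyset$ (report no cut) and passing the local pointer to $G$ through to Extended Unit Flow, is routine.
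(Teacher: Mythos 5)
Your proof is correct and follows essentially the same route as the paper: a single call to the Extended Unit Flow algorithm with source roughly $\frac{1}{\sigma}\deg(\cdot)$ concentrated on $A$, followed by a flow-balance argument showing that any sufficiently sparse $(A,\sigma)$-overlapping cut must trap a large amount of excess, which forces Case 2 and lower-bounds $vol(S)$. The only deviation is in calibration — the paper sets $T(v)=0$ on $A$ and uses $F=\lceil 1/\sigma\rceil$, whereas you keep $T(v)=\deg(v)$ everywhere and double $F$ to compensate for absorption inside $A$; your variant is equally valid and in fact yields the slightly stronger factors $c_{size},c_{con}=O(1/\sigma)$.
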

We note that in our later application it is enough to have an algorithm
with $\poly(\frac{\log n}{\alpha\sigma})$ approximation guarantees
and running time almost linear in $vol(A)$ (possibly with $\poly(\frac{\log n}{\alpha\sigma})$).

Before proving the above theorem, let us compare the above theorem
to related results in the literature. Previously, Orecchia and Zhu
\cite{OrecchiaZ14} show two algorithms for a problem called \emph{local
cut improvement}. This problem is basically the same as the LBS cut
problem except that there is no guarantee about the volume of the
outputted cut. Nanongkai and Saranurak \cite{NanongkaiS16} show that
one of the two algorithms by \cite{OrecchiaZ14} implies a $(\frac{3}{\sigma},\frac{3}{\sigma})$-approximate
LBS cut algorithm with running time $\tilde{O}((\frac{vol(A)}{\sigma})^{1.5})$.
While the approximation guarantees are better than the one in \ref{thm:LBS cut alg},
this algorithm is too slow for us. By the same techniques, one can
also show that the other algorithm by \cite{OrecchiaZ14} implies
a $(n,\frac{3}{\sigma})$-approximate LBS cut algorithm with running
time $\tilde{O}(\frac{vol(A)}{\alpha\sigma})$ similar to \ref{thm:LBS cut alg},
but the approximation guarantee on $c_{size}$ is too high for us.
Thus, the main challenge here is to get a good guarantee on both $c_{size}$
and running time. Fortunately, given the Extended Unit Flow algorithm
from \ref{sec:Almost-Flow}, it is not hard to obtain \ref{thm:LBS cut alg}.

\paragraph{Proof of \ref{thm:LBS cut alg}. }

The rest of this section is devoted to proving \ref{thm:LBS cut alg}.
Let ($G$, $A$, $\sigma,$ $\alpha$) be an input as in \ref{thm:LBS cut alg}.
Define $F=\left\lceil 1/\sigma\right\rceil $ and $h=\left\lceil 1/\alpha\right\rceil $.
Define the source and sink functions as 

\begin{align*}
\Delta(v) & =\begin{cases}
\left\lceil 1/\sigma\right\rceil \deg(v) & \forall v\in A\\
0 & \forall v\notin A
\end{cases}\\
T(v) & =\begin{cases}
0 & \forall v\in A\\
\deg(v) & \forall v\notin A
\end{cases}
\end{align*}

Now we run the Extended Unit Flow algorithm from \ref{thm:Extended Unit Flow},
with input $(G,h,F,\Delta,T)$. Note that this input satisfies the
conditions in \ref{thm:Extended Unit Flow}: 
\begin{itemize}
\item $h\geq1$ and $F\geq1$ because $\sigma\leq1$ and $\alpha\leq1$, 
\item for all $v\in V,$ $\Delta(v)\leq\left\lceil 1/\sigma\right\rceil \deg(v)=F\deg(v)$
and $T(v)\leq\deg(v)$, and 
\item $\total{\Delta}=\left\lceil 1/\sigma\right\rceil vol(A)\le\left\lceil \frac{vol(V-A)}{2vol(A)}\right\rceil vol(A)\le vol(V-A)=\sum_{v\notin A}\deg(v)=\total T$
. 
\end{itemize}
The Extended Unit Flow algorithm $\cA$ finishes in time $O(hF(\total{\Delta}+\total{\overline{T}})\log m)=\tilde{O}(\frac{vol(A)}{\alpha\sigma^{2}})$,
where the equality is because of the inequalities above and the fact
that $\total{\bar{T}}=\sum_{v\in A}\deg(v)=vol(A)$. When $\cA$ finishes,
we obtain a source-feasible preflow $f$ with congestion $cong(f)\leq2hF=O(\frac{1}{\alpha\sigma})$,
total excess $\total{ex_{f}}$, and possibly a cut $S$ (when $\total{ex_{f}}>0$).
Let $c_{size}=\frac{2F}{\sigma}=O(1/\sigma^{2})$ and $c_{con}=\frac{2\alpha\cdot cong(f)}{\sigma}=O(1/\sigma^{2})$.
We output as follows.
\begin{itemize}
\item If $\total{ex_{f}}=0$, we report that there is no $(\alpha/c_{con})$-sparse
$(A,\sigma)$-overlapping cut (i.e. $\opt(G,\alpha/c_{con},A,\sigma)=0$)
as in Case 2 of \ref{def:most balanced sparse cut alg-1}. 
\item Otherwise, we output $S$ as an $(c_{size},c_{con},G,\alpha,A,\sigma)$-approximate
LBS cut as in Case 1 of \ref{def:most balanced sparse cut alg-1}.
\end{itemize}
{} 

To prove that the above algorithm is correct, observe that it suffices
to prove that $\opt(G,\frac{\sigma}{2cong(f)},A,\sigma)\le2\total{ex_{f}}/\sigma$.
Indeed, if $\total{ex_{f}}=0$, then this implies $\opt(G,\alpha/c_{con},A,\sigma)=0$
by the choice of $c_{con}$. Otherwise, we have $\total{ex_{f}}>0$.
Recall from \ref{thm:Extended Unit Flow} that the outputted set $S$
is such that $\phi_{G}(S)<\frac{1}{h}\leq\alpha$ and $vol(S)\ge\total{ex_{f}}/F$.
This means that $\opt(G,\alpha/c_{con},A,\sigma)\le2\total{ex_{f}}/\sigma\le c_{size}\cdot vol(S)$.
Therefore, we conclude with the following claim:
\begin{claim}
For any $(A,\sigma)$-overlapping cut $S'$ where $vol(S')>2\total{ex_{f}}/\sigma$,
we have $\phi(S')\ge\frac{\sigma}{2cong(f)}$. That is, $\opt(G,\frac{\sigma}{2cong(f)},A,\sigma)\le2\total{ex_{f}}/\sigma$.
\begin{proof}
Let $\Delta(S')=\sum_{v\in S'}\Delta(v)$ be the total source supply
from nodes in $S'$, $ex_{f}(S')=\sum_{v\in S'}ex_{f}(v)$ be the
total excess supply in nodes in $S'$, and $ab_{f}(S')=\sum_{v\in S'}ab_{f}(v)$
be the total supply absorbed by nodes in $S'$. Observe that $\Delta(S')=\left\lceil 1/\sigma\right\rceil vol(A\cap S')$
and $ab_{f}(S')\le vol(S'-A)$. So we have 
\begin{align*}
cong(f)\cdot\delta(S') & \ge\Delta(S')-ex_{f}(S')-ab_{f}(S')\\
 & \ge\frac{1}{\sigma}vol(A\cap S')-\total{ex_{f}}-vol(S'-A).
\end{align*}
This implies
\begin{eqnarray*}
cong(f)\cdot\phi(S') & \ge & \frac{\frac{1}{\sigma}vol(A\cap S')-\total{ex_{f}}-vol(S'-A)}{vol(S')}\\
 & = & \frac{\frac{1}{\sigma}vol(A\cap S')-\total{ex_{f}}-(vol(S')-vol(A\cap S'))}{vol(S')}\\
 & = & \frac{(\frac{1}{\sigma}+1)vol(A\cap S')-\total{ex_{f}}-vol(S')}{vol(S')}\\
 & \ge & (\frac{1}{\sigma}+1)\sigma-\frac{\sigma}{2}-1=\frac{\sigma}{2}.
\end{eqnarray*}

\end{proof}
\end{claim}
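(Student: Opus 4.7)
The plan is to prove this by a standard supply-accounting argument on the cut $S'$, exploiting the source/sink functions defined above and the congestion bound on $f$. The main obstacle will just be bookkeeping; no new algorithmic ideas are needed beyond tracing where the initial supply placed on $A \cap S'$ can possibly go.

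First I would set up three quantities: the total source supply originating inside $S'$, namely $\Delta(S') = \sum_{v \in S'} \Delta(v) = \lceil 1/\sigma \rceil \cdot vol(A \cap S')$; the total absorbed supply in $S'$, namely $ab_f(S') = \sum_{v \in S'} ab_f(v)$; and the total excess in $S'$, namely $ex_f(S') \le \total{ex_f}$. Because $T(v) = 0$ on $A$ and $T(v) = \deg(v)$ on $V \setminus A$, the absorbed supply inside $S'$ is bounded by $ab_f(S') \le \sum_{v \in S' \setminus A} \deg(v) = vol(S' \setminus A)$. Any unit of supply that started in $S' \cap A$ and is neither absorbed in $S'$ nor left as excess in $S'$ must have been pushed across the cut, and each edge of the cut carries at most $cong(f)$ units. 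This yields the inequality
\[
cong(f) \cdot \delta(S') \;\ge\; \Delta(S') - ab_f(S') - ex_f(S') \;\ge\; \tfrac{1}{\sigma} vol(A \cap S') - vol(S' \setminus A) - \total{ex_f}.
\]

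Next I would divide by $vol(S')$ and use $vol(S' \setminus A) = vol(S') - vol(A \cap S')$ to rewrite the numerator as $(\tfrac{1}{\sigma} + 1)\, vol(A \cap S') - vol(S') - \total{ex_f}$. Since $S'$ is $(A,\sigma)$-overlapping, $vol(A \cap S')/vol(S') \ge \sigma$, and hence
\[
cong(f) \cdot \phi(S') \;\ge\; (\tfrac{1}{\sigma} + 1)\sigma - 1 - \tfrac{\total{ex_f}}{vol(S')} \;=\; \sigma - \tfrac{\total{ex_f}}{vol(S')}.
\]
Finally, the hypothesis $vol(S') > 2\total{ex_f}/\sigma$ gives $\total{ex_f}/vol(S') < \sigma/2$, so $cong(f) \cdot \phi(S') \ge \sigma/2$, i.e.\ $\phi(S') \ge \sigma/(2\, cong(f))$. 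This contradicts $S'$ being $(\sigma/(2\,cong(f)))$-sparse, so any $(A,\sigma)$-overlapping cut of this conductance has volume at most $2\total{ex_f}/\sigma$, proving $\opt(G, \sigma/(2\,cong(f)), A, \sigma) \le 2\total{ex_f}/\sigma$. The whole argument is really just a \emph{max-flow / min-cut} style lower bound: if a small cut $S'$ trapped a lot of source supply, the preflow would either have to overcongest some cut edge or leave a lot of excess behind, and our hypotheses preclude both.
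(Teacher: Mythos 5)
Your proof is correct and follows essentially the same supply-accounting argument as the paper: the same three quantities $\Delta(S')$, $ab_f(S')$, $ex_f(S')$, the same bound $cong(f)\cdot\delta(S') \ge \Delta(S') - ab_f(S') - ex_f(S')$, and the same use of the overlap and volume hypotheses to conclude $\phi(S') \ge \sigma/(2\,cong(f))$. No substantive differences.
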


\section{Expander Pruning\label{sec:pruning}}

\danupon{Why this name? The description of the result doesn't justify the name.}

The main result of this section is the \emph{dynamic expander pruning}
algorithm. This algorithm was a key tool introduced by Wulff-Nilsen
\cite[Theorem 5]{Wulff-Nilsen16a} for obtaining his dynamic $\msf$
algorithm. We significantly improve his dynamic expander pruning algorithm
which is randomized and has $n^{0.5-\epsilon_{0}}$ update time for
some constant $\epsilon_{0}>0$. Our algorithm is \emph{deterministic}
and has $n^{o(1)}$ update time. Although the algorithm is deterministic,
our final dynamic $\msf$ algorithm is randomized because there are
other components that need randomization. 

First we state the precise statement (explanations follow below). 

\danupon{TO DO FOR THATCHAPHOL: Dynamic should be ``Decremental''.}

%

\danupon{Do we have to require $G$ to be connected.}
%

\begin{thm}
	[Dynamic Expander Pruning]\label{thm:dynamic pruning subpoly} Consider any $\epsilon(n)=o(1)$, and let $\alpha_{0}(n)=1/n^{\epsilon(n)}$. There is a dynamic
	algorithm $\cA$ that can maintain a set of nodes $P$ for a graph $G$ undergoing $T=O(m\alpha_{0}^{2}(n))$ edge deletions as follows. Let $G_\tau$ and $P_\tau$ be the graph $G$ and set $P$ after the $\tau^{th}$ deletion, respectively.
	\begin{itemize}[noitemsep]
		\item Initially, in $O(1)$ time $\cA$ sets $P_0=\emptyset$ and takes as input an $n$-node $m$-edge graph $G_0=(V,E)$ with maximum
		degree $3$. 
		\item After the $\tau^{th}$ deletion, $\cA$ takes $n^{O(\log\log\frac{1}{\epsilon(n)}/\log\frac{1}{\epsilon(n)})}=n^{o(1)}$
		time to report nodes to be added to $P_{\tau-1}$ to form $P_\tau$ where, if $\phi(G_0)\geq \alpha_{0}(n)$, then
		\begin{align}
		\exists W_{\tau}\subseteq P_{\tau}  \mbox{ s.t. $G_{\tau}[V-W_{\tau}]$ is connected.}\label{eq:P property}
		\end{align}
	\end{itemize}
\end{thm}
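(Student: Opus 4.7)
The plan is to reduce dynamic expander pruning to the one-shot (static) expander pruning algorithm promised in \ref{sec:pruning} via a multi-scale scheduling scheme, in the spirit of the informal description given in the introduction (``instances for finding larger cuts are called less frequently''). The one-shot algorithm takes a batch of $d$ deletions and returns, in time $d^{1+o(1)}$, a small set of nodes whose removal kills all newly-formed low-conductance cuts of volume up to roughly $d/\alpha_0$. To turn this into an algorithm with worst-case (not just amortized) $n^{o(1)}$ update time, I would run many background copies of the one-shot procedure in parallel, one per volume scale, each restarted periodically and each with its computation spread evenly over future updates.

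Concretely, for each level $i = 0, 1, \ldots, L$ with $L = \lceil \log_2 m \rceil$, I would maintain a pruning task $\mathcal{A}_i$ whose job is to catch low-conductance cuts of volume in $[2^i, 2^{i+1})$. Fix a period $p_i = \Theta(\alpha_0 \cdot 2^i)$. Every $p_i$ deletions, task $\mathcal{A}_i$ is re-launched on the snapshot of $G$ at its launch time, together with the batch $D_i$ of all edges deleted since the last launch; its total work $|D_i|^{1+o(1)} = p_i \cdot n^{o(1)}$ is sliced into $p_i$ equal chunks and one chunk is executed per subsequent update. When $\mathcal{A}_i$ finishes, it outputs a set $W_i^{(\tau)}$ of nodes which is added to $P$; formally we set $P_\tau = P_{\tau-1} \cup \bigcup_i W_i^{(\tau)}$, so $P$ is monotone. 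The key correctness invariant to verify is: if $\phi(G_0) \ge \alpha_0$, then for every $\tau$ there is some $W_\tau \subseteq P_\tau$ (taken as the union over $i$ of the most recently completed pruning set at level $i$) such that $G_\tau[V - W_\tau]$ is connected.

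The correctness argument proceeds by a volume-budget/scheduling lemma. A cut $S$ with $\phi_{G_0}(S) \ge \alpha_0$ can only be severed in $G_\tau$ after at least $\alpha_0 \cdot \mathrm{vol}_{G_0}(S)$ boundary edges have been deleted; for $\mathrm{vol}(S) \in [2^i, 2^{i+1})$ this is $\Omega(\alpha_0 \cdot 2^i) = \Omega(p_i)$. Hence between launch time $t$ and completion time $t + p_i$ of task $\mathcal{A}_i$, no cut of scale $i$ that was ``safe'' at time $t$ can have become disconnected by $t + p_i$, provided we launch $\mathcal{A}_i$ on a batch $D_i$ that subsumes everything deleted since the \emph{previous} completion and we over-provision $p_i$ by a constant factor so that the batch it is actually validating also includes the $p_i$ deletions occurring during its run. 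Combining guarantees across all $L + 1 = O(\log m)$ levels shows that $W_\tau := \bigcup_i W_i^{(\tau)} \subseteq P_\tau$ simultaneously destroys every potential low-conductance cut of every scale, so that $G_\tau[V - W_\tau]$ is connected.

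For the running time, at every update each of the $O(\log m)$ levels does $|D_i|^{1+o(1)} / p_i \le p_i^{o(1)} = n^{o(1)}$ units of background work, giving $n^{o(1)}$ worst-case total per update; the factor $1/\alpha_0 = n^{\epsilon(n)}$ is absorbed into $n^{o(1)}$, and the precise exponent $O(\log\log(1/\epsilon)/\log(1/\epsilon))$ in the statement will come from the corresponding exponent of the static pruning algorithm of \ref{sec:static pruning}, which in turn inherits it from the Extended Unit Flow / LBS-cut subroutines. The main obstacle is the last synchronization issue in the previous paragraph: one must carefully overlap consecutive batches at each level so that a cut cannot ``hide'' across a handoff between two runs of $\mathcal{A}_i$, and must argue that the one-shot guarantee, which was stated for a graph and a single batch, still applies once we account for the additional $p_i$ deletions accumulated while the task is running. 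This is purely a matter of bookkeeping the batch sizes and periods, with no new algorithmic idea required once the one-shot subroutine is in hand.
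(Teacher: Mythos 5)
There is a genuine gap, and it is exactly the point where the real difficulty of the theorem lives. Your scheme re-launches each task $\mathcal{A}_i$ on ``the snapshot of $G$ at its launch time'' together with the recent batch $D_i$, i.e.\ the before-graph for that one-shot call is $G$ at the \emph{previous} launch time. But the one-shot expander pruning subroutine (\ref{thm:local pruning}) only gives any guarantee when its before-graph $G_b=(V,E\cup D)$ satisfies $\phi(G_b)\ge\alpha_b$; otherwise it may simply (and correctly) report $\phi(G_b)<\alpha_b$ and return nothing useful. Only $G_0$ is assumed to be an expander --- after deletions, $G_\tau$ can contain arbitrarily sparse cuts (that is precisely what we are trying to detect), so the precondition fails for every launch after the first. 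Relatedly, your correctness step ``combining guarantees across all $L+1$ levels shows that $W_\tau$ destroys every low-conductance cut of every scale'' has nothing to combine: the subroutine has no per-volume-scale guarantee on non-expander inputs, and the union of pruning sets computed against different, mutually inconsistent snapshots is not argued to leave a connected graph.

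The paper's fix, which is the idea missing from your proposal, is to \emph{chain} the levels rather than run them independently against snapshots of $G$: level $i$ applies the one-shot pruning to $X^{i-1}$, the \emph{pruned output} of level $i-1$, which by \ref{thm:local pruning} is again an induced expander --- but with conductance degraded from $\alpha_{i-1}$ to $\alpha_i=\Omega((c_0\alpha_{i-1})^{2/\delta})$, i.e.\ $\alpha_i\approx\alpha_0^{\ell^i}$. The set $W_\tau$ is then simply $V-V(X_\tau^{\ell+1})$, and connectivity of $G_\tau[V-W_\tau]$ follows because the last level's output is itself an expander. This doubly-exponential degradation is why the number of levels must be taken as $\ell\approx\log(1/\epsilon)/\log\log(1/\epsilon)$ (with periods $d_i=n^{1-i/\ell}$), not $O(\log m)$: with $\Theta(\log m)$ chained levels the final conductance guarantee would drop below $1/\mathrm{poly}(n)$ and the one-shot subroutine's running time, which is polynomial in $1/\alpha_b$, would blow up. Your volume-budget observation (an $\alpha_0$-expanding cut needs $\alpha_0\cdot vol(S)$ boundary deletions to disconnect) is true and is implicitly what bounds the total number of tolerable deletions $T=O(m\alpha_0^2)$, but it cannot substitute for maintaining the expander invariant that the one-shot subroutine needs at every call.
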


\danupon{TO DO for Thatchaphol: The theorem has changed.}


\danupon{Should we give name to \Cref{eq:P property}?}

\danupon{TO THINK: The name pruning still doesn't make much sense to me. It's more natural to me to imagine that the algorithm gradually marks nodes to make sure that unmarked nodes are in the same connected component. Pruning usually gives a sense of removing nodes.}

The goal of our algorithm is to gradually mark nodes in a graph $G=(V, E)$ so that at all time -- as edges in $G$ are deleted -- all nodes that are not yet marked are in the same connected component in $G$. 
In other words, the algorithm maintains a set $P$ of (marked) nodes, called \emph{pruning set},  such that there exists $W\subseteq P$ where $G[V-W]$ is connected (thus \Cref{eq:P property}). 
%
%
In
our application in \ref{sec:Dynamic MSF}, we will delete edges incident
to $P$ from the graph, hence the name pruning set.

Recall that the algorithm takes an input graph in the local manner, as noted in \Cref{rem:local input graph}, thus taking $n^{o(1)}$ time. Observe that if we can set $P=V$ from the beginning, the problem becomes trivial.
The challenge here is that we must set $P=\emptyset$ in the initial step, and thus must grow $P$ smartly and quickly (in $n^{o(1)}$ time) after each deletion so that \Cref{eq:P property} remains satisfied. 
%
%

Observe further that this task is not possible to achieve in general: if the first deletion cuts $G$ into two large connected components, then $P$ has to grow tremendously to contain one of these components, which is impossible to do in $n^{o(1)}$ time. Because of this, our algorithm is guaranteed to work only if the initial graph has {\em high enough expansion}; in particular, an initial expansion of $\alpha_{0}(n)$ as in \Cref{thm:dynamic pruning subpoly} suffices for us.

\paragraph{Organization.}
The rest of this section is for proving \ref{thm:dynamic pruning subpoly}.
The key tool is an algorithm called the \emph{one-shot expander pruning},
which was also the key tool in Nanongkai and Saranurak~\cite{NanongkaiS16} for obtaining
their Las Vegas dynamic $\sf$ algorithm. We show an improved version
of this algorithm in \ref{sec:static pruning} using the faster LBS
cut algorithm we developed in \ref{sec:LBS cut}. In \ref{sec:dynamic pruning},
we show how to use several instances of the one-shot expander pruning
algorithm to obtain the dynamic one and prove \ref{thm:dynamic pruning subpoly}.

\subsection{One-shot Expander Pruning\label{sec:static pruning}}

In the following, we show the \emph{one-shot expander pruning} algorithm
which is significantly improved from \cite{NanongkaiS16}. In words,
the one-shot expander pruning algorithm is different from the dynamic
one from \ref{thm:dynamic pruning subpoly} in two aspects: 1) it
only handles a single batch of edge deletions, instead of a sequence
of edge deletions, and so only outputs a pruning set $P$ once, and
2) the pruning set $P$ has a stronger guarantee than the pruning
set for dynamic one as follows: $P$ does not only contains all nodes
in the cuts that are completely separated from the graphs (i.e. the
separated connected components) but $P$ contains all nodes in the
cuts that have low conductance. Moreover, $P$ contains \emph{exactly
}those nodes and hence the complement $G[V-P]$ has high conductance.
For the dynamic expander pruning algorithm, we only have that there
is some $W\subseteq P$ where $G[V-W]$ is connected.

The theorem below shows the precise statement. Below, we think of
$G_{b}=(V,E\cup D)$ as the graph before the deletions, and $G=G_{b}-D$
as the graph after deleting $D$. In \cite{NanongkaiS16}, Nanongkai
and Saranurak show this algorithm where the dependency on $D$ is
$\sim D^{1.5+\delta}$, while in our algorithm the dependency of $D$
is $\sim D^{1+\delta}$.
\begin{thm}
[One-shot Expander Pruning]\label{thm:local pruning}There is an
algorithm $\cA$ that can do the following:
\begin{itemize}
\item $\cA$ is given $G$,$D,\alpha_{b},\delta$ as inputs: $G=(V,E)$
is an $n$-node $m$-edge graph with maximum degree $\Delta$, $\alpha_{b}$
is a conductance parameter, $\delta\in(0,1)$ is a parameter, and
$D$ is a set of edges where $D\cap E=\emptyset$ where $|D|=O(\alpha_{b}^{2}m/\Delta)$.
Let $G_{b}=(V,E\cup D)$. 
\item Then, in time $\overline{t}=\tilde{O}(\frac{\Delta|D|^{1+\delta}}{\delta\alpha_{b}^{6+\delta}})$,
$\cA$ either reports $\phi(G_{b})<\alpha_{b}$, or outputs a \emph{pruning
set }$P\subset V$. Moreover, if $\phi(G_{b})\ge\alpha_{b}$, then
we have

\begin{itemize}
\item $vol_{G}(P)\le2|D|/\alpha_{b}$, and
\item a pruned graph $H=G[V-P]$ has high conductance: $\phi(H)\ge\alpha=\Omega(\alpha_{b}{}^{2/\delta})$.
\end{itemize}
\end{itemize}
\end{thm}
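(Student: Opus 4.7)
My plan is to follow the iterative pruning framework of Nanongkai--Saranurak~\cite{NanongkaiS16}, but substitute the faster LBS cut subroutine from \Cref{thm:LBS cut alg} for the slower one they used. Maintain a pruning set $P$ (initially empty), a pruned graph $H := G[V-P]$, and an ``overlap set'' $A$ consisting of endpoints of edges of $D$ that still lie in $V-P$; initially $vol_G(A) \le 2\Delta|D|$. The main loop runs $k = \Theta(1/\delta)$ levels. At level $i$, repeatedly call the LBS cut algorithm on $(H,A,\sigma,\alpha_i)$ with $\sigma = \Theta(\alpha_b)$ (so that $c_{size}, c_{con} = O(\alpha_b^{-2})$) and thresholds obeying $\alpha_0 = \alpha_b$, $\alpha_{i+1} = \alpha_i/c_{con}$. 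Whenever a cut $S$ is returned, absorb it into $P$, rebuild $H$ and $A$, and continue at the same level; when the LBS subroutine certifies that no $(\alpha_i/c_{con})$-sparse $(A,\sigma)$-overlapping cut exists, advance to the next level. The final $H$ then satisfies $\phi(H) \ge \alpha_k/c_{con} = \alpha_b^{\Theta(1/\delta)} = \Omega(\alpha_b^{2/\delta})$.

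The key correctness step is an ``expander $\Rightarrow$ overlap'' lemma: assuming $\phi(G_b) \ge \alpha_b$, any cut $S$ in the current $H$ with $\phi_H(S)<\alpha_i$ must be $(A,\sigma)$-overlapping, because the gap between $\delta_{G_b}(S) \ge \alpha_b \cdot vol_{G_b}(S)$ and $\delta_H(S) < \alpha_i \cdot vol_H(S)$ can only be filled by edges of $D$ crossing $S$, whose endpoints lie in $A$. Therefore the ``no overlapping cut'' certificate of LBS at the final level yields the claimed conductance of $H$. The volume bound $vol_G(P) \le 2|D|/\alpha_b$ then follows by an amortization argument: each pruned cut $S$ with $\phi_H(S)<\alpha_b$ needs $\ge \alpha_b \cdot vol_{G_b}(S)/2$ crossing edges in $G_b$, almost all of which must come from $D$; summing over iterations gives $\alpha_b \cdot vol_G(P)/2 \le |D|$. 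If ever a precondition of \Cref{thm:LBS cut alg} fails (for instance $vol(A)$ becoming too large), we safely output ``$\phi(G_b)<\alpha_b$''; the quantitative bound $|D|=O(\alpha_b^2 m/\Delta)$, combined with the volume bound just derived, ensures this failure cannot happen under the expansion hypothesis.

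For the running time, each LBS call costs $\tilde O(vol(A)/(\alpha_i\sigma^2))$, and $vol(A) = O(\Delta|D|)$ throughout. Across level $i$, the volume bound caps the number of iterations, so a geometric sum over the $k = \Theta(1/\delta)$ levels -- with $\alpha_i$ shrinking by $1/c_{con} = O(\alpha_b^{-2})$ each level -- telescopes to $\tilde O(\Delta|D|^{1+\delta}/(\delta\alpha_b^{6+\delta}))$. The $|D|^{\delta}$ factor absorbs the growth of the per-level LBS cost across $\Theta(1/\delta)$ levels, while the $\alpha_b^{6+\delta}$ factor in the denominator absorbs the final $1/\alpha_k$ together with the $\sigma^{-2}$ factors accumulated along the way.

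The main obstacle will be the parameter bookkeeping for the ``expander $\Rightarrow$ overlap'' lemma: one must verify the preconditions $2vol(A) \le vol(V-A)$ and $\sigma \ge 2vol(A)/vol(V-A)$ of \Cref{thm:LBS cut alg} at every iteration, and ensure that $k = \Theta(1/\delta)$ applications of the $c_{con} = O(\alpha_b^{-2})$ loss yield exactly the advertised $\alpha_b^{2/\delta}$. The weaker approximation factor of our new LBS algorithm (compared to NS's $O(1/\sigma)$) is why the exponent on $\alpha_b$ lands at $6+\delta$ rather than something smaller; conversely, the near-linear running time in $vol(A)$ is precisely what converts NS's $|D|^{1.5+\delta}$ bound into our $|D|^{1+\delta}$.
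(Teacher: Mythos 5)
Your overall architecture is the right one and matches the paper's: the theorem is proved by the reduction of \ref{lem:reduc to LBS} (spelled out in \ref{sec:LBS cut to Prune}), which iterates the LBS cut algorithm of \ref{thm:LBS cut alg} over $O(1/\delta)$ levels with $\alpha_{\ell+1}=\alpha_{\ell}/c_{con}$, uses exactly your ``expander $\Rightarrow$ overlap'' lemma (\ref{thm:sparse must overlap}) to convert the no-overlapping-cut certificate into a conductance bound, and bounds $vol_G(P)$ by the same $D$-charging argument (\ref{thm:sparse cut small}, \ref{thm:bound prune vol}). However, your iteration rule has a genuine gap that breaks the running-time bound. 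You prune \emph{every} cut the LBS algorithm returns and advance a level only when it certifies Case~2. Since a returned sparse cut can have constant volume, the total-volume cap $vol_G(P)\le 2|D|/\alpha_b$ only limits the number of prunes to $O(|D|/\alpha_b)$, so your ``geometric sum'' gives $\tilde{O}(\Delta|D|^{2}/\poly(\alpha_b))$, not $\tilde{O}(\Delta|D|^{1+\delta}/(\delta\alpha_b^{6+\delta}))$. Worse, in your scheme the levels are vacuous: reaching level $i+1$ requires Case~2 at level $i$, at which point the overlap lemma already certifies $\phi(H)\ge\alpha_i/c_{con}$ and you should simply stop; the real difficulty is that Case~2 may not occur within the time budget at the top level.

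The missing ingredient is the volume-threshold schedule of \ref{alg:local decomp}: set $\bar{s}_1=2|D|/\alpha_b+1$, $\bar{s}_{\ell+1}=\bar{s}_{\ell}/\bar{s}_1^{\delta}$ (so $\bar{s}_L\le 1$ after $L\le 1/\delta$ levels), and prune a returned cut $S$ only if $|S|\ge\bar{s}_{\ell+1}/c_{size}$; if $S$ is smaller, do \emph{not} prune it and instead descend a level. The approximation guarantee $c_{size}\cdot vol(S)\ge\opt(H,\alpha_{\ell}/c_{con},B_H,\sigma)$, combined with the overlap lemma, then yields the invariant $\opt(H,\alpha_{\ell+1})<\bar{s}_{\ell+1}$ (\ref{prop:invariant local decomp}). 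This invariant does double duty: it caps the number of prunes at level $\ell$ by $c_{size}\bar{s}_{\ell}/\bar{s}_{\ell+1}=c_{size}\bar{s}_1^{\delta}=\tilde{O}(c_{size}(|D|/\alpha_b)^{\delta})$ (since disjoint pruned cuts of size $\ge\bar{s}_{\ell+1}/c_{size}$ would otherwise union to an $\alpha_{\ell}$-sparse cut of volume $\ge\bar{s}_{\ell}$), which is where the $|D|^{\delta}$ in $\overline{t}$ actually comes from; and at the bottom level it forces $\opt(H,\alpha_L)<1$, i.e.\ $\phi(H)\ge\alpha_L=\Omega(\alpha_b^{2/\delta})$, without ever needing Case~2 to occur. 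With that rule in place the rest of your outline (precondition checks, volume bound, parameter bookkeeping) goes through as in \ref{sec:LBS cut to Prune}.
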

We call $\overline{t}$ the \emph{time limit }and $\alpha$ the \emph{conductance
guarantee} of $\cA$. If we do not care about the time limit, then
there is the following algorithm gives a very good conductance guarantee:
just find the cut $C^{*}$ of conductance at most $\alpha_{b}/10$
that have maximum volume and output $P=C^{*}$. If $vol(P)>2|D|/\alpha_{b}$,
then report $\phi(G_{b})<\alpha_{b}$. Otherwise, we must have $\phi(G[V-P])=\Omega(\alpha_{b})$.
This can be shown using the result by Spielman and Teng \cite[Lemma 7.2]{SpielmanT11}.
However, computing the optimum cut $C^{*}$ is NP-hard. 

In \cite{NanongkaiS16}, they implicitly showed that using only the
LBS cut algorithm, which is basically an algorithm for finding a cut
similar to $C^{*}$ but the guarantee is only \emph{approximately}
and \emph{locally}, one can quickly obtain the one-shot expander pruning
algorithm whose conductance guarantee is not too bad. Below, we explicitly
state the reduction in \cite{NanongkaiS16}. See \ref{sec:LBS cut to Prune}
for the proof\footnote{Strictly speaking, in \cite{NanongkaiS16}, they use LBS cut algorithm
to obtain the \emph{local expansion decomposition }algorithm which
has slight stronger guarantee. Actually, even in \cite{NanongkaiS16},
they only need the one-shot expander pruning algorithm. So the proof
of \ref{lem:reduc to LBS} is simpler than the similar one in \cite{NanongkaiS16}.
Moreover, the reduction give a deterministic one-shot expander pruning
algorithm, but in \cite{NanongkaiS16}, they obtain a randomized local
expansion decomposition\emph{ }algorithm.}. 
\begin{lem}
[\cite{NanongkaiS16}]\label{lem:reduc to LBS}Suppose there is a
$(c_{size}(\sigma),c_{con}(\sigma))$-approximate LBS cut algorithm
with running time $t_{LSB}(n,vol(A),\alpha,\sigma)$ when given $(G,A,\sigma,\alpha)$
as inputs where $G=(V,E)$ is an $n$-node graph, $A\subset V$ is
a set of nodes, $\sigma$ is an overlapping parameter, and $\alpha$
is a conductance parameter. Then, there is a one-shot expander pruning
algorithm with input $(G,D,\alpha_{b},\delta)$ that has \emph{time
limit} 
\[
\overline{t}=O((\frac{|D|}{\alpha_{b}})^{\delta}\cdot\frac{c_{size}(\alpha_{b}/2)}{\delta}\cdot t_{LSB}(n,\frac{\Delta|D|}{\alpha_{b}},\alpha_{b},\alpha_{b}))
\]
and \emph{conductance guarantee} 
\[
\alpha=\frac{\alpha_{b}}{5c_{con}(\alpha_{b}/2)^{1/\delta-1}}.
\]

\end{lem}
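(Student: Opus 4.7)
The plan is to invoke the given LBS cut algorithm iteratively, in roughly $1/\delta$ nested levels of decreasing conductance thresholds, so that at each level the Case~2 certificate of LBS rules out low-conductance $(A,\sigma)$-overlapping cuts, and then a separate ``locality lemma'' (exploiting $\phi(G_b)\ge\alpha_b$) promotes this into a plain conductance lower bound on the currently pruned graph $G[V-P]$. Throughout, the overlap parameter stays fixed at $\sigma=\alpha_b/2$, and the set $A$ is initialized to the endpoints in $V$ of the deleted edges $D$ and then grown by the cuts that LBS returns.

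Before writing the algorithm itself, I would prove the locality lemma: if $\phi(G_b)\ge\alpha_b$ and $S\subseteq V-P$ satisfies $\phi_{G[V-P]}(S)<\alpha_b/c$ with $vol(S)\le vol(V-P-S)$ for a suitable constant $c$, then $S$ is $(A\cup P,\sigma)$-overlapping. The argument is a boundary counting: the many crossing edges that $G_b$ forces on $S$ but $G[V-P]$ does not see must either belong to $D$ or be incident to $P$, so their endpoints in $S$ all lie in $A\cup P$, and comparing this with $vol(S)$ yields the claimed overlap of $\sigma=\alpha_b/2$.

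With the locality lemma in hand, the algorithm's inner loop repeatedly calls the LBS cut algorithm on input $(G[V-P],A,\sigma,\alpha_b)$. If Case~1 fires, the returned cut $S$ has $\phi_{G[V-P]}(S)<\alpha_b$; we add $S$ to both $P$ and $A$ and iterate. If Case~2 fires, it certifies $\opt(G[V-P],\alpha_b/c_{con},A,\sigma)=0$, i.e., no $(\alpha_b/c_{con})$-sparse $(A,\sigma)$-overlapping cut exists in $G[V-P]$; by the locality lemma, any cut with conductance below $\alpha_b/(\Theta(1)\cdot c_{con})$ would have had to overlap with $A$, so we conclude $\phi(G[V-P])\ge\alpha_b/(\Theta(1)\cdot c_{con})$. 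To boost this single-level guarantee up to the stated $\alpha=\alpha_b/(5c_{con}^{1/\delta-1})$, I would iterate this outer procedure $\lceil 1/\delta\rceil-1$ additional times: at each new level we use the previous level's conductance certificate in place of $\alpha_b$ in the locality lemma and re-run LBS on the refined graph, losing a factor of $c_{con}$ per level.

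For the pruning-volume bound $vol_G(P)\le 2|D|/\alpha_b$, I would charge each newly added cut $S$ to the deletion edges $D$: by the locality lemma applied at the previous level, $vol(S\cap A)\ge\sigma\cdot vol(S)$, and the added volume telescopes to $O(|D|/\alpha_b)$. For the running time, $vol(A)$ never exceeds $vol_{G_b}(A_0)+vol_G(P)=O(\Delta|D|+|D|/\alpha_b)=O(\Delta|D|/\alpha_b)$, so each LBS call costs $t_{LSB}(n,\Delta|D|/\alpha_b,\alpha_b,\alpha_b)$; a volume-bucketing argument over the $(|D|/\alpha_b)^\delta$-many volume scales, together with the LBS $c_{size}$-approximation accounting for how much progress each call makes, bounds the number of LBS calls per level by $O((|D|/\alpha_b)^\delta\cdot c_{size}(\alpha_b/2))$, and summing over the $\lceil 1/\delta\rceil$ levels gives the stated time limit. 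The main obstacle will be calibrating the locality lemma so that each level loses exactly a factor $c_{con}(\alpha_b/2)$ in conductance while keeping the overlap parameter $\sigma$ fixed at $\alpha_b/2$ and keeping $vol(A)$ inside the budget $\Delta|D|/\alpha_b$ across all levels.
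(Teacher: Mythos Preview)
Your ingredients match the paper's: fixing $\sigma=\alpha_b/2$, the locality lemma that every $\alpha'$-sparse cut in $G[V-P]$ (with $\alpha'<\alpha_b/2$) must be $(B_H,\sigma)$-overlapping where $B_H$ consists of endpoints of $D$ together with the endpoints in $V-P$ of $\partial_G(P)$, iterating LBS calls to peel off sparse cuts, and a hierarchy of $L\approx 1/\delta$ levels losing a factor $c_{con}$ each. The volume bound $vol_G(P)\le 2|D|/\alpha_b$ via the locality lemma is also right.

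The genuine gap is in how the levels are used. You trigger a level transition when Case~2 fires, and describe the levels as ``boosting'' the conductance guarantee. But notice that your single-level guarantee $\phi(G[V-P])\ge\alpha_b/(\Theta(1)\cdot c_{con})$ is already \emph{stronger} than the target $\alpha=\alpha_b/(5c_{con}^{1/\delta-1})$; there is nothing to boost, and no reason to continue. The real obstacle is that your inner loop has no termination bound: in Case~1 the LBS cut $S$ satisfies $c_{size}\cdot vol(S)\ge\opt(G[V-P],\alpha_b/c_{con},B_H,\sigma)$, but this $\opt$ can be zero while $\alpha_b$-sparse cuts still exist, so LBS may return arbitrarily small cuts indefinitely and your ``volume-bucketing'' never kicks in. In the paper, the levels serve the opposite purpose: they \emph{sacrifice} conductance to \emph{bound running time}. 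At level $\ell$ one calls LBS with the smaller parameter $\alpha_\ell=\alpha_b/(5c_{con}^{\ell-1})$ under the size invariant $\opt(\cdot,\alpha_\ell)<\bar s_\ell$, where $\bar s_\ell=\bar s_1^{1-(\ell-1)\delta}$ and $\bar s_1=2|D|/\alpha_b+1$. The transition to level $\ell{+}1$ is triggered not by Case~2 but by Case~1 returning a cut with $vol(S)<\bar s_{\ell+1}/c_{size}$: then the LBS guarantee gives $\opt(\cdot,\alpha_\ell/c_{con})=\opt(\cdot,\alpha_{\ell+1})<\bar s_{\ell+1}$, establishing the next invariant. This size-threshold trigger is exactly what bounds the number of LBS calls per level by $c_{size}\cdot\bar s_1^{\delta}$, and it is missing from your description. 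A minor point: adding $S$ to $A$ puts nodes of $P$ into the overlap set, but those nodes are no longer in $G[V-P]$; the correct set is $B_H=(A\cup A_H)\cap(V-P)$ with $A_H$ the endpoints in $V-P$ of $\partial_G(P)$.
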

Having the above lemma and our new LBS cut algorithm from \ref{sec:LBS cut},
we conclude:
\begin{proof}
[Proof of \ref{thm:local pruning}]From \ref{thm:LBS cut alg} we
have that $t_{LSB}(n,vol(A),\alpha,\sigma)=\tilde{O}(\frac{vol(A)}{\alpha\sigma^{2}})$
and $c_{size}(\sigma)=O(1/\sigma^{2})$ and $c_{con}(\sigma)=O(1/\sigma^{2})$.
So 
\begin{align*}
t_{LSB}(n,\frac{\Delta|D|}{\alpha_{b}},\alpha_{b},\alpha_{b}) & =\tilde{O}(\frac{\Delta|D|}{\alpha_{b}^{4}})
\end{align*}
and hence 
\[
\overline{t}=O((\frac{|D|}{\alpha_{b}})^{\delta}\cdot\frac{c_{size}(\alpha_{b}/2)}{\delta}\cdot\frac{\Delta|D|}{\alpha_{b}^{4}})=\tilde{O}(\frac{\Delta|D|^{1+\delta}}{\delta\alpha_{b}^{6+\delta}}).
\]
We also have $\alpha=\frac{\alpha_{b}}{5c_{con}(\alpha_{b}/2)^{1/\delta-1}}=\Omega(\alpha_{b}\cdot(\alpha_{b}^{2})^{1/\delta-1})=\Omega(\alpha_{b}{}^{2/\delta})$. 
\end{proof}

\subsection{Dynamic Expander Pruning\label{sec:dynamic pruning}}

In this section, we exploit the one-shot expander pruning algorithm
from \ref{sec:static pruning}. To prove \ref{thm:dynamic pruning subpoly},
it is more convenience to prove the more general statement as follows:
\begin{lem}
\label{thm:dynamic pruning}There is an algorithm $\cA$ that can
do the following:
\begin{itemize}
\item $\cA$ is given $G_{0},\alpha_{0},\ell$ as inputs: $G_{0}=(V,E)$
is an $n$-node $m$-edge graph with maximum degree $\Delta$, and
$\alpha_{0}=\frac{1}{n^{\epsilon}}$ and $\ell$ are parameters. Let
$P_{0}=\emptyset$.
\item Then $G_{0}$ undergoes the sequence of edge deletions of length $T=O(\alpha_{0}^{2}m/\Delta)$.
\item Given the $\tau$-th update, $\cA$ takes $\tilde{O}(\ell^{2}\Delta n^{O(1/\ell+\epsilon\ell^{\ell})})$
time. Then, $\cA$ either reports $\phi(G_{0})<\alpha_{0}$ and halt,
or $\cA$ updates the pruning set $P$ to $P_{\tau}$ where $P_{\tau-1}\subseteq P_{\tau}\subseteq V$.
\item If $\phi(G_{0})\ge\alpha_{0}$ then, for all $\tau$, there exists
$W_{\tau}\subseteq P_{\tau}$ where $G_{\tau}[V-W_{\tau}]$ is connected.
\end{itemize}
\end{lem}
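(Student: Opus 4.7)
\emph{Plan.} My plan is to prove \ref{thm:dynamic pruning} by scheduling $\ell$ instances of the one-shot expander pruning algorithm of \ref{thm:local pruning}, one per ``level,'' at geometrically spaced batch sizes $D_1 < D_2 < \cdots < D_\ell = T$. At each level $i$, the instance waits until $D_i$ fresh edge deletions have arrived, invokes one-shot pruning on this batch with a slack parameter $\delta$ to be tuned (typically $\delta=1/\ell$), and then adds the returned pruning set to the global dynamic pruning set $P$. To convert the resulting amortized bound into a worst-case per-update bound, I would spread the total running time of each level-$i$ invocation evenly over the $D_i$ updates that form its batch, so that at each step the algorithm performs only a constant fraction of each currently active invocation's work.

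\emph{Cascading conductance and running time.} The key to making multiple levels cooperate is that level $i$ operates not on $G_0$ but on the ``peeled'' graph $H_{i-1}$ obtained by deleting from the current graph all pruning outputs produced by already completed lower-level calls. By \ref{thm:local pruning}, one-shot pruning with slack $\delta$ maps an input expander of conductance $\alpha_{i-1}$ to a residual graph of conductance $\alpha_i=\Omega(\alpha_{i-1}^{2/\delta})$, so the conductance degrades doubly exponentially across the $\ell$ levels. Combining the per-invocation bound $\tilde O(\Delta D_i^{1+\delta}/(\delta\alpha_{i-1}^{6+\delta}))$ from \ref{thm:local pruning} with the $D_i$-fold spreading and summing over the $\ell$ levels yields a per-update cost of the form $\tilde O(\ell^2\Delta\cdot n^{O(1/\ell)+O(\epsilon\ell^\ell)})$ for a suitably balanced choice of $\delta$ and of the ratio $D_{i+1}/D_i$, which matches the claimed bound.

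\emph{Correctness of the pruning set.} Define $P_\tau$ to be the union of (i) the nodes returned by all one-shot calls that have completed by time $\tau$ and (ii) the endpoints of every edge deletion still pending in some batch. I would then verify the invariant that, whenever $\phi(G_0)\ge\alpha_0$, the graph obtained from $G_\tau$ by removing all completed level-$\le i$ pruning outputs together with all currently pending deletions at level $\le i$ has conductance at least $\alpha_i$. The base case $i=0$ is the hypothesis on $G_0$, and the inductive step applies \ref{thm:local pruning} to the current batch at level $i$, whose size $D_i$ I would choose to fit inside the volume budget $|D|=O(\alpha_{i-1}^2|E(H_{i-1})|/\Delta)$ demanded by that theorem. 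If some level-$i$ invocation instead reports $\phi(H_{i-1})<\alpha_{i-1}$, the inductive invariant is violated, which contradicts $\phi(G_0)\ge\alpha_0$; I would then report $\phi(G_0)<\alpha_0$ and halt. Connectivity of $G_\tau[V-W_\tau]$ for some $W_\tau\subseteq P_\tau$ follows by taking $W_\tau$ to be the union of all completed pruning outputs together with the pending deletion endpoints at all levels.

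\emph{Main obstacle.} The hardest part will be reconciling three tightly coupled constraints on the batch sizes $D_i$: each $D_i$ must respect the volume budget of \ref{thm:local pruning} (which shrinks doubly exponentially in $i$ because $\alpha_{i-1}$ does); the ratio $D_{i+1}/D_i$ must be large enough that level-$(i+1)$ invocations are infrequent enough to hide their cost after spreading; and the total per-update work across all levels must remain subpolynomial. The worst-case scheduling introduces the additional subtlety that while a level-$i$ invocation is spread over $D_i$ updates its output is not yet available to contribute to $P$, so the ``pending deletion endpoints'' component of $P_\tau$ must be maintained precisely enough to preserve the connectivity invariant during this lag. Getting the balance right is what forces the $\ell^2$ and the $n^{O(1/\ell+\epsilon\ell^\ell)}$ factors in the final bound and dictates the optimal choice of $\ell$ used to derive \ref{thm:dynamic pruning subpoly}.
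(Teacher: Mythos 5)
Your architecture is essentially the paper's: a hierarchy of levels, each periodically invoking the one-shot expander pruning of \ref{thm:local pruning} on geometrically spaced batches, with the conductance cascading as $\alpha_i=\Omega(\alpha_{i-1}^{2/\delta})$ for $\delta\approx 1/\ell$, and with each invocation's cost spread evenly over its period to turn the amortized cost into a worst-case bound. The scheduling, the running-time accounting, and the failure-reporting logic all match \ref{alg:dynamic pruing}.

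The genuine gap is in how you certify connectivity during the lag between a deletion and the completion of the one-shot call that processes it. You take $W_\tau$ to be the union of the completed pruning outputs with the endpoints of all pending deletions and assert that $G_\tau[V-W_\tau]$ is connected. This does not follow. What the completed calls certify is that some graph $H=G_{\tau'}[V-W']$, for an earlier time $\tau'$, is an induced expander; your $G_\tau[V-W_\tau]$ is $H$ with the vertex set $end(D_{\mathrm{pending}})$ removed, and deleting a vertex set from an expander can disconnect it --- for instance, a degree-$3$ vertex all of whose neighbours are endpoints of pending deletions becomes isolated. The conductance of $H$ lower-bounds the number of edges leaving a set; it says nothing about the connectivity of the complement of a deleted vertex set. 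Repairing exactly this is what one-shot pruning is for, which is why the paper's bottom levels have period $d_\ell=d_{\ell+1}=1$: at every single time step $\tau$ a full call $Prune_{\alpha_{\ell}}(X^{\ell}_{\tau},D_{[\tau,\tau]})$ on a batch consisting of the one just-deleted edge is executed to completion (its cost is small enough that no spreading is needed), so that $X^{\ell+1}_{\tau}$ is an up-to-date induced expander of $G_\tau$ and $W_\tau=V-V(X^{\ell+1}_{\tau})$ is genuinely connected in $G_\tau$. Without such a zero-lag bottom level (or some other argument that pending deletions cannot disconnect the certified expander), your invariant fails at every step at which some batch is still in flight. A minor secondary point: the lemma requires $P_{\tau-1}\subseteq P_\tau$, so once an endpoint enters $P$ as ``pending'' it must remain in $P$ after its batch completes; this is harmless but should be stated, and note that the paper's construction avoids adding deletion endpoints to $P$ altogether.
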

From \ref{thm:dynamic pruning}, we immediately obtain \ref{thm:dynamic pruning subpoly}
by choosing the right parameters.
\begin{proof}
[Proof of \ref{thm:dynamic pruning subpoly}]We set $\ell=\frac{\log\frac{1}{\epsilon}}{2\log\log\frac{1}{\epsilon}}$,
so that $\ell^{\ell}=O(\frac{1}{\epsilon^{1/2}})$. Hence, 
\begin{align*}
n^{O(1/\ell+\epsilon\ell^{\ell})} & =n^{O(\log\log\frac{1}{\epsilon}/\log\frac{1}{\epsilon}+\epsilon^{1/2})}\\
 & =n^{O(\log\log\frac{1}{\epsilon}/\log\frac{1}{\epsilon})}=n^{o(1)}
\end{align*}
when $\epsilon=o(1)$. We apply \ref{thm:dynamic pruning} with this
parameters $\ell$ and $\alpha_{0}=\frac{1}{n^{\epsilon}}$ and we
are done.
\end{proof}
The rest of this section is for proving \ref{thm:dynamic pruning}.

\subsubsection{The Algorithm}

Let $G_{0},\alpha_{0},\ell$ be the inputs for the algorithm for \ref{thm:dynamic pruning}.
To roughly describe the algorithm, there will be $\ell+1$ \emph{levels}
and, in each level, this algorithm repeated calls an instance of the
one-shot expander pruning algorithm from \ref{thm:local pruning}.
In the deeper level $i$ (large $i$), we call it more frequently
but the size of the set $D$ of edges is smaller. We describe the
details and introduce some notations below. 

We fix $\delta=2/\ell$. For any $n,\delta,\alpha$, let $f_{n,\delta}(\alpha)=(c_{0}\alpha)^{2/\delta}$
be the conductance guarantee of the remaining graph from \ref{thm:local pruning}
where $c_{0}$ is some constant. We define $\alpha_{i}=f_{n,\delta}(\alpha_{i-1})$
for each \emph{level} $1\le i\le\ell+1$. Note that, we have a rough
bound $\alpha_{i}=\Omega((c_{0}\alpha_{0})^{\ell^{i}})$ for all $i$..
Given any input $G'=(V',E')$,$D',\alpha',\delta$, we denote $(X',P')=Prune_{\alpha'}(G',D')$
as the output of the one-shot expander pruning algorithm from \ref{sec:static pruning}
where $P'\subset V$ is the outputted pruning set and $X'=G'[V'-P']$
is the pruned graph. Note that we omit writing $\delta$ in $Prune_{\alpha'}(G',D')$
because $\delta=2/\ell$ is always fixed. 

For convenience, we say that a graph $X$ \emph{is an induced $\alpha$-expander
from time $\tau$ }if $X=G_{\tau}[U]$ for some $U\subset V$ and
$X$ has conductance at least $\alpha$, i.e. $\phi(X)\ge\alpha$.
For any time period $[\tau,\tau']$, we denote $D_{[\tau,\tau']}\subset E$
a set of edges to be deleted from time $\tau$ to $\tau'$. Observe
the following fact which follows by the definitions and \ref{thm:local pruning}:
\begin{fact}
\label{fact:next induced expander}Suppose that $X$ is an induced
$\alpha_{i}$-expander from time $\tau$ and $(X',P')=Prune_{\alpha_{i}}(X,D_{[\tau+1,\tau']})$.
Then $X'$ is an induced $\alpha_{i+1}$-expander from time $\tau'$. 
\end{fact}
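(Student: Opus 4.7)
The plan is to unpack the two definitions ("induced $\alpha_i$-expander from time $\tau$" and the output of $Prune_{\alpha_i}$) and then apply \Cref{thm:local pruning} directly. Write $X = G_\tau[U]$ for some $U\subseteq V$ with $\phi(X)\ge \alpha_i$, and let $D = D_{[\tau+1,\tau']} \cap E(X)$ be the edges of $X$ that get deleted during the interval. Then the ``after'' graph is $X - D = G_{\tau'}[U]$, and $X$ itself plays the role of the ``before'' graph $G_b$ in \ref{thm:local pruning} with conductance parameter $\alpha_b = \alpha_i$.

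Next I would verify the hypotheses of \ref{thm:local pruning} in this setting. The maximum-degree bound on $X$ is inherited from $G_0$ (which has maximum degree $\Delta$), and the bound $|D| = O(\alpha_i^2\, |E(X)|/\Delta)$ on the size of the deletion batch is what the scheduling at level $i$ in \Cref{sec:dynamic pruning} is designed to ensure; I would cite that design but not reprove it here since the fact is stated in the abstract setting of ``$X$'' and ``$D_{[\tau+1,\tau']}$''. Once the hypotheses hold, the guarantee $\phi(X)\ge \alpha_i$ rules out the failure branch ``report $\phi(G_b) < \alpha_b$'' of \ref{thm:local pruning}, so the algorithm must return a pruning set $P'$ such that the pruned graph $H = (X-D)[V(X)-P']$ satisfies
\[
\phi(H) \;\ge\; (c_0 \alpha_i)^{2/\delta} \;=\; f_{n,\delta}(\alpha_i) \;=\; \alpha_{i+1},
\]
where I use the definition $\alpha_{i+1} := f_{n,\delta}(\alpha_i)$ and the notation fixed before the statement.

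Finally I would observe that this $H$ is exactly the pruned graph $X'$ in the notation $(X',P') = Prune_{\alpha_i}(X, D_{[\tau+1,\tau']})$, and that
\[
X' \;=\; (X-D)[V(X)-P'] \;=\; G_{\tau'}\bigl[U - P'\bigr],
\]
since $X - D = G_{\tau'}[U]$. Hence $X'$ is an induced subgraph of $G_{\tau'}$ with conductance at least $\alpha_{i+1}$, which by definition means $X'$ is an induced $\alpha_{i+1}$-expander from time $\tau'$, as claimed. The main thing to be careful about is the bookkeeping on ``before'' versus ``after'' graphs so that the hypothesis $\phi(G_b) \ge \alpha_b$ of \ref{thm:local pruning} really corresponds to the assumption $\phi(X)\ge\alpha_i$ at time $\tau$; everything else is a direct application.
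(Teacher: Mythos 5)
Your proof is correct and matches the paper's intent exactly: the paper states this Fact without proof, asserting only that it "follows by the definitions and \ref{thm:local pruning}", and your unpacking of the before/after graphs ($G_b \leftrightarrow X$ at time $\tau$, $G \leftrightarrow X - D = G_{\tau'}[U]$, $H = G[V-P] \leftrightarrow X'$) together with $\alpha_{i+1} = f_{n,\delta}(\alpha_i)$ is precisely that argument. Your remark that the size bound on $D_{[\tau+1,\tau']}$ is guaranteed by the level-$i$ scheduling rather than by the Fact itself is also consistent with how the paper treats it.
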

To maintain the pruning set $P$, we will additionally maintain a
level-$i$ graphs $X^{i}$ and a level-$i$ pruning set $P^{i}$ for
each level $1\le i\le\ell+1$. Let $X_{\tau}^{i}$ and $P_{\tau}^{i}$
be $X^{i}$ and $P^{i}$ at time $\tau$ respectively. For each level
$i$, we initially set $X_{0}^{i}=G_{0}$ and $P_{0}^{i}=\emptyset$,
and $X^{i}$ and $P^{i}$ will be updated periodically for every $d_{i}$
time steps where $d_{i}=n^{1-i/\ell}$ for $i\le\ell$ and $d_{\ell+1}=1$.
Note that $d_{\ell}=1$. In particular, this means:
\begin{fact}
\label{fact:periodical update}For any number $k\ge0$ and time $\tau\in[kd_{i},(k+1)d_{i})$,
we have $X_{\tau}^{i}=X_{kd_{i}}^{i}$ and $P_{\tau}^{i}=P_{kd_{i}}^{i}$.
\end{fact}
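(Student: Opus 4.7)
The plan is to derive this Fact as an immediate consequence of the algorithm's periodic update schedule, which has just been stated in the preceding paragraphs. First I would formalize the informal phrase ``updated periodically for every $d_{i}$ time steps'' as the following concrete rule: the variables $X^{i}$ and $P^{i}$ are assigned new values only at times $t = d_{i}, 2d_{i}, 3d_{i}, \ldots$, and between consecutive such times they simply retain whatever value was last assigned. This interpretation is the natural reading given the surrounding sentences, which set the initial values $X^{i}_{0} = G_{0}$ and $P^{i}_{0} = \emptyset$ and then describe only the periodic maintenance; presumably, a subsequent paragraph will specify exactly what the level-$i$ update at time $kd_{i}$ does (most likely: invoke $Prune_{\alpha_{i}}$ on $X^{i}_{(k-1)d_{i}}$ together with the batch of edge deletions accumulated during $((k-1)d_{i}, kd_{i}]$, consistent with \ref{fact:next induced expander}).

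Given this schedule, the Fact is immediate. Fix $k \geq 0$ and $\tau \in [kd_{i}, (k+1)d_{i})$; the most recent assignment time to $X^{i}$ or $P^{i}$ at or before $\tau$ is $kd_{i}$, and the next assignment occurs only at $(k+1)d_{i} > \tau$. Consequently $X^{i}_{\tau} = X^{i}_{kd_{i}}$ and $P^{i}_{\tau} = P^{i}_{kd_{i}}$.

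The only mildly subtle point is the boundary case $\tau = kd_{i}$, which requires the convention (consistent with the paper's use of $G_{\tau}$ to mean ``the graph after the $\tau$-th deletion'') that $X^{i}_{\tau}$ and $P^{i}_{\tau}$ refer to the values \emph{after} the processing associated with time step $\tau$ has completed. Under this convention the equality at $\tau = kd_{i}$ is tautological, and the argument above covers the rest of the interval. I do not anticipate any technical obstacle here: this Fact is essentially a bookkeeping statement whose purpose is to pin down notation for the deeper analysis of correctness (via \ref{fact:next induced expander} applied level by level) and running time of the dynamic pruning algorithm that follows.
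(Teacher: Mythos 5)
Your proposal is correct and matches the paper, which offers no separate proof at all: the Fact is presented as an immediate consequence ("In particular, this means") of the stated schedule in which $X^{i}$ and $P^{i}$ receive new values only at integer multiples of $d_{i}$ (the assignment in Step 1.b.i takes effect at time $(k_{i}+1)d_{i}$), so they are constant on each interval $[kd_{i},(k+1)d_{i})$. Your formalization of the schedule and the boundary convention at $\tau = kd_{i}$ is exactly the intended reading.
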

In each time step, we spend time in each of the $\ell+1$ levels.
See the precise description on each level in \ref{alg:dynamic pruing}.
At any time, for any $i$, whenever we call $Prune_{\alpha_{i}}(X^{i},\cdot)$
and it report that $\phi(X^{i})<\alpha_{i}$, our algorithm will report
that $\phi(G_{0})<\alpha_{0}$ and halt.

\begin{algorithm}[H]
\caption{\label{alg:dynamic pruing}Dynamic expander pruning algorithm}

\begin{description}
\item [{Initialization:}] $X^{0}=G_{0}$ and $P_{0}=\emptyset$. For each
$1\le i\le\ell$, $d_{i}=n^{1-i/\ell}$, $X_{0}^{i}=G_{0}$, $P_{0}^{i}=\emptyset$.\end{description}
\begin{enumerate}
\item For each level $1\le i\le\ell$, for each number $k_{i}\ge0$, in
time period $[k_{i}d_{i}+1,(k_{i}+1)d_{i}]$:

\begin{enumerate}
\item Let $k_{i-1}$ be such that $k_{i-1}d_{i-1}<k_{i}d_{i}+1\le(k_{i-1}+1)d_{i-1}$.
\item During the period, distribute evenly the work to:

\begin{enumerate}
\item Update $(X_{(k_{i}+1)d_{i}}^{i},P_{(k_{i}+1)d_{i}}^{i})=Prune_{\alpha_{i-1}}(X_{k_{i-1}d_{i-1}}^{i-1},D_{[\min\{1,(k_{i-1}-1)d_{i-1}+1\},k_{i}d_{i}]}).$
\item Include $P_{(k_{i}+1)d_{i}}^{i}$ into the pruning set $P$.
\end{enumerate}
\end{enumerate}
\item For level $\ell+1$, at time $\tau$:

\begin{enumerate}
\item Update $(X_{\tau}^{\ell+1},P_{\tau}^{\ell+1})=Prune_{\alpha_{\ell}}(X_{\tau}^{\ell},D_{[\tau,\tau]})$.
\item Include $P_{\tau}^{\ell+1}$ into the pruning set $P$.\end{enumerate}
\end{enumerate}
\end{algorithm}

\subsubsection{Analysis}
\begin{lem}
\label{prop:induced expander}Suppose $\phi(G_{0})\ge\alpha_{0}$.
For any $1\le i\le\ell$ and $k_{i}$, $X_{k_{i}d_{i}}^{i}$ is an
induced $\alpha_{i}$-expander from time $\max\{0,(k_{i}-1)d_{i}\}$.\end{lem}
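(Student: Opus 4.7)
I would proceed by induction on the level $i$ (and implicitly on the checkpoint index $k_i$), leveraging the periodic update rule of \ref{alg:dynamic pruing} together with \ref{fact:next induced expander}. The skeleton is that each level-$i$ graph $X^i_{k_i d_i}$ was constructed by a single \textsc{Prune}$_{\alpha_{i-1}}$ call whose input graph is (by the hypothesis at level $i-1$) an induced $\alpha_{i-1}$-expander and whose deletion batch covers exactly the window specified in \ref{fact:next induced expander}, so one invocation of that fact promotes the expansion guarantee from $\alpha_{i-1}$ to $\alpha_i$.

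For the base case $k_i = 0$ at any level $i \ge 1$, the initialization sets $X^i_0 = G_0$, and the assumption $\phi(G_0) \ge \alpha_0 \ge \alpha_i$ (the $\alpha_j$'s are monotonically decreasing since $\alpha_j = f_{n,\delta}(\alpha_{j-1}) = (c_0 \alpha_{j-1})^{2/\delta}$ and $c_0\alpha_0$ is small) makes $X^i_0$ an induced $\alpha_i$-expander from time $\max\{0,-d_i\} = 0$.

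For the inductive step with $k_i \ge 1$, re-indexing the update rule of \ref{alg:dynamic pruing} (replacing the algorithm's $k_i$ by $k_i-1$) gives
\[
(X^i_{k_i d_i},\, P^i_{k_i d_i}) \;=\; Prune_{\alpha_{i-1}}\!\Bigl(X^{i-1}_{k_{i-1} d_{i-1}},\; D_{[\max\{1,(k_{i-1}-1)d_{i-1}+1\},\,(k_i-1) d_i]}\Bigr),
\]
where $k_{i-1}$ is the unique index satisfying $k_{i-1} d_{i-1} \le (k_i-1) d_i < (k_{i-1}+1) d_{i-1}$. The inductive hypothesis at level $i-1$ with index $k_{i-1}$ asserts that $X^{i-1}_{k_{i-1} d_{i-1}}$ is an induced $\alpha_{i-1}$-expander from time $\tau^\star := \max\{0,(k_{i-1}-1) d_{i-1}\}$; in particular $\phi(X^{i-1}_{k_{i-1} d_{i-1}}) \ge \alpha_{i-1}$, so the \textsc{Prune} call cannot hit the ``$\phi < \alpha_{i-1}$'' abort branch under the assumption $\phi(G_0)\ge\alpha_0$. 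By construction the deletion batch fed to \textsc{Prune} is exactly $D_{[\tau^\star+1,\,(k_i-1)d_i]}$, i.e.\ every edge removed from the effective time $\tau^\star$ of the parent induced expander up to the current checkpoint $(k_i-1)d_i$. Applying \ref{fact:next induced expander} then yields that $X^i_{k_i d_i}$ is an induced $\alpha_i$-expander from time $(k_i-1)d_i = \max\{0,(k_i-1)d_i\}$, completing the induction.

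The main obstacle here is entirely bookkeeping: pinning down that the chosen parent index $k_{i-1}$ and the deletion endpoints line up exactly with the ``from time $\tau$, up to time $\tau'$'' requirement of \ref{fact:next induced expander}, and that the boundary cases $k_{i-1}=0$ and small $k_i$ are absorbed uniformly by the $\max\{0,\cdot\}$ clamp (so the empty-deletion case where $\tau^\star+1 > (k_i-1) d_i$ also reduces to the trivial application of \ref{fact:next induced expander}). Once this indexing is sorted out, the lemma drops out from a single invocation of \ref{fact:next induced expander} per level of the induction.
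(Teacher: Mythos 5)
Your proof is correct and follows essentially the same route as the paper: induction on the level $i$, with a single application of \ref{fact:next induced expander} per step, the only cosmetic difference being that you re-index the \textsc{Prune} call to produce $X^i_{k_i d_i}$ directly while the paper derives the statement for $X^i_{(k_i+1)d_i}$ and then "translates back by $d_i$ steps." Your extra bookkeeping on the deletion-window endpoints and the non-abort condition is a faithful (and slightly more explicit) version of what the paper leaves implicit.
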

\begin{proof}
When $k_{i}=0$, this is trivial. We now prove the claim for $k_{i}>0$
by induction on $i$. Let $k_{i-1}$ be a number from Step 1.a such
that $k_{i-1}d_{i-1}<k_{i}d_{i}+1\le(k_{i-1}+1)d_{i-1}$. $X_{k_{i-1}d_{i-1}}^{i-1}$
is an induced $\alpha_{i-1}$-expander from time $(k_{i-1}-1)d_{i-1}$
by induction hypothesis. By Step 1.b.ii and \ref{fact:next induced expander},
we have $X_{(k_{i}+1)d_{i}}^{i}$ is an induced $\alpha_{i}$-expander
from time $k_{i}d_{i}$. By translating back the time by $d_{i}$
steps, we can conclude the claim.\end{proof}
\begin{lem}
\label{lem:induced expander l+1}Suppose $\phi(G_{0})\ge\alpha_{0}$.
For any time $\tau$, $X_{\tau}^{\ell+1}$ is an induced $\alpha_{\ell+1}$-expander
from time $\tau$.\end{lem}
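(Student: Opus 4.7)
The plan is to reduce the claim to the combination of \ref{prop:induced expander} (applied at level $\ell$) and \ref{fact:next induced expander} (applied once to go from level $\ell$ to level $\ell+1$). The key arithmetic observation is that $d_\ell = n^{1-\ell/\ell}=1$, so at level $\ell$ the graph $X^\ell$ is refreshed at every time step, which is exactly what makes the passage to level $\ell+1$ work one step at a time.

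First, I would fix an arbitrary time $\tau\ge 1$ and set $k_\ell = \tau$. By \ref{fact:periodical update} applied at level $\ell$ with $d_\ell=1$, we have $X_\tau^\ell = X_{k_\ell d_\ell}^\ell$, and then \ref{prop:induced expander} (with $i=\ell$) yields that $X_\tau^\ell$ is an induced $\alpha_\ell$-expander from time $\max\{0,(k_\ell-1)d_\ell\} = \max\{0,\tau-1\}$. For $\tau\ge 1$ this reference time is $\tau-1$; the edge case $\tau=0$ is handled directly by the initialization $X_0^{\ell+1}=G_0$ together with the hypothesis $\phi(G_0)\ge \alpha_0 \ge \alpha_{\ell+1}$.

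Next I would invoke Step~2.a of \ref{alg:dynamic pruing}, which sets
\[
(X_\tau^{\ell+1}, P_\tau^{\ell+1}) \;=\; \mathit{Prune}_{\alpha_\ell}\bigl(X_\tau^\ell,\; D_{[\tau,\tau]}\bigr).
\]
Since $X_\tau^\ell$ is an induced $\alpha_\ell$-expander from time $\tau-1$, and since $D_{[\tau,\tau]} = D_{[(\tau-1)+1,\,\tau]}$, the hypothesis of \ref{fact:next induced expander} is satisfied with $i=\ell$, starting time $\tau-1$, and ending time $\tau' = \tau$. The conclusion of that fact is exactly that $X_\tau^{\ell+1}$ is an induced $\alpha_{\ell+1}$-expander from time $\tau$, which is what we want.

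The only thing one should double-check is that the call $\mathit{Prune}_{\alpha_\ell}(X_\tau^\ell, D_{[\tau,\tau]})$ does not trigger the ``report $\phi<\alpha_\ell$ and halt'' branch of \ref{thm:local pruning}; but this is guaranteed by the previous paragraph (since $\phi(X_\tau^\ell)\ge \alpha_\ell$ under the standing assumption $\phi(G_0)\ge\alpha_0$), so no obstacle arises. Thus the main (and only nontrivial) step is the clean telescoping between levels $\ell$ and $\ell+1$, made possible by the choice $d_\ell=d_{\ell+1}=1$.
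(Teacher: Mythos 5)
Your proposal is correct and follows essentially the same route as the paper's own proof: apply \ref{prop:induced expander} at level $i=\ell$ (using $d_\ell=1$) to get that $X_\tau^\ell$ is an induced $\alpha_\ell$-expander from time $\tau-1$, then apply Step~2.a together with \ref{fact:next induced expander} to pass to level $\ell+1$. Your additional remarks on the $\tau=0$ edge case and on the non-failure of the $\mathit{Prune}$ call are harmless elaborations of what the paper leaves implicit (the latter is handled separately in \ref{lem:dyn pruning never fail}).
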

\begin{proof}
Note that $d_{\ell}=1$. By \ref{prop:induced expander} when $i=\ell$,
we have that after the $\tau$-th update, $X_{\tau}^{\ell}$ is an
induced $\alpha_{\ell}$-expander from time $\tau-1$. By Step 2.a
and \ref{fact:next induced expander}, $X_{\ell+1}$ is an induced
$\alpha_{\ell+1}$-expander from time $\tau$.\end{proof}
\begin{lem}
\label{lem:dyn pruning never fail}Suppose $\phi(G_{0})\ge\alpha_{0}$.
Then the algorithm never reports that $\phi(G_{0})<\alpha_{0}$.\end{lem}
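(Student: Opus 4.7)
The plan is to show that the algorithm reports $\phi(G_0)<\alpha_0$ only when some invocation of the one-shot pruning subroutine $Prune_{\alpha_i}(X^i, \cdot)$ itself reports $\phi(X^i)<\alpha_i$, and then to argue that under the hypothesis $\phi(G_0)\geq \alpha_0$ no such invocation can fail. So the task reduces to showing that at every call site, the graph passed to $Prune$ has conductance at least the required threshold.

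I would enumerate the two call sites. The first is Step 1.b.i, which at level $i\in\{1,\dots,\ell\}$ invokes $Prune_{\alpha_{i-1}}(X^{i-1}_{k_{i-1}d_{i-1}},\,D_{[\,\cdot,\,k_id_i]})$. The second is Step 2.a, which at each time $\tau$ invokes $Prune_{\alpha_\ell}(X^\ell_\tau, D_{[\tau,\tau]})$. In both cases the first argument is of the form $X^{j}_{k_j d_j}$ for some $j\in\{0,1,\dots,\ell\}$ and some integer $k_j\geq 0$ (using $d_\ell=1$ in the second case, so $X^\ell_\tau=X^\ell_{\tau\cdot d_\ell}$). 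Thus it suffices to verify, for every such $(j,k_j)$, that $\phi(X^{j}_{k_jd_j})\geq \alpha_j$.

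For $j=0$ this is immediate since $X^0 = G_0$ and the hypothesis gives $\phi(G_0)\geq \alpha_0$. For $1\leq j\leq \ell$ this is exactly the conclusion of \ref{prop:induced expander}: under the same hypothesis $\phi(G_0)\geq \alpha_0$, the graph $X^{j}_{k_jd_j}$ is an induced $\alpha_j$-expander from time $\max\{0,(k_j-1)d_j\}$, so in particular $\phi(X^{j}_{k_jd_j})\geq \alpha_j$. By \ref{thm:local pruning}, the $Prune_{\alpha_j}$ subroutine can only report failure when the input graph has conductance strictly less than $\alpha_j$; since this never occurs, the main algorithm never reaches the failure branch and hence never reports $\phi(G_0)<\alpha_0$.

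The only subtlety to be careful about is circularity: \ref{prop:induced expander} was itself proved assuming the $Prune$ calls return pruning sets (not failure). The clean way to handle this is to fold the argument into a joint induction on $(i,k_i)$ in lexicographic order: at each step one simultaneously verifies that the call is on an $\alpha_{i-1}$-expander (so $Prune$ succeeds) and that the resulting $X^i_{(k_i+1)d_i}$ is an induced $\alpha_i$-expander via \ref{fact:next induced expander}. Since this bookkeeping is exactly the induction already carried out in \ref{prop:induced expander} and \ref{lem:induced expander l+1}, the present lemma then follows by invoking them and observing, as above, that no call site is ever fed a graph of conductance below its threshold.
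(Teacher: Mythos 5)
Your proof is correct and follows essentially the same route as the paper: reduce the claim to showing that every graph fed to $Prune$ has conductance at least the required threshold, and then invoke \ref{prop:induced expander} and \ref{lem:induced expander l+1} (together with the base case $X^0=G_0$). Your remark about the apparent circularity, resolved by folding everything into the lexicographic induction already underlying \ref{prop:induced expander}, is a welcome extra bit of care that the paper's own proof leaves implicit.
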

\begin{proof}
Recall that the algorithm will report $\phi(G_{0})<\alpha_{0}$ only
when, for some $i$ and $j$, the call of $Prune_{\alpha_{i}}(X_{j}^{i},\cdot)$
reports that $\phi(X_{j}^{i})<\alpha_{i}$. By \ref{fact:periodical update},
\ref{prop:induced expander} and \ref{lem:induced expander l+1},
if $\phi(G_{0})\ge\alpha_{0}$, then $X_{j}^{i}$ is an induced $\alpha_{j}$-expander
for all $j$ and in particular $\phi(X_{j}^{i})\ge\alpha_{i}$. So
the algorithm never reports that $\phi(G_{0})<\alpha_{0}$.
\end{proof}
The following proposition is easy to see by Step 1.b.ii and Step 2.b
in \ref{alg:dynamic pruing}:
\begin{prop}
For any $1\le i\le\ell+1$ and $k_{i}$, $P_{k_{i}d_{i}}^{i}\subseteq P_{k_{i}d_{i}}$.\label{prop:include pruning set}\end{prop}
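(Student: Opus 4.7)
The plan is to prove this by direct inspection of \ref{alg:dynamic pruing}, since the statement essentially just unpacks what the algorithm does. The key observation is that whenever a level-$i$ pruning set $P^i_{j}$ is (re)computed in the algorithm, it is immediately merged into the global pruning set $P$ in the very next substep. Concretely, for $1 \le i \le \ell$ the pair $(X^i, P^i)$ is updated in Step~1.b.i and then $P^i_{(k_i+1)d_i}$ is added to $P$ in Step~1.b.ii; for $i = \ell+1$, $P^{\ell+1}_\tau$ is computed in Step~2.a and added to $P$ in Step~2.b. Combined with the fact that $P$ only grows over time (updates only \emph{include} nodes into $P$, never remove), this gives the containment.

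More precisely, I would split into cases on $i$ and $k_i$. The base case $k_i = 0$ is trivial because the initialization sets $P^i_0 = \emptyset \subseteq P_0$. For the case $1 \le i \le \ell$ and $k_i \ge 1$, I would apply Step~1 with the index shifted down by one: during the time period $[(k_i-1)d_i + 1,\, k_i d_i]$, the algorithm computes $P^i_{k_i d_i}$ and by the end of that period (i.e.\ at time $k_i d_i$) has included $P^i_{k_i d_i}$ into $P$ via Step~1.b.ii, so $P^i_{k_i d_i} \subseteq P_{k_i d_i}$. For $i = \ell+1$, since $d_{\ell+1} = 1$, the index $k_i d_i$ is simply the current time $\tau$, and Step~2.b at time $\tau$ directly includes $P^{\ell+1}_\tau$ into $P$.

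I do not expect any real obstacle. The only bookkeeping point worth being careful about is the time-indexing convention: the update producing $P^i_{(k_i+1)d_i}$ is spread out over the interval $[k_i d_i + 1, (k_i+1)d_i]$, but the inclusion in Step~1.b.ii is understood to be completed by the end of that interval, so the subscript $(k_i+1)d_i$ on both $P^i$ and $P$ refers to the same moment, and the claimed containment holds at every ``checkpoint'' time $k_i d_i$.
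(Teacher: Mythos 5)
Your proof is correct and matches the paper's reasoning: the paper gives no explicit proof, stating only that the proposition ``is easy to see by Step 1.b.ii and Step 2.b,'' which is precisely the observation you elaborate (each newly computed $P^i$ is immediately included into $P$, and $P$ only grows). Your care with the time-indexing convention is a reasonable expansion of that one-line justification.
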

\begin{lem}
For any $\tau$, $V-V(X_{\tau}^{\ell+1})\subseteq P_{\tau}$.\label{lem:outside expander in pruning set}\end{lem}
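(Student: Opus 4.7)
The plan is to prove, by induction on the level $i$, the stronger invariant
\[
V - V(X^{i}_{\tau}) \;\subseteq\; P_{\tau} \qquad \text{for every } \tau \text{ and every } 1 \le i \le \ell+1,
\]
and then specialize to $i=\ell+1$. The key algorithmic fact is that each $X^{i}$ is obtained from the previous level's graph by a single call to $Prune$, so its removed vertex set is exactly some $P^{i}_{\cdot}$, which Proposition~\ref{prop:include pruning set} then places inside the master pruning set $P$. Monotonicity of $P$ (it is never shrunk by the algorithm) handles the slight mismatch between the time at which a level was last refreshed and the current time $\tau$.

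For the base case $i=1$, fix $\tau$ and let $k_{1}$ satisfy $\tau \in [k_{1}d_{1}, (k_{1}+1)d_{1})$. By Fact~\ref{fact:periodical update}, $X^{1}_{\tau} = X^{1}_{k_{1}d_{1}}$. Step~1.b.i produced $X^{1}_{k_{1}d_{1}}$ via $Prune_{\alpha_{0}}(G_{0}, \cdot)$, so $V(X^{1}_{k_{1}d_{1}}) = V - P^{1}_{k_{1}d_{1}}$; by Proposition~\ref{prop:include pruning set} and monotonicity of $P$, $P^{1}_{k_{1}d_{1}} \subseteq P_{k_{1}d_{1}} \subseteq P_{\tau}$, giving the invariant at level $1$.

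For the inductive step at level $i$ with $2 \le i \le \ell$, again write $\tau \in [k_{i}d_{i}, (k_{i}+1)d_{i})$ and let $k_{i-1}$ be the index from Step~1.a, so that $k_{i-1}d_{i-1} \le k_{i}d_{i}$. By Fact~\ref{fact:periodical update} and Step~1.b.i,
\[
V(X^{i}_{\tau}) \;=\; V(X^{i}_{k_{i}d_{i}}) \;=\; V(X^{i-1}_{k_{i-1}d_{i-1}}) - P^{i}_{k_{i}d_{i}},
\]
hence $V - V(X^{i}_{\tau}) = \bigl(V - V(X^{i-1}_{k_{i-1}d_{i-1}})\bigr) \cup P^{i}_{k_{i}d_{i}}$. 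The inductive hypothesis applied at time $k_{i-1}d_{i-1}$ gives $V - V(X^{i-1}_{k_{i-1}d_{i-1}}) \subseteq P_{k_{i-1}d_{i-1}}$, while Proposition~\ref{prop:include pruning set} gives $P^{i}_{k_{i}d_{i}} \subseteq P_{k_{i}d_{i}}$; monotonicity then bounds both sets by $P_{\tau}$, completing the step.

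Finally, for level $\ell+1$ (the one we actually need), Step~2.a sets $X^{\ell+1}_{\tau} = Prune_{\alpha_{\ell}}(X^{\ell}_{\tau}, D_{[\tau,\tau]})$, so $V(X^{\ell+1}_{\tau}) = V(X^{\ell}_{\tau}) - P^{\ell+1}_{\tau}$. Combining the level-$\ell$ instance of the invariant with Proposition~\ref{prop:include pruning set} (for $P^{\ell+1}_{\tau} \subseteq P_{\tau}$) yields $V - V(X^{\ell+1}_{\tau}) \subseteq P_{\tau}$, as required. The only subtlety—and hence the main thing to verify carefully—is the timing argument $k_{i-1}d_{i-1} \le k_{i}d_{i}$ together with monotonicity of $P$, which is what lets us transfer the inductive hypothesis forward from a possibly earlier refresh time to the current $\tau$.
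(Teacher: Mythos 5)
Your proof is correct and takes essentially the same route as the paper's: the paper unrolls the chain $V(X_{k_{\ell}d_{\ell}}^{\ell})=V-\bigcup_{0\le i<\ell}P_{(k_{\ell-i}-i)d_{\ell-i}}^{\ell-i}$ explicitly and then applies \ref{prop:include pruning set} together with the monotonicity of $P$, which is exactly your level-by-level induction written out as a telescoping identity. The only cosmetic difference is that you absorb the timing offsets into the inductive hypothesis (using only $k_{i-1}d_{i-1}\le k_{i}d_{i}\le\tau$) instead of tracking the explicit indices, and you should just note the trivial case $k_i=0$ where $X^i_\tau=G_0$.
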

\begin{proof}
It suffices to prove that $V(X_{\tau}^{\ell+1})\supseteq V-P_{\tau}$.
Let $k_{\ell}=\tau$. We have $V(X_{\tau}^{\ell+1})=V(X_{k_{\ell}d_{\ell}}^{\ell})-P_{\tau}^{\ell+1}$.
By Step 1.a and 1.b.i, we can write

\begin{align*}
V(X_{k_{\ell}d_{\ell}}^{\ell}) & =V(X_{(k_{\ell-1}-1)d_{\ell-1}}^{\ell-1})-P_{k_{\ell}d_{\ell}}^{\ell}\\
 & =V(X_{(k_{\ell-2}-2)d_{i-2}}^{\ell-2})-P_{(k_{\ell-1}-1)d_{\ell-1}}^{\ell-1}-P_{k_{\ell}d_{\ell}}^{\ell}\\
 & \vdots\\
 & =V(X^{0})-\bigcup_{0\le i<\ell}P_{(k_{\ell-i}-i)d_{\ell-i}}^{\ell-i}\\
 & =V-\bigcup_{0\le i<\ell}P_{(k_{\ell-i}-i)d_{\ell-i}}^{\ell-i}
\end{align*}
where $k_{i-i}$ is the largest number where $(k_{\ell-i}-1)d_{\ell-i}\le k_{\ell-i+1}d_{\ell-i+1}$.
(For convenience, let $X_{j}^{i}=X_{0}^{i}=G_{0}$ and $P_{j}^{i}=\emptyset$
for any negative $j<0$.) Observe that $(k_{\ell-i}-i)d_{\ell-i}\le k_{\ell}d_{\ell}$
for all $i\le\ell$. Therefore, by \ref{prop:include pruning set},
we have $P_{(k_{\ell-i}-i)d_{\ell-i}}^{\ell-i}\subseteq P_{(k_{\ell-i}-i)d_{\ell-i}}\subseteq P_{k_{\ell}d_{\ell}}$.
This implies that $V(X_{k_{\ell}d_{\ell}}^{\ell})\supseteq V-P_{k_{\ell}d_{\ell}}=V-P_{\tau}.$
We conclude $V(X_{\tau}^{\ell+1})=V(X_{k_{\ell}d_{\ell}}^{\ell})-P_{\tau}^{\ell+1}\supseteq V-P_{\tau}$,
by \ref{prop:include pruning set} again.
\end{proof}
Now, we can conclude the correctness of the algorithm:
\begin{cor}
After given the $\tau$-th update, the algorithm either correctly
reports $\phi(G_{0})<\alpha_{0}$ and halt, or updates the pruning
set\emph{ $P$ }to\emph{ $P_{\tau}$ where }$P_{\tau-1}\subseteq P_{\tau}\subseteq V$.
If $\phi(G_{0})\ge\alpha_{0}$ then, for all $\tau$, there exists
$W_{\tau}\subseteq P_{\tau}$ where $G_{\tau}[V-W_{\tau}]$ is connected.\label{cor:dyn pruning correct}\end{cor}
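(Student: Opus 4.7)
The plan is to read the corollary as three separate assertions and dispatch each using the three lemmas that immediately precede it (\ref{lem:dyn pruning never fail}, \ref{lem:induced expander l+1}, and \ref{lem:outside expander in pruning set}), together with direct inspection of \ref{alg:dynamic pruing}. I do not expect any new technical ingredient here; the work has been done in the preceding analysis.

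First I would address the bookkeeping claims $P_{\tau-1}\subseteq P_{\tau}\subseteq V$. By inspection of Steps~1.b.ii and~2.b of \ref{alg:dynamic pruing}, the algorithm only ever \emph{includes} new nodes into $P$ — it never removes anything — and the nodes it adds come from pruning sets $P_{\cdot}^{i}\subseteq V$. So monotonicity and $P_\tau\subseteq V$ are immediate.

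Next I would handle the correctness of the ``halt and report'' branch. By construction the algorithm reports $\phi(G_0)<\alpha_0$ only when some subcall $\mathrm{Prune}_{\alpha_i}(X^i,\cdot)$ itself reports $\phi(X^i)<\alpha_i$. The contrapositive of \ref{lem:dyn pruning never fail} is precisely that this can only occur when $\phi(G_0)<\alpha_0$, so the report is correct.

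The main assertion is the existence of $W_\tau\subseteq P_\tau$ with $G_\tau[V-W_\tau]$ connected, in the case $\phi(G_0)\geq\alpha_0$. The natural witness is
\[
W_\tau \;:=\; V-V(X_\tau^{\ell+1}).
\]
\ref{lem:outside expander in pruning set} says exactly $W_\tau\subseteq P_\tau$. \ref{lem:induced expander l+1} says $X_\tau^{\ell+1}$ is an induced $\alpha_{\ell+1}$-expander from time $\tau$, which by the definition stated just above \ref{fact:next induced expander} means $X_\tau^{\ell+1}=G_\tau[U]$ for some $U\subseteq V$ with $\phi(X_\tau^{\ell+1})\geq\alpha_{\ell+1}$. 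By construction $U=V(X_\tau^{\ell+1})=V-W_\tau$, so $G_\tau[V-W_\tau]=X_\tau^{\ell+1}$. Since $\alpha_{\ell+1}=\Omega((c_0\alpha_0)^{\ell^{\ell+1}})>0$, the graph $X_\tau^{\ell+1}$ has strictly positive conductance, and a graph with positive conductance is necessarily connected (any proper nonempty cut has at least one crossing edge). This gives the desired $G_\tau[V-W_\tau]$ connected.

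The only step requiring any care is matching the definition of ``induced $\alpha$-expander from time $\tau$'' to the actual host graph $G_\tau$, which is needed to turn the conclusion of \ref{lem:induced expander l+1} into a statement about $G_\tau[V-W_\tau]$; but that is a one-line unfolding of a definition. I do not anticipate any genuine obstacle; the corollary is essentially a restatement of the previous three lemmas once $W_\tau$ is chosen as $V-V(X_\tau^{\ell+1})$.
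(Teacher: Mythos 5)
Your proof is correct and follows essentially the same route as the paper's: both choose the witness $W_{\tau}=V-V(X_{\tau}^{\ell+1})$, invoke \ref{lem:dyn pruning never fail} for the correctness of the failure report, \ref{lem:outside expander in pruning set} for $W_{\tau}\subseteq P_{\tau}$, and \ref{lem:induced expander l+1} to conclude that $G_{\tau}[V-W_{\tau}]=X_{\tau}^{\ell+1}$ is an $\alpha_{\ell+1}$-expander and hence connected. Your unfolding of the ``induced expander from time $\tau$'' definition and the remark that positive conductance implies connectivity are exactly the implicit steps in the paper's argument.
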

\begin{proof}
By \ref{lem:dyn pruning never fail}, we have that the algorithm either
\emph{correctly} reports that $\phi(G_{0})<\alpha_{0}$ and halt,
or updates the pruning set $P$ to $P_{\tau}$ and $P_{\tau-1}\subseteq P_{\tau}\subseteq V$
because we only grow $P$ through time. Let $W_{\tau}=V-V(X_{\tau}^{\ell+1})$.
By \ref{lem:outside expander in pruning set}, we have $W_{\tau}\subseteq P_{\tau}$
and also $G_{\tau}[V-W_{\tau}]=X_{\tau}^{\ell+1}$ which is an $\alpha_{\ell+1}$-expander
by \ref{lem:induced expander l+1}, when $\phi(G_{0})\ge\alpha_{0}$.
In particular $G_{\tau}[V-W_{\tau}]$ is connected.
\end{proof}
Finally, we analyze the running time.
\begin{lem}
For each update, the algorithm takes $\tilde{O}(\ell^{2}\Delta n^{O(1/\ell+\epsilon\ell^{\ell})})$
time.\label{lem:dyn pruning run time}\end{lem}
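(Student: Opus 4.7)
The plan is to bound the work done at each of the $\ell+1$ levels separately and then sum. For each level $1\le i\le \ell$, Step~1 of \ref{alg:dynamic pruing} invokes the one-shot pruning algorithm once every $d_i = n^{1-i/\ell}$ update steps, on a batch $D$ whose time range $[\,(k_{i-1}-1)d_{i-1}+1,\,k_id_i\,]$ has length at most $2d_{i-1}$ (this is immediate from the choice of $k_{i-1}$ in Step~1.a). Applying \ref{thm:local pruning} with $\delta = 2/\ell$ and $|D|\le 2d_{i-1}$, one such invocation costs
\[
\tilde O\!\left(\frac{\Delta\,(2d_{i-1})^{1+\delta}}{\delta\,\alpha_{i-1}^{6+\delta}}\right) \;=\; \tilde O\!\left(\ell\,\Delta\,d_{i-1}^{\,1+2/\ell}\,\alpha_{i-1}^{-O(1)}\right).
\]
Since Step~1.b evenly distributes this work over the $d_i$ update steps of the period, the per-update cost at level $i$ is
\[
\tilde O\!\left(\ell\,\Delta\,\alpha_{i-1}^{-O(1)}\cdot n^{(1-(i-1)/\ell)(1+2/\ell)-(1-i/\ell)}\right) \;=\; \tilde O\!\left(\ell\,\Delta\,\alpha_{i-1}^{-O(1)}\cdot n^{3/\ell - 2(i-1)/\ell^2}\right),
\]
where the exponent of $n$ arising from the batch-size-over-frequency ratio simplifies to at most $3/\ell = O(1/\ell)$. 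Level $\ell+1$ contributes $\tilde O(\ell\,\Delta\,\alpha_\ell^{-O(1)})$ per update, since by Step~2 each update triggers one one-shot call on a single-edge batch.

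To convert the conductance factors into the claimed bound, I will unroll the recursion $\alpha_i = f_{n,\delta}(\alpha_{i-1}) = \Omega((c_0\alpha_{i-1})^{\ell})$ (using $\delta = 2/\ell$) stated at the start of \ref{sec:dynamic pruning}, giving $\alpha_i = \Omega((c_0\alpha_0)^{\ell^i})$. Plugging in $\alpha_0 = n^{-\epsilon}$ yields $\alpha_{i-1}^{-O(1)} = n^{O(\epsilon\,\ell^{i-1})}$, which is at most $n^{O(\epsilon\,\ell^\ell)}$ for every $i\le \ell+1$. Substituting back, the per-update cost at every single level is $\tilde O\!\left(\ell\,\Delta\,n^{O(1/\ell + \epsilon\,\ell^\ell)}\right)$, and summing over the $\ell+1 = O(\ell)$ levels multiplies by an extra $O(\ell)$ factor, producing the desired $\tilde O\!\left(\ell^2\,\Delta\,n^{O(1/\ell + \epsilon\,\ell^\ell)}\right)$ bound.

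The only non-mechanical step is verifying that the precondition $|D| = O(\alpha_{i-1}^2\,|E(X^{i-1})|/\Delta)$ of \ref{thm:local pruning} holds at every invocation, and that $X^{i-1}_{k_{i-1}d_{i-1}}$ really is an $\alpha_{i-1}$-expander at the moment of the call (otherwise the call would halt and our accounting would be vacuous). The latter is exactly \ref{prop:induced expander} together with \ref{lem:dyn pruning never fail}. The former reduces, via $|D|\le 2d_{i-1} = 2n^{1-(i-1)/\ell}$ and the assumption $T = O(\alpha_0^2 m/\Delta)$ on the length of the deletion sequence, to the inequality $n^{1-(i-1)/\ell} = O(\alpha_{i-1}^2\,m/\Delta)$; this is where the choice $d_i = n^{1-i/\ell}$ is tight, and it is the main (purely bookkeeping) obstacle, but it is enforced by the hypothesis $\epsilon = o(1)$ on the $\ell$-dependence of the conductance degradation. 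Once this precondition is in place, the level-by-level accounting above gives the claimed running-time bound.
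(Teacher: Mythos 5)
Your proof is correct and follows essentially the same route as the paper's: per-level accounting with batch size $O(d_{i-1})$, the per-invocation cost from the one-shot pruning theorem amortized over the $d_i$-step period via $d_{i-1}^{1+\delta}/d_i\le n^{3/\ell}$, and the unrolled conductance recursion $\alpha_i=\Omega((c_0\alpha_0)^{\ell^i})$ contributing the $n^{O(\epsilon\ell^{\ell})}$ factor, with the $\ell^2$ coming from $1/\delta=\ell/2$ times the $\ell+1$ levels. Your closing remark about checking the precondition $|D|=O(\alpha_b^2 m/\Delta)$ of the one-shot theorem is a point the paper's own proof silently elides, so flagging it is fine, but it does not change the argument.
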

\begin{proof}
We separately analyze the running time for each level. At level $i$,
in time period $[k_{i}d_{i}+1,(k_{i}+1)d_{i}]$, the bottleneck is
clearly for calling $Prune_{\alpha_{i-1}}(X_{k_{i-1}d_{i-1}}^{i-1},D_{[\min\{1,(k_{i-1}-1)d_{i-1}+1\},k_{i}d_{i}]})$.
Note that $|D_{[\min\{1,(k_{i-1}-1)d_{i-1}+1\},k_{i}d_{i}]}|\le d_{i-1}$.
By \ref{thm:local pruning}, this takes time $\overline{t}_{i}=\tilde{O}(\frac{\Delta\cdot d_{i-1}^{^{1+\delta}}}{\delta\alpha_{i-1}^{6+\delta}})$.
Since we distribute the work evenly in the period, this takes $\overline{t}_{i}/d_{i}=\tilde{O}(\frac{\Delta\cdot d_{i-1}^{^{1+\delta}}}{\delta\alpha_{i-1}^{6+\delta}d_{i}})=\tilde{O}(\frac{\Delta n^{3/\ell}}{\delta\alpha_{i-1}^{6+\delta}})$
per step. This is because $\delta=2/\ell$, $d_{i}=n^{1-i/\ell}$,
and so 
\[
d_{i-1}^{^{1+\delta}}/d_{i}\le n^{\delta}\cdot d_{i-1}/d_{i}=n^{\delta+1/\ell}=n^{3/\ell}.
\]
Since there are $\ell+1$ level, this takes in total per time step
\begin{align*}
\sum_{i\le\ell+1}\tilde{O}(\frac{\Delta n^{3/\ell}}{\delta\alpha_{i-1}^{6+2/\ell}}) & =\tilde{O}(\frac{\ell^{2}\Delta n^{3/\ell}}{\alpha_{\ell}^{8}}) & \mbox{by }\ell\ge1\mbox{ and }\mbox{\ensuremath{\delta}=2/\ensuremath{\ell}}\\
 & =\tilde{O}(\frac{\ell^{2}\Delta n^{3/\ell}}{((c_{0}\alpha_{0})^{\ell^{\ell}})^{8}}) & \mbox{by }\alpha_{i}=\Omega((c_{0}\alpha_{0})^{\ell^{i}})\\
 & =\tilde{O}(\ell^{2}\Delta n^{O(1/\ell+\epsilon\ell^{\ell})}).
\end{align*}
 
\end{proof}
By \ref{cor:dyn pruning correct} and \ref{lem:dyn pruning run time},
this concludes the proof of \ref{thm:dynamic pruning}, and hence
\ref{thm:dynamic pruning subpoly}.

\section{Pruning on Arbitrary Graphs}
\label{sec:pruning_lasvegas}

In \ref{thm:dynamic pruning subpoly}, we show a fast deterministic
algorithm that guarantees connectivity of the pruned graph $G[V-W]$ only when
an initial graph is an expander. If the initial graph is not an expander,
then there is no guarantee at all. With a simple modification, in this section, 
we will show a fast randomized algorithm for an arbitrary initial graph that either
outputs the desired pruning set or reports failure. Moreover, if the
the initial graph is an expander, then it never fails with high probability. 

This section is needed in order to make our final algorithm Las Vegas.
If we only want a Monte Carlo algorithm, then it is enough to use
\ref{thm:dynamic pruning subpoly} when we combine every component
together in \ref{sec:Dynamic MSF}.

\begin{thm}
	\label{thm:pruning detect failure}Consider any $\epsilon(n)=o(1)$,
	and let $\alpha_{0}(n)=1/n^{\epsilon(n)}$. There is a dynamic algorithm
	$\cA$ that can maintain a set of nodes $P$ for a graph $G$ undergoing
	$T=O(m\alpha_{0}^{2}(n))$ edge deletions as follows. Let $G_{\tau}$
	and $P_{\tau}$ be the graph $G$ and set $P$ after the $\tau^{th}$
	deletion, respectively. 
	\begin{itemize}
		\item Initially, in $\tilde{O}(n \log\frac{1}{p})$ time $\cA$ sets $P_{0}=\emptyset$
		and takes as input an $n$-node $m$-edge graph $G_{0}=(V,E)$ with
		maximum degree $3$. 
		\item After the $\tau^{th}$ deletion, $\cA$ takes $O(n^{O(\log\log\frac{1}{\epsilon(n)}/\log\frac{1}{\epsilon(n)})}\log\frac{1}{p})=O(n^{o(1)}\log\frac{1}{p})$
		time to either 1) report nodes to be added to $P_{\tau-1}$ to form
		$P_{\tau}$ where 
		\[
		\exists W_{\tau}\subseteq P_{\tau}\mbox{ s.t. \ensuremath{G_{\tau}[V-W_{\tau}]} is connected}
		\]
		or 2) reports failure. If $\phi(G_{0})\geq\alpha_{0}(n)$, then $\cA$
		never fails with probability $1-p$. 
	\end{itemize}
\end{thm}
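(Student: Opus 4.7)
The plan is to run the deterministic algorithm of \ref{thm:dynamic pruning subpoly} unchanged and pair it with a fast randomized verifier for the specific witness set $W_\tau := V \setminus V(X_\tau^{\ell+1})$ already maintained internally by that algorithm. By \ref{lem:outside expander in pruning set} we always have $W_\tau \subseteq P_\tau$, and by \ref{lem:induced expander l+1} the graph $G_\tau[V-W_\tau] = X_\tau^{\ell+1}$ is connected whenever $\phi(G_0) \geq \alpha_0(n)$. After invoking the deterministic algorithm at each step to update $P_\tau$ and $W_\tau$, the randomized algorithm verifies that $G_\tau[V-W_\tau]$ is connected at the present moment; if so it reports the new additions to $P$, otherwise it reports failure and halts.

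For the verifier I would maintain $k = \Theta(\log(1/p))$ independent copies of the Kapron--King--Mountjoy dynamic connectivity data structure \cite{KapronKM13,GibbKKT15}. Each copy tracks the dynamic graph $H_\tau := G_\tau[V-W_\tau]$: we feed it edge deletions for the input update and for every edge incident to a vertex that enters $W_\tau$, and edge insertions should a vertex ever leave $W_\tau$. KKM supports these operations in polylogarithmic worst-case time and maintains a spanning forest on the current graph, with the one-sided property that ``two vertices lie in the same tree'' always truthfully implies ``two vertices lie in the same component'', while the converse may fail with probability at most $n^{-c}$ per operation for a constant $c$ we may choose. Thus a single copy's report that all of $V-W_\tau$ lies in one tree is an unconditional certificate that $H_\tau$ is connected.

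Correctness and timing then follow. Whenever we accept, KKM soundness guarantees $H_\tau$ is connected, so $W_\tau$ witnesses the property required of $P_\tau$. If $\phi(G_0) \geq \alpha_0(n)$, $H_\tau$ is truly connected at every step, and the probability that all $k$ KKM copies simultaneously miss this at some step is at most $T \cdot n^{-ck}$; with $c$ a sufficiently large constant and $k = \Theta(\log(1/p))$, a union bound over the $T = O(m \alpha_0^2(n)) = \poly(n)$ steps bounds the total failure probability by $p$. For running time, the deterministic algorithm contributes $n^{o(1)}$ per step, and because its own worst-case per-step budget of $n^{o(1)}$ limits how much $V(X_\tau^{\ell+1})$ can change in one step, $|W_\tau \triangle W_{\tau-1}| = n^{o(1)}$; the maximum-degree-$3$ assumption then produces $n^{o(1)}$ KKM operations per copy per step, yielding a total of $n^{o(1)} \log(1/p)$ after multiplying by $k$ copies and the polylogarithmic cost per operation. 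Preprocessing each copy on the $n$-vertex bounded-degree input costs $\tilde{O}(n)$, for a total of $\tilde{O}(n \log(1/p))$.

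The main obstacle is fitting the verification step inside the $n^{o(1)}$ worst-case budget: a naive BFS test already costs $\Omega(n)$, and no deterministic dynamic connectivity structure is known to meet this budget, so we are forced to use a randomized structure against an oblivious adversary. The oblivious-adversary requirement is met because the KKM copies see only the adversary's edge deletions and the deterministic outer algorithm's resulting updates, neither of which depends on the random bits of KKM. The one-sided soundness of KKM is crucial: it is why a spurious failure report in the non-expander case is harmless (the theorem allows failure there), and it is exactly what lets confidence amplification via $\log(1/p)$ independent copies turn a Monte Carlo connectivity test into a Las Vegas verifier that never certifies a false positive.
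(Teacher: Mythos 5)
Your overall architecture (deterministic pruning from \ref{thm:dynamic pruning subpoly} plus a one-sided randomized connectivity check built on Kapron--King--Mountjoy) is the right idea and matches the paper's in spirit, but the object you choose to verify creates a genuine gap. You run the connectivity structure on $H_\tau=G_\tau[V-W_\tau]$ with $W_\tau=V\setminus V(X_\tau^{\ell+1})$, and your time bound hinges on the claim that $|W_\tau\triangle W_{\tau-1}|=n^{o(1)}$ because the outer algorithm has an $n^{o(1)}$ per-step budget. That inference is false. The algorithm spreads the \emph{computation} of $Prune_{\alpha_{i-1}}(X^{i-1},\cdot)$ over a period of $d_i$ steps, but the graph $X^i$ is swapped in atomically at the period boundary (cf. \ref{fact:periodical update}), by a pointer change that costs $O(1)$ time regardless of how many vertices it affects. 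From the telescoping identity in the proof of \ref{lem:outside expander in pruning set}, $W_\tau$ is a union of level-wise pruning sets $P^{j}$, each of which is replaced wholesale by a fresh, unrelated output of the one-shot pruning algorithm at its own period boundary; these sets have volume up to $\Theta(d_{j-1}/\alpha_{j-1})$, which for small $j$ is $n^{1-1/\ell+o(1)}$, i.e., polynomial. So at period boundaries $W_\tau$ is not monotone and can change by polynomially many vertices in a single step, forcing polynomially many KKM edge updates and blowing the worst-case budget. (Your per-step budget argument does correctly bound the growth of $P_\tau$, since reporting additions to $P$ is explicitly distributed over each period --- but it does not bound the churn of $W_\tau$.)

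The fix, which is what the paper does, is to decouple the connectivity structure from $W_\tau$ entirely: maintain a single KKM spanning forest $F$ of the \emph{whole} graph $G_\tau$ (one edge deletion per step, polylogarithmic work) and, using an ET-tree, test whether all of $V-P_\tau$ lies in one tree of $F$. This test is still sound by the same one-sided property you identified: $F$ is always a genuine subforest of $G_\tau$, so if one tree contains all of $V-P_\tau$, its vertex set $U\supseteq V-P_\tau$ induces a connected subgraph, and $W:=V-U\subseteq P_\tau$ is a valid witness --- you do not need $W$ to equal the internal witness of \ref{thm:dynamic pruning subpoly}. Completeness in the expander case follows because the true witness $W_\tau\subseteq P_\tau$ guarantees $V-P_\tau$ is genuinely connected in $G_\tau$, so a correct spanning forest must span it. Since $P_\tau$ grows by only $n^{o(1)}$ vertices per step and $F$ changes by $O(1)$ edges per step, the ``does $F$ span $V-P$'' predicate is maintainable in $n^{o(1)}$ worst-case time. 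Your amplification to $k=\Theta(\log\frac 1p)$ independent copies is a fine alternative to invoking KKM directly with failure parameter $p$, and your oblivious-adversary and preprocessing arguments carry over unchanged to this corrected version.
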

\begin{proof}
	Let $\cA_{\sf}$ be an instance of the Monte Carlo dynamic spanning forest by Kapron
	et al. \cite{KapronKM13} that guarantees to maintain a correct spanning forest with probability $1-p$ 
	when the update sequence has length $\poly(n)$. 
	Let $\cA_{prune}$ be an instance of the dynamic expander
	pruning from \ref{thm:dynamic pruning subpoly}. Given $\cA_{\sf}$
	and $\cA_{prune}$, the algorithm is very simple.
	
	The preprocessing algorithm is just to initialize $\cA_{\sf}$ and
	$\cA_{prune}$ on the graph $G$ in time $\tilde{O}(n\log\frac{1}{p})$ and $O(1)$ respectively. Then, given
	a sequence of edge deletions, $\cA_{prune}$ maintains the pruning
	set $P$ and $\cA_{\sf}$ maintains a spanning forest $F$ of the
	current graph $G$. We say that $F$ \emph{spans} $V-P$ iff all the
	nodes in $V-P$ are in the same connected component of $F$. The update
	algorithm is just to check if $F$ spans $V-P$. If yes, then we report
	$P$ as the desired pruning set. If no, then we report failure. 
	
	Now, we analyze the update algorithm. The update time of $\cA_{\sf}$
	is $O(\log^{O(1)}n\cdot\log\frac{1}{p})$. Moreover, we can check
	if $F$ spans $V-P$ easily by implementing ET-tree on $F$. This
	takes $O(\log n)$ update time. 
	The time used by $\cA_{prune}$ is $O(n^{O(\log\log\frac{1}{\epsilon(n)}/\log\frac{1}{\epsilon(n)})})$.
	So the total update time is at most $O(n^{O(\log\log\frac{1}{\epsilon(n)}/\log\frac{1}{\epsilon(n)})} \log\frac{1}{p})$

	It remains to show the correctness. 
	If the algorithm does not fail, then $F$ spans $V-P$. Hence, there
	is $W\subset P$ where $G[V-W]$ is connected. Finally, if $\phi(G_{0})\geq\alpha_{0}(n)$,
	then by \ref{thm:dynamic pruning subpoly} we have that there is $W\subset P$
	where $G[V-W]$ is connected. Then, $F$ must span $V-P\subseteq V-W$
	with high probability, because $F$ is a spanning forest of $G$ with
	high probability by the guarantee in \cite{KapronKM13}.
\end{proof}

\section{Reduction from Graphs with Few Non-tree Edges Undergoing Batch Insertions\label{sec:contraction}}

In this section, we show the following crucial reduction:
\begin{thm}
\label{thm:reduc restricted dec}Suppose there is a decremental $\msf$
algorithm $\cA$ for any $m'$-edge graph with max degree 3 undergoing
a sequence of edge deletions of length $T(m')$, and $\cA$ has $t_{pre}(m',p)$
preprocessing time and $t_{u}(m',p)$ worst-case update time with
probability $1-p$. 

Then, for any numbers $B$ and $k$ where $15k\le m'$, there is a
fully dynamic $\msf$ algorithm $\cB$ for any $m$-edge graph with
at most $k$ non-tree edges such that $\cB$ can:
\begin{itemize}
\item preprocess the input graph in time 
\[
t'{}_{pre}(m,k,B,p)=t_{pre}(15k,p')+O(m\log^{2}m),
\]

\item handle a batch of $B$ edge insertions or a single edge deletion in
time: 
\[
t'_{u}(m,k,B,p)=O(\frac{B\log k}{k}\cdot t{}_{pre}(15k,p')+B\log^{2}m+\frac{k\log k}{T(k)}+\log k\cdot t{}_{u}(15k,p')),
\]
 
\end{itemize}
where $p'=\Theta(p/\log k)$ and the time guarantee for each operation
holds with probability $1-p$.
\end{thm}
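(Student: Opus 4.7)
}

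\textbf{Step 1: Contraction to $O(k)$ edges.} Given the input graph $G$ with $F=\msf(G)$ having at most $k$ non-tree edges, identify the set $K\subseteq V(G)$ of endpoints of these non-tree edges, so $|K|\leq 2k$. Consider the minimal subtree $T_K$ of $F$ that spans $K$ together with all its degree-$\geq 3$ Steiner nodes; $T_K$ has $O(k)$ ``important'' nodes, and each maximal tree-path between important nodes is a degree-$2$ chain of $F$ that will be contracted to a single edge whose weight is the maximum weight along the path (so that correctness of $\msf$ under \ref{fact:contract} is preserved). The resulting contracted multigraph $H$ has $O(k)$ edges; a standard degree-reduction gadget that replaces a vertex of degree $d$ by a binary tree of $O(d)$ new degree-$3$ vertices with zero-weight edges then gives $|E(H)|\leq 15k$ and maximum degree $3$. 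Throughout, a top tree on $F$ (\ref{thm:top tree}) lets us locate the image in $H$ of any edge of $G$, find path-max-weight edges for replacements, and update the contraction in $O(\log m)$ time per primitive change. The preprocessing time $O(m\log^2 m) + t_{pre}(15k,p')$ is used to build these top trees and to initialize $\cA$ on~$H$.

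\textbf{Step 2: Hierarchical batching to simulate insertions decrementally.} Since $\cA$ is only decremental, we adapt the Henzinger--King / Holm et al.\ contraction idea, now used as in WN \cite{Wulff-Nilsen16a} but in this restricted setting. Maintain $L=\lceil\log_2 k\rceil$ levels of independent decremental instances $\cA_0,\cA_1,\ldots,\cA_L$ of the algorithm promised in the hypothesis, each running on a contracted graph with at most $15k$ edges. Level $i$ is \emph{responsible} for the batch of the most recent $\Theta(2^i)$ insertions together with the ``old'' contracted graph, and is rebuilt from scratch whenever its bucket overflows or its deletion budget $T(k)$ is exhausted. The current true $\msf$ at any time is recovered by applying \ref{fact:sparsify} along the chain: $\msf(G) \subseteq \bigcup_i \msf(\text{level-}i\text{ contents})$, so a query walks through the $L=O(\log k)$ levels and spends $O(\log k)\cdot t_u(15k,p')$ per update. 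A deletion is forwarded as a single edge deletion to every level, and a batch of $B$ insertions is fed to level $0$ (which is then cascaded upward when overflows occur).

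\textbf{Step 3: Spreading rebuild work for worst-case bounds.} This is the main obstacle: the naive hierarchy gives only amortized bounds. As in WN, we schedule the rebuild of every level over the next $\Theta(\text{bucket capacity})$ operations, maintaining two ``old'' and ``new'' copies of each instance while the rebuild proceeds and flipping to the new copy the moment it is ready; the old copy must in the meantime absorb all intervening deletions and insertions, which it can since its budget was chosen to cover exactly that window. Setting bucket sizes so that a rebuild of the base instance of cost $t_{pre}(15k,p')$ is distributed over $\Theta(k/\log k)$ operations yields the $\frac{B\log k}{k}\cdot t_{pre}(15k,p')$ term for a batch of $B$ insertions; the $\frac{k\log k}{T(k)}$ term accounts for the forced rebuild of an instance once its deletion budget $T(k)$ is exceeded (this rebuild is also spread, over $T(k)/\log k$ steps). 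The $B\log^2 m$ term covers the $B$ top-tree operations used to translate insertions into edges of the contracted graphs. A union bound over the at most $O(\log k)$ active instances queried per update turns each per-operation failure probability $p'=\Theta(p/\log k)$ into the claimed probability $1-p$.

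\textbf{Correctness} follows because (i) contraction preserves $\msf$ by \ref{fact:contract}, (ii) each level's decremental instance correctly maintains the $\msf$ of its current contents whenever it has not exhausted its budget, (iii) by \ref{fact:sparsify} the union of level-wise minimum spanning forests contains $\msf(G)$, so extracting $\msf$ of that union gives the true answer, and (iv) the scheduling above ensures that at every moment some old-or-new copy at each level is both up-to-date and within budget. The hard part, as flagged above, is the careful interleaving of rebuilds with live updates so that worst-case (not merely amortized) guarantees hold; the rest is bookkeeping along the top tree.
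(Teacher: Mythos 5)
Your plan follows the paper's proof essentially step for step: the super-edge contraction of tree paths between the $O(k)$ terminals (endpoints of non-tree edges plus branching Steiner nodes, with max-weight super edges) is exactly \ref{lem:reduc contract}, built on Holm et al.'s two-phase top-tree structure (\ref{lem:contraction}); the logarithmic hierarchy of decremental instances with lazily rebuilt buckets is \ref{lem:reduc to dec few edge}; and the old/new-copy scheduling that spreads each rebuild over its bucket's lifetime while the stale copy absorbs intervening updates (at double speed) is precisely how the paper obtains worst-case rather than amortized bounds. The individual terms in the update time, including $\frac{k\log k}{T(k)}$ from restarting instances whose deletion budget is exhausted, are accounted for the same way.

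There is, however, one correctness-critical mechanism your sketch omits, and your point~(iii) papers over it. For \ref{fact:sparsify} to certify that the union of the levels' minimum spanning forests contains $\msf(G)$, you need the invariant that every non-tree edge of $G$ is currently a non-tree edge of \emph{some} level's graph. This is not maintained by ``insertions go to level $0$, deletions go to every level'': when a tree edge $e$ of $F$ is deleted, several levels may each promote their locally lightest reconnecting edge into their local forest, but only one of these (the globally lightest, $f^{*}$) becomes a tree edge of $G$. The others remain non-tree edges of $G$ yet have just been consumed as tree edges inside their own levels, so they are no longer covered and a later deletion could miss the true replacement. The paper's clean-up procedure therefore collects all returned reconnecting edges other than $f^{*}$ --- including those emitted by half-built instances being fed updates at double speed --- and re-injects them into level $0$ as if they were fresh insertions; this is also why level $0$'s bucket must have capacity $\Omega(\log k)$, i.e.\ the hypothesis $B\ge 5\lceil\log k\rceil$ in \ref{lem:reduc to dec few edge}. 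A second, smaller omission: each level's rebuild must recompute the contraction for its own terminal set, so one pre-warmed contraction structure on $F$ does not suffice; the paper keeps six copies per level in phase one so that a free, up-to-date one is always available the moment a rebuild starts. Both fixes are standard, but neither is mere top-tree bookkeeping.
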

The proof of \ref{thm:reduc restricted dec} is by extending the reduction
by Wulff-Nilsen \cite{Wulff-Nilsen16a} in two ways. First, the resulting
algorithm is more efficient when there are few non-tree edges. Second,
the resulting algorithm can also quickly handle a batch of edge insertions. 

Although, the extension of the reduction is straightforward and also
uses the same ``contraction'' technique by Henzinger and King \cite{HenzingerK97b}
and Holm et al. \cite{HolmLT01}, we emphasize that our purpose for
using the ``contraction'' technique is conceptually very different
from all previous applications of the (similar) technique \cite{HenzingerK97,HolmLT01,HolmRW15,Wulff-Nilsen16a}.
The purpose of all previous applications is for reducing decremental
algorithms to fully dynamic algorithms. However, this goal is not
crucial for us. Indeed, in our application, by slightly changing the
algorithm, the input dynamic $\msf$ algorithm for \ref{thm:reduc restricted dec}
can also be fully-dynamic and not decremental. But it is very important
that the reduction must give an algorithm that is faster when there
are few non-tree edges and can handle batch insertions. Therefore,
this work illustrates a new application of the ``contraction'' technique. 

There are previous attempts for speeding up the algorithm when there
are few non-tree edges. In the dynamic $\sf$ algorithm of Nanongkai
and Saranurak \cite{NanongkaiS16} and the dynamic $\msf$ algorithm
of Wulff-Nilsen \cite{Wulff-Nilsen16a}, they both also devised the
algorithms that run on a graph with $k$ non-tree edges by extending
the 2-dimensional topology tree of Frederickson \cite{Frederickson85}.
The algorithms have $O(\sqrt{k})$ update time. In the
context of \cite{NanongkaiS16,Wulff-Nilsen16a}, they have $k=n^{1-\epsilon_{0}}$
for some small constant $\epsilon_{0}>0$ where $n$ is the number
of nodes, and hence $O(\sqrt{k})=O(n^{0.5-\epsilon_{0}/2})$. This
eventually leads to their dynamic $\sf$ and $\msf$ algorithms with
update time $n^{0.5-\Omega(1)}$. 

In our application paper, we will have $k=n^{1-o(1)}$ and the update
time of $O(\sqrt{k})$ is too slow. Fortunately, using the reduction
from this section, we can reduce to the problem where the algorithm
runs on graphs with only $O(k)$ edges, and then recursively run our
algorithm on that graph. Together with other components, this finally
leads to the algorithm with subpolynomial update time. 

The rest of this section is for proving \ref{thm:reduc restricted dec}.
Although the proof is by straightforwardly extending the reduction
of Wulff-Nilsen \cite{Wulff-Nilsen16a} which is in turn based on
the reduction by Holm et al. \cite{HolmLT01}, the reduction itself
is still quite involved. Moreover, in \cite{Wulff-Nilsen16a}, it
is only outlined how to extend from \cite{HolmLT01}. Therefore, below,
we give a more detailed proof for completeness.

\subsection{Reduction to Decremental Algorithms for Few Non-tree Edges}

In this section, we reduce from fully dynamic $\msf$ algorithms running
on a graph with $k$ non-tree edges and can handle a batch insertion
to decremental $\msf$ algorithms running on a graph with $k$ non-tree
edges as well. We will reduce further to decremental algorithms running
on a graph with $O(k)$ edges in later sections. This can be done
by straightforwardly adjusting the reduction from \cite{HolmLT01,Wulff-Nilsen16a},
we extend the reduction so that the resulting algorithm can handle
batch insertions, and the input algorithm also runs on graph with
few non-tree edges. 
\begin{lem}
Suppose there is a decremental $\msf$ algorithm $\cA$ for any $m$-edge
graph with at most $k$ non-tree edges and has preprocessing time
$t_{pre}(m,k,p)$ and update time $t_{u}(m,k,p)$. Then, for any $B\ge5\left\lceil \log k\right\rceil $,
there is a fully dynamic $\msf$ algorithm $\cB$ for any $m$-edge
graph with at most $k$ non-tree edges such that $\cB$ can:
\begin{itemize}
\item preprocess the input graph in time 
\[
t'_{pre}(m,k,B,p)=t_{pre}(m,k,p')+O(m\log m),
\]

\item handle a batch of $B$ edge insertions or an edge deletion in time:
\[
t'_{u}(m,k,B,p)=O(\sum_{i=0}^{\left\lceil \log k\right\rceil }t_{pre}(m,\min\{2^{i+1}B,k\},p')/2^{i}+B\log m+\log k\cdot t_{u}(m,k,p')),
\]

\end{itemize}
where $p'=O(p/\log k)$ and the time guarantee for each operation
holds with probability $1-p$.\label{lem:reduc to dec few edge}
\end{lem}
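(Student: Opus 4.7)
\medskip\noindent\textbf{Proof plan for \ref{lem:reduc to dec few edge}.}

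The plan is to extend the classical level-based ``contraction'' framework of Henzinger--King~\cite{HenzingerK97b} and Holm et al.~\cite{HolmLT01}, as refined for worst-case bounds by Wulff-Nilsen~\cite{Wulff-Nilsen16a}, with two modifications: (i) each level-$i$ subgraph is capped not only in total edge count but in \emph{non-tree edge count} $\min\{2^{i+1}B,k\}$, so that the inner decremental algorithm $\cA$ is invoked with the right non-tree-edge parameter; and (ii) the insertion routine processes a whole batch of $B$ edges at once rather than a single edge. The structure will maintain $L+1=\lceil\log k\rceil+1$ decremental $\msf$ instances $\cA_{0},\cA_{1},\ldots,\cA_{L}$, each running on a subgraph $G_{i}\subseteq G$ that partitions $E(G)$. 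A single global $\msf(G)$ is maintained in a top tree (\ref{thm:top tree}), which is correct because iteratively applying \ref{fact:sparsify} gives $\msf(G)=\msf\bigl(\bigcup_{i}\msf(G_{i})\bigr)$, i.e.\ the global $\msf$ is determined by the at most $\sum_{i}\min\{2^{i+1}B,k\}=O(k)$ tree edges reported by the levels.

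First, I would describe the layout. At any time, level $i$ holds $G_{i}$ with at most $\min\{2^{i}B,k\}$ non-tree edges; preprocessing simply loads all of $G$ into level $L$ (since $2^{L}B\ge k$), costing $t_{pre}(m,k,p')+O(m\log m)$ where the second term sets up the top tree. For a deletion of $e$, I would locate the unique level $i$ owning $e$, feed the deletion to $\cA_{i}$, and ask every level for its (possibly new) candidate replacement edge; by \ref{fact:sparsify} the global replacement is the minimum over these $O(\log k)$ candidates, and the top tree is updated in $O(\log m)$ time. This gives the $\log k\cdot t_{u}(m,k,p')$ summand. For a batch insertion of $B$ edges, I would place the $B$ new edges as a fresh ``level $0$'' collection; if that would violate the level-$0$ cap, the standard cascade kicks in: find the smallest $j$ such that levels $0,\dots,j$ together exceed $2^{j}B$ non-tree edges and rebuild them into a single new level-$(j+1)$ instance via $\cA$'s preprocessing on $\le 2^{j+1}B$ non-tree edges. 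Updating the top tree for the $B$ inserted edges (swap tests against the existing forest) takes $O(B\log m)$ time by \ref{thm:top tree}.

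The key step is to make the rebuilds \emph{worst-case}, following Wulff-Nilsen's schedule: a rebuild at level $j$, which costs $t_{pre}(m,\min\{2^{j+1}B,k\},p')$, is started $2^{j}/B$ batches in advance and its work is spread uniformly across those batches, so each batch contributes $t_{pre}(m,\min\{2^{j+1}B,k\},p')/2^{j}$ work per level; summing over $j=0,\dots,L$ yields the claimed $\sum_{i=0}^{\lceil\log k\rceil}t_{pre}(m,\min\{2^{i+1}B,k\},p')/2^{i}$ bound. During a rebuild, both the ``old'' and ``new'' copies of the affected levels must be available; deletions and replacement queries are routed to both until the rebuild completes, at which point the new copy takes over. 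The probability bound $1-p$ follows from a union bound over the $O(\log k)$ active instances, each run with failure probability $p'=\Theta(p/\log k)$.

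The main obstacle I expect is the bookkeeping around concurrent, staggered rebuilds in the worst-case schedule: because a batch insertion can trigger cascades at several levels and because a deletion during an ongoing rebuild must be correctly forwarded to both the old and new copies of a level, I must argue carefully that at any point in time (a) each non-tree edge lives in exactly one ``authoritative'' level, (b) each edge deletion or candidate-replacement query is forwarded to at most one running copy per level so the $\log k\cdot t_{u}$ bound is preserved, and (c) a rebuild that begins at the correct time always completes before its new level is needed, even when batch sizes and cascade depths vary. Once this invariant is established, the running-time and correctness claims follow directly from the per-level guarantees of $\cA$ and the top-tree subroutine.
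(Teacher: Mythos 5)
Your high-level framework is the right one and matches the paper's: level-indexed decremental instances with non-tree-edge caps $\min\{2^{i+1}B,k\}$, spread-out preprocessing, double-buffered old/new copies, a top tree for the global forest, and correctness via \ref{fact:sparsify}. You also correctly identify the concurrency bookkeeping as the crux. However, there is a genuine gap in how you specify the rebuild schedule. You describe the cascade as \emph{overflow-triggered} (``if that would violate the level-$0$ cap\ldots find the smallest $j$ such that levels $0,\dots,j$ together exceed $2^{j}B$ non-tree edges''), which is the amortized binary-counter mechanism, and then separately claim each rebuild ``is started $2^{j}/B$ batches in advance.'' These two statements are incompatible: you cannot begin a rebuild in advance of a trigger that depends on future insertions. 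The paper resolves this by making the schedule \emph{unconditional and periodic}: level $i$ is rebuilt every $2^{i}$ update operations regardless of occupancy, with four slots per level ($j=1,2$ receiving fresh material while $j=3,4$ are frozen and being merged into the new level-$(i{+}1)$ instance), and the new instance is fed updates at double speed during the second half of its $2^{i}$-step window so it is current when it takes over. Without committing to a fixed, input-independent schedule, your worst-case per-operation bound is not established.

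Two further points. First, an arithmetic inconsistency: spreading $t_{pre}(m,\min\{2^{j+1}B,k\},p')$ work over $2^{j}/B$ batches yields $B\cdot t_{pre}(\cdot)/2^{j}$ per batch, not $t_{pre}(\cdot)/2^{j}$; the correct period is $2^{j}$ \emph{update operations} (each batch insertion or single deletion counts as one step), which is also what makes the level-$j$ non-tree count $O(2^{j}B)$. Second, you never use the hypothesis $B\ge 5\lceil\log k\rceil$. It is needed because a single update can eject $O(\log k)$ reconnecting edges from the various instances (up to two per double-speed instance), and all of these must fit, together with the $\le B$ freshly inserted non-tree edges, into the new level-$0$ instance of capacity $2B$; without this the level caps, and hence the invariant $|N_{i,j}|\le\min\{2^{i+1}B,k\}$, break down.
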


\subsubsection{Preprocessing}

We are given an input graph $G=(V,E)$ for $\cB$ and parameters $B$
and $p$. Let $F=\msf(G)$ denote the $\msf$ of $G$ through out
the update sequence. Let $N=E-F$ denote the set of non-tree edges.
We have that $|E|\le m$ and $|N|\le k$ at each step.

Let $L=\left\lceil \log k\right\rceil $ and $p'=p/c_{0}L$ for some
large enough constant $c_{0}$. In the algorithm, we will maintain
subgraphs $G_{i,j}$ of $G$ for each $0\le i\le L$ and $1\le j\le4$.
Additionally, there is a a subgraph $G_{L,0}$ of $G$. Let $N_{i,j}=E(G_{i,j})-\msf(G_{i,j})$.
We maintain an invariant that $N=\bigcup_{i,j}N_{i,j}$ and $|N_{i,j}|\le\min\{2^{i}B,k\}$. 

Let $\cD_{i,j}$ be an instance of the decremental $\msf$ algorithm
$\cA$ from the assumption of \ref{lem:reduc to dec few edge} that
maintains $\msf(G{}_{i,j})$. Initially, we set $G_{L,1}=G$ and other
$G_{i,j}=\emptyset$. The preprocessing algorithm is simply to initialize
$\cD_{L,1}$ and the top tree $\cT(F)$ on $F$,

\subsubsection{Update}

Given the $\tau$-th update, we say that we are\emph{ }at time step
$\tau$. There are two cases: either inserting a batch of edges or
deleting an edge. In either cases, after handling the update, we then
apply the clean-up procedure. We now describe the procedure.

\paragraph{Inserting a Batch of Edges:}

Let $I$ be a set of edges to be inserted where $|I|\le B$. We define
the set $R$ as follows. For each edge $e=(u,v)\in I$, if $u$ and
$v$ are not connected in $F$, then add $e$ to $F$. Otherwise,
$u$ and $v$ are connected in $F$ by the unique path $P_{u,v}$.
Let $f$ be the edge with maximum weight in $P_{u,v}$. Note that
$f$ can be found using the top tree $\cT(F)$. If $w(e)>w(f)$, then
we include $e$ to $R$. If $w(e)<w(f)$, then we set $F\gets F\cup e-f$
and include $f$ to $R$. Observe that $|R|\le B$. Then we run the
clean-up procedure with input $R$.

\paragraph{Deleting an Edge:}

Let $e$ be the edge to be deleted. For all $i$ and $j$, we set
$G_{i,j}\gets G_{i,j}-e$ and update $\cD_{i,j}$ accordingly. Let
$R_{0}$ be the set of reconnecting edges of $\msf(G{}_{i,j})$ returned
from $\cD_{i,j}$ over all $i$ and $j$. Among edges in $R_{0}$,
let $f$ be the lightest edge that can reconnect $F$. If $f$ exists,
set $F\gets F\cup f-e$ and $R\gets R_{0}-f$. Otherwise, $F\gets F-e$
and $R\gets R_{0}$. Then  we run the clean-up procedure with input
$R$. Observe that $|R|\le4L+1$.

\paragraph{Clean-up: }

In the following, suppose that $\cD'$ is an instance of decremental
$\msf$ algorithm running on some graph $G'$. Let $N'=E(G')-\msf(G')$.
For any $i$ and $j$, when we write $\cD_{i,j}\gets\cD'$, this means
that we set the instance $\cD_{i,j}$ to be $\cD'$. So $G_{i,j}=G'$.
If we write $\cD_{i,j}\gets\emptyset$, this means that we destroy
the instance $\cD_{i,j}$. So $G_{i,j}=\emptyset$. The time needed
for setting $\cD_{i,j}\gets\cD'$ or $\cD_{i,j}\gets\emptyset$ is
constant because it can be done by swapping pointers.

Let $R$ be the set of input edges for the clean-up procedure. Let
$R'$ be another set of edges that we will define below. Now, we describe
the clean-up procedure. For each $i$ starting from $0$ to $L+1$,
we execute the \emph{clean-up procedure for level $i$}. For any fixed
$i$, the procedure for level $i$ is as follows. 

For $i=0$, we initialize an instance of dynamic contracted $\msf$
algorithm $\cD'_{0}$ on $G'_{0}=(V,F\cup R\cup R')$. We set $\cD_{0,j}\gets\cD'_{0}$
for some $j\in\{1,2\}$ where $\cD_{0,j}=\emptyset$.

For $i>0$, all the steps $\tau$ not divisible by $2^{i}$ are for
initializing an instance of decremental $\msf$ algorithm $\cD'_{i}$
on a graph $G'_{i}$ that will be specified below. If $2^{i}$ divides
$\tau$, we claim that $\cD'{}_{i}$ is finished initializing on some
graph $G'{}_{i}$. For $0\le i\le L$, we set $\cD_{i,j}\gets\cD'_{i}$
for some $j\in\{1,2\}$ where $\cD_{i,j}=\emptyset$. If $i=L+1$,
we just set $\cD_{L,0}\gets\cD'_{i}$. Then, for any $1\le i\le L+1$,
we set $(\cD_{i-1,3},\cD_{i-1,4})\gets(\cD_{i-1,1},\cD_{i-1,2})$
and $(\cD_{i-1,1},\cD_{i-1,2})\gets(\emptyset,\emptyset)$. 

We do the following during the time period $[\tau,\tau+2^{i})$ for
initializing $\cD'_{i}$ on $G'_{i}$. Let $I_{1}=[\tau,\tau+2^{i-1})$
and $I_{2}=[\tau+2^{i-1},\tau+2^{i})$ be the first and second halves
of the period. During $I_{1}$, we evenly distribute the work for
initializing $\cD'_{i}$ on $G'_{i}$ where $G'_{i}=(V,F\cup N'_{i})$,
$N'_{i}=N_{i-1,3}\cup N_{i-1,4}$ for $0<i\le L$ and $N'_{L+1}=N_{L,0}\cup N_{L,3}\cup N_{L,4}$
for $i=L+1$. We note that and $N_{i-1,3},N_{i-1,4}$ and $F$ are
the sets of edges at time $\tau$ and hence $G'_{i}$ is a subgraph
of $G$. After $I_{1}$, we have finished the initialization $\cD'_{i}$
on the $2^{i-1}$-step-old version of $G'_{i}$. Therefore, during
$I_{2}$, we update $\cD'_{i}$ at ``double speed'' so that after
these $2^{i-1}$ steps, $\cD'_{i}$ is running on the up-to-date $G'_{i}$
as desired. More precisely, at time $\tau+2^{i-1}+k$, we feed to
$\cD'_{i}$ the updates of $G'_{i}$ from time $\tau+2^{i-1}+2k$
and $\tau+2^{i-1}+2k+1$, for each $0\le k<2^{i-1}$. 

Now, we can define the set $R'$. At any time $\tau$, before we run
the clean-up procedure for level $0$. We set $R'$ to be the set
of reconnecting edges returned by all $\cD'_{i}$ that are given edge
deletions at double speed.

\subsubsection{Correctness}

To see that the description for clean-up procedure is valid, observe
the following:
\begin{prop}
For any $i$ and $j$, $G_{i,j}$ is a subgraph of $G$ at any time.
\end{prop}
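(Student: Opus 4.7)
The plan is to prove the proposition by induction on the time step $\tau$. The base case is immediate: at initialization we have $G_{L,1} = G$ and $G_{i,j} = \emptyset$ for all other $(i,j)$, so the property holds trivially. For the inductive step I would enumerate the three ways a $G_{i,j}$ can change at time $\tau$. A deletion of $e$ from $G$ explicitly sets $G_{i,j} \gets G_{i,j} - e$ for every $(i,j)$, so if $G_{i,j} \subseteq G$ held before, then $G_{i,j} - e \subseteq G - e$ afterward. A batch insertion modifies only $F$ and the bookkeeping set $R$, adding nothing to any $G_{i,j}$, so these remain subgraphs of the (enlarged) $G$. Finally, the clean-up procedure can either install a freshly built $G'_i$ into some slot $(i,j)$, reassign one slot to another (the swaps $(\cD_{i-1,3},\cD_{i-1,4}) \gets (\cD_{i-1,1},\cD_{i-1,2})$ followed by $(\cD_{i-1,1},\cD_{i-1,2}) \gets (\emptyset,\emptyset)$), or blank a slot. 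Reassignments and blanking preserve the subgraph property by the inductive hypothesis.

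The substantive case is the installation $\cD_{i,j} \gets \cD'_i$ happening at a time $\tau$ that is a multiple of $2^i$. Here the underlying graph is $G'_i = (V, F \cup N'_i)$, where $N'_i$ is built from the sets $N_{i-1,3} \cup N_{i-1,4}$ (or, when $i = L+1$, also $N_{L,0}$) recorded at time $\tau - 2^i$. I would argue correctness in two steps. First, at time $\tau - 2^i$ the inductive hypothesis gives that the relevant $G_{i-1,\cdot}$ are subgraphs of $G$ at that moment, so the snapshot used to define $G'_i$ is a subgraph of the then-current $G$, and in particular of every supergraph that subsequent insertions produce. Second, the description of the double-speed phase in interval $I_2 = [\tau + 2^{i-1}, \tau + 2^i)$ feeds $\cD'_i$ exactly the deletions that the edges of $G'_i$ undergo between time $\tau - 2^i$ and time $\tau$, so by the moment of assignment the graph carried by $\cD'_i$ is the original snapshot minus exactly those deletions. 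Combining the two, $G'_i$ at the moment of installation is a subgraph of the current $G$.

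The main obstacle is precisely this book-keeping across the $2^i$-step initialization window: both $G$ and the auxiliary $G_{i-1,\cdot}$ evolve during the build, and one must check that no edge ever enters $\cD'_i$ except via the initial snapshot (which is in $G$ by induction) and that every deletion from $G$ in the window is mirrored into $\cD'_i$ by the accelerated feed. Once these two invariants are spelled out, the inductive step closes immediately, and since every operation either leaves $G_{i,j}$ alone, deletes from it, replaces it by $\emptyset$, or installs a $G'_i$ that we have just shown is a subgraph of the current $G$, the proposition follows for all subsequent times.
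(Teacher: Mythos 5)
Your induction is correct and matches the paper's (implicit) reasoning: the paper states this proposition as an observation without proof, relying exactly on the facts you spell out — deletions from $G$ are mirrored into every $G_{i,j}$, insertions only enlarge $G$, and each newly installed $G'_i$ is assembled from a snapshot of $F$ and the $N_{i-1,\cdot}$ sets that is a subgraph of $G$ at snapshot time and is brought up to date by the double-speed feed before installation. The only cosmetic issues are a slight mixing of conventions for whether $\tau$ denotes the start of the initialization window or the installation time, and the unaddressed (but trivial) level-$0$ case where $G'_0=(V,F\cup R\cup R')$ consists of current edges of $G$; neither affects correctness.
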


\begin{prop}
For all $0\le i\le L$, before setting $\cD_{i,j}\gets\cD'_{i}$ for
some $j\in\{1,2\}$, we have $\cD_{i,j}=\emptyset$ for some $j=\{1,2\}$.\end{prop}
\begin{proof}
We set $\cD_{i,j}\gets\cD'_{i}$ only when $2^{i}$ divides $\tau$.
But once $2^{i+1}$ divides $\tau$, we run the procedure for level
$i+1$ and set $(\cD_{i,1},\cD_{i,2})\gets(\emptyset,\emptyset)$. \end{proof}
\begin{lem}
Suppose that $N\subseteq\bigcup_{i,j}N_{i,j}$. When $e\in F$ is
deleted, let $f^{*}$ be the lightest reconnecting edge for $F$ in
$G$. Then $f^{*}\in R_{0}$.\label{lem:light edge return}\end{lem}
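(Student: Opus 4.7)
The plan is to identify a specific index pair $(i,j)$ such that the decremental instance $\cD_{i,j}$ returns $f^{*}$ as its replacement edge; this will immediately imply $f^{*}\in R_{0}$. First, I would observe that $f^{*}\notin F$: since $e\in F$ is being deleted and $f^{*}$ serves as a reconnecting edge for $F-e$, we have $f^{*}\ne e$, and if $f^{*}$ were already in $F-e$ then $F-e+f^{*}=F-e$ would be missing one edge of a spanning forest of $G-e$. Hence $f^{*}\in N$, and by the hypothesis $N\subseteq\bigcup_{i,j}N_{i,j}$ I may fix $(i,j)$ with $f^{*}\in N_{i,j}$. Let $u^{*},v^{*}$ be the endpoints of $f^{*}$, let $T_{1}\ni u^{*}$ and $T_{2}\ni v^{*}$ be the two components of $F-e$, and let $P$ be the unique $u^{*}$-to-$v^{*}$ path in $\msf(G_{i,j})$ (which exists because $f^{*}\in E(G_{i,j})$). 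By the MSF cycle property applied in $G_{i,j}$ to the cycle $f^{*}+P$, every edge of $P$ is strictly lighter than $f^{*}$.

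Next, I would show $e\in P$ by contradiction. If $e\notin P$, then $P$ is a $u^{*}$-$v^{*}$ path lying in $G-e$; since it joins $T_{1}$ to $T_{2}$, some edge $h\in P$ has its endpoints on opposite sides of $(T_{1},T_{2})$, so $h\in G-e$ reconnects $F-e$, but $w(h)<w(f^{*})$ contradicts the minimality of $f^{*}$. Therefore $e\in P\subseteq\msf(G_{i,j})$, so removing $e$ splits $\msf(G_{i,j})$ into components $C_{1}\ni u^{*}$ and $C_{2}\ni v^{*}$, and $f^{*}$ crosses the cut $(C_{1},C_{2})$.

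To conclude that $\cD_{i,j}$ returns $f^{*}$, I would verify that $f^{*}$ is the lightest edge of $G_{i,j}-e$ crossing $(C_{1},C_{2})$. Suppose toward contradiction that $g\in G_{i,j}-e$ satisfies $w(g)<w(f^{*})$ and crosses $(C_{1},C_{2})$; then $g\in N_{i,j}$, the $\msf(G_{i,j})$-path $P^{g}$ between its endpoints $x,y$ contains $e$, and every edge of $P^{g}$ is lighter than $g$ by the cycle property. If $x$ and $y$ lie on opposite sides of $(T_{1},T_{2})$, then $g$ itself reconnects $F-e$ while being lighter than $f^{*}$, a contradiction. Otherwise $x,y$ lie on the same side of $(T_{1},T_{2})$, and since $P^{g}$ crosses that partition at $e$ it must cross back via at least one other edge $h'\ne e$; this $h'\in G-e$ then reconnects $F-e$ with $w(h')<w(g)<w(f^{*})$, again contradicting minimality. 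The main obstacle I expect is this parity-based case analysis, where it is essential that the tree edges used by $P^{g}$ are genuine edges of $G$ (via $\msf(G_{i,j})\subseteq G_{i,j}\subseteq G$), so that each crossing of $(T_{1},T_{2})$ yields a legitimate reconnecting candidate that can be compared against $f^{*}$.
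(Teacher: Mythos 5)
Your proof is correct and follows the same route as the paper: locate $f^{*}$ in some $N_{i,j}$ via the hypothesis, then argue that $\cD_{i,j}$ must return $f^{*}$ because it is the lightest replacement edge within the subgraph $G_{i,j}$. The paper asserts this last step in a single sentence ("as $G_{i,j}$ is a subgraph of $G$..."), whereas you carefully justify it — in particular the fact that $e$ lies on the $\msf(G_{i,j})$-path between the endpoints of $f^{*}$, and the parity argument reconciling the cut $(C_{1},C_{2})$ in $\msf(G_{i,j})$ with the cut $(T_{1},T_{2})$ in $F$ — which is a genuine subtlety your write-up handles correctly.
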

\begin{proof}
Before deleting $e\in F$, we have $f^{*}\in N$ and hence $f^{*}\in N_{i,j}$
for some $i,j$. As $G_{i,j}$ is a subgraph of $G$, $f^{*}$ is
also the lightest reconnecting edge in $G_{i,j}$ after deleting $e$
in $G_{i,j}$. So $f^{*}$ returned by $\cD_{i,j}$ and is included
into $R_{0}$. 
\end{proof}
The following lemma concludes the correctness of the algorithm for
\ref{lem:reduc to small dec}.
\begin{lem}
Throughout the updates, we have $N=\bigcup_{i,j}N_{i,j}$ and $F=\msf(G)$.\label{lem:reduc correct}
\end{lem}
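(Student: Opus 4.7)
The plan is to prove the two invariants $F=\msf(G)$ and $N=\bigcup_{i,j}N_{i,j}$ simultaneously by induction on the number of updates $\tau$, since the proofs of the two invariants depend on each other through \ref{lem:light edge return}. The base case is immediate from preprocessing: we set $G_{L,1}=G$ and $G_{i,j}=\emptyset$ otherwise, so $N_{L,1}=E(G)-\msf(G)=N$ and $F=\msf(G)$. For the inductive step I would separately analyze insertions and deletions, in each case first verifying that the pre-cleanup update keeps $F=\msf(G)$, then verifying that the cleanup preserves $N=\bigcup_{i,j}N_{i,j}$.

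For $F=\msf(G)$, the insertion procedure is the standard iterated cycle-property update: for each $e=(u,v)\in I$, either $u,v$ lie in different trees of $F$ (so $e$ joins the MSF), or we locate the heaviest edge $f$ on the unique $F$-path between $u$ and $v$ (via $\cT(F)$) and keep the lighter of $\{e,f\}$ in $F$. A straightforward inductive argument on the processing order of $I$ shows $F=\msf(G\cup I)$, and by definition every edge displaced from the tree or rejected from it is placed in $R$. For a deletion of $e$, if $e\notin F$ nothing changes; if $e\in F$ then by the inductive hypothesis $N=\bigcup_{i,j}N_{i,j}$ holds \emph{before} the deletion, which is exactly the hypothesis needed by \ref{lem:light edge return}, so the lightest reconnecting edge $f^{*}\in R_{0}$ and swapping it in yields $F=\msf(G-e)$.

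For $N=\bigcup_{i,j}N_{i,j}$ I plan to strengthen the inductive hypothesis with the auxiliary statement that $\msf(G_{i,j})\subseteq F$ whenever $G_{i,j}\neq\emptyset$. This follows from the cycle property: each $G_{i,j}$ is constructed by the cleanup as $(V,F\cup N'_{i})$ with $N'_{i}\subseteq N$ coming either from $R\cup R'$ (at level~$0$) or from $N_{i-1,3}\cup N_{i-1,4}$ (at higher levels) by the inductive hypothesis, so at creation time $E(G_{i,j})\subseteq F\cup N=E(G)$, and the cycle property then gives $\msf(G_{i,j})=F\cap E(G_{i,j})\subseteq F$; this containment is preserved as $\cD_{i,j}$ processes subsequent deletions, since any replacement edge it chooses is still an edge of the current $G$ and hence still outside the current $N$. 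This gives the easy direction $\bigcup_{i,j}N_{i,j}\subseteq N$. For the reverse direction, I would trace a newly non-tree edge: every edge added to $N$ at time $\tau$ is either an edge of $I$ rejected during the insertion procedure or an edge of $F$ displaced during the insertion procedure, and in both cases it enters $R$; then the cleanup at level~$0$ places $R$ (together with $R'$) into $G'_{0}$, and a simple accounting of the level-$i$ promotion rule $(\cD_{i-1,3},\cD_{i-1,4})\gets(\cD_{i-1,1},\cD_{i-1,2})$ shows that no edge is ever lost on its way up the hierarchy until it is absorbed into $\cD_{L,0}$. Deletions remove an edge from $G$ and simultaneously from every $G_{i,j}$ containing it, so they preserve coverage for all other edges.

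The main obstacle I expect is the half-speed/double-speed build of $\cD'_{i}$ during the interval $[\tau,\tau+2^{i})$. Two subtle points need care here: (i) during the first half of the interval the graph $G'_{i}$ being built is a frozen snapshot $(V,F\cup N'_{i})$ at time $\tau$, and during the second half the deletions of the last $2^{i-1}$ steps are replayed at double speed so that at installation time $\tau+2^{i}$ the instance is synchronized with the then-current $G$; I would formalize synchronization by showing that the sequence of edges actually deleted from $G'_{i}$ matches the sequence of deletions of edges of $G'_{i}$ that occur in $G$ during $[\tau+1,\tau+2^{i})$; (ii) during the catch-up deletions, $\cD'_{i}$ itself returns reconnecting edges which may become newly non-tree with respect to the current $F$, and these are precisely the edges collected into $R'$ and handed to the level-$0$ rebuild, so they are not lost. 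Once these two points are established, a careful bookkeeping shows that the cleanup maintains $N=\bigcup_{i,j}N_{i,j}$, closing the induction together with Part~1.
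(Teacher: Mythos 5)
There is a genuine gap, and it sits at the heart of both directions of your argument: you assume that the only way an edge can need "re-covering" is via an insertion, and correspondingly that $\msf(G_{i,j})\subseteq F$ is an invariant. Both fail for the same reason. When a tree edge $e$ is deleted, \emph{every} instance $\cD_{i,j}$ with $e\in\msf(G_{i,j})$ returns its own locally lightest reconnecting edge, but only the globally lightest one, $f^{*}$, actually joins $F$. Any other returned edge $g$ becomes a tree edge of its $G_{i,j}$ (so it leaves $N_{i,j}$) while remaining a non-tree edge of $G$ (so it stays in $N$). Concretely: take $V=\{a,b,c,d\}$ with $F=\{(a,b),(b,c),(c,d)\}$ of weights $1,2,3$ and non-tree edges $(b,d),(a,c),(a,d)$ of weights $5,10,20$, distributed so that $N_{1,1}=\{(a,c),(a,d)\}$ and $N_{2,1}=\{(b,d)\}$. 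Deleting $(b,c)$ gives $f^{*}=(b,d)$, so the new $F$ is $\{(a,b),(c,d),(b,d)\}$; but $\cD_{1,1}$ promotes $(a,c)$ into $\msf(G_{1,1})$. Now $(a,c)\in\msf(G_{1,1})\setminus F$, refuting your auxiliary invariant, and $(a,c)\in N$ but $(a,c)\notin\bigcup_{i,j}N_{i,j}$, refuting your claim that "deletions preserve coverage for all other edges." Your preservation argument for the auxiliary invariant ("any replacement edge it chooses is still an edge of the current $G$ and hence still outside the current $N$") does not parse: replacement edges come from $N_{i,j}\subseteq N$, and being an edge of $G$ says nothing about membership in $N$.

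This is precisely why the deletion procedure sets $R\gets R_{0}-f$ and feeds it to the clean-up: the displaced-but-still-non-tree edges are re-injected at level $0$, exactly as the rejected edges of an insertion are. Your plan only attributes $R$ to insertions, so the coverage direction $N\subseteq\bigcup_{i,j}N_{i,j}$ is not closed. For the reverse direction, the containment you actually need is $F\cap E(G_{i,j})\subseteq\msf(G_{i,j})$ (equivalently, every edge of $N_{i,j}$ is non-tree in $G$), not $\msf(G_{i,j})\subseteq F$; the paper establishes it not as a static invariant about $\msf(G_{i,j})$ but by tracking events, via two claims: an edge enters some $N_{i,j}$ only at an installation time, at which moment it belongs to $N$, and whenever an edge leaves $N$ (by deletion, or by being the globally lightest replacement) it simultaneously leaves every $N_{i,j}$ containing it. Restructuring your induction around these two event-tracking claims, together with the corrected role of $R$ in deletions, repairs the argument; the rest of your plan (base case, the top-tree insertion logic, the use of \ref{lem:light edge return}, and the synchronization of the double-speed builds with $R'$) is consistent with the paper.
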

After preprocessing, $N=\bigcup_{i,j}N_{i,j}$ and $F=\msf(G)$ by
construction. We will prove that both statements are maintained after
each update using the claims below.

First, we prove $N\subseteq\bigcup_{i,j}N_{i,j}$. When the update
is a batch $I$ of edge insertions, $R$ is exactly the set of new
non-tree edges (i.e. the new edges in $N$) because, before inserting
$I$, $F=\msf(G)$. When the update is an edge deletion, $R$ is exactly
the set of non-tree edges in all $G_{i,j}$ that become tree edges
(i.e. the edges removed from $\bigcup_{i,j}N_{i,j}$ but potentially
still in $N$). Now, after the clean-up procedure for level $0$,
we have that either $N_{0,1}$ or $N_{0,2}$ is set from $\emptyset$
to $R\cup R'$ (i.e. $R$ is included into $\bigcup_{i,j}N_{i,j}$).
Therefore, $N\subseteq\bigcup_{i,j}N_{i,j}$ after the procedure for
level $0$. 
\begin{claim}
For $1\le i\le L+1$, $N\subseteq\bigcup_{i,j}N_{i,j}$ after applying
the clean-up procedures of level $i$. \label{claim:no new non-tree}\end{claim}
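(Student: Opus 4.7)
My plan is to proceed by induction on $i$, with the base case already argued in the paragraph preceding the claim: the level-$0$ clean-up installs $R\cup R'$ into $N_{0,\cdot}$, so $N\subseteq\bigcup_{i',j'} N_{i',j'}$ holds after level $0$. For the inductive step, I would assume the invariant holds after the level-$(i-1)$ clean-up and track what level $i$ changes.

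If $2^i\nmid\tau$, level $i$'s work is entirely background initialization of $\cD'_i$ and no $\cD_{i',j'}$ is altered, so the union is unchanged and the claim is immediate. The substantive case is $2^i\mid\tau$, where three operations fire: (a) $\cD_{i,j}\gets\cD'_i$ for some empty slot $j\in\{1,2\}$, (b) $(\cD_{i-1,3},\cD_{i-1,4})\gets(\cD_{i-1,1},\cD_{i-1,2})$, and (c) $\cD_{i-1,1},\cD_{i-1,2}$ are cleared. Together (b) and (c) merely rename the edges of old $N_{i-1,1}\cup N_{i-1,2}$ as new $N_{i-1,3}\cup N_{i-1,4}$, so nothing is lost there. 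The crux is to show that every $e$ in the \emph{old} $N_{i-1,3}\cup N_{i-1,4}$ that still lies in $N$ at time $\tau$ is captured by the new $N_{i,j}$.

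For this I would invoke the construction of $\cD'_i$: it was initialized at time $\tau-2^i$ on $G'_i=(V,F\cup N'_i)$ with $N'_i=N_{i-1,3}(\tau-2^i)\cup N_{i-1,4}(\tau-2^i)$, and then fast-forwarded with every deletion on $[\tau-2^i,\tau]$ at double speed. Because positions $3,4$ at level $i-1$ are altered only by the level-$i$ procedure (every $2^i$ steps), any $e\in N_{i-1,3}(\tau)\cup N_{i-1,4}(\tau)$ was already present in $N'_i$, and since $e\in N$ at time $\tau$ it was never deleted from $G$, hence persists in $\cD'_i$'s edge set. Because $F_{\tau-2^i}\subseteq G'_i$ at initialization, the standard MSF identity (any subgraph of $G$ containing $\msf(G)$ has the same MSF) gives $\msf(G'_i)=F_{\tau-2^i}$ at that moment, so $e$ enters $\cD'_i$ as a non-tree edge.

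The main obstacle will be the fast-forward phase potentially promoting some $e\in N'_i$ into $\cD'_i$'s own tree via a local reconnection that disagrees with the one $F$ chose: if $e$ becomes a tree edge of $\cD'_i$, it would not appear in the new $N_{i,j}$. This is precisely why the algorithm collects every reconnecting edge returned by $\cD'_i$ into $R'$ (and every reconnecting edge of the real-time $\cD_{i-1,j}$'s into $R$), and the level-$0$ clean-up absorbs $R\cup R'$ into $N_{0,\cdot}$. Since the inductive hypothesis has already ensured $N\subseteq\bigcup_{i',j'} N_{i',j'}$ after the level-$0$ clean-up at time $\tau$, every edge displaced by such a promotion has already been recovered in $N_{0,\cdot}$ before level $i$'s procedure is executed. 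Combining the three cases closes the induction and proves the claim for all $1\le i\le L+1$.
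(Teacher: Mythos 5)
Your argument is correct and is essentially the paper's own: the new $N_{i,j}$ equals $N'_i$, which at initialization time $\tau-2^i$ was \emph{exactly} the set being discarded at time $\tau$, and any edge that leaves $N'_i$ during the double-speed catch-up is a reconnecting edge returned by $\cD'_i$, hence lands in $R'$ and is re-absorbed into $N_{0,\cdot}$ by the level-$0$ clean-up. The only loose end is $i=L+1$, where your case (a) does not apply verbatim: the overwritten slot is the non-empty $\cD_{L,0}$ and the discarded set additionally contains the old $N_{L,0}$; this is covered because $N'_{L+1}=N_{L,0}\cup N_{L,3}\cup N_{L,4}$ at initialization, so the identical argument goes through.
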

\begin{proof}
For $i\le L$, suppose we are at time $\tau$ divisible by $2^{i}$
and we set $\cD_{i,j}\gets\cD'_{i}$ where $\cD_{i,j}=\emptyset$,
$(\cD_{i-1,3},\cD_{i-1,4})\gets(\cD_{i-1,1},\cD_{i-1,2})$ and $(\cD_{i-1,1},\cD_{i-1,2})\gets(\emptyset,\emptyset)$.
That is, the edges of $N_{i-1,3}\cup N_{i-1,4}$ contributing to $\bigcup_{i,j}N_{i,j}$
are replaced by the edges in $N'_{i}$. We argue that after this we
still have $N\subseteq\bigcup_{i,j}N_{i,j}$. This is because, at
time $\tau-2^{i}$, when we start the initialization of $\cD'_{i}$
on $G'_{i}$, we set $N'_{i}=N_{i-1,3}\cup N_{i-1,4}$ exactly. Then,
from time $\tau-2^{i}$ to $\tau$, all reconnecting edges returned
by $\cD'_{i}$ are included in to $R'$ in every step. That is, $R'$
contains the edges that are removed from $N'_{i}$ but potentially
still in $N$. As $R'$ is included into $\bigcup_{i,j}N_{i,j}$ at
every step, we are done.

For $i=L+1$, we set $\cD_{L,0}\gets\cD'_{i}$, $(\cD_{L,3},\cD_{L,4})\gets(\cD_{L,1},\cD_{L,2})$
and $(\cD_{L,1},\cD_{L,2})\gets(\emptyset,\emptyset)$. Although,
$\cD_{L,0}\neq\emptyset$ before we set $\cD_{L,0}\gets\cD'_{i}$,
we have that $N'_{i}=N_{L,0}\cup N_{L,3}\cup N_{L,4}$ exactly at
$2^{i}$ steps ago. Using the same argument, we have $N\subseteq\bigcup_{i,j}N_{i,j}$
after the procedure.
\end{proof}
Second, we prove that $\bigcup_{i,j}N_{i,j}\subseteq N$ which follows
from the two claims below.
\begin{claim}
\label{claim:remove Nij }Whenever an edge $f$ is removed from $N$,
then $f$ is removed from $\bigcup_{i,j}N_{i,j}$.\end{claim}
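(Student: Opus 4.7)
The plan is to derive the claim from the structural containment $\bigcup_{i,j}N_{i,j}\subseteq N$, which I will argue holds at every moment when the invariants are evaluated. Given this containment, the claim follows immediately: if $f$ leaves $N$ in some step, then $f\notin N$ at the end of that step, so $f\notin N_{i,j}$ for every pair $(i,j)$, and in particular $f$ has left the union.

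To establish the containment, I would combine the immediately preceding Proposition, which says that every $G_{i,j}$ is a subgraph of $G$ throughout the algorithm, with the cycle property of the minimum spanning forest. Suppose $f\in N_{i,j}$; then $f\in E(G_{i,j})$ and $f\notin\msf(G_{i,j})$, so by the cycle property there is a cycle $C\subseteq G_{i,j}$ in which $f$ is the uniquely heaviest edge (edge weights are distinct by assumption). Since $G_{i,j}\subseteq G$, the same cycle $C$ lies in $G$, and applying the cycle property once more in $G$ gives $f\notin\msf(G)=F$, i.e., $f\in N$.

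A short verification that actually removes any need for case analysis: $f$ can only leave $N$ in two ways, namely either $f$ itself is the deleted edge and so is removed from $G$ (and hence from $E(G_{i,j})$ for every $(i,j)$ whose graph contained it), or a tree edge $e\in F$ is deleted and $f$ is promoted to $F$ as its replacement. In either scenario, the final state satisfies $f\notin N$, so the containment $\bigcup_{i,j}N_{i,j}\subseteq N$ alone is enough to conclude.

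The only point requiring care is that this containment must hold at exactly the moments when the claim is being evaluated, namely at the end of each update (including the clean-up). This is fine: every operation that modifies some $G_{i,j}$ preserves $G_{i,j}\subseteq G$. An explicit deletion $G_{i,j}\leftarrow G_{i,j}-e$ only shrinks $G_{i,j}$; and when a clean-up phase installs a freshly built $\cD'_{i}$ into some slot $\cD_{i,j}$, its underlying graph $G'_{i}=(V,F\cup N'_{i})$ is, by the catch-up-at-double-speed protocol, brought up to the current state of a subgraph of $G$ just before being installed. I do not expect any genuine obstacle here; the substantive preliminary work is done by the Proposition preceding the claim, and the cycle-property step is standard.
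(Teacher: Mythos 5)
Your proof is correct, but it takes a different route from the paper's. The paper proves the claim by a case analysis on \emph{why} $f$ left $N$: either $f$ itself was deleted from $G$ (hence from every $G_{i,j}$), or $f$ is the replacement edge for a deleted tree edge $e$, in which case it invokes \ref{lem:light edge return} to argue that $f$ is the lightest reconnecting edge in $G$ and therefore also the reconnecting edge in any $G_{i,j}$ containing it, so it migrates from $N_{i,j}$ into $\msf(G_{i,j})$. You instead establish the stronger \emph{state} invariant $\bigcup_{i,j}N_{i,j}\subseteq N$ directly: since $G_{i,j}\subseteq G$ at all times (the preceding Proposition) and weights are distinct, any $f\in N_{i,j}=E(G_{i,j})-\msf(G_{i,j})$ is the heaviest edge on the cycle it forms with its $\msf(G_{i,j})$-path, that cycle lies in $G$, so $f\notin\msf(G)$ and $f\in N$; the removal claim is then an immediate corollary. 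This is sound and arguably cleaner — it avoids the reconnecting-edge reasoning and \ref{lem:light edge return} entirely, and it in fact subsumes the very purpose this claim serves in the paper (the paper uses this claim together with the ``addition'' claim to derive $\bigcup_{i,j}N_{i,j}\subseteq N$ by induction over events, whereas you get that containment in one shot). One caveat: be explicit that $N$ and $N_{i,j}$ are defined via the true minimum spanning forests ($N=E-\msf(G)$, $N_{i,j}=E(G_{i,j})-\msf(G_{i,j})$), so your containment is a purely structural fact and does not secretly presuppose the invariant $F=\msf(G)$ that the enclosing lemma is in the middle of proving; with that reading there is no circularity. Your middle paragraph redoing the two-case analysis is redundant given the containment, but harmless.
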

\begin{proof}
Observe that an edge $f$ can be removed from the set $N$ by one
of two reasons: 1) $f$ is deleted from $G$, or 2) some edge $e$
is deleted and $f$ is a reconnecting edge. If $f$ is deleted $G$,
then $f$ is deleted from all $G_{i,j}$. If $f$ is a reconnecting
edge, by \ref{lem:light edge return}, we know $f$ is the lightest
reconnecting edge in $G$. But $G_{i,j}$ is a subgraph of $G$ for
all $i$ and $j$. If $f\in N_{i,j}$ for any $i,j$, then $f$ must
be the lightest reconnecting edge in $G_{i,j}$ and hence $f$ is
removed from $N_{i,j}$. In either case, when $f$ is removed from
$N$, then $f$ is removed from $\bigcup_{i,j}N_{i,j}$ as well. \end{proof}
\begin{claim}
Whenever an edge $f$ is added into $\bigcup_{i,j}N_{i,j}$, then
$f$ is added into $N$.\end{claim}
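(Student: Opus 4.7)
My plan is to argue that edges can enter $\bigcup_{i,j} N_{i,j}$ only through the wholesale reassignments $\cD_{i,j} \gets \cD'$ that appear in the clean-up procedure, and that every such reassignment respects $N_{i,j} \subseteq N$ via the standard cycle property of the MSF. First I would observe that once an instance $\cD_{i,j}$ is created, it only receives edge deletions to its graph $G_{i,j}$; processing such a deletion can only shrink $N_{i,j}$, either by removing the deleted edge itself or by promoting a non-tree edge into $\msf(G_{i,j})$ in response to the deletion of a tree edge. Hence the only mechanism that adds an edge to $\bigcup_{i,j} N_{i,j}$ is the moment we set $\cD_{i,j} \gets \cD'$ for a freshly completed instance $\cD'$ on some graph $G'$, after which $N_{i,j} = E(G') - \msf(G')$.

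Next I would check that, at the time of every such assignment, $G' \subseteq G$ holds. At level $0$, $G' = G'_0 = (V, F \cup R \cup R')$, and one verifies directly that $R \cup R' \subseteq E(G)$: the set $R$ consists of inserted edges (now in $G$) together with edges swapped out of $F$ (still in $G$) in the insertion case, or a subset of reconnecting edges returned by the $\cD_{i,j}$'s in the deletion case, all of which lie in $G$; the set $R'$ is collected from the decremental instances $\cD'_i$ whose initial graphs are subgraphs of $G$ and remain so because every deletion fed to $\cD'_i$ at double speed is a deletion that also occurred in $G$. At each level $0 < i \le L+1$, $G'_i = (V, F \cup N'_i)$ with $N'_i$ drawn from $N_{i-1,3} \cup N_{i-1,4}$ (or $N_{L,0} \cup N_{L,3} \cup N_{L,4}$ at level $L+1$) and further pruned by the double-speed deletions; by induction on the level together with the same bookkeeping, $G'_i \subseteq G$ at the moment of the assignment.

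The conclusion is then immediate from the MSF cycle property. For any $e \in E(G') - \msf(G')$, there exists a cycle $C$ in $G'$ on which $e$ is the heaviest edge; since $G' \subseteq G$, the same cycle $C$ lives in $G$ and $e$ is still heaviest on it, so $e \notin \msf(G) = F$, giving $e \in E(G) \setminus F = N$. Therefore the new $N_{i,j} = E(G') - \msf(G')$ is a subset of $N$, and so any edge that is added to $\bigcup_{i,j} N_{i,j}$ by the assignment is simultaneously an edge of $N$, which is exactly the claim.

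The delicate point I expect will be the bookkeeping in the second paragraph: one must trace exactly which edges compose $G'_i$ at the assignment time, given that $\cD'_i$ was initialized on a snapshot taken $2^i$ steps earlier and then force-fed deletions at double speed, and must check that the deletion stream applied to $\cD'_i$ is always a subset of the deletions applied to $G$ during the same interval so that $G'_i \subseteq G$ is preserved. Once this is established, the cycle-property step is a one-liner and finishes the proof.
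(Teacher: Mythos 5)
Your proof is correct, but it reaches the conclusion by a different route than the paper's. The paper argues \emph{temporally}: it fixes the non-tree set $N'_i$ of the graph $G'_i$ on which $\cD'_i$ is being built, notes that $N'_i\subseteq N$ when the initialization begins, that $N'_i$ never gains edges because $G'_i$ only undergoes deletions, and that whenever an edge leaves $N$ it also leaves $N'_i$ (by the same lightest-replacement-edge argument used for the preceding removal claim); hence $N'_i\subseteq N$ still holds at the moment of the assignment $\cD_{i,j}\gets\cD'_i$. You instead argue \emph{statically} at the assignment time: $G'\subseteq G$ and $F=\msf(G)$, so by the cycle property $E(G')-\msf(G')\subseteq E(G)-\msf(G)=N$. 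Both arguments rest on exactly the same bookkeeping, namely that by the end of the $2^i$-step window the double-speed feeding has caught $\cD'_i$ up with every deletion that occurred in $G$, which is what makes $G'_i$ a subgraph of the \emph{current} $G$ (the paper records this as a standalone proposition); you correctly single this out as the delicate point. Your version is arguably a bit cleaner at level $0$: the paper's phrasing implicitly needs $R\cup R'\subseteq N$, which is awkward because $R'$ may contain edges that have meanwhile become tree edges of $G$ (the paper itself only says they are ``potentially still in $N$''), whereas you only need $R\cup R'\subseteq E(G)$ and let the $\msf$ computation on $G'_0$ filter out edges of $F$ automatically. What the paper's route buys in exchange is that it never invokes the cycle property and simply reuses the machinery already established for the removal direction.
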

\begin{proof}
For any fixed $i$, $f$ can be added into $\bigcup_{j}N_{i,j}$ only
when $2^{i}$ divides $\tau$ and we set $\cD_{i,j}\gets\cD'_{i}$.
Now, when we start the initialization of $\cD'_{i}$ on $G'_{i}$
at time $\tau-2^{i}$, we set $G'_{i}=(V,F\cup N'_{i})$ and so $N'_{i}\subseteq N$
at that time. From time $\tau-2^{i}$ to $\tau$, there is no edges
added to $N'_{i}$ because no $\msf$-edge can become non-tree edge
in a graph undergoing only edge deletions. Moreover whenever an edge
$f$ is removed $N$, it is removed from $N'_{i}$ for the same reason
as in \ref{claim:remove Nij }. So at time $\tau$, we have $N'_{i}\subseteq N$.
\end{proof}
Now, we have that $\bigcup_{i,j}N_{i,j}=N$ is maintained. To show
that $F=\msf(G)$ after the update, whenever an edge $e$ is deleted,
by $N\subseteq\bigcup_{i,j}N_{i,j}$ and \ref{lem:light edge return},
the lightest reconnecting edge $f^{*}$is included in $R_{0}$. So
$F\gets F\cup f^{*}-e$ and $F$ becomes $\msf(G)$. For each inserted
edge $e=(u,v)$, we may only remove the heaviest edge $f$ in the
path $u$ to $v$ in $F$. So $F=\msf(G)$ as well. This concludes
the proof of \ref{lem:reduc correct}.

\subsubsection{Running Time}

First, we need this lemma.
\begin{lem}
For any $i$ and $j$, $|N_{i,j}|\le\min\{2^{i+1}B,k\}$.\label{lem:bound non-tree level i}\end{lem}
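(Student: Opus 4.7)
The plan is to prove $|N_{i,j}| \le 2^{i+1}B$ by induction on $i$; the companion bound $|N_{i,j}| \le k$ is immediate from the invariant $N=\bigcup_{i,j}N_{i,j}$ just established in \ref{lem:reduc correct} together with the assumption $|N|\le k$, so the two bounds combine to the required $\min\{2^{i+1}B,k\}$.

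The engine of the induction is a single structural observation: whenever an instance $\cD'_{i}$ is freshly installed as some $\cD_{i,j}$, its underlying graph is $G'_{i}=(V,F\cup N'_{i})$, which is a forest $F$ augmented by an extra edge set $N'_{i}$. Since adding any edge set $X$ to a forest produces at most $|X|$ non-tree edges in the resulting graph, the count $|N_{i,j}|$ at installation time is at most $|N'_{i}|$; and because $\cD'_{i}$ is decremental thereafter, $|N_{i,j}|$ can only shrink until the next reassignment. Thus it suffices to bound $|N'_{i}|$ at its moment of installation.

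For the base case $i=0$ we have $N'_{0}=R\cup R'$. On a batch insertion $|R|\le|I|\le B$, while on a single deletion $|R|=O(L)$ since each of the $O(L)$ active instances $\cD_{i,j}$ returns at most one reconnecting edge. The set $R'$ collects reconnecting edges produced during double-speed initialization: at any time step, at most one instance per level $1\le i\le L+1$ is in its double-speed phase $I_{2}$, each is fed $2$ deletions per step and returns at most $2$ reconnecting edges, so $|R'|\le 2(L+1)=O(L)$. Using the hypothesis $B\ge 5\lceil\log k\rceil=5L$, we get $|N_{0,j}|\le|R|+|R'|\le 2B=2^{0+1}B$. For the inductive step with $1\le i\le L$, when $\cD_{i,j}\gets\cD'_{i}$ is performed at a time $\tau$ with $2^{i}\mid\tau$, the instance $\cD'_{i}$ was started at time $\tau-2^{i}$ on $G'_{i}$ with $N'_{i}=N_{i-1,3}\cup N_{i-1,4}$ measured at that earlier moment (these values being exactly the copies shifted over from $N_{i-1,1},N_{i-1,2}$ during the level-$(i-1)$ bookkeeping at that step). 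The induction hypothesis at level $i-1$ gives $|N_{i-1,j}|\le 2^{i}B$, hence $|N'_{i}|\le 2^{i+1}B$ and the bound propagates. The separate instance $\cD_{L,0}$ is handled by the $k$-bound alone, because $L=\lceil\log k\rceil$ implies $2^{L+1}B\ge 2k\ge k$, so $\min\{2^{L+1}B,k\}=k$ and $|N_{L,0}|\le k$ follows from $N_{L,0}\subseteq N$.

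The only real obstacle is the temporal bookkeeping: one must be careful to match the time $\tau-2^{i}$ at which $\cD'_{i}$ was set up with the particular values of $N_{i-1,3},N_{i-1,4}$ that existed at that moment, and to verify that at most $O(L)$ double-speed instances can simultaneously feed into $R'$. Once these time-indexing details are in hand, the induction itself is routine.
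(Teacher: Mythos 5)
Your proof is correct and follows essentially the same route as the paper's: the $k$-bound from $N_{i,j}\subseteq N$, and the $2^{i+1}B$-bound by induction on $i$, with the base case $|R\cup R'|\le 2B$ using $B\ge 5L$ and the inductive step via $|N'_i|\le|N_{i-1,3}|+|N_{i-1,4}|\le 2\cdot 2^iB$. Your explicit remark that $|N_{i,j}|$ equals $|N'_i|$ at installation (forest plus $N'_i$) and can only shrink under deletions is a useful justification that the paper leaves implicit, but it is not a different argument.
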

\begin{proof}
First, $N_{i,j}\subseteq N$ by \ref{lem:reduc correct} and so $|N_{i,j}|\le|N|\le k$.
Next, we will prove that $|N_{i,j}|\le2^{i}B$, for any $i$ and $j$
by induction. As we know $|N_{L,0}|\le k$ and we always set $(\cD_{i,3},\cD_{i,4})\gets(\cD_{i,1},\cD_{i,2})$
for all $i\le L$, it remains to bound only $|N_{i,1}|$ and $|N_{i,2}|$
for all $i\le L$. 

For $i=0$, in the clean-up procedure we only set $N_{0,1}$ and $N_{0,2}$
as the input set $R$. When there is a batch insertion, we have that
$|R|\le B$. When there is an edge deletion, we have $|R|\le4L+1$
because, for each $i,j$, the instance $\cD_{i,j}$ can return one
reconnecting edge per time step. Also, observe that $|R'|\le2(L+1)$
because, for $1\le\cD'_{i}\le L+1$, $\cD'_{i}$ can return two reconnecting
edge per time step (as they are possibly updated at double speed).
Therefore, $|N_{0,1}|,|N_{0,2}|\le|R\cup R'|\le\max\{B,4L+1\}+2(L+1)\le2B$
as $B\ge5L$. Next, for $0<i\le L$, in the clean-up procedure we
only set $N_{i,1}$ and $N_{i,2}$ as the set $N'_{i}$ where $N'_{i}=N_{i-1,3}\cup N_{i-1,4}$
So $|N_{i,1}|,|N_{i,2}|\le|N'_{i}|\le2\cdot2^{i}B=2^{i+1}B$.
\end{proof}
Now, we can conclude the preprocessing and update time of the algorithm
for \ref{lem:reduc to small dec}.
\begin{lem}
The preprocessing algorithm takes $t_{pre}(m,k,p')+O(m\log m)$ time.\label{lem:reduc prep time}\end{lem}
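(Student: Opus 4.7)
The plan is to observe that the preprocessing is essentially trivial: it consists only of (i) computing the initial minimum spanning forest $F = \msf(G)$, (ii) building the top tree data structure $\cT(F)$, and (iii) initializing the single decremental instance $\cD_{L,1}$, with all other instances $\cD_{i,j}$ being empty and thus costing $O(1)$ each.

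For step (i), I would compute $F = \msf(G)$ using any standard static MSF algorithm (e.g.\ Kruskal via sorted edges and union-find), which takes $O(m \log m)$ time. This gives us $F$ as well as the set $N = E - F$ of non-tree edges needed to establish the invariant $N = \bigcup_{i,j} N_{i,j}$ after preprocessing (here with only $N_{L,1}$ nonempty). For step (ii), I invoke the top tree lemma (preprocessing time $O(n \log n) = O(m \log m)$) on the forest $F$. For step (iii), since $G_{L,1} = G$ has at most $m$ edges and at most $k$ non-tree edges, I call the preprocessing routine of $\cA$, which by hypothesis runs in time $t_{pre}(m,k,p')$. Summing the three contributions gives the desired bound $t_{pre}(m,k,p') + O(m \log m)$.

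The only thing to justify is that the failure probability statement is consistent: since only a single instance of $\cA$ is initialized during preprocessing, invoking it with failure parameter $p' = \Theta(p/\log k)$ trivially respects a $1-p$ overall success probability bound at this stage (the union bound over all surviving instances will be applied in the update-time analysis, not here).

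There is no real obstacle; this is essentially a bookkeeping lemma that records the cost of a one-time static MSF computation plus one invocation of the black-box preprocessing $t_{pre}$.
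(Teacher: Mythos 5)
Your proposal is correct and follows essentially the same route as the paper: compute $\msf(G)$ in $O(m\log m)$, initialize the top tree on $F$ (within the $O(m\log m)$ budget), and initialize the single nonempty instance $\cD_{L,1}$ on $G$ with $N_{L,1}=N$, $|N|\le k$, costing $t_{pre}(m,k,p')$. No gaps.
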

\begin{proof}
We can compute $F=\msf(G)$ in time $O(m\log m)$. The algorithm
is to just initialize the top tree $\cT(F)$ on $F$ and the decremental
$\msf$ $\cD_{i,j}$ for all $i,j$. By \ref{thm:top tree}, we initialize
$\cT$ in time $O(m)$. Then, as $N_{L,1}=N$ and $|N|\le k$, we
can initialize all $\cD_{L,1}$ in time $t_{pre}(m,k,p')$.
\end{proof}
Next, we bound the time spent on the clean-up procedure at each step.
\begin{lem}
For each update, the time spent on the clean-up procedure is 
\[
O(\sum_{i=0}^{\left\lceil \log k\right\rceil }t_{pre}(m,\min\{2^{i+1}B,k\},p')/2^{i}+\log k\cdot t_{u}(m,k,p'))
\]
 with probability $1-p/2$.\label{lem:clean-up time}\end{lem}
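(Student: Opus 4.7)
The plan is to account for the clean-up work level by level, then union-bound the randomized failure events. For each level $i$ with $1\le i\le L+1$ (where $L=\lceil\log k\rceil$), there are two contributions to the per-step cost: (a) the initialization of the fresh instance $\cD'_i$ on $G'_i$, which is spread evenly over the first half of the period of length $2^i$; and (b) the double-speed replay of the pending deletions during the second half, which feeds two updates per step to $\cD'_i$.

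For part (a), the key input is the size of $N'_i$. By \ref{lem:bound non-tree level i}, $|N_{i-1,3}|,|N_{i-1,4}|\le\min\{2^i B,k\}$, so $G'_i$ has at most $\min\{2^{i+1}B,k\}$ non-tree edges (and at most $m$ edges overall); similarly $G'_{L+1}$ has at most $k$ non-tree edges. Hence the total initialization cost at level $i$ is $t_{pre}(m,\min\{2^{i+1}B,k\},p')$, which amortized over $2^{i-1}$ (or $2^i$) steps contributes $O(t_{pre}(m,\min\{2^{i+1}B,k\},p')/2^i)$ per step. At level $0$ the ``initialization period'' has length $1$, but here $\cD'_0$ is built on a graph with $|R\cup R'|=O(B)$ non-tree edges, giving the $i=0$ term of the claimed sum. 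Summing over $i=0,\dots,L+1$ yields the first term in the bound.

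For part (b), during the second half at level $i$ we feed at most two deletions per step, each handled in time $t_u(m,k,p')$ since $G'_i$ has at most $k$ non-tree edges (using $|N'_i|\le k$ by \ref{lem:bound non-tree level i} and the invariant that $\cD'_i$'s graph stays a subgraph of $G$). Summed over $i=1,\dots,L+1$ this is $O(\log k\cdot t_u(m,k,p'))$ per step. The bookkeeping swaps $\cD_{i,j}\gets\cD'_i$ and $(\cD_{i-1,3},\cD_{i-1,4})\gets(\cD_{i-1,1},\cD_{i-1,2})$ are done by pointer rewiring in $O(1)$ time each, so they are absorbed.

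Finally, for the probability bound, in a single step the clean-up touches at most $O(\log k)$ decremental instances (two updates each to $\cD'_1,\ldots,\cD'_{L+1}$). Each such update succeeds within its stated time with probability at least $1-p'$, where $p'=\Theta(p/\log k)$ was chosen for exactly this reason. A union bound over the $O(\log k)$ events shows the total clean-up cost respects the stated bound with probability at least $1-p/2$. The main (but routine) obstacle is verifying that every invocation of $\cD'_i$ really is on a graph with at most $\min\{2^{i+1}B,k\}$ non-tree edges at initialization time so that the $t_{pre}$ term collapses geometrically; this is exactly what \ref{lem:bound non-tree level i} supplies.
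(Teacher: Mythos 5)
Your proposal is correct and follows essentially the same route as the paper: a per-level accounting in which the initialization of $\cD'_i$ (on a graph with at most $\min\{2^{i+1}B,k\}$ non-tree edges, via \ref{lem:bound non-tree level i}) is amortized over the first half of the length-$2^i$ period, the double-speed replay contributes $O(t_u)$ per level per step, and a union bound over the $O(\log k)$ instances with failure parameter $p'=\Theta(p/\log k)$ gives the $1-p/2$ guarantee. No gaps.
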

\begin{proof}
For each $i$, we analyze the time spent on the clean-up procedure
of level $i$. When the time step $\tau$ is divisible by $2^{i}$,
this takes $O(1)$ time because all we do is only setting $\cD_{i,j}\gets\cD'$
for some $\cD'$ for several $j$. After that, for the period $I_{1}=[\tau,\tau+2^{i-1})$,
we distribute evenly the work for initializing $\cD'_{i}$. This takes
$t_{pre}(m,|N'_{i}|,p')/2^{i-1}$ time per step. For the period $I_{2}=[\tau+2^{i-1},\tau+2^{i})$,
we feed the updates to $\cD'_{i}$ at ``double speed''. This takes
$2\cdot t_{u}(m,|N'_{i}|,p')$ time per step. By \ref{lem:bound non-tree level i},
the total time for step for the clean-up procedure of level $i$ is
\[
\frac{t_{pre}(m,\min\{2^{i+1}B,k\},p')}{2^{i-1}}+2\cdot t_{u}(m,\min\{2^{i+1}B,k\},p')).
\]
Finally, as $L=\left\lceil \log k\right\rceil $, summing the time
for all levels give the bound in lemma. Moreover, the bound holds
with probability $1-p'\times O(L)\le1-p/2$.\end{proof}
\begin{lem}
The time for inserting a batch of edges of size at most $B$ and the
time for deleting an edge is at most 
\[
O(\sum_{i=0}^{\left\lceil \log k\right\rceil }t_{pre}(m,\min\{2^{i+1}B,k\},p')/2^{i}+\log k\cdot t_{u}(m,k,p')+B\log m)
\]
 with probability $1-p$. \label{lem:reduc update time}\end{lem}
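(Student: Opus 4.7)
The plan is to decompose each update into two phases: the \emph{update phase} that directly handles the given insertion batch or edge deletion (including maintaining $F$ and computing the set $R$ passed to the clean-up procedure), and the \emph{clean-up phase} whose cost has already been bounded in \ref{lem:clean-up time}. Since the bound of \ref{lem:clean-up time} is exactly the first two summands of the claimed bound, it remains to show that the update phase takes $O(B\log m + \log k \cdot t_u(m,k,p'))$ time with probability $1-p/2$; a union bound with \ref{lem:clean-up time} then yields the overall $1-p$ probability guarantee.

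For a batch insertion of $|I|\le B$ edges, I would iterate through the edges in $I$ one at a time. For each edge $e=(u,v)$, I first query the top tree $\cT(F)$ to determine whether $u$ and $v$ are already connected, and if so, to return the heaviest edge $f$ on the path from $u$ to $v$. By \ref{thm:top tree}, each such query and the subsequent constant number of updates to $\cT(F)$ run in $O(\log m)$ time. The bookkeeping to assemble $R$ from the processed edges is $O(1)$ per edge. Thus, the update phase for a batch insertion takes $O(B\log m)$ deterministic time, which is absorbed into the claimed bound.

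For a single edge deletion of $e$, I would first feed the deletion to each instance $\cD_{i,j}$. There are $O(L)=O(\log k)$ such instances, and each update costs $t_u(m,k,p')$ with probability $1-p'$; by a union bound this step costs $O(\log k\cdot t_u(m,k,p'))$ with probability $1-O(L p') = 1-p/2$. Collecting the at most $4L+1$ reconnecting edges returned by these instances into $R_0$, I then identify the lightest edge $f\in R_0$ that reconnects $F$ using $O(\log k)$ top-tree queries (each $O(\log m)$), and update $F$ with $O(1)$ top-tree operations; this adds $O(\log k \cdot \log m)$ time, which is also absorbed.

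Finally, I combine the bounds: the update phase contributes $O(B\log m + \log k \cdot t_u(m,k,p'))$, and the clean-up phase contributes the summation term from \ref{lem:clean-up time}. By a union bound over the $O(L)$ probabilistic events in both phases, which each succeed with probability $1-p'$ for $p' = \Theta(p/\log k)$, the total failure probability is at most $p$, giving the desired $1-p$ probability guarantee. The only subtle point, which I expect to be the main obstacle, is the bookkeeping to ensure that $|R_0|$ and the number of instances queried are indeed both $O(\log k)$ (and not, say, $O(\log^2 k)$); this follows from the structure of the algorithm, where only levels $0,\dots,L$ and their two ``slots'' $j\in\{1,2,3,4\}$ are queried, plus the single instance $\cD_{L,0}$, totaling $4L+1$ instances as noted in the description of the deletion step.
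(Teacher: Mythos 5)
Your proposal is correct and follows essentially the same route as the paper: bound the non-clean-up work separately (insertions cost $O(B\log m)$ via the top tree; a deletion costs $\sum_{i,j}t_u(m,|N_{i,j}|,p')=O(\log k\cdot t_u(m,k,p'))$ over the $O(\log k)$ instances $\cD_{i,j}$, holding with probability $1-O(L)p'\ge 1-p/2$), then add the clean-up cost from \ref{lem:clean-up time} and union-bound the two failure probabilities to get $1-p$. The paper's proof is just a terser version of exactly this argument.
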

\begin{proof}
We show the update time outside the clean-up procedure. For insertion,
we need $O(B\log m)$ time be the property of top tree. For deletion,
we need $\sum_{i,j}t_{u}(m,|N_{i,j}|,p')=O(t_{u}(m,k,p')\log k)$
by \ref{lem:bound non-tree level i}, and the bound holds with probability
$1-p'O(L)\le1-p/2$. By \ref{lem:clean-up time}, we are done and
the bound holds with probability $1-2\cdot p/2=1-p$.
\end{proof}
\ref{lem:reduc correct}, \ref{lem:reduc prep time}, and \ref{lem:reduc update time}
concludes the proof of \ref{lem:reduc to dec few edge}.

\subsection{Contraction}

From \ref{lem:reduc to dec few edge}, we have reduced the problem
to decremental algorithms on graphs with few non-tree edges. We want
to further reduce the problem to decremental algorithms on graph with
few edges (using some additional data structures). Informally, we
would like to prove the following:
\begin{lem}
[Informal statement of \ref{lem:reduc contract}] Suppose there is
a decremental $\msf$ algorithm $\cA'$ for any $m'$-edge graph with
preprocessing time $t_{pre}(m',p)$ and update time $t_{u}(m',p)$.
Then, for any $m$, $k$ where $5k\le m'$, and $B$, let $G=(V,E)$
be a graph with $m$-edge graph and at most $k$ non-tree edges. Then,
with some additional data structures, there is a decremental dynamic
$\msf$ algorithm $\cB$ for $G$ with preprocessing time $t'_{pre}(m,k,p)=t_{pre}(5k,p)+O(k\log m)$
and edge-deletion time $t'_{u}(m,k,p)=t_{u}(5k,p)+O(\log m)$ with
probability $1-p$.
\end{lem}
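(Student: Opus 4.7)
The plan is to build, alongside the top tree on $F = \msf(G)$, a contracted multigraph $G' = (V', E')$ with $|E'| \le 5k$ that mirrors the MSF-reconnecting behaviour of $G$, run the given algorithm $\cA'$ on $G'$, and translate each deletion in $G$ into at most one deletion in $G'$ plus $O(\log m)$ bookkeeping on the top tree.

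Let $S \subseteq V$ be the set of endpoints of the $k$ non-tree edges of $G$, so $|S| \le 2k$, and let $T$ be the smallest subforest of $F$ spanning $S$. Call a node of $T$ \emph{important} if it lies in $S$ or has degree at least $3$ in $T$; a standard tree count gives at most $4k - 2$ important nodes. The edges of $T$ decompose into maximal paths whose internal nodes are degree-$2$ non-$S$ nodes, and each maximal path is replaced by a single \emph{contracted edge} weighted by the heaviest edge of the path, giving at most $4k$ contracted edges. Let $V'$ be the set of important nodes and let $E'$ consist of the contracted edges together with the $k$ non-tree edges themselves, so $|E'| \le 5k \le m'$. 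Preprocessing walks the $O(k)$ important vertices, using path-max queries on the top tree to construct the contracted edges in $O(k\log m)$ time, and then initializes $\cA'$ on $G'$ in time $t_{pre}(5k, p)$; this matches the claimed preprocessing bound.

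The key structural property is that no non-tree edge of $G$ has an endpoint in the interior of a contracted path, since all interior nodes lie outside $S$ by construction. Therefore, for any tree edge $e$ lying on the path represented by a contracted edge $e'$, the two components of $F - e$ in $G$ are crossed by \emph{exactly} the same set of non-tree edges as the two components of the MSF of $G' - e'$, and their minimum-weight reconnecting edges coincide. Given a deletion of $e$ in $G$, one top-tree query in $O(\log m)$ tells us whether $e$ is a tree edge and which contracted edge represents it; we forward either the deletion of $e$ (non-tree case) or the deletion of $e'$ (tree case) to $\cA'$, receive the reconnecting edge $f^*$ in time $t_u(5k, p)$, and update $F$ through the top tree in $O(\log m)$, giving the claimed update-time bound.

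The main obstacle, and where the bookkeeping is the most delicate, is maintaining the correspondence between $G$ and $G'$ as $F$ mutates: when $\cA'$ returns a reconnecting edge $f^*$, it becomes a tree edge of $G$, which alters $S$ and can merge two contracted paths into one or split an existing one; likewise, when $e$ is not the heaviest edge on its contracted path, the ``stump'' parts left after its removal must be folded back into the adjacent contracted edges. Each deletion forces only a constant number of such local splicings, since $|S|$ can drop by at most two per step and at most one tree/non-tree swap occurs per step. So the bookkeeping reduces to a constant number of expose/link/cut/path-max operations on the top tree and a constant number of edge updates in $G'$, all within the $O(\log m)$ budget. The arguments here mirror the analogous reductions used by Holm et al.~\cite{HolmLT01} and Wulff-Nilsen~\cite{Wulff-Nilsen16a}.
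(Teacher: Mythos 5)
Your construction of the contracted graph (Steiner subforest of $F$ over the endpoints $S$ of the non-tree edges, branch nodes as extra terminals, path-max weights, $|E'|\le 5k$) matches the paper's $\contract_S(G,F)$, and the preprocessing and the correspondence of reconnecting edges are fine. The genuine gap is in your update step, precisely at the point you identify as "the most delicate": you propose to keep $S$ equal to the \emph{current} set of non-tree-edge endpoints and to re-splice the contraction as $F$ mutates — merging two contracted paths into one when a reconnecting edge $f^*$ leaves the non-tree set, and "folding stumps back into the adjacent contracted edges" when an interior tree edge is deleted. Every such merge or fold is an edge \emph{insertion} (or a weight change, which is a deletion plus an insertion) in $G'$. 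But $\cA'$ is only assumed to be a \emph{decremental} MSF algorithm; it cannot process insertions at all, so "a constant number of edge updates in $G'$" is not an operation you are entitled to. This is not a cosmetic issue: it is exactly the difficulty the reduction has to avoid.

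The paper's route sidesteps it entirely by \emph{fixing} the terminal set $S=end(N)$ at preprocessing time and never updating it, even as non-tree edges become tree edges. Then (\ref{prop:contracted change}) every deletion in $G$ induces at most one \emph{deletion} in $G'$: deleting a non-tree edge deletes that edge of $G'$; deleting a tree edge off the connecting paths changes nothing in $G'$; deleting a tree edge covered by super edge $e'$ deletes $e'$, and the dangling terminal-free subpaths simply vanish — there is nothing to fold back. When a reconnecting edge $f^*$ appears, no insertion is needed either, because $f^*$ is already present in $G'$ as a non-tree edge and merely changes status to a super edge covering itself; the (now slightly non-minimal) contraction with the stale, larger $S$ remains a valid subgraph of at most $5k$ edges for which $\msf$-correspondence still holds. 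If you freeze $S$ in your scheme the rest of your argument goes through and coincides with the paper's; as written, the dynamic re-contraction cannot be driven by a decremental $\cA'$.
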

The rest of this section is devoted to the proof of \ref{lem:reduc contract}.
We use the contraction technique by Holm et al. \cite{HolmLT01}.
Below, we define some related notions and analyze their properties.
Then, we show the proof of the reduction.

\subsubsection{Property of Contracted Graphs/Forests}

For any tree $T=(V,E)$ and the set $S\subseteq V$ of terminals,
the connecting paths of $T$ with respect to $S$ are the path that
the minimal collections of disjoint paths in $T$ that ``connect''
the terminals in $S$. Below is the formal definition:
\begin{defn}
[Connecting Paths]Given a tree $T=(V,E)$ and a set of \emph{terminals}
$S\subseteq V$, the set $\cP_{S}(T)$ of \emph{connecting paths of
$T$ with respect to $S$} is defined as follows:
\begin{enumerate}
\item $\cP_{S}(T)$ is a collection of edge-disjoint paths.
\item The graph $\Disjunion_{P\in\cP_{S}(T)}P$ is a (connected) subtree
of $T$.
\item For any terminal $u\in S$, $u$ is an endpoint of some path $P\in\cP_{S}(T)$. 
\item For any endpoint $u$ of $P\in\cP_{S}(T)$, either $u\in S$ or there
are other two paths $P',P''\in\cP_{S}(T)$ whose endpoint is $u$.
\end{enumerate}

The definition can be extended as follows: for any forest $F=(V,E)$
and a set of terminal $S$ (where possibly $S\not\subset V$), $\cP_{S}(F)=\Disjunion_{T\in F}\cP_{S\cap V(T)}(T)$
is the disjoint-union of $\cP_{S\cap V(T)}(T)$ over all (connected)
tree $T$ in $F$.\label{def:super edges}

\end{defn}
Condition 4 in \ref{def:super edges} implies the minimality of $\cP_{S}(T)$.
Indeed, suppose otherwise that $u$ is an endpoint of $P\in\cP_{S}(T)$
but $u\notin S$ and there is only one other path $P'$ whose endpoint
is $u$. Then, we can replace $P$ and $P'$ with a path $P''=P\cup P'$
while other conditions are still satisfied. The following lemma formally
shows that the definition of $\cP_{S}(F)$ is uniquely defined.
\begin{lem}
For any tree $T=(V,E)$ and a set $S\subseteq V$ of terminals, $\cP_{S}(T)$
is uniquely defined.\end{lem}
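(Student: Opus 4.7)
The plan is to decouple two aspects of $\cP_{S}(T)$ and show that each is uniquely determined: (i) the underlying edge-set $T' := \bigcup_{P \in \cP_S(T)} P$, and (ii) the partition of $E(T')$ into paths. If both are forced, then $\cP_S(T)$ is unique.

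First I would identify $T'$ with the \emph{Steiner hull} of $S$ in $T$, i.e.\ the (evidently unique) minimal connected subtree of $T$ containing every vertex of $S$. By conditions~2 and~3 of \ref{def:super edges}, $T'$ is a connected subtree of $T$ that contains $S$, so it contains the Steiner hull. For the reverse inclusion, I would argue that every leaf $\ell$ of $T'$ must lie in $S$: since $T'$ is the edge-disjoint union of the paths in $\cP_S(T)$ and $\ell$ has degree $1$ in $T'$, the vertex $\ell$ is an endpoint of exactly one path, hence by condition~4 cannot satisfy the ``two other paths'' alternative, forcing $\ell \in S$. A connected subtree of the tree $T$ whose leaves all lie in $S$ is the Steiner hull of $S$, so $T'$ is uniquely determined.

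Next, with $T'$ fixed, I would pin down the set of path-endpoints of $\cP_S(T)$ as exactly $S \cup B$, where $B := \{v \in V(T') : \deg_{T'}(v) \geq 3\}$. The inclusion $\subseteq$ is condition~4: any endpoint outside $S$ is shared by at least three paths and therefore has $T'$-degree $\geq 3$. For the reverse, every $b \in B$ must be a path endpoint because a single path uses at most two edges at $b$, while the paths cover all $\geq 3$ edges of $T'$ incident to $b$ edge-disjointly, so some path must terminate at $b$. Finally, once the endpoint set $S \cup B$ is determined, the paths are forced: each $P \in \cP_S(T)$ is a maximal simple path in $T'$ whose interior vertices lie in $V(T') \setminus (S \cup B)$, and in any tree such maximal paths partition the edge set uniquely. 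Hence $\cP_S(T)$ is uniquely determined.

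The main technical point to handle carefully is the degenerate case $|S| \leq 1$, where the Steiner hull collapses to (at most) a single vertex and $\cP_S(T) = \emptyset$ trivially; this must be checked directly from the conditions but is straightforward. Beyond this, no real obstacle is expected, since the argument is essentially the standard topological-tree characterization of the minimal Steiner decomposition with respect to a terminal set, with conditions~1--4 precisely encoding minimality (no redundant leaves and no redundant internal subdivisions).
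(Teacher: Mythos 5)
Your high-level plan---identify the union $T'$ with the Steiner hull of $S$ and then recover the path system from a canonical endpoint set---is a more explicit, constructive route than the paper's proof, which instead takes two candidate decompositions $\cP,\cP'$ and derives a contradiction from a vertex in $end(\cP)\setminus end(\cP')$. Your first pillar is sound and coincides with the first half of the paper's argument: a leaf of $T'$ is the endpoint of exactly one path, hence lies in $S$ by Condition~4, hence $T'$ is the Steiner hull of $S$.

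The second pillar, however, has a genuine gap, and it cannot be patched because the claim you are trying to force is not actually implied by Conditions~1--4 of \ref{def:super edges}. Your counting argument that every branch vertex $b$ with $\deg_{T'}(b)\ge 3$ must be a path endpoint fails when $\deg_{T'}(b)$ is even: a path passing through $b$ consumes two incident edges, so four edges at $b$ can be covered by two pass-through paths with no path terminating at $b$. Concretely, write $P_{xy}$ for the unique $x$-to-$y$ path in $T$, let $T$ be the star with center $c$ and leaves $a,b,d,e$, and let $S=\{a,b,d,e\}$: both $\{P_{ab},P_{de}\}$ and $\{P_{ca},P_{cb},P_{cd},P_{ce}\}$ satisfy all four conditions, so the endpoint set (and the decomposition) is not determined. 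A second failure mode hits your final step, namely the claim that path interiors avoid $S\cup B$: for the star with leaves $a,b,d$ and $S=\{a,b,c,d\}$, both $\{P_{ab},P_{cd}\}$ and $\{P_{ca},P_{cb},P_{cd}\}$ are valid, yet $P_{ab}$ has the terminal $c$ in its interior. Note that the paper's own proof makes the two corresponding unjustified leaps (it asserts that a non-endpoint vertex of the union has degree exactly $2$, and that distinct decompositions must have distinct endpoint sets), so the defect really lies in \ref{def:super edges}: one must additionally require that no path of $\cP_S(T)$ has an interior vertex that is a terminal or has degree at least $3$ in the union---precisely the normal form you are deriving. With that clause added, your argument becomes a correct and complete proof; without it, the lemma as stated is false.
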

\begin{proof}
Suppose there are two different sets $\cP$ and $\cP'$ of connecting
paths of $T$ with respect to $S$. Let $end(\cP)$ and $end(\cP')$
be the set of endpoints of paths in $\cE$ and $\cE'$. Condition
3 states that $S\subseteq end(\cP)\cap end(\cP')$. Let $H=\Disjunion_{P\in\cP}P$
and $H'=\Disjunion_{P\in\cP'}P$. Condition 2 states that both $H$
and $H'$ are (connected) subtrees of $T$. 

Observe that all the sets of leaves of $H$ and $H'$ must be the
same. Otherwise, we can assume w.l.o.g. that there is a leaf $u$
of $H$ where $u\in V(H)\setminus V(H')$. By Condition 4, $u\in S$
as $u$ is a leaf in $H$. But $u\notin end(\cP')$ which contradicts
that fact that $S\subseteq end(\cP')$. Now, as $H$ and $H'$ share
the same at of leaves, it follows that $H$ and $H'$ are the same
subtree in $T$.

As $\cP$ and $\cP'$ are different, w.l.o.g. there is a node $u\in end(\cP)\setminus end(\cP')$.
As $S\subseteq end(\cP')$, we have that $u\notin S$. By Condition
4, $u$ is an endpoint of three paths $P,P',P''\in\cP$ which are
edge-disjoint by Condition 1. In particular, $\deg_{H}(u)\ge3$. Next,
as $u\in V(H)=V(H')$ but $u\notin end(\cP')$, we have that $u$
is an internal node of some path $P'\in\cP$. So $\deg_{H'}(u)=2$.
This is a contradiction because $H$ and $H'$ are the same subtree
but $2=\deg_{H'}(u)=\deg_{H}(u)\ge3$.
\end{proof}
Below, for any set of edges $E'$, let $end(E')$ denote the set of
endpoints of edges in $E'$. Observe the following:
\begin{prop}
For any graph $G=(V,E)$, forest $F\subseteq E$, and any $S\supseteq end(E-F)$,
each connecting path in $\cP_{S}(F)$ is an induced path in $G$. \end{prop}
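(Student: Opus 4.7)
The plan is to reduce the claim to a structural lemma: interior vertices of any $P\in\cP_{S}(F)$ do not lie in $S$. Given such a lemma, the proposition follows by a short case analysis on an edge $e=(u,v)\in E$ with $u,v\in V(P)$. If $e\in F$, then since $F$ is a forest and the sub-path of $P$ from $u$ to $v$ already connects them inside $F$, the edge $e$ must coincide with that sub-path (otherwise a cycle would be formed in $F$), so $u$ and $v$ are adjacent along $P$. If $e\in E-F$, then $u,v\in end(E-F)\subseteq S$, so by the lemma neither can be internal to $P$, forcing both to be endpoints of $P$; the intended reading of ``induced'' here is that no chord can touch an interior vertex of $P$, which is exactly what is required for the contraction step that follows this proposition.

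To establish the structural lemma I would argue by contradiction, invoking the uniqueness of $\cP_{S}(F)$ established in the preceding lemma. Suppose $w$ is interior to some $P\in\cP_{S}(F)$ and $w\in S$. By Condition 3 of \ref{def:super edges}, $w$ must be an endpoint of some $P^{*}\in\cP_{S}(F)$, and since $w$ is interior to $P$, necessarily $P^{*}\neq P$. Split $P$ at $w$ into two sub-paths $P_{1}, P_{2}$ and consider the alternative collection $\cP' = (\cP_{S}(F)\setminus\{P\})\cup\{P_{1},P_{2}\}$. I would then verify that $\cP'$ satisfies all four conditions: Conditions 1 and 2 are immediate, since $P_{1}, P_{2}$ partition the edges of $P$ and the union of the collection is unchanged; Condition 3 still holds because every terminal that was an endpoint in $\cP_{S}(F)$ remains an endpoint in $\cP'$; and Condition 4 is checked at the new endpoint $w$, which is valid because $w\in S$ by assumption, and at the unchanged endpoints of $P_{1}, P_{2}$ (which coincide with endpoints of $P$ and therefore already satisfied Condition 4). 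This produces a second valid collection distinct from $\cP_{S}(F)$, contradicting uniqueness.

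The main obstacle is ensuring that the split operation genuinely produces a different valid collection so that the uniqueness lemma can be invoked: the key observation is that splitting strictly increases the path count by one, so $\cP'\neq\cP_{S}(F)$ as multisets. After that, the case analysis for the proposition itself is routine and uses only the forest property of $F$ together with the hypothesis $S\supseteq end(E-F)$. No flow, expansion, or probabilistic machinery is needed here.
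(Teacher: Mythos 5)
The paper offers no argument for this proposition at all (it is introduced with ``Observe the following''), so there is nothing to compare your proof against; judged on its own terms, your write-up captures exactly the content that matters. The reduction to the structural lemma that interior vertices of a connecting path avoid $S$ is the right move, and your splitting argument for it is sound: the collection $\cP'$ obtained by cutting $P$ at $w$ satisfies all four conditions of \ref{def:super edges} (your check of Condition 4 at the old endpoints is the only delicate spot, and you handle it correctly), and it has one more path than $\cP_{S}(F)$, so the uniqueness lemma yields the contradiction. The $e\in F$ case is likewise correct by uniqueness of paths in a forest.

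The caveat you raise in the $e\in E-F$ case is not cosmetic, however: the proposition as literally stated is false. Take $V=\{u,w,v\}$ with $F=\{uw,wv\}$ and a single non-tree edge $uv$, so $S=end(E-F)=\{u,v\}$; the unique connecting path is the two-edge path from $u$ through $w$ to $v$, and $uv$ is a chord joining its two endpoints, so this path is not induced in $G$. What is true, and what you actually prove, is that no edge of $G$ can join an interior vertex of a connecting path to another vertex of that path; the only possible chord is a non-tree edge between the two endpoints, both of which lie in $S$. Since the observation is never invoked later in the paper, your weakened reading is the correct repair rather than a defect of your argument. One further point worth recording: your structural lemma rests entirely on the uniqueness of $\cP_{S}(F)$, and the four conditions of \ref{def:super edges} do not by themselves forbid a terminal from being simultaneously an endpoint of one path and an interior vertex of another (in a star with center $c$ and leaves $a,b,d$ with $c\in S$, the collection consisting of the path from $a$ through $c$ to $b$ together with the edge from $c$ to $d$ satisfies all four conditions). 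So the uniqueness lemma is doing real work in your proof, and your argument is only as strong as that lemma.
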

\begin{defn}
[Contracted Graphs/Forests and Super Edges]For any weighted graph
$G=(V,E,w)$, forest $F\subseteq E$ and a set $S\supseteq end(E-F)$
of terminals, the \emph{contracted (multi)-graph} $G'$ and \emph{contracted
forest} $F'$ with respect to $S$ is obtained from $G$ and $F$
respectively by 1) removing all edges in $F\setminus\Disjunion_{P\in\cP_{S}(F)}P$,
and 2) replacing each connecting path $P_{uv}=(u,\dots,v)\in\cP_{S}(F)$
in $F$ with an edge $(u,v)$, called \emph{super edge}, with weight
$w(u,v)=\max_{e\in P_{uv}}w(e)$. We denote $(G',F')=\contract_{S}(G,F)$.
For each $e\in P_{uv}\subseteq F$, we say $(u,v)\in F'$ is the \emph{super
edge covering} $e$ with respect to $S$.
\end{defn}
In the following propositions, we show some properties of the contracted
graph and forest. Let $G=(V,E)$ be a graph and $F\subseteq E$ be
a forest. Denote by $N=E-F$ the set of non-tree edges and $S\supseteq end(N)$
a set of terminal contains points non-tree edges. Let $(G',F')=\contract_{S}(G,F)$
be the contracted graph and forest with respect to $S$. First, we
show that the number of edges in the contracted graph is linear in
the number of original non-tree edges plus the number of terminals.
\begin{prop}
$E(G')=N\disjunion E(F')$ and $E(G')\le|N|+2|S|$.\label{prop:contract few edge}\end{prop}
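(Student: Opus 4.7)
The first equality $E(G')=N\disjunion E(F')$ is essentially definitional. By construction of $\contract_S$, the edge set of $G'$ is obtained from $E=N\disjunion F$ by (i) discarding the $F$-edges that lie on no connecting path, and (ii) replacing each connecting path $P_{uv}\in\cP_S(F)$ by a brand new super edge $(u,v)$; the non-tree edges $N$ are never touched. The super edges are exactly the edges of $F'$, so $E(G')=N\cup E(F')$, and since $G'$ is a multigraph and super edges are freshly introduced tokens, the union is disjoint.

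The bound $|E(G')|\le |N|+2|S|$ therefore reduces to proving $|E(F')|\le 2|S|$. I will do this by a handshake argument on each connected component (tree) $T'$ of the forest $F'$ using only Condition~4 of the definition of connecting paths. Fix a tree $T'\subseteq F'$ with at least one edge, and write $s:=|V(T')\cap S|$, $b:=|V(T')\setminus S|$, $n:=s+b=|V(T')|$. Every vertex of $T'$ is an endpoint of some super edge (so in particular an endpoint of some path in $\cP_S(F)$); by Condition~4, every non-terminal endpoint is the endpoint of at least three paths of $\cP_S(F)$, hence has $\deg_{T'}(\cdot)\ge 3$. Each terminal vertex contributes degree at least $1$. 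Summing,
\begin{equation*}
2(n-1)=\sum_{v\in V(T')}\deg_{T'}(v)\ \ge\ s+3b\ =\ n+2b,
\end{equation*}
which rearranges to $b\le (n-2)/2$, and hence $n\le 2s-2$. Thus $|E(T')|=n-1\le 2s-3$.

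Summing over the components $T'$ of $F'$ (trivial components with no edges contribute $0$ and can be ignored), and using the fact that the terminal sets $V(T')\cap S$ are pairwise disjoint subsets of $S$, we obtain $|E(F')|\le 2|S|$, which combined with the first part yields $|E(G')|=|N|+|E(F')|\le |N|+2|S|$. The only subtle step is the appeal to Condition~4 to guarantee $\deg_{T'}(v)\ge 3$ at every non-terminal vertex of $F'$; once this is in hand, the rest is a short degree count.
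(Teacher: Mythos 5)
Your proof is correct and follows essentially the same route as the paper: reduce to bounding $|E(F')|$ and use Condition~4 of the definition of connecting paths to force degree at least $3$ at every non-terminal vertex of each contracted tree, then count. The only cosmetic difference is that you carry out the count explicitly via the handshake lemma, whereas the paper invokes the well-known fact that in a tree the number of vertices of degree at least three is at most the number of vertices of degree at most two.
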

\begin{proof}
We have that $E(G')=N\disjunion E(F')$ by construction. It is enough
to prove that $|E(F')|\le2|S|$. For each tree $T$ in $F$, let $T'$
be the corresponding contracted tree in $F'$. Again, let $end(\cP_{G}(T))$
is the set of endpoints of paths in $\cP_{G}(T)$ and let $S_{T}=S\cap V(T)$
be the set of terminal in $T$. We have $V(T')=end(\cP_{G}(T))\supseteq S_{T}$.
Also, for all $u\in end(\cP_{G}(T))\setminus S_{T}$, $\deg_{T'}(u)\ge3$
by Condition 4 of \ref{def:super edges}. It is well known that, in
any tree, the number of nodes with degree at least three is at most
the number of nodes with degree at most two. So $|S_{T}|\ge|end(\cP_{G}(T))\setminus S_{T}|$
and hence $|end(\cP_{G}(T))|\le2|S_{T}|$. Note that $|E(T')|=|V(T')|-1=|end(\cP_{G}(T))|-1$.
So 
\[
|E(F')|=\sum_{T'\in F'}|E(T')|\le\sum_{T\in F}|end(\cP_{G}(T))|\le2\sum_{T\in F}|S_{T}|\le2|S|.
\]
\end{proof}
\begin{prop}
If $F=\msf(G)$, then $F'=\msf(G')$.\label{prop:msf preserve}\end{prop}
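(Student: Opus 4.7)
The plan is to verify MSF-membership edge by edge via the standard cycle--cut characterization: with distinct edge weights, $e\in\msf(G')$ iff $e$ is not the heaviest edge on any cycle in $G'$, equivalently iff $e$ is the lightest edge across some cut in $G'$. Since $E(G')=N\disjunion E(F')$ by \ref{prop:contract few edge} and $F'$ is plainly a spanning forest of $G'$ (contracting internally vertex-disjoint paths creates no cycles, and two terminal nodes lie in the same component of $F'$ iff they do so in $F$), it suffices to show (a) every super edge of $F'$ lies in $\msf(G')$, and (b) every non-tree edge $e\in N$ lies outside $\msf(G')$.

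For (b), fix $e=(u,v)\in N$. Since $S\supseteq end(N)$, both $u,v\in S$, and by Condition~3 of \ref{def:super edges} each is an endpoint of some connecting path. The skeleton $H=\bigcup_{P\in\cP_{S}(F)}P$ is a subtree of $F$ containing $u$ and $v$, so the tree path $P_F(u,v)$ lies entirely in $H$. Moreover, $P_F(u,v)$ decomposes as a concatenation of \emph{entire} connecting paths: an internal vertex of a connecting path has $H$-degree $2$, so traversing $P_F(u,v)$ can only switch paths at vertices of $end(\cP_{S}(F))$. Hence $P_{F'}(u,v)$ consists of exactly the super edges replacing those connecting paths, and
\[
\max_{e'\in P_{F'}(u,v)} w(e') \;=\; \max_{e'\in P_F(u,v)} w(e') \;<\; w(e),
\]
where the last inequality uses $F=\msf(G)$ and the cycle property. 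Thus $e$ is heaviest on the cycle $e\cup P_{F'}(u,v)$ in $G'$, giving $e\notin\msf(G')$.

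For (a), fix a super edge $\tilde{e}=(u,v)\in E(F')$ coming from a connecting path $P_{uv}\in\cP_{S}(F)$, and let $e^{*}\in P_{uv}$ achieve $w(e^{*})=w(\tilde{e})=\max_{e'\in P_{uv}}w(e')$. Suppose toward contradiction that $\tilde{e}\notin\msf(G')$; then there is a $u$-to-$v$ path $Q$ in $G'-\tilde{e}$ all of whose edges have weight $<w(\tilde{e})$. Expanding each super edge on $Q$ into its underlying connecting path yields a walk in $G$ from $u$ to $v$; every edge on this walk has weight at most that of the enclosing super edge (or is itself a non-tree edge from $Q$), hence $<w(e^{*})$, and in particular the walk avoids $e^{*}$. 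But removing $e^{*}$ from $F$ separates $u$ from $v$ (since $e^{*}\in P_{uv}=P_F(u,v)$), producing a cut $(A,B)$ of $V$; the walk must include an edge crossing $(A,B)$, which by the cut property (applied to $e^{*}\in F=\msf(G)$) has weight $>w(e^{*})$ --- a contradiction.

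The main obstacle is the path-decomposition claim used in (b): that $P_F(u,v)$ is a concatenation of entire connecting paths rather than slicing through one mid-way. This is exactly what the minimality conditions of \ref{def:super edges} (edge-disjointness together with the degree-$\ge 3$ requirement at non-terminal endpoints) are designed to guarantee. Everything else reduces to routine applications of the cycle and cut properties of the MSF.
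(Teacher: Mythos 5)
Your proof is correct and follows essentially the same route as the paper's: the core step in both is that for a non-tree edge $e=(u,v)\in N$ the path $P_{F'}(u,v)$ is a concatenation of super edges whose weights are maxima over sub-paths of $P_{F}(u,v)$, hence all lighter than $w(e)$, so $e\notin\msf(G')$. You additionally justify the path-decomposition claim from the degree conditions of \ref{def:super edges} and verify via the cut property that every super edge lies in $\msf(G')$ --- both points the paper leaves implicit, since it instead concludes from $E(G')=N\disjunion E(F')$ that excluding $N$ suffices --- so your write-up is, if anything, more complete.
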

\begin{proof}
As $E(G')=N\disjunion E(F')$ by \ref{prop:contract few edge}, we
only need to prove that $N\cap\msf(G')=\emptyset$. For each $e=(u,v)\in N$,
then $u$ and $v$ must be connected in $F$, otherwise $F$ is not
spanning. Let $P_{u,v}\subseteq F$ be a path in $F$ connecting $u$
and $v$. We have that $w(e)>\max_{f\in P_{u,v}}w(f)$ otherwise $F$
is not an $\msf$. Let $P'_{u,v}\subseteq F'$ be a path in $F'$
connecting $u$ and $v$. Observe that each edge $e'\in P'_{u,v}$
corresponds to a connecting path $P_{e'}\subseteq P_{u,v}$. So $w(e')\le\max_{f\in P_{u,v}}w(f)$.
So we have $\max_{e'\in P'_{u,v}}w(e')=\max_{f\in P_{u,v}}w(f)<w(e)$.
That is, $e\notin\msf(G')$.
\end{proof}
Here, we show the change of 1) the contracted graph/forest $(G',F')$,
2) the $\msf$ of the graph $\msf(G)$, and 3) the $\msf$ of the
contracted graph $\msf(G')$ when we delete an edge $e$ from $G$.
Note that the set $S$ of terminal does \emph{not} change.
\begin{prop}
\label{prop:contracted change}Suppose that $F=\msf(G)$. For any
edge $e$ of $G$, let $G_{1}=G-e$, $F_{1}=F-e$ (so $F_{1}=F$ if
$e\notin F$) and $(G'_{1},F'_{1})=\contract_{S}(G_{1},F_{1})$ (note
that $S=end(E-F))$.
\begin{enumerate}
\item If $e\in N$, then $(G'_{1},F'_{1})=(G'-e,F')$, $\msf(G_{1})=F=F_{1}$
and $\msf(G'_{1})=F'=F'_{1}$.
\item Else, if $e\in F\setminus\Disjunion_{P\in\cP_{S}(F)}P$, then $(G'_{1},F'_{1})=(G',F')$,
$\msf(G_{1})=F-e=F_{1}$ and $\msf(G'_{1})=F'=F'_{1}$.
\item Else, $e\in\Disjunion_{P\in\cP_{S}(F)}P$ and there is a super edge
$e'=(u',v')\in F'$ covering $e$ with respect to $S$. Then, $(G'_{1},F'_{1})=(G'-e',F'-e')$.
Then, one of two cases holds:

\begin{enumerate}
\item $\msf(G{}_{1})=F-e=F_{1}$ and $\msf(G'_{1})=F'-e'=F'_{1}$, or
\item There is $f\in N$ where $\msf(G{}_{1})=F\cup f-e=F_{1}\cup f$ and
$\msf(G'_{1})=F'\cup f-e'=F'_{1}\cup f$.
\end{enumerate}
\end{enumerate}
\end{prop}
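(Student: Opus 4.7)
The plan is to proceed by case analysis on which of three types $e$ belongs to: a non-tree edge, a tree edge off every connecting path, or a tree edge on some connecting path. In each case I will track separately the effect on $(G,F)$ (does $F-e$ remain the $\msf$, or is a reconnecting swap needed?) and on the contracted pair $(G',F')$. A uniform reduction throughout is that once we establish $F_1=\msf(G_1)$ (possibly after a swap), Proposition \ref{prop:msf preserve} applied to the updated pair immediately gives $\msf(G'_1)=F'_1$; so the real work reduces to identifying the correct contracted pair.

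For Case 1 ($e\in N$), no tree edge disappears, so $F=\msf(G_1)=F_1$; since $\cP_S(F_1)=\cP_S(F)$ and only a non-tree edge has vanished, $(G'_1,F'_1)=(G'-e,F')$. For Case 2 ($e\in F\setminus\Disjunion_{P\in\cP_S(F)}P$), let $T\subseteq F$ be the tree containing $e$ and $T'=\Disjunion_{P\in\cP_{S\cap V(T)}(T)}P$ its connecting-paths subtree. Since $T'$ is connected and does not include $e$, removing $e$ splits $T$ into $T_a\supseteq T'$ and $T_b$, and $V(T_b)\cap S=\emptyset$. Hence no non-tree edge is incident to $T_b$, no reconnection is possible, and $\msf(G_1)=F-e$. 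Because $e\notin\Disjunion P$ is already stripped by step~1 of the contraction, neither $G'$ nor $F'$ changes, and $(G'_1,F'_1)=(G',F')$.

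Case 3 is the heart of the argument. Here $e$ lies on the connecting path $P_{u',v'}$, whose super edge is $e'=(u',v')$. The first step will be to show that removing $e$ from $G$ corresponds, at the contracted level, precisely to removing $e'$: that is, $(G'_1,F'_1)=(G'-e',F'-e')$. This requires verifying that the only change in the decomposition upon deleting $e$ is that the path $P_{u',v'}$ is discarded, while all other connecting paths and their endpoints remain valid in $F-e$. Then I would branch on whether there exists a reconnecting non-tree edge across the $T_a$--$T_b$ cut produced by removing $e$. If none exists, then $\msf(G_1)=F-e=F_1$, and Proposition \ref{prop:msf preserve} gives $\msf(G'_1)=F'-e'=F'_1$, yielding subcase (a). Otherwise, letting $f\in N$ be the minimum-weight reconnecting edge, $F\cup f-e=\msf(G_1)$; since $f\in N\subseteq E(G')$ and $f$ crosses the same cut in $G'$, the analogous swap $F'\cup f-e'=\msf(G'_1)$ holds, yielding subcase (b).

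The main obstacle will be the bookkeeping in Case 3, namely confirming that $\cP_S(F_1)=\cP_S(F)\setminus\{P_{u',v'}\}$ and that every other super edge in $F'$ survives unchanged. One must verify that on $F-e$ the remaining paths still satisfy the four defining conditions of a connecting-path decomposition (edge-disjointness, connected union, each terminal as an endpoint, and the endpoint/junction condition) and then invoke the uniqueness lemma to identify the result. Once this structural fact is in place, the two $\msf$ subclaims of Case 3 follow mechanically from Proposition \ref{prop:msf preserve} together with the swap arguments sketched above.
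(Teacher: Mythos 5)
The paper states this proposition without proof, so there is nothing to compare against line by line; your Cases 1 and 2 are correct and complete (in Case 2 the key points — that $T_b$ contains no terminal, hence no reconnecting edge exists and the connecting-path decomposition is unchanged by uniqueness — are exactly right). The problem is Case 3, whose pivotal structural claim you explicitly defer as ``the main obstacle'' and never discharge: the identity $\cP_S(F_1)=\cP_S(F)\setminus\{P_{u',v'}\}$ is \emph{false} in general, so the proof as proposed does not go through. Concretely, take a tree $T$ with three terminals $s_1,s_2,s_3\in S$ joined by three connecting paths $P_1,P_2,P_3$ meeting at a common non-terminal vertex $c$; then $F'$ has super edges $(s_1,c),(s_2,c),(s_3,c)$. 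Delete an edge $e$ on $P_3$. In $F-e$ the vertex $c$ is an endpoint of only the two surviving paths, so Condition~4 of \ref{def:super edges} is violated by the leftover collection, and by the uniqueness lemma the canonical decomposition of the component containing $s_1,s_2$ is the single merged path $P_1\cup P_2$, yielding one super edge $(s_1,s_2)$ rather than the two edges of $F'-e'$. Hence $\contract_S(G_1,F_1)\neq(G'-e',F'-e')$ under the literal definition, and the same discrepancy infects the claims $\msf(G'_1)=F'_1$ and $\msf(G'_1)=F'_1\cup f$, since these compare the MSF of the wrong graph. (There is also the minor point that the two stubs of $P_{u',v'}$ left dangling in $F-e$ must be argued to disappear in step~1 of the contraction; that part is easy, but it is part of the same verification you skipped.)

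To repair this you must either (i) prove the proposition for a \emph{relaxed} notion of contracted pair in which Condition~4 (minimality) is waived after updates — i.e., the maintained $(G',F')$ is allowed to keep degree-$2$ non-terminal junctions, which is what the data structure of \ref{lem:contraction} effectively does when it merely deletes a super edge — and then check that \ref{prop:msf preserve} and the reconnecting-edge argument still hold for such non-canonical contractions (they do: subdividing a super edge into two super edges whose maximum weight equals the original does not change which non-tree edge is the lightest replacement across the cut); or (ii) add an explicit argument that the canonical $\contract_S(G_1,F_1)$ differs from $(G'-e',F'-e')$ only by merging super edges along degree-$2$ non-terminals and discarding terminal-free pieces, and that these operations commute with taking $\msf$ restricted to $N$. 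Your remaining Case~3 reasoning — that a non-tree edge $f$ reconnects $F-e$ iff it reconnects $F'-e'$, because both endpoints of every non-tree edge lie in $S$ and the components of $F'-e'$ on $S$ agree with those of $F-e$ — is sound once the structural issue is settled, but as written the proof has a genuine hole exactly where you predicted it would.
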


\subsubsection{The Proof}

Holm et al. \cite{HolmLT01} show the following crucial structure
based on top tree. 
\begin{lem}
[Lemma 15 of \cite{HolmLT01}]\label{lem:contraction}There is an
algorithm $\cC$ that runs in two phases:
\begin{enumerate}
\item In the first phase: $\cC$ maintains an at-most-$m$-edge forest $F$
undergoing a sequence of edge insertions and deletions. $\cC$ can
handle each edge update of $F$ in $O(\log m)$. 
\item Then, in the second phase: given any set $N$ of edges, $\cC$ can
return $(G',F)=\contract_{S}(G,F)$ in time $O(|N|\log m)$ where
$G=(V,F\cup N)$ and $S=end(N)$. Moreover, for any edge $e\in F$,
$\cC$ can return a super edge $e'=(u',v')\in F'$ covering $e$ with
respect to $S$, if $e'$ exists, in time $O(\log m)$.
\end{enumerate}
\end{lem}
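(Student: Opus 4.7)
The plan is to realize the algorithm on top of the top tree data structure for the forest $F$, which natively supports link/cut, mark/unmark of a vertex, path queries (e.g.\ the heaviest edge on a $u$--$v$ path), and a \emph{nearest-special-vertex} query: starting from a vertex $x$ and walking from it along a specified incident tree edge, return the first marked vertex or degree-$\ge 3$ branching vertex of the Steiner tree spanned by the marked set. All of these can be realized in $O(\log m)$ per operation by augmenting each cluster with (i) the maximum edge weight in the cluster, (ii) the number of marked vertices near each of its two boundary vertices, and (iii) a pointer/flag indicating whether the cluster contains a branching vertex and, if so, its position. This is essentially the top-tree toolbox developed by Alstrup--Holm--de Lichtenberg--Thorup and packaged in \cite{HolmLT01}.

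For Phase~1 I would simply maintain this augmented top tree under insertions and deletions of $F$-edges, paying $O(\log m)$ per update. For Phase~2, given $N$, first mark every vertex of $S=end(N)$ in total time $O(|N|\log m)$. The super-edges of $F'$ are in one-to-one correspondence with the connecting paths in $\cP_S(F)$, whose endpoints are exactly the marked vertices together with the branching vertices of the Steiner tree spanned by $S$; by a standard tree fact the number of branching vertices is at most $|S|$, so there are $O(|S|)=O(|N|)$ super-edges. I would enumerate them by starting at each marked vertex and, recursively, at each newly discovered branching vertex, and using the nearest-special-vertex query along each incident direction to walk to the other endpoint of the corresponding connecting path. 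For each such path, an $O(\log m)$ max-weight path query computes the super-edge weight $w(u,v)=\max_{e\in P_{uv}} w(e)$. The total cost is $O(|N|\log m)$.

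For the super-edge lookup, let $e=(x,y)\in F$. Its covering super edge, if any, has as its two endpoints the nearest special vertex reached from $x$ walking in the direction away from $y$, and symmetrically from $y$ away from $x$. Both are answered by one nearest-special-vertex query each in $O(\log m)$ total, matching the stated bound; if either walk exits $F\setminus\bigsqcup_{P\in\cP_S(F)}P$ (i.e.\ never meets a special vertex), we report that $e$ is not covered.

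The main obstacle is the correct design of the cluster-composition rules supporting the nearest-special-vertex primitive: branching-vertex information is inherently non-local, since whether an internal vertex of a cluster becomes a degree-$\ge 3$ node of the Steiner tree depends on marks that may lie outside the cluster. Following the cluster invariants used in \cite[Lemma~15]{HolmLT01}, it suffices to store, per cluster and per boundary vertex, the count of marked vertices in the corresponding induced subtree together with at most one distinguished branching candidate inside the cluster; these values can be combined in constant time when two clusters merge, which is what keeps both the update and the query time at $O(\log m)$.
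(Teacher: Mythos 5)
The paper does not prove this lemma at all---it is imported verbatim as Lemma~15 of Holm, de~Lichtenberg and Thorup \cite{HolmLT01}---and your sketch is essentially the standard top-tree construction underlying that cited result (augmented clusters with mark counts and branching-vertex information, nearest-special-vertex walks to enumerate the $O(|S|)$ connecting paths, and a max-weight path query per super edge). Your outline is correct; the only points to tighten are boundary cases in the lookup (the walk from $x$ away from $y$ must return $x$ itself when $x\in S$ or is a branching vertex) and making explicit that directions of a branching vertex containing no marked vertex yield no super edge, neither of which affects the $O(\log m)$ bounds.
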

Having \ref{lem:contraction} together with \ref{prop:contracted change},
we can prove the main reduction of this section.
\begin{lem}
\label{lem:reduc contract}Suppose there is a decremental $\msf$
algorithm $\cA'$ for any $m'$-edge graph with preprocessing time
$t_{pre}(m',p)$ and update time $t_{u}(m',p)$. Then, for any $m$,
$k$ where $5k\le m'$, and $B$, let $G=(V,E)$ be a graph with $m$-edge
graph and at most $k$ non-tree edges. Suppose that $F=\msf(G)$ and
$N=E-F$ are given. Moreover, there is a given instance $\cC$ of
the algorithm \ref{lem:contraction} that is running its first phase
on $F$. Then, there is a decremental dynamic $\msf$ algorithm $\cB$
for $G$ with preprocessing time $t'_{pre}(m,k,p)=t_{pre}(5k,p)+O(k\log m)$
and edge-deletion time $t'_{u}(m,k,p)=t_{u}(5k,p)+O(\log m)$ with
probability $1-p$.\end{lem}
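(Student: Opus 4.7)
The plan is to simulate $\cB$ by delegating the dynamic $\msf$ of the contracted graph $G'$ to $\cA'$, while maintaining $F = \msf(G)$ separately with the help of $\cC$. We fix $S = end(N)$ once at preprocessing and never modify it; since updates are decremental, the set $E - F$ can only shrink or migrate into $F$, so the containment $S \supseteq end(E' - F')$ required to keep $\contract_S(\cdot,\cdot)$ well-defined is preserved throughout the sequence.

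\textbf{Preprocessing.} Feed $N$ to the second phase of $\cC$ to obtain $(G', F') = \contract_S(G, F)$ in time $O(|N|\log m) = O(k\log m)$. By \ref{prop:contract few edge}, $|E(G')| \le |N| + 2|S| \le k + 4k = 5k \le m'$, and by \ref{prop:msf preserve}, $F' = \msf(G')$, so we may initialize $\cA'$ on $G'$ in time $t_{pre}(5k, p)$. We also store two-way pointers between edges of $G'$ and their originals in $G$, so that whenever $\cA'$ reports a reconnecting edge we can translate it back. Total preprocessing time is $t_{pre}(5k, p) + O(k\log m)$.

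\textbf{Handling a deletion of $e$ from $G$.} We case-split exactly as in \ref{prop:contracted change}, using $\cC$'s phase-2 query (cost $O(\log m)$) to test whether some super edge $e' \in F'$ covers $e$. \emph{Case 1} ($e \in N$): just forward the deletion of $e$ to $\cA'$; neither $F$ nor $F'$ changes, and no replacement is reported. \emph{Case 2} ($e \in F$ but no super edge covers $e$): remove $e$ from $F$ via an $O(\log m)$ update to $\cC$, leave $G'$ untouched, and report that $e$ has no replacement. \emph{Case 3} ($e \in F$ covered by a super edge $e'$): forward the deletion of $e'$ to $\cA'$ at cost $t_u(5k, p)$; if $\cA'$ returns a reconnecting edge $f$, then by Case 3b of \ref{prop:contracted change} we have $f \in N$, so we set $F \gets F \cup \{f\} - \{e\}$ with two $O(\log m)$-time updates to $\cC$ and report $f$ as the replacement; otherwise (Case 3a) we only delete $e$ from $F$ via $\cC$ and report no replacement. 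Summing, the per-deletion cost is $t_u(5k, p) + O(\log m)$, matching the claim.

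\textbf{Correctness.} The invariant maintained after every deletion is that $\cA'$ holds the up-to-date contracted graph $G'$ together with its $\msf$ $F'$, and that the forest $F$ stored by $\cC$ equals $\msf(G)$. \ref{prop:msf preserve} gives this initially; \ref{prop:contracted change} gives it inductively in all three cases. The main obstacle is verifying the bookkeeping in Case 3, where a single user-level deletion must trigger the correct simultaneous update to three objects ($G$, the explicit forest $F$ inside $\cC$, and the contracted $F'$ implicit in $\cA'$). The crucial point is that the reconnecting edge $f$ returned by $\cA'$ for $G'$ is automatically also the correct reconnecting edge for $G$, which is precisely the final clause of \ref{prop:contracted change}, so no additional flow of information between $\cA'$ and $\cC$ is needed beyond pointer-translation.
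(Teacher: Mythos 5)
Your proposal is correct and follows essentially the same route as the paper: fix $S=end(N)$ once, use $\cC$'s second phase to build the contracted graph of at most $5k$ edges, run $\cA'$ on it, and translate each deletion through the case analysis of \ref{prop:contracted change} with an $O(\log m)$ super-edge query. The only cosmetic difference is that the paper maintains the "covering super edge" invariant for a newly promoted reconnecting edge $f$ by noting $f$ covers itself in $O(1)$ time rather than via updates to $\cC$, but this does not affect correctness or the stated bounds.
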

\begin{proof}
The preprocessing algorithm for $\cB$ is the following. First, we
switch $\cC$ to the second phase and give the non-tree edges $N$
to $\cC$, and obtain $(G',F)=\contract_{S}(G,F)$ where $S=end(N)$.
By \ref{prop:contract few edge}, $|E(G')|\le|N|+2|S|\le5k$. After
that we initialize $\cA'$ on $G'$ with probability parameter $p$
in time $t_{pre}(|E(G')|,p)=t_{pre}(5k,p)$. In total this takes $t'_{pre}(m,k,p)=t_{pre}(5k,p)+O(k\log m)$.

Throughout that the edge-deletion sequence for $\cB$, the set $S$
is fixed. We will maintain the following invariant 1) $F=\msf(G)$,
2) $(G',F')=\contract(G,F)$, and 3) for each $e\in F$, we can find
a super edge $e'=(u',v')\in F'$ covering $e$ with respect to $S$
in time $O(\log m)$. 

Given an edge $e$ to be deleted, let $G_{1}=G-e$, let $F_{1}=F-e$
(so $F_{1}=F$ if $e\notin F$) and $(G'_{1},F'_{1})=\contract(G_{1},F_{1})$.
We note that we can obtain $(G'_{1},F'_{1})$ in time $O(\log m)$.
Indeed, by \ref{prop:contracted change} either $(G'_{1},F_{1}')=(G',F')$,
$(G'_{1},F_{1}')=(G'-e,F')$ or $(G'_{1},F_{1}')=(G'-e',F'-e')$ where
$e'\in F'$ is a super edge covering $e$ w.r.t. $S$. By the invariants
3, we can check which case of \ref{prop:contracted change} holds,
and compute $(G'_{1},F'_{1})$ in time $O(\log m)$. Then, we feed
the change from $G'$ to $G'_{1}$ to $\cA'$. This takes $t_{pre}(5k,p)$
because there is at most 1 edge update from $G'$ to $G'_{1}$ according
the \ref{prop:contracted change}.

Finally, there are two cases. For the first case, suppose that $\cA'$
returns a reconnecting edge $f$, i.e. $\msf(G'_{1})=\msf(G')\cup f-e'$.
We are in Case 3.b of \ref{prop:contracted change}. We update $F$
to be $F_{1}\cup f$ in $O(1)$ time to satisfy invariant 1 as $F_{1}\cup f=\msf(G_{1})$.
We update $F'$ to be $F'_{1}\cup f$ in time $O(1)$. As $F'_{1}\cup f=\msf(G'_{1})$
and $(G'_{1},\msf(G'_{1}))=\contract(G_{1},\msf(G_{1}))$ by \ref{prop:msf preserve},
so invariant 2 is satisfied. Observe that $f$ is the only new edge
in $F$ and $f\in F'$ is also the super edge covering $f$ w.r.t.
to $S$. So invariant 3 is easily maintained.

For the second case, $\cA$ does not return a reconnecting edge. We
simply update $F$ and $F'$ to $F_{1}$ and $F'_{1}$ in $O(1)$
time. By other cases of \ref{prop:contracted change}, all the three
invariants are maintained.
\end{proof}

\subsection{Reduction to Decremental Algorithm for Few Edges}

In this section, we show how to speed up the resulting algorithm from
\ref{lem:reduc to dec few edge} using \ref{lem:reduc contract}.
Recall that in the reduction in \ref{lem:reduc to dec few edge} reduces
to decremental $\msf$ algorithms which runs on graphs with few non-tree
edges. While the reduction in \ref{lem:reduc contract} can quickly
reduce further to decremental $\msf$ algorithms running on graphs
with few edges, \ref{lem:reduc contract} needs some additional data
structures to be prepared. So we will show how to augment the reduction
in \ref{lem:reduc to dec few edge} so that at any time the needed
additional data structures for \ref{lem:reduc contract} is prepared.
This gives the following lemma:
\begin{lem}
\label{lem:reduc to small dec}Suppose there is a decremental $\msf$
algorithm $\cA$ for any $m'$-edge graph with preprocessing time
$t_{pre}(m',p)$ and update time $t_{u}(m',p)$. Then, for any $m$,
$k$ where $5k\le m'$, and $B$, there is a fully dynamic $\msf$
algorithm $\cB$ for any $m$-edge graph with at most $k$ non-tree
edges such that $\cB$ can:
\begin{itemize}
\item preprocess the input graph in time $t'_{pre}(m,k,B,p)=t_{pre}(5k,p')+O(m\log^{2}m)$,
and
\item handle a batch of $B$ edge insertions or an edge deletion in time:
$t'_{u}(m,k,B,p)=O(\frac{B\log k}{k}\cdot t_{pre}(5k,p')+B\log^{2}m+\log k\cdot t_{u}(5k,p'))$, 
\end{itemize}
where $p'=O(p/\log k)$ and the time guarantee for each operation
holds with probability $1-p$.
\end{lem}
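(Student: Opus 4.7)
The plan is to compose the contraction-based reduction of \ref{lem:reduc contract} with the fully-dynamic-to-decremental reduction of \ref{lem:reduc to dec few edge}. Applying \ref{lem:reduc contract} on top of $\cA$ yields a decremental $\msf$ algorithm that, given the companion $\cC$-instance required by \ref{lem:reduc contract}, processes any $m$-edge graph with at most $k'$ non-tree edges at preprocessing cost $t_{pre}(5k',p)+O(k'\log m)$ and update cost $t_u(5k',p)+O(\log m)$. We then feed this composite algorithm into \ref{lem:reduc to dec few edge} in place of its underlying decremental algorithm. This composition is useful because every internal decremental instance spawned by \ref{lem:reduc to dec few edge} runs on a subgraph $G'_i=(V,F\cup N'_i)$ with only $k_i:=\min\{2^{i+1}B,k\}\le k$ non-tree edges by \ref{lem:bound non-tree level i}; via \ref{lem:reduc contract} each such instance is therefore reduced to running $\cA$ on a contracted multigraph of only $O(k)$ edges, which is exactly the speed-up needed.

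The technical subtlety is how to supply the $\cC$-instances required by \ref{lem:reduc contract}. We maintain a master $\cC$-instance of \ref{lem:contraction} in its first phase on $F=\msf(G)$, kept in sync with $F$ at $O(\log m)$ per change. Whenever \ref{lem:reduc to dec few edge} spawns a new decremental instance on $G'_i=(V,F\cup N'_i)$, observe that every edge of $N'_i$ is non-tree in $G$ (and hence in $G'_i$) at the snapshot moment, so $\msf(G'_i)=F$; the master $\cC$-instance therefore already tracks the correct forest, and invoking its second phase with $N=N'_i$ produces the needed contracted multigraph in only $O(|N'_i|\log m)=O(k_i\log m)$ time. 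This cost, together with the $O(k_i\log m)$ overhead already present in \ref{lem:reduc contract}, is absorbed by the $B\log^2 m$ slack of the final update bound, while the one-time $O(n\log m)$ cost of initializing the master $\cC$-instance on $\msf(G)$ is absorbed into the $O(m\log^2 m)$ preprocessing.

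Plugging the composite costs into the update bound of \ref{lem:reduc to dec few edge} gives
\[
O\Big(\sum_{i=0}^{\lceil\log k\rceil}\frac{t_{pre}(5k_i,p')+O(k_i\log m)}{2^i}+B\log m+\log k\cdot(t_u(5k,p')+O(\log m))\Big).
\]
Under the mild (and, in our intended application, always satisfied) assumption that $t_{pre}(\cdot,p)$ is at least linear in its first argument, a geometric-series estimate yields $\sum_i t_{pre}(5k_i,p')/2^i=O(\tfrac{B\log k}{k}\cdot t_{pre}(5k,p'))$ and $\sum_i O(k_i\log m)/2^i=O(B\log k\log m)=O(B\log^2 m)$, producing the claimed update-time bound. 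Correctness and the $1-p$ probability guarantee inherit directly from those of the two underlying reductions, with $p'=\Theta(p/\log k)$ as in \ref{lem:reduc to dec few edge}.

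The main obstacle is this $\cC$-bookkeeping: ensuring that each spawn, handoff (e.g.\ $(\cD_{i-1,3},\cD_{i-1,4})\gets(\cD_{i-1,1},\cD_{i-1,2})$), and destruction of a $\cD$-instance in \ref{lem:reduc to dec few edge} is mirrored by a corresponding second-phase invocation on the master $\cC$-instance, and that the per-update cost of maintaining $\cC$'s first phase on $F$ stays within the $B\log^2 m$ slack. Once this synchronization is in place, the time analysis above completes the proof.
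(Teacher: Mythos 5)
Your overall architecture is the same as the paper's: compose the contraction reduction of \ref{lem:reduc contract} with the fully-dynamic-to-decremental reduction of \ref{lem:reduc to dec few edge}, using \ref{lem:bound non-tree level i} to bound the non-tree edges of each spawned instance, and absorb the extra $O(k_i\log m)$ contraction costs into the $B\log^2 m$ slack. The time accounting and the observation that $\msf(G'_i)=F$ at the snapshot moment are both fine.

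However, there is a genuine gap in the mechanism you propose for supplying the contraction structures, and it is precisely the part you defer as ``the main obstacle.'' A single master $\cC$-instance cannot work. The algorithm of \ref{lem:contraction} is stateful: switching it to its second phase is not a one-shot query that returns a contracted multigraph and leaves the instance available; the instance becomes \emph{occupied} by the decremental algorithm $\cD'_i$ for that instance's entire lifetime, because \ref{lem:reduc contract} needs it at every subsequent deletion to answer ``which super edge covers $e$'' queries and to maintain the invariant $(G',F')=\contract_S(G,F)$ under edge deletions. Since \ref{lem:reduc to dec few edge} keeps up to $\Theta(\log k)$ decremental instances alive concurrently (the $\cD_{i,j}$ over all levels), you need that many simultaneously occupied $\cC$-instances, \emph{plus} a first-phase instance still tracking the evolving $F$ so that the next spawn can be served. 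With one master, after the first spawn you have nothing left in first phase. The paper resolves this by keeping six instances $\cC_{i,1},\dots,\cC_{i,6}$ per level that cycle through ready/occupied/free states: an instance freed at level $i$ has drifted from the current $F$ by only $O(B\cdot 2^i)$ edges and is caught up over the next $2^i$ steps at $O(B\log m)$ time per step, so some instance at level $i$ is always ready when $\cD'_i$ is spawned. (An alternative fix is to implement $\cC$ in a fully persistent structure and branch a copy per spawn, at an extra $O(\log m)$ factor.) Without one of these mechanisms your synchronization plan does not close, so the proof as written is incomplete.
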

To prove \ref{lem:reduc to small dec}, we show how to augment the
reduction in \ref{lem:reduc to dec few edge} as follows. For each
$0\le i\le L+1$ and $1\le j\le6$, we let $\cC_{i,j}$ be the instance
of the algorithm from \ref{lem:contraction}. We say that $\cC_{i,j}$
is \emph{ready}, if it is in its first phase and $\cC_{i,j}$ is running
on $F=\msf(G)$ undergoing edges updates. When we want to start the
initialization of $\cD'_{i}$, we claim that, for some $j$, $\cC_{i,j}$
is ready. Then, the ready $\cC_{i,j}$ will be shift to its second
phase by plugging into the reduction \ref{lem:reduc to small dec}.
More precisely, $\cC_{i,j}$ will be given a set $N'_{i}$ where $G'_{i}=(V,F\cup N'_{i})$
and $\cC_{i,j}$ needs to return the contracted graph and forest $\contract(G'_{i},F)$.
We say that $\cC_{i,j}$ is \emph{occupied by $\cD'_{i}$}. For any
$\cD',\cD''$, suppose that $\cC_{i,j}$ was occupied by $\cD'$ and
we set $\cD''\gets\cD'$, then we say $\cC_{i,j}$ is occupied by
$\cD''$. If $\cC_{i,j}$ was occupied by $\cD'$ and we set $\cD'\gets\cD''$,
then we say $\cC_{i,j}$ is not occupied by anyone and we say \emph{$\cC_{i,j}$
}is\emph{ free}. But note that $\cC_{i,j}$ can be free but not ready.
Now, we show how we make sure that for some $j$, $\cC_{i,j}$ is
ready when we want to start the initialization of $\cD'_{i}$.

\paragraph{Preprocessing.}

In the beginning all $\cC_{i,j}$ are free. After computing $F=\msf(G)$,
we inserting edges in $F$ into all $\cC_{i,j}$ so that all $\cC_{i,j}$
are ready. In total, this takes additional $O(L\times m\log m)=O(m\log^{2}m)$
time to the preprocessing algorithm in \ref{lem:reduc to dec few edge}.

\paragraph{Updates.}

Fix any $i$ and $j$. Suppose that at time $\tau$, $\cC_{i,j}$
was ready and then is occupied by $\cD'_{i}$. Observe the following.
At time $\tau+2^{i}$, $\cC_{i,j}$ will be occupied by $\cD_{i,j'}$
for some $j'\in\{1,2\}$ as we set $\cD_{i,j}\gets\cD'_{i}$. Then,
at time $\tau+3\cdot2^{i}$, $\cC_{i,j}$ will be occupied by $\cD_{i,j''}$
for some $j''\in\{3,4\}$ as we set $\cD_{i,j''}\gets\cD_{i,j'}$.
Then, at time $\tau+5\cdot2^{i}$, $\cC_{i,j}$ will be free. Then,
during the next $2^{i}$ steps, we will make sure that $\cC_{i,j}$
is ready by spending time $O(B\log m)$ per step. Let $F_{\tau}$
be the $\msf$ that $\cC_{i,j}$ was running on at time $\tau$. When
$\cC_{i,j}$ become occupied, we will maintain the difference of edges
in $F_{\tau}$ and the current $\msf$ $F$. At time $\tau+5\cdot2^{i}$,
the difference between $F_{\tau}$ and $F$ is at most $O(B\cdot2^{i})$
edges. So we can update $F_{\tau}$ to become $F$ in $2^{i}$ steps,
using $O(\frac{B2^{i}\log m}{2^{i}})=O(B\log m)$ time per step. Summing
over all $i$ and $j$, this takes additional $O(L\times B\log m)=O(B\log^{2}m)$
time per step to the update algorithm in \ref{lem:reduc to dec few edge}.

Therefore, for any fixed $i$, only every $2^{i}$ steps one of $\cC_{i,j}$
can changed from being ready to being occupied. But in the next $6\cdot2^{i}$
steps such $\cC_{i,j}$ will become ready. As we have $6$ instances
$\cC_{i,1},\dots,\cC_{i,6}$. At any time, when we want to start the
initialization of $\cD'_{i}$, $\cC_{i,j}$ is ready for some $j$.

\paragraph{Proof of \ref{lem:reduc to small dec}.}

Recall that $\cA$ is the algorithm from the assumption of \ref{lem:reduc to small dec}
with preprocessing time $t_{pre}(m',p)$ and update time $t_{u}(m',p)$,
and $\cB$ is the resulting algorithm of \ref{lem:reduc to small dec}
with preprocessing time $t'_{pre}(m,k,B,p)$ and update time $t'_{u}(m,k,B,p)$. 

Let $\cA'$ be a decremental $\msf$ algorithm runs on graphs with
$m$-edge and $k$-non-tree-edge with parameter $p$. Denote the preprocessing
time of $\cA'$ by $t{}_{pre}^{\cA'}(m,k,p)$ and the update time
of $\cA'$ by $t_{u}^{\cA'}(m,k,p)$. To prove \ref{lem:reduc to small dec},
we will use \ref{lem:reduc to dec few edge} to first reduce $\cB$
to $\cA'$. Then, with additional preprocessing time of $O(m\log^{2}m)$
and update time of $O(B\log^{2}m)$, the argument above shows that
we can further reduce $\cA'$ to $\cA$ using \ref{lem:reduc contract}.
That is, we have $t{}_{pre}^{\cA'}(m,k,p)=t_{pre}(5k,p)+O(k\log m)$
and $t_{u}^{\cA'}(m,k,p)=t_{u}(5k,p)+O(\log m)$. Hence, the preprocessing
of $\cB$ is 

\begin{align*}
t'_{pre}(m,k,B,p)+O(m\log^{2}m) & =t_{pre}^{\cA'}(m,k,p')+O(m\log m)+O(m\log^{2}m) & \mbox{by \ref{lem:reduc to dec few edge}}\\
 & =t_{pre}(5k,p')+O(m\log^{2}m),
\end{align*}
and the update time of $\cB$ is 

\begin{align*}
 & t'_{u}(m,k,B,p)+O(B\log^{2}m)\\
 & =O(\sum_{i=0}^{\left\lceil \log k\right\rceil }t_{pre}^{\cA'}(m,\min\{2^{i+1}B,k\},p')/2^{i}+B\log m+\log k\cdot t_{u}^{\cA'}(m,k,p'))+O(B\log^{2}m) & \mbox{by \ref{lem:reduc to dec few edge}}\\
 & =O(\sum_{i=0}^{\left\lceil \log k\right\rceil }t_{pre}(5\cdot\min\{2^{i+1}B,k\},p')/2^{i}+B\log m+\log k\cdot t_{u}(5k,p'))+O(B\log^{2}m)\\
 & =O(\frac{B\log k}{k}\cdot t{}_{pre}(5k,p')+B\log^{2}m+\log k\cdot t{}_{u}(5k,p')).
\end{align*}
The last equality follows because we claim that $t_{pre}(5\cdot\min\{2^{i+1}B,k\},p')/2^{i}=O(\frac{B}{k})\cdot t{}_{pre}(5k,p')$.
To see this, there are two cases. If $2^{i+1}B\ge k$, then $\frac{t_{pre}(5\cdot\min\{2^{i+1}B,k\},p')}{2^{i-1}}=\frac{t_{pre}(5k,p')}{2^{i-1}}=O(\frac{B}{k})\cdot t_{pre}(5k,p')$.
If $2^{i+1}B<k$, then $\frac{t_{pre}(5\cdot\min\{2^{i+1}B,k\},p')}{2^{i-1}}=\frac{t_{pre}(5\cdot2^{i+1}B,p')}{2^{i-1}}\le\frac{t_{pre}(5\cdot2^{i+1}B\cdot\frac{k}{2^{i+1}B},p')}{2^{i-1}\cdot\frac{k}{2^{i+1}B}}=O(\frac{B}{k})\cdot t_{pre}(5k,p')$
where the inequality is because $t_{pre}(k,p')$ is at least linear
in $k$. This concludes the proof.

\subsection{Reduction to Restricted Decremental Algorithm for Few Edges}

The final step is apply the following standard reduction: 
\begin{prop}
Suppose there is an algorithm $\cA$ as in \ref{thm:reduc restricted dec}.
Then, there is a decremental $\msf$ algorithm $\cA'$ for any $m'$-edge
graph with preprocessing time $O(m')+t_{pre}(3m',p/3)$ and update
time $O(m'/T(m'))+3t_{u}(3m',p/3)$ with probability $1-p$.\label{lem:reduc standard}
\end{prop}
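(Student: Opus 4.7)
The plan is to chain two routine reductions: a node-splitting degree reduction to enforce maximum degree $3$, and a rotating-instance scheme to bypass the $T(m')$-length restriction on the deletion sequence. Neither ingredient is new, so the job is just to check that the parameters line up with the claimed bounds.

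First I would build a max-degree-$3$ surrogate graph $H$ of $G$. For each vertex $v$ of $G$ of degree $d_v \ge 3$, introduce a cycle $C_v$ on $d_v$ fresh vertices and reroute each of the $d_v$ edges of $G$ incident to $v$ to a distinct vertex of $C_v$ (with obvious degenerate replacements when $d_v \le 2$); give every cycle edge a weight strictly smaller than every original weight. The resulting graph has $m' + \sum_v d_v = 3m'$ edges and maximum degree $3$, and is constructible in $O(m')$ time together with a two-way pointer between each edge of $G$ and the edge of $H$ that represents it. Since cycle edges are lighter than every original edge and are never deleted, they always lie in $\msf(H)$; hence $\msf(G)$ equals the set of non-cycle edges of $\msf(H)$, and any replacement edge reported by a decremental $\msf$ algorithm on $H$ after a $G$-edge deletion is necessarily an original edge and directly gives the replacement in $G$.

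Next I would lift the sequence-length restriction by running at most three staggered copies of $\cA$ on $H$ in parallel, each initialized with failure parameter $p/3$ so that a union bound over the concurrently live instances yields failure probability at most $p$ per step. Partition the deletion timeline into phases of length $\Theta(T(3m'))$. Within a phase, an \emph{active} copy $\cA_{\mathrm{cur}}$ serves queries, while a \emph{warming-up} copy $\cA_{\mathrm{new}}$ is built on the snapshot of $H$ taken at the start of the phase: its $t_{pre}(3m',p/3)$-time preprocessing is spread evenly across the phase, and in the second half the backlog of real-time deletions is replayed into $\cA_{\mathrm{new}}$ at double speed, contributing at most $2\,t_u(3m',p/3)$ per step, which together with the update for $\cA_{\mathrm{cur}}$ gives at most $3\,t_u(3m',p/3)$ per step. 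Taking the snapshot of $H$ at the start of each phase and maintaining the edge correspondence costs $O(m')$ work per phase, amortizing worst-case to $O(m'/T(3m')) = O(m'/T(m'))$ per step when spread evenly; the preprocessing share is absorbed into this same $O(m'/T(m'))$ budget in the intended parameter regime where $t_{pre}$ is near-linear. When $\cA_{\mathrm{cur}}$ is about to exhaust its $T(3m')$-update budget, $\cA_{\mathrm{new}}$ has caught up and takes over, with a brief three-way overlap during the hand-off.

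Putting it together, the one-time cost is $O(m')$ for the degree reduction plus $t_{pre}(3m',p/3)$ for the initial $\cA$ instance, and each deletion is served in $O(m'/T(m'))$ time for snapshot and bookkeeping plus at most $3\,t_u(3m',p/3)$ time for calls to the concurrent $\cA$-instances, matching the claimed bounds. The one point requiring care is the pipelining: $\cA_{\mathrm{new}}$ must finish both its preprocessing and its double-speed catch-up strictly before $\cA_{\mathrm{cur}}$ exhausts its $T(3m')$-update budget, which is guaranteed by choosing the phase length as a suitable constant fraction of $T(3m')$ and scheduling both tasks evenly across the phase.
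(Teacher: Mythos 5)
Your proposal is correct and is exactly the standard reduction the paper invokes here without proof: vertex-splitting each high-degree node into a cycle of artificially light edges to obtain a $3m'$-edge, max-degree-$3$ surrogate whose $\msf$ restricted to original edges is $\msf(G)$, combined with phased rebuilding in the background and double-speed catch-up to lift the $T(3m')$-deletion restriction; the constants ($3m'$, $p/3$, the factor $3$ on $t_u$) all line up with the statement. The one caveat is the point you flag yourself: spreading the fresh instance's initialization over a phase contributes $t_{pre}(3m',p/3)/\Theta(T(3m'))$ per step, which is covered by the stated $O(m'/T(m'))$ term only when $t_{pre}$ is linear. This is an imprecision in the proposition as stated rather than a flaw in your construction, and it is harmless where the proposition is used: with the paper's actual $t_{pre}(k,p)=O(k^{1+O(\sqrt{\log\log k/\log k})}\log\frac{1}{p})$ and $T(k)=k^{1-\Theta(\log\log\log k/\log\log k)}$, the extra term is $k^{O(\log\log\log k/\log\log k)}\log\frac{1}{p}$ and is absorbed by the bounds already present in \ref{thm:reduc restricted dec} and \ref{lem:final update time}.
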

From \ref{lem:reduc to small dec}, we immediately obtain \ref{thm:reduc restricted dec}.
\begin{proof}
[Proof of \ref{thm:reduc restricted dec}]Given the algorithm $\cA$,
by \ref{lem:reduc standard} there is an algorithm $\cA'$ for any
$m'$-edge graph with preprocessing time $t_{pre}^{\cA'}(m',p)=O(m')+t_{pre}(3m',p/3)$
and update time $t_{u}^{\cA'}(m',p)=O(m'/T(m'))+3t_{u}(3m',p/3)$
with probability $1-p$. Plugging $\cA'$ to \ref{lem:reduc to small dec},
we obtain the resulting algorithm $\cB$ of \ref{thm:reduc restricted dec}
that can run on any $m$-edge graph with at most $k$ non-tree edges.
Let $p'=O(p/\log k)$ be from \ref{thm:reduc restricted dec} and
$p''=p'/3$. $\cB$ has the preprocessing time 

\begin{eqnarray*}
t'_{pre}(m,k,B,p) & = & t_{pre}^{\cA'}(5k,p')+O(m\log^{2}m)\\
 & = & t_{pre}(15k,p'')+O(m\log^{2}m),
\end{eqnarray*}
and $\cB$ has the update time
\begin{eqnarray*}
t'_{u}(m,k,B,p) & = & O(\frac{B\log k}{k}\cdot t_{pre}^{\cA'}(5k,p')+B\log^{2}m+\log k\cdot t_{u}^{\cA'}(5k,p'))\\
 & = & O(\frac{B\log k}{k}\cdot(k+t{}_{pre}(15k,p''))+B\log^{2}m+\log k\cdot(\frac{k}{T(k)}+3t{}_{u}(15k,p'')))\\
 & = & O(\frac{B\log k}{k}\cdot t{}_{pre}(15k,p'')+B\log^{2}m+\frac{k\log k}{T(k)}+\log k\cdot t{}_{u}(15k,p'')).
\end{eqnarray*}
\end{proof}

	\global\long\def\msfbuild{\textsf{Build}}

\section{$\protect\msf$ Decomposition\label{sec:MST Decomposition}}

In this section, we show an improved algorithm for computing a hierarchical
decomposition of a graph called \emph{$\msf$ decomposition}. This
decomposition is introduced by Wulff-Nilsen \cite[Section 3.1]{Wulff-Nilsen16a}
and it is the main subroutine in the preprocessing algorithm of his
dynamic $\msf$ algorithm and also of ours. Our improved algorithm
has a better trade-off between the running time and the ``quality''
of the decomposition as will be made precise later. The improved version
is obtained simply by using the flow-based \emph{expansion decomposition}
algorithm\footnote{The expansion decomposition algorithm was used as a main preprocessing
algorithm for their dynamic $\sf$ algorithm.} by Nanongkai and Saranurak \cite{NanongkaiS16} as the main subroutine,
instead of using diffusion/spectral-based algorithms as in \cite{Wulff-Nilsen16a}.
Moreover, as the expansion decomposition algorithm is defined based
on \emph{expansion }(which is defined in \ref{sub:decomp given partition})
and not conductance, this is easier to work with and it simplifies
some steps of the algorithm in \ref{sub:decomp given partition}.
Before stating the main result in \ref{thm:MSF decomposition}, we
need the following definition:
\begin{defn}
[Hierarchical Decomposition]For any graph $G=(V,E)$, a \emph{hierarchical
decomposition} $\cH$ of $G$ is a rooted tree. Each node $C\in\cH$
corresponds to some subgraph of $G$ which is called a \emph{cluster}.
There are two conditions that $\cH$ needs to satisfy: 1) the root
cluster of $\cH$ corresponds to the graph $G$ itself, 2) for each
non-leaf cluster $C\in\cH$, let $\{C'_{i}\}_{i}$ be the children
of $C$. Then vertices of $\{C'_{i}\}_{i}$ form a partition of vertices
in $C$, i.e. $V(C)=\Disjunion_{i}V(C'_{i})$. The root cluster is
a \emph{level-$1$} cluster. A child of \emph{level-$i$ cluster}
is a \emph{level-$(i+1)$ cluster}. The \emph{depth} of $\cH$ is
the depth of the tree. Let $E^{C}=E(C)-\Disjunion_{i}E(C'_{i})$ be
the set of edges in $C$ which are not edges in any of $C'_{i}$'s.
We call an edge $e\in E^{C}$ a \emph{$C$-own }edge, and an edge
$f\in E(C)-E^{C}=\Disjunion_{i}E(C'_{i})$ a \emph{$C$-child} edge.
\label{def:hie decomp}
\end{defn}
We note that, for any cluster $C$ with a child $C'$, it is possible
that $E(C')\subsetneq E(C[V(C')])$. That is, there might be some
edge $e=(u,v)\in E(C)$ where $u,v\in V(C')$ but $e\notin E(C')$.
In other words, there can be a $C$-own edge $(u,v)$ where both $u,v\in V(C')$.
Observe the following:
\begin{fact}
Let $\cH$ be a hierarchical decomposition of a graph $G=(V,E)$.
Then $\Disjunion_{C\in\cH}E^{C}=E$.\label{fact:C-edge partition} 
\end{fact}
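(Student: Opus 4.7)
The plan is to verify the two halves of the claim separately: first that the sets $E^C$ for $C \in \cH$ are pairwise disjoint, and then that their union is exactly $E$. The key structural facts I will rely on are (i) that $E(C') \subseteq E(C)$ whenever $C'$ is a child of $C$ (this is implicit in the remark preceding the fact, since $E(C') \subseteq E(C[V(C')]) \subseteq E(C)$), hence by induction $E(D) \subseteq E(C)$ for every descendant $D$ of $C$, and (ii) that the vertex sets of the children of any cluster form a partition of the parent's vertex set.

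For disjointness, I will take $e \in E^{C_1} \cap E^{C_2}$ with $C_1 \neq C_2$ and derive a contradiction in two cases. If one of the clusters is a proper descendant of the other, say $C_2$ is a descendant of $C_1$, let $C'$ be the child of $C_1$ on the path to $C_2$; then $e \in E(C_2) \subseteq E(C')$ by (i), contradicting $e \in E^{C_1}$, since $E^{C_1}$ excludes the edges of every child of $C_1$. If neither is a descendant of the other, let $C$ be their lowest common ancestor and let $C_1', C_2'$ be the distinct children of $C$ on the paths to $C_1, C_2$; then both endpoints of $e$ lie in $V(C_1') \cap V(C_2')$, which is empty by (ii), contradiction.

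For the equality $\bigcup_{C \in \cH} E^C = E$, the inclusion $\subseteq$ is immediate from (i) applied to the root: every $E^C \subseteq E(C) \subseteq E(G) = E$. For $\supseteq$, fix $e \in E$. Since the root cluster satisfies $e \in E(G) = E$ and $\cH$ is finite, there exists a deepest cluster $C \in \cH$ with $e \in E(C)$. By the choice of $C$, no child $C'_i$ of $C$ contains $e$ in its edge set, so $e \notin \Disjunion_i E(C'_i)$ and hence $e \in E^C$. Combining the two parts gives the disjoint-union identity. I do not anticipate any real obstacle here; the argument is essentially a tree-induction bookkeeping exercise, and the only subtlety is making sure to invoke the ``vertex-partition implies edge-separation across incomparable clusters'' step for the disjointness claim.
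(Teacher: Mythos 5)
Your argument is correct. The paper states \ref{fact:C-edge partition} without proof (it is introduced with ``Observe the following''), so there is no authorial argument to compare against; your two-part verification (pairwise disjointness via the descendant/LCA case analysis, plus coverage via a deepest cluster containing a given edge) is a valid and complete filling-in. The one hypothesis you lean on, $E(C')\subseteq E(C)$ for a child $C'$ of $C$, is indeed forced by the paper's own notation --- the definition writes $E(C)-E^{C}=\Disjunion_{i}E(C'_{i})$, which already asserts that the children's edge sets sit inside $E(C)$ --- so your reading of the remark is sound. A marginally slicker route, which avoids the case split entirely, is to prove by induction on the subtree rooted at $C$ that $\Disjunion_{D\text{ in the subtree of }C}E^{D}=E(C)$: for a leaf this is the definition, and for an internal $C$ one has $E^{C}\disjunion\Disjunion_{i}E(C'_{i})=E(C)$ by definition of $E^{C}$, after which the inductive hypothesis substitutes $E(C'_{i})$ by the disjoint union over the subtree of $C'_{i}$; applying this at the root gives the fact in one stroke. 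Either way, nothing is missing from your proof.
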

Throughout this section, we assume that, in an input graph with $m$-edge,
the edges have distinct weights ranging from number $1$ to $m$.
Throughout this section, let $\gamma=n^{O(\sqrt{\log\log n/\log n})}=n^{o(1)}$
where $n$ is the number of nodes in a graph. The main result of this
section is the below theorem:
\begin{thm}
\label{thm:MSF decomposition}There is a randomized algorithm called
\emph{$\msf$ decomposition}, $\msfdecomp$, which takes the following
as input:
\begin{itemize}
\item a connected graph $G=(V,E,w)$ with $n$ nodes, $m$ edges and max
degree 3, where $w:E\rightarrow\{1,\dots,m\}$ is the weight function
of edges in $G$,
\item a failure probability parameter $p\in(0,1]$, a conductance parameter
$\alpha\in[0,1]$, and parameters $d\ge3$, $s_{low}$ and $s_{high}$
where $s_{high}\ge s_{low}$.
\end{itemize}

In time $\tilde{O}(nd\gamma\log\frac{1}{p})$ where $\gamma=n^{O(\sqrt{\log\log n/\log n})}$,
the algorithm returns (i) a graph $G'=(V,E,w')$ with a new weight
function $w':E\rightarrow\mathbb{R}$ and (ii) a hierarchical decomposition
$\cH$ of the re-weighted graph $G'$ with following properties:
\begin{enumerate}
\item For all $e\in E$, $w'(e)\ge w(e)$.
\item $|\left\{ e\in E\mid w(e)\neq w'(e)\right\} |\le\alpha d\gamma n$.\footnote{We can actually prove that $|\left\{ e\in E\mid w(e)\neq w'(e)\right\} |\le\tilde{O}(\alpha\gamma n)$
but this does not improve the running time significantly.}
\item For any cluster $C\in\cH$ and any set of edges $D$, $\msf(C-D)=\Disjunion_{C':\textnormal{child of }C}\msf(C'-D)\disjunion(\msf(C-D)\cap(E^{C}-D))$.
\item $\cH$ has depth at most $d$.
\item A cluster $C$ is a leaf cluster iff $E(C)\le s_{high}$. 
\item Each leaf cluster contains at least $s_{low}/3$ nodes.
\item For level $i$, $|\Disjunion_{C:\textnormal{non-leaf, level-}i}E^{C}|\le n/(d-2)+\alpha\gamma n$.
\item With probability $1-p$, all non-root clusters $C\in\cH$ are such
that $\phi(C)=\Omega(\alpha/s_{low})$.
\end{enumerate}
\end{thm}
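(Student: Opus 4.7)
The plan is to build $\cH$ top-down by recursive use of the flow-based expansion decomposition algorithm of~\cite{NanongkaiS16}. Starting from the root cluster $G$, for any cluster $C$ with $|E(C)|>s_{high}$ we invoke expansion decomposition on $C$ with conductance target $\Theta(\alpha/s_{low})$ and cluster-size target roughly $[s_{low}/3,\,s_{high}]$. This returns a partition $V(C)=V_{1}\Disjunion\cdots\Disjunion V_{k}$ such that each induced piece $C_{j}:=C[V_{j}]$ is an $\Omega(\alpha/s_{low})$-expander; edges between different $V_{j}$'s become $E^{C}$, and we recurse on each $C_{j}$ at the next level. The size and boundary guarantees of~\cite{NanongkaiS16}, together with our choice of $s_{low}$ and $s_{high}$, yield properties 4--7: cluster sizes shrink enough each level so the recursion depth is at most $d$, leaves are exactly the clusters with $\le s_{high}$ edges (each having $\ge s_{low}/3$ nodes), and the per-level total boundary is at most $n/(d-2)+\alpha\gamma n$. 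Property 8 follows from the expansion guarantee combined with a union bound over the $O(n)$ recursive calls, each executed with failure probability $p'=p/\poly(n)$.

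The running time is bounded by summing the almost-linear-in-volume cost $\tilde{O}(|V(C)|\gamma\log(1/p'))$ of each decomposition call; since clusters at any fixed level partition $V$, each level costs $\tilde{O}(n\gamma\log(1/p))$, and summing over the $d$ levels yields $\tilde{O}(nd\gamma\log(1/p))$, matching the claimed bound.

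The delicate part is to enforce property 3 while keeping the weight changes within the $\alpha d\gamma n$ budget of property 2. After the tree is built, perform a bottom-up post-order pass: for each non-leaf $C$ with children $\{C_{j}\}$, inspect every $e=(u,v)\in E^{C}$ and, using a top tree (\ref{thm:top tree}) over the current forest $\Disjunion_{j}\msf(C_{j})$ augmented by already-processed $E^{C}$ edges, test whether $e$ is heavier than the maximum edge on the $u$--$v$ path; if not, raise $w'(e)$ to just above that maximum, preserving distinctness by deterministic tiebreaks. By the MST cycle property this forces every $E^{C}$-edge to be heaviest on every cycle it closes in $C$, which is exactly the condition for $\msf(C-D)\cap E(C_{j})=\msf(C_{j}-D)$ for every $D\subseteq E$; combined with \ref{fact:sparsify} this yields property 3, while property 1 is immediate since weights only increase. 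The main obstacle is to bound the number of re-weighted edges: the plan is to argue that only the ``slack'' portion of each level's boundary (of size at most $\alpha\gamma n$, arising from the approximation slack of the expansion decomposition) can ever require re-weighting, because the structural portion is produced precisely by cutting along relatively heavy directions and is already heaviest on its fundamental cycles by construction; summing over the $d$ levels gives the $\le\alpha d\gamma n$ bound of property 2.
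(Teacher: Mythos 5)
Your top-down recursive use of the expansion decomposition is the right skeleton, but several properties do not follow from what you have written, and the mechanisms needed for them are absent from your plan. First, the depth bound (property 4): the expansion decomposition gives no upper bound on part sizes --- if a cluster with more than $s_{high}$ edges is already an expander, the decomposition may return it as a single part and your recursion makes no progress, so ``cluster sizes shrink enough each level'' is unjustified. The paper forces termination by a weight-bucketing device: the edges outside a precomputed $\msf$ skeleton are split into $d-2$ weight classes $E_1,\dots,E_{d-2}$ (heaviest first), and at level $i$ the edges of $E_i$ are withheld from the children and promoted to $C$-own edges; after $d-2$ levels the children contain only skeleton edges, hence are forests with at most $s_{low}$ edges, and one more level reaches the leaves. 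This bucketing is also exactly what produces the $n/(d-2)$ term in property 7. Relatedly, property 6 requires running the expansion decomposition so that it \emph{respects} a partition of $\msf(G)$ into connected pieces of size at least $s_{low}/3$ (obtained via Frederickson's tree-partitioning and enforced by contracting those pieces before decomposing); without this the parts returned can be arbitrarily small.

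Second, and most seriously, your enforcement of property 3 fails. Raising $w'(e)$ just above the maximum weight on the \emph{current} $u$--$v$ path in $\Disjunion_{j}\msf(C_{j})$ certifies only that $e\notin\msf(C)$ for the present graph; it does not survive deletions. After removing a set $D$, the forest $\msf(C_{j}-D)$ may acquire a replacement edge $g$ heavier than $w'(e)$, and $g$ can then be the heaviest edge on a cycle of $C-D$ that crosses the partition through $e$, so $g\in\msf(C_{j}-D)\setminus\msf(C-D)$ and the identity in property 3 breaks. What is needed is the stronger invariant that every $C$-own edge is heavier than every child edge outside the always-in-$\msf$ skeleton; the paper achieves this by lifting each level-$i$ own edge to the fixed threshold $m-i\tfrac{m}{d-2}+0.5$, which by construction exceeds the weight of every edge in $E_{\ge i+1}$, so a single Kruskal pass (seeded with the skeleton) yields property 3 for every $D$. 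Finally, your argument for property 2 --- that only the ``slack'' portion of the boundary needs reweighting because the structural portion is ``cut along relatively heavy directions'' --- has no basis: the expansion decomposition is weight-oblivious and makes no promise about the weights of the boundary edges it cuts. (Quantitatively this last point is harmless, since reweighting \emph{all} boundary edges costs at most $\alpha\gamma n$ per level and hence $\alpha d\gamma n$ in total, but the justification as written is not a proof.)
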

We call the lower bound of conductance for all non-root clusters is
the\emph{ conductance guarantee} of the hierarchical decomposition
$\cH$, which is $\Omega(\alpha/s_{low})$ in our algorithm. Compared
with the $\msf$ decomposition algorithm in \cite[Section 3.1]{Wulff-Nilsen16a},
our algorithm runs significantly faster and has a better trade-off
guarantee between conductance of the cluster and the number of edges
re-weighted. In particular, the running time of our algorithm does
not depends on the conductance parameter~$\alpha$. 

Now, we give some intuition why this decomposition can be useful in
our application. Given an input $n$-node graph $G$, we set $\alpha=1/\gamma^{3}$,
$d=\gamma$, $s_{low}=\gamma$, and $s_{high}=n/\gamma$. The algorithm
increases the weight of only $(1/\gamma)$-fraction of edges resulting
in the re-weighted graph $G'$, and then it outputs the hierarchy decomposition
$\cH$ of $G'$. Property 3 of $\cH$ is crucial and it implies that
$\msf(G')=\Disjunion_{C\in\cH}(\msf(C)\cap E^{C})$, and this holds
even after deleting any set of edges. This suggests that, to find
$\msf(G)$, we just need separately find $\msf(C)\cap E^{C}$, i.e.,
the $C$-own edges that are in $\msf(C)$, for every cluster $C\in\cH$.
That is, the task of maintaining the $\msf$ is also ``decomposed''
according the decomposition. Other properties are about bounding
the size of some sets of edges and the conductance of clusters. These
properties will allow our dynamic $\msf$ algorithm to have fast update
time. 

The rest of this section is for proving \ref{thm:MSF decomposition}.

\subsection{Expansion Decomposition that Respects a Given Partition\label{sub:decomp given partition}}

The \emph{expansion decomposition} algorithm \cite{NanongkaiS16}
is an algorithm that, roughly, given a graph $G=(V,E)$, it outputs
a partition $\cQ=\{V_{1},\dots,V_{k}\}$ of $V$ such that, for each
$i$, $G[V_{i}]$ has no sparse cuts and there are not many edges
crossing different parts of $\cQ$. The goal of this section is to
extend the algorithm to ensure that each part $V_{i}$ is not too
small. See \ref{thm:expdecomp respect} for the precise statement.
This requirement is needed for the construction of the $\msf$ decomposition,
as shown in \cite{Wulff-Nilsen16a}, because each leaf cluster must
not be too small (Property 6 in \ref{thm:MSF decomposition}). The
algorithm in this subsection speeds up and simplifies the algorithm
in \cite[Section 6]{Wulff-Nilsen16a} which does the same task. Before
stating \ref{thm:expdecomp respect}, we need the following definition.
\begin{defn}
Let $\cP$ be a partition of set $V$. We say that a set $S\subset V$
\emph{respects} $\cP$ if for each set $U\in\cP$, either $U\subseteq S$
or $U\cap S=\emptyset$. Let $\cQ$ be another partition of $V$.
We say that $\cQ$ respects $\cP$ if, for each set $S\in\cQ$, $S$
respects $\cP$.
\end{defn}
We prove the following extended expansion decomposition algorithm:
\begin{lem}
\label{thm:expdecomp respect}There is a randomized algorithm $\cA$
that takes as inputs a connected graph $G=(V,E)$ with $n$-node
and max degree 3, a partition $\cP$ of $V$ where, for each set of
nodes $U\in\cP$, $G[U]$ is connected and $c_{0}s\le|U|\le s$ for
some constant $c_{0}$, a conductance parameter $\alpha\in[0,1]$,
and a failure probability parameter $p$. Then, in time $O(n\gamma\log\frac{1}{p})$,
$\cA$ outputs a partition $\cQ=\{V_{1},\dots,V_{k}\}$ of $V$ with
the following properties:
\begin{enumerate}
\item $\cQ$ respects $\cP$.
\item $| \{(u,v)\in E\mid u\in V_{i}$ and $v\in V_{j}$ where $i\neq j\} | \le\alpha\gamma n$.
\item For all $V_{i}\in\cQ$, $G[V_{i}]$ is connected. With probability
$1-p$, for all $V_{i}\in\cQ$, $\phi(G[V_{i}])=\Omega(\alpha/s)$. 
\end{enumerate}
\end{lem}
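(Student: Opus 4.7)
The plan is to reduce to the flow-based expansion decomposition of \cite{NanongkaiS16} by contraction; this both speeds up and simplifies the diffusion-based construction in \cite[Section 6]{Wulff-Nilsen16a}. First I would form the multigraph $H$ by contracting every $U\in\cP$ into one super-node, keeping parallel inter-$\cP$ edges and discarding self-loops; since $G$ has max degree $3$ and $|U|\geq c_0 s$, we get $|V(H)|=O(n/s)$ and $|E(H)|\leq n$, at cost $O(n)$. Next I would call the expansion decomposition of \cite{NanongkaiS16} on $H$ with expansion parameter $\Theta(\alpha)$ and failure probability $p$; this runs in time $O(|V(H)|\gamma\log(1/p))=O(n\gamma\log(1/p)/s)$ and returns, with probability $1-p$, a partition $\cQ_H=\{V_1^H,\ldots,V_k^H\}$ of $V(H)$ such that each $H[V_i^H]$ is connected with expansion $\Omega(\alpha)$, and $\cQ_H$ has $O(\alpha\gamma|V(H)|)$ crossing edges. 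Uncontracting via $V_i=\bigcup_{v_j\in V_i^H}U_j$ yields the final partition $\cQ$ in total time $O(n\gamma\log(1/p))$.

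Properties 1 and 2 are immediate from the reduction: $\cQ$ is a coarsening of $\cP$ by construction, and an edge of $G$ crosses $\cQ$ iff its image crosses $\cQ_H$ in $H$, giving at most $O(\alpha\gamma|V(H)|)\le\alpha\gamma n$ crossing edges after absorbing constants. Connectedness of each $G[V_i]$ (the first half of Property 3) follows from the connectedness of $H[V_i^H]$ together with the hypothesis that every $G[U]$ is connected: any two vertices of $V_i$ can be linked by a super-node path in $H[V_i^H]$ whose consecutive hops are witnessed by real inter-$\cP$ edges of $G$ and whose interiors are bridged inside the connected $G[U]$'s.

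The main obstacle is establishing the conductance bound $\phi(G[V_i])=\Omega(\alpha/s)$ from the expansion guarantee on $H[V_i^H]$. Since $G$ has max degree $3$, conductance and expansion in $G[V_i]$ agree up to a factor of $3$, so I would lower-bound $\psi(G[V_i])$ instead. For a cut $(S,V_i\setminus S)$ with $|S|\le|V_i\setminus S|$, I would split $V_i^H$ into super-nodes fully inside $S$ ($X$), fully outside ($Y$), and split ($Z$), round to $\tilde S^H:=\{v_j:|S\cap U_j|\ge|U_j|/2\}$, and decompose $\delta_G(S)=A+B$ into inter- and intra-super-node cut edges. The inter-part is bounded as $A\geq \delta_H(\tilde S^H)-(\text{bad edges})$, where a bad $H$-edge corresponds to a $G$-edge whose endpoint sits on the minority side of a split super-node; by the expansion of $H[V_i^H]$, $\delta_H(\tilde S^H)\geq\Omega(\alpha)\,|\tilde S^H|$. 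The intra-part is $B\geq\sum_{v_j\in Z}\delta_{G[U_j]}(S\cap U_j)\geq\Omega(1/s)\sum_{v_j\in Z} W_j$ via the elementary bound $\phi(G[U_j])\geq 2/(3s)$, which holds for any connected max-degree-$3$ subgraph of size $\le s$ (where $W_j$ denotes the minority-side size inside $U_j$).

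The trickiest step is the charging argument: since each minority-side vertex has at most $3$ incident $H$-edges, the total number of bad edges is at most $3\sum_j W_j$, and this ``loss'' is absorbed by the surplus $\Omega(\sum_j W_j/s)$ from the intra-part $B$. A short case split based on whether $|S|$ is dominated by fully-contained super-nodes $X$ (in which case expansion of $H[V_i^H]$ applied to $X$ dominates) or by split super-nodes $Z$ (in which case even the trivial bound $B\geq|Z|$ suffices, since $|Z|\geq|S|/s$) then delivers $\delta_G(S)\geq\Omega(\alpha|S|/s)$, i.e., the required $\psi(G[V_i])=\Omega(\alpha/s)$.
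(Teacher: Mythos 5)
Your high-level reduction is exactly the paper's: contract each $U\in\cP$ to a super-node, run the flow-based expansion decomposition of \cite{NanongkaiS16} on the contracted multigraph $G_\cP$, and un-contract. Properties 1 and 2 and the connectivity half of Property 3 are fine. The gap is in the conductance transfer, and it is not just a missing detail: with your choice of expansion parameter $\Theta(\alpha)$ for the contracted graph, the claimed bound $\phi(G[V_i])=\Omega(\alpha/s)$ is false. Passing from an expansion guarantee on $G_\cP[V_i']$ back to $G[V_i]$ loses a factor of $\Theta(s^2)$, not $\Theta(s)$ (this is the content of \ref{lem:preserve exp}: $h(G_\cP)\le 4s^2\Delta\, h(G)$ whenever $h(G)\le\frac{1}{2s}$). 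The paper compensates by invoking \ref{thm:global decomp} with the \emph{amplified} parameter $c_0 s\alpha$; since $G_\cP$ has only $n/(c_0 s)$ nodes, the crossing-edge count is still $(c_0 s\alpha)\gamma\cdot\frac{n}{c_0 s}=\alpha\gamma n$, and the $s^2$ loss then yields $c_0 s\alpha/(4s^2\Delta)=\Omega(\alpha/s)$. With parameter $\Theta(\alpha)$ you can only conclude $\Omega(\alpha/s^2)$.

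The failure point in your charging argument is precisely the sentence claiming the $3\sum_j W_j$ bad edges are ``absorbed'' by the intra-part surplus $B=\Omega(\sum_j W_j/s)$: the loss exceeds the surplus by a factor of $\Theta(s)$, and your final case split does not close this. Concretely, in the regime where $\sum_j W_j$ lies between $\Theta(\alpha|S|/s)$ and $\Theta(\alpha|S|)$, the inter-part bound $A\ge\delta_H(\tilde S^H)-3\sum_j W_j$ can be vacuous while $B$ is only $\Theta(\alpha|S|/s^2)$. This regime is realizable: take $S$ to consist of $r$ fully-contained super-nodes (an expander among themselves in $H$) whose $\Theta(\alpha r)$ outgoing $H$-edges all attach to the $S$-side halves of $\Theta(\alpha r/s)$ ``bridge'' super-nodes, each a path split in half by a single internal edge. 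Then $h(H[V_i^H])=\Omega(\alpha)$ yet $\delta_G(S)=\Theta(\alpha r/s)$ against $|S|=\Theta(rs)$, i.e.\ sparsity $\Theta(\alpha/s^2)$. So the fix is not a better charging scheme but the parameter amplification above (or, equivalently, proving and applying the quantitative transfer lemma \ref{lem:preserve exp} with its $s^2$ factor). You should also note that your $\tilde S^H$ need not be the smaller side of $H[V_i^H]$, so the expansion guarantee cannot be applied to it without an additional argument.
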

Note that by giving a partition $\cP$ where each $U\in\cP$ has size
around $s$, the algorithm in \ref{thm:expdecomp respect} will output
the partition $\cQ$ where each part $V_{i}\in\cQ$ has size at least
$\Omega(s)$.

To prove this, we use the expansion decomposition algorithm by Nanongkai
and Saranurak \cite{NanongkaiS16} in a black-box manner. Before stating
the algorithm in \ref{thm:global decomp}, we recall that, for any graph
$G=(V,E)$, the \emph{expansion} of $G$ is $h(G)=\min_{S\subset V}\frac{\delta(S)}{\min\{|S|,|V-S|\}}$.
Note the following connection to conductance:
\begin{fact}
In any connected graph $G=(V,E)$ with max degree $\Delta=O(1)$,
$\phi(G)=\Theta(h(G))$.\label{fact:exp vs con}\end{fact}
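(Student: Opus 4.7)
The plan is to compare the two definitions directly via the relationship between volume and cardinality under a bounded-degree assumption, so this is essentially a one-paragraph sandwich argument.

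First I would note that since $G$ is connected (and, to avoid triviality, assume $|V|\ge 2$) every vertex has degree at least $1$. Combined with the max-degree bound $\Delta=O(1)$, this yields the pointwise sandwich
\[
|S| \;\le\; \mathrm{vol}(S) \;\le\; \Delta\,|S|
\]
for every nonempty $S\subseteq V$. Applying this to both $S$ and $V-S$ gives
\[
\min\{|S|,\,|V-S|\} \;\le\; \min\{\mathrm{vol}(S),\,\mathrm{vol}(V-S)\} \;\le\; \Delta\cdot\min\{|S|,\,|V-S|\}.
\]

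Next, for any nonempty $S\subsetneq V$, I would divide $\delta(S)$ by the quantities in the sandwich above. Since larger denominators produce smaller quotients, the two definitions $\phi(S)=\delta(S)/\min\{\mathrm{vol}(S),\mathrm{vol}(V-S)\}$ and $h_G(S)=\delta(S)/\min\{|S|,|V-S|\}$ satisfy
\[
\frac{h_G(S)}{\Delta} \;\le\; \phi(S) \;\le\; h_G(S).
\]
Taking the minimum over all nonempty proper subsets $S$ on all three sides preserves the inequalities (the argminima need not coincide, but the minima do satisfy the chain because each bound holds for every $S$). This yields $h(G)/\Delta \le \phi(G) \le h(G)$, and since $\Delta=O(1)$ is a constant, we conclude $\phi(G)=\Theta(h(G))$.

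I do not foresee any obstacle here: the proof is purely from the definitions together with the bounded-degree hypothesis, and no deeper structural property of the graph is needed. The only subtlety to be careful about is making sure the vertices actually have degree at least $1$ so that the lower bound $|S|\le\mathrm{vol}(S)$ holds, which is ensured by connectedness.
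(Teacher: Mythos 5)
Your proof is correct and follows essentially the same route as the paper: both establish the pointwise sandwich $|S|\le vol(S)\le\Delta|S|$ from connectedness and the max-degree bound, and then transfer it to the ratio defining conductance versus expansion. Your version is slightly more careful in spelling out that the minima (rather than just each individual cut's ratio) satisfy the chain of inequalities, but there is no substantive difference.
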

\begin{proof}
For any set $S\subset V$, we have $vol(S)\ge|S|$ as $G$ is connected,
and $vol(S)\le\Delta|S|$ as $G$ has max degree $\Delta$. So $vol(S)=\Theta(|S|)$.
Hence, $\frac{\delta(S)}{\min\{vol(S),vol(V-S)\}}=\Theta(\frac{\delta(S)}{\min\{|S|,|V-S|\}})$
for all $S\subset V$, and so $\phi(G)=\Theta(h(G))$.\end{proof}
\begin{lem}
[Expansion Decomposition \cite{NanongkaiS16} (Paraphrased)]\label{thm:global decomp}There
is a randomized algorithm $\cA$ that takes as inputs a (multi-)graph
$G=(V,E)$ with $n\ge2$ vertices and $m$ edges and an expansion
parameter $\alpha>0$, and a failure probability parameter $p$. Then,
in $O(m\gamma\log\frac{1}{p})$ time, $\cA$ outputs a partition $\cQ=\{V_{1},\dots,V_{k}\}$
of $V$, 
\begin{enumerate}
\item $| \{(u,v)\in E\mid u\in V_{i}$ and $v\in V_{j}$ where $i\neq j\} | \le\alpha\gamma n$.
\item With probability $1-p$, for all $V_{i}\in\cQ$, $h(G[V_{i}])\ge\alpha$.
\end{enumerate}
\end{lem}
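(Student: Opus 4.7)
The plan is to recursively decompose $G$ using an approximate sparsest-cut oracle for expansion. The oracle, given a subgraph $H$, either certifies $h(H) \geq \alpha$ or returns a cut $S$ in $V(H)$ with $|S| \leq |V(H) \setminus S|$ and $\delta_H(S) \leq \alpha\gamma' \cdot |S|$ for some approximation factor $\gamma' = n^{o(1)}$. A natural construction uses the cut-matching framework with the flow-based primitives from Sections~3--4 (in particular \ref{thm:Extended Unit Flow}) as inner solvers, each iteration either extending an embedded expander (certifying lower-bounded expansion) or returning a sparse cut; the precise construction is the one given in NS16.

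Given such an oracle, maintain a set $\mathcal{A}$ of active clusters, initialized to $\{V\}$. While some $C \in \mathcal{A}$ is not yet certified as an $\alpha$-expander, invoke the oracle on $G[V(C)]$: if it returns a sparse cut $(S, V(C)\setminus S)$, split $C$ into the two induced subgraphs $G[S]$ and $G[V(C)\setminus S]$ and place both back into $\mathcal{A}$; otherwise, remove $C$ from $\mathcal{A}$ and add $V(C)$ to the output partition $\cQ$. The process terminates once $\mathcal{A}$ is empty, at which point every $V_i \in \cQ$ is certified by some oracle call.

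For the edge-cut bound (item~1), each split contributes at most $\alpha\gamma' |S|$ to the inter-part edge count, where $|S|$ is the smaller side of the split. A classical ``charge-to-smaller-side'' argument shows $\sum_{\text{splits}} |S| = O(n \log n)$, because each vertex lies on the smaller side at most $O(\log n)$ times in the recursion. Hence the total inter-part edge count is at most $O(\alpha\gamma' n \log n)$, which is absorbed into $\alpha\gamma n$ once the $\log n$ and the approximation factor $\gamma'$ are folded into $\gamma = n^{O(\sqrt{\log\log n/\log n})}$; crucially this bound is deterministic, depending only on the oracle's approximation guarantee and not on its success probability. For the running time, charge each oracle invocation on $G[V(C)]$ to the smaller side of the cut it produces (or to $V(C)$ itself for a certifying call); since the flow-based oracle can be implemented to run in time nearly linear in the smaller side times $\gamma'$, the telescoping sum across all $O(n)$ calls is $\tilde{O}(m \gamma')$, and an $O(\log(1/p))$-fold amplification of each randomized call yields the stated $O(m\gamma\log(1/p))$.

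The main obstacle, and the main content of the NS16 construction, is producing the oracle with $\gamma' = n^{o(1)}$ while keeping each call nearly linear in the local graph size; the trade-off between the number of cut-matching rounds and the quality of each inner flow solve is what produces the specific $n^{O(\sqrt{\log\log n/\log n})}$ bound. Since the lemma is explicitly paraphrased from NS16, one may simply cite their result; the sketch above explains how the claimed guarantees (partition quality, deterministic edge-cut bound, randomized expansion guarantee, and running time) follow from any such oracle via the standard recursive cutting framework.
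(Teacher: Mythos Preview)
The paper does not prove this lemma at all: it is stated as a black-box citation of \cite{NanongkaiS16} (note the ``Paraphrased'' in the title) and is used directly in the proof of \ref{thm:expdecomp respect}. Your sketch---recursive decomposition via an approximate sparsest-cut oracle built from the cut-matching game with flow-based inner solvers, plus the standard charge-to-smaller-side accounting for both the edge-cut bound and the running time---is an accurate high-level description of how \cite{NanongkaiS16} obtains this result, and you correctly note at the end that one may simply cite their result. So your proposal is consistent with (and in fact more detailed than) what the paper does here.
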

The algorithm for \ref{thm:expdecomp respect} is very simple. Given
a graph $G=(V,E)$ and a partition $\cP$, we contract each set of
nodes $U\in\cP$ into a single node resulting in a contracted graph
$G_{\cP}$. Then run the expansion decomposition in \ref{thm:global decomp}
on $G_{\cP}$ and obtain the partition $\cQ'$ of nodes in $G_{\cP}$.
We just output $\cQ$ which is obtained from $\cQ$ by ``un-contracting''
each set of $\cP$. Now, we show the correctness of this simple approach.
\begin{lem}
Let $G=(V,E)$ be a graph with max degree $\Delta$ and $\cP$ be
a partition of $V$ where, for each set of nodes $U\in\cP$, $G[U]$
is connected and $|U|\le s$. Let $G_{\cP}$ be a graph obtained from
$G$ by contracting each set $U\in\cP$ into a single node. If $h(G)\le\frac{1}{2s}$,
then $h(G_{\cP})\le4s^{2}\Delta h(G)$.\label{lem:preserve exp}\end{lem}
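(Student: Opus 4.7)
The plan is to take a sparse cut in $G$ and round it to a cut that respects the partition $\cP$, arguing that the cut value grows by at most a factor of $O(\Delta s)$ while the smaller side loses at most a constant fraction of its mass. Concretely, let $S\subseteq V$ be a cut witnessing $h(G)$, so $\delta_G(S)=h(G)\cdot \min(|S|,|V-S|)$; WLOG $|S|\le|V-S|$. Define the rounded cut in $G_\cP$ by majority vote:
\[
S'=\{U\in\cP \;:\; |U\cap S|\ge|U|/2\}.
\]
I will then compare $\delta_{G_\cP}(S')$ to $\delta_G(S)$, and $\min(|S'|,|\cP|-|S'|)$ to $|S|$.

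For the cut size, call a node $v$ \emph{misplaced} if $v\in S$ but its part is not in $S'$, or $v\notin S$ but its part is in $S'$. Every edge counted by $\delta_{G_\cP}(S')$ corresponds to an edge $(u,v)$ of $G$ with $u,v$ in different parts of $\cP$ that either already crosses $S$ in $G$ or has a misplaced endpoint, so
\[
\delta_{G_\cP}(S')\le \delta_G(S)+\Delta\cdot M,
\]
where $M$ is the number of misplaced nodes. The key observation is that only \emph{split} parts (those containing both $S$- and $\bar S$-vertices) contribute to $M$, and each split part contributes at most $|U|/2\le s/2$ misplaced nodes by the majority rule. Moreover, since $G[U]$ is connected, each split part carries at least one edge of $G[U]$ that crosses $S$; these edges are pairwise distinct, so the number of split parts is at most $\delta_G(S)$. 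Hence $M\le \tfrac{s}{2}\,\delta_G(S)$ and consequently $\delta_{G_\cP}(S')\le(1+\Delta s/2)\,\delta_G(S)\le \Delta s\,\delta_G(S)$.

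For the size of the rounded cut, write $|S|=\sum_{U\in S'}|U\cap S|+\sum_{U\notin S'}|U\cap S|\le s|S'|+M$. The hypothesis $h(G)\le\tfrac{1}{2s}$ combined with $M\le\tfrac{s}{2}\,\delta_G(S)=\tfrac{s}{2}h(G)|S|$ yields $M\le|S|/4$, so $|S'|\ge 3|S|/(4s)\ge|S|/(2s)$. The symmetric computation on $V\setminus S$ gives $|\cP|-|S'|\ge 3|V-S|/(4s)\ge|S|/(2s)$, hence $\min(|S'|,|\cP|-|S'|)\ge|S|/(2s)$. Combining everything,
\[
h(G_\cP)\le \frac{\delta_{G_\cP}(S')}{\min(|S'|,|\cP|-|S'|)}\le \frac{\Delta s\,\delta_G(S)}{|S|/(2s)}=2\Delta s^2 h(G)\le 4\Delta s^2 h(G),
\]
which proves the lemma. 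The only subtle step is the correct bookkeeping of misplaced nodes—counting the split-part contribution via the majority rule and then using connectivity of each $G[U]$ to charge it to $\delta_G(S)$—and the hypothesis $h(G)\le 1/(2s)$ is precisely what makes this charging dominate so that the denominator does not collapse.
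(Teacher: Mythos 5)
Your proof is correct and follows essentially the same route as the paper's: round the sparse cut $S$ of $G$ to a cut respecting $\cP$, charge the extra cut edges to split parts, bound the number of split parts by $\delta_G(S)$ using connectivity of each $G[U]$, and use $h(G)\le\frac{1}{2s}$ to keep the smaller side from collapsing. The only cosmetic difference is that the paper rounds down (keeping only the parts fully contained in $S$, then contracting in a second step), whereas you round by majority vote and work directly in $G_{\cP}$; both give the claimed bound.
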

\begin{proof}
Consider a cut $S\subset V$ in $G$ where $\frac{\delta_{G}(S)}{\min\{|S|,|V-S|\}}=h(G)$.
Next, consider another cut $S'\subset V$ in $G$ where $S'$ is the
union of all sets $U\in\cP$ such that $U\subseteq S$. Clearly, $S'\subseteq S$
and so $|V-S'|\ge|V-S|$. We claim that $|S'|\ge|S|/2$. Let $\cP_{S}$
be the collection of set $U$ which contain nodes both in $S$ and
$V-S$. Note that $\delta(S)\ge|\cP_{S}|$ because, for each $U\in\cP_{S}$,
$G[U]$ is connected. We have

\begin{align*}
|S'| & =|S|-\sum_{U\in\cP_{S}}|U\cap S|\\
 & \ge|S|-s|\cP_{s}| & \mbox{as }|U|\le s\\
 & \ge|S|-s\delta_{G}(S)\\
 & \ge|S|-sh(G)|S|\\
 & \ge|S|/2 & \mbox{as }h(G)\le\frac{1}{2s}.
\end{align*}
Next, we bound $\delta_{G}(S')$. Note that 
\begin{align*}
\delta_{G}(S') & \le\delta_{G}(S)+\sum_{U\in\cP_{S}}E(U,S')\\
 & \le\delta_{G}(S)+\Delta s|P_{S}|\\
 & \le(1+\Delta s)\delta_{G}(S)\le2\Delta s\delta_{G}(S).
\end{align*}
Therefore, we have that $h_{G}(S')=\frac{\delta_{G}(S')}{\min\{|S'|,|V-S'|\}}\le\frac{2\Delta s\delta_{G}(S)}{\frac{1}{2}\min\{|S|,|V-S|\}}=4\Delta s\cdot h(G)$.
Next, let $S'_{\cP}$ be a set of nodes in $G_{\cP}$ obtained from
$S'$ by contracting each set $U\in\cP$ into a node. Observe that
$\delta_{G_{\cP}}(S'_{\cP})=\delta_{G}(S')$ because $S'$ respects
$\cP$. Also, $s|S'_{\cP}|\ge|S'|$ and $s|V(G_{\cP})-S'_{\cP}|\ge|V-S|$
because $|U|\le s$ for all $U\in\cP$. So, we can conclude 
\[
h(G_{\cP})\le\frac{\delta_{G_{\cP}}(S'_{\cP})}{\min\{|S'_{\cP}|,|V(G_{\cP})-S'_{\cP}|\}}\le\frac{\delta_{G}(S')}{\frac{1}{s}\min\{|S'|,|V-S'|\}}=s\cdot h_{G}(S')\le4s^{2}\Delta h(G).
\]

\end{proof}

\paragraph{Proof of \ref{thm:expdecomp respect}.}

Now, we are ready the proof the main lemma.
\begin{proof}
[Proof of \ref{thm:expdecomp respect}]The precise algorithm is the
following. Given the input $(G,\cP,\alpha,p)$ where $G$ has max
degree $\Delta=3$, we first construct a multi-graph $G_{\cP}=(V',E')$
obtained from $G$ by contracting each set $U\in\cP$ into a node.
Note that $G_{\cP}$ has at most $\frac{n}{c_{0}s}$ nodes, as $|U|\ge c_{0}s$
for all $U\in\cP$, and $G_{\cP}$ has $O(n)$ edges. Then we run
the expansion decomposition algorithm from \ref{thm:global decomp}
with $(G_{\cP},c_{0}s\alpha,p)$ as inputs, and outputs a partition
$\cQ'=\{V'_{1},\dots,V'_{k}\}$ of $V'$. For each $V'_{i}\in\cQ'$,
let $V_{i}\subseteq V$ be the set obtained from $V'_{i}$ by ``un-contracting''
each set in $U\in\cP$. The algorithm just returns $\cQ=\{V_{1},\dots,V_{k}\}$
as its output. The total running time is $O(n\gamma\log\frac{1}{p})$
by \ref{thm:global decomp} and because other operations take linear
time. Now, we prove the correctness. 

Clearly, $\cQ$ respects $\cP$ by constriction. Also, we have
\begin{align*}
\{(u,v)\in E\mid u\in V_{i},v\in V_{j},i\neq j\} & =\{(u,v)\in E'\mid u\in V'_{i},v\in V'_{j},i\neq j\}\\
 & \le(c_{0}s\alpha)\gamma\frac{n}{c_{0}s}=\alpha\gamma n,
\end{align*}
by \ref{thm:global decomp}. Next, note that $\phi(G[V_{i}])=\Theta(h(G[V_{i}]))$
by \ref{fact:exp vs con} and the fact that $G$ has max degree 3.
So it is enough to show that $h(G[V_{i}])=\Omega(\alpha/s)$ for all
$i$ with probability $1-p$. By \ref{lem:preserve exp}, for all
$1\le i\le k$, we have that if $h(G[V_{i}])<h(G_{\cP}[V'_{i}])/4s^{2}\Delta$,
then $h(G[V_{i}])>1/2s=\Omega(\alpha/s)$ and we are done. So we assume
otherwise, which means that

\[
h(G[V_{i}])\ge h(G_{\cP}[V'_{i}])/4s^{2}\Delta\ge c_{0}s\alpha/4s^{2}\Delta=\Omega(\alpha/s)
\]
for all $1\le i\le k$ with probability $1-p$. We note that we can
additionally make sure that $G[V_{i}]$ is connected with certainty
in linear time. This concludes the proof.
\end{proof}

\subsection{$\protect\msf$ Decomposition Algorithm}

In this section, we just plug the extended version of the expansion
decomposition algorithm from \ref{sub:decomp given partition} to
the approach by Wulff-Nilsen \cite{Wulff-Nilsen16a} for constructing
the $\msf$ decomposition. One minor contribution of this section
is that we present the $\msf$ decomposition in a more modular way
than how it is presented in \cite{Wulff-Nilsen16a}. In particular,
we list and prove all the needed properties of the $\msf$ decomposition
here and hide all the implementation details from the other sections.
In particular, the notion of $M$-clusters (as defined below) is hidden
from other sections. We hope that this facilitate the future applications
of this decomposition.

First, we need the following algorithm by Frederickson:
\begin{lem}
[Frederickson \cite{Frederickson85}]\label{lem:group}There is an
algorithm which takes as input a tree $T=(V,E)$ with $n$ nodes and
max degree 3 and a parameter $s$. Then, in $O(n)$ time, the algorithm
outputs a partition $\cC=\{V_{i}\}_{i}$ of $V$ where $s/3\le|V_{i}|\le s$
and $T[V_{i}]$ is connected for all $i$.
\end{lem}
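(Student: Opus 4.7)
My plan is to use Frederickson's classical bottom-up greedy grouping procedure. First root $T$ at an arbitrary leaf $r$, so that every vertex has at most two children. Then perform a post-order depth-first traversal of $T$, maintaining at each visited vertex $v$ a \emph{pending set} $P_v$ of uncommitted descendants of $v$ together with $v$ itself. Inductively, after processing the children of $v$, set $P_v := \{v\} \cup \bigcup_c P_c$ where $c$ ranges over children of $v$ whose subtrees are not yet committed. If $|P_v| \ge s/3$, commit $P_v$ as a new cluster $V_i \in \cC$ and reset $P_v := \emptyset$ before returning to $v$'s parent; otherwise propagate $P_v$ unchanged upward.

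The key invariant is that any child $c$ that still passes a non-empty pending set up to $v$ satisfies $|P_c| < s/3$, for otherwise we would already have committed at $c$. Since $v$ has at most two children, we obtain $|P_v| \le 1 + 2(s/3-1) \le s$, so every non-root commit produces a cluster of size in $[s/3,\ 2s/3 - 1] \subseteq [s/3, s]$. At the root, either $|P_r| \ge s/3$ and we commit $P_r$ as its own cluster (still of size $\le s$ by the same calculation), or $|P_r| < s/3$. In the latter case, the implicit precondition $n \ge s/3$ forces $P_r \ne V$, so by connectivity of $T$ there exists a tree edge $(v,u)$ with $v \in P_r$ and $u$ in some previously-committed cluster $C$. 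I merge $P_r$ into $C$; using $|C| \le 2s/3 - 1$, the resulting cluster has size at most $|C| + |P_r| < (2s/3 - 1) + s/3 = s - 1$, and still at least $s/3$, so it fits the size bounds.

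Each cluster $V_i$ is a union of pending subtrees of $T$ joined along tree edges (in the committing step a pending set is itself a connected subtree rooted at the commit vertex, and in the root-merge step we adjoin along the edge $(v,u)$), so $T[V_i]$ is connected for every $i$. Since the procedure performs $O(1)$ work per vertex during a single post-order traversal---updating an integer size counter and a pointer to the current pending cluster---the total running time is $O(n)$. The main obstacle, handling the leftover small pending set at the root, is resolved by the connectivity argument above; everything else is a direct consequence of the degree-$3$ bound.
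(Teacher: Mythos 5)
Your proof is correct, and it is essentially the standard bottom-up greedy argument behind Frederickson's clustering lemma; the paper itself offers no proof here, citing \cite{Frederickson85} directly, so your reconstruction fills in exactly what the citation stands for. The only nit is arithmetic: for integer sizes an uncommitted pending set satisfies $|P_c|\le\lceil s/3\rceil-1$ rather than $s/3-1$, so a committed cluster can have size up to $2\lceil s/3\rceil-1$ and the root merge up to $3\lceil s/3\rceil-2\le s$ --- the stated bounds $2s/3-1$ and $s-1$ are slightly off when $3\nmid s$, but the conclusion $s/3\le|V_i|\le s$ still holds.
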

The algorithm $\msfdecomp(G,p,\alpha,d,s_{low},s_{high})$ for \ref{thm:MSF decomposition}
is as follows. First, we compute the $\msf$ $M$ of $G$. Then, given
$(M,s_{low})$ to \ref{lem:group}, we compute the outputted partition
$\cP_{M}$ called \emph{$M$-partition}. For each $V'\in\cP_{M}$,
we call $M[V']$ an \emph{$M$-cluster}. Denote by $\cC_{M}$ and
$E(\cC_{M})$ the set of $M$-clusters and the union of edges of $M$-clusters
respectively. Note that $\cC_{M}$ is just a forest where each tree
has size between $s_{low}/3$ and $s_{low}$. Next, let $d'=d-2$.
For $1\le i\le d'$, we denote by $E_{i}$ the set of edges of weights
in the range $(m-i\frac{m}{d'},m-(i-1)\frac{m}{d'}]$ which are \emph{not
}in $E(\cC_{M})$. For any $i>d'$, let $E_{i}=\emptyset$. Finally,
we call $\msfbuild(G,\cP_{M},1)$ from \ref{alg:MSF decomp build}.
Let $\textsf{ExpDecomp}$ denote the extended expansion decomposition
algorithm from \ref{thm:expdecomp respect}.

\begin{algorithm}
\caption{\label{alg:MSF decomp build}$\protect\msfbuild(C,\protect\cP,i)$}

The numbers $n,m,p,\alpha,d,s_{low},s_{high}$ are fixed by the input
of \ref{thm:MSF decomposition}.
\begin{enumerate}
\item If $|E(C)|\le s_{high}$, return. // i.e. $C$ is a leaf cluster.
\item Else, // i.e. $C$ is a non-leaf cluster.

\begin{enumerate}
\item Compute $\cQ=\textsf{ExpDecomp}(C,\cP,\alpha,p/2n)$. Write $\cQ=\{V^{1},\dots,V^{k}\}$.
\item For all $j\le k$, set $C^{j}=(V^{j},E^{j})$ where $E^{j}=E(C[V^{j}])-E_{i}$
as a child cluster of $C$.
\item Set $E^{C}=E(C)-\Disjunion_{j}E^{j}$ as a set of $C$-own edges. 
\item For all $e\in E^{C}$, set $w'(e)\gets\max\{w(e),m-i\frac{m}{d'}+0.5\}$
\item For all $j\le k$, run $\msfbuild(C^{j},\cP^{j},i+1)$ where $\cP^{j}=\{V'\in\cP\mid V'\subseteq V^{j}\}$.\end{enumerate}
\end{enumerate}
\end{algorithm}

\subsubsection{Analysis}

First, we note that, in Step 2.a, a valid input is given to \ref{thm:expdecomp respect}:
\begin{prop}
In the recursion by invoking $\msfbuild(G,\cP_{M},1)$, if $\msfbuild(C,\cP,i)$
is called, then $\cP$ is a partition of $V(C)$. Moreover, $\cP\subseteq\cP_{M}$.\label{prop:respect P_M}\end{prop}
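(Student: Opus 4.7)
The plan is to prove both claims simultaneously by induction on the recursion depth $i$, with the hypothesis being: whenever $\msfbuild(C,\cP,i)$ is invoked during the execution of $\msfbuild(G,\cP_M,1)$, the set $\cP$ is a partition of $V(C)$ and $\cP\subseteq\cP_M$.

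The base case $i=1$ is immediate: the initial call is $\msfbuild(G,\cP_M,1)$, where $V(C)=V$ and $\cP_M$ is produced by applying \ref{lem:group} to the spanning forest $M$ of $G$, hence a partition of $V$. Trivially $\cP_M\subseteq\cP_M$.

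For the inductive step, suppose the hypothesis holds for a call $\msfbuild(C,\cP,i)$ that is not a leaf. Inside Step 2.a, we invoke $\textsf{ExpDecomp}(C,\cP,\alpha,p/2n)$, which by \ref{thm:expdecomp respect} returns a partition $\cQ=\{V^1,\dots,V^k\}$ of $V(C)$ that respects $\cP$. Each recursive call is then $\msfbuild(C^j,\cP^j,i+1)$ with $\cP^j=\{V'\in\cP\mid V'\subseteq V^j\}$, and $V(C^j)=V^j$. The first claim follows because $\cQ$ respects $\cP$: every $V'\in\cP$ is either contained in $V^j$ or disjoint from $V^j$, so $\cP^j$ consists of exactly those $V'\in\cP$ meeting $V^j$, and their disjoint union equals $V^j$ (using that $\cP$ already partitions $V(C)\supseteq V^j$ by the inductive hypothesis). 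The second claim is immediate since $\cP^j\subseteq\cP\subseteq\cP_M$ by induction.

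No step here is particularly delicate; the only point that needs care is invoking the ``respects'' property from \ref{thm:expdecomp respect} correctly to conclude that $\cP^j$ covers $V^j$. This in turn relies on $\cP$ being a valid input to $\textsf{ExpDecomp}$ at every recursive step, which is exactly the inductive hypothesis being maintained. (Separate from this proposition, one must also verify that the size/connectivity preconditions of \ref{thm:expdecomp respect} hold for $\cP^j$ on $C^j$, but that is not part of the statement here.)
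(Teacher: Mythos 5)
Your proof is correct and follows essentially the same route as the paper's: induction on the recursion, using the fact that the partition $\cQ$ returned by \ref{thm:expdecomp respect} respects $\cP$ to conclude that each $\cP^{j}$ partitions $V^{j}$, with $\cP^{j}\subseteq\cP\subseteq\cP_{M}$ immediate. Your closing remark about the size/connectivity preconditions of \ref{thm:expdecomp respect} is a reasonable aside but, as you note, outside the scope of this proposition.
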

\begin{proof}
We prove by induction. The base case is trivial because $\cP_{M}$
is a partition of $V(G)$. Next, by induction hypothesis, suppose
that $\cP$ is a partition of $V(C)$ and $\cP\subseteq\cP_{M}$.
So $(C,\cP,\alpha,p/n)$ is a valid input for the algorithm from \ref{thm:expdecomp respect}
in Step 2.a. Let $\cQ=\{V^{1},\dots,V^{k}\}$ be the outputted partition
of $V(C)$. By \ref{thm:expdecomp respect}, $\cQ$ respects $\cP$.
Therefore, for all $j$, $\cP^{j}=\{V'\in\cP\mid V'\subseteq V^{j}\}$
in Step 2.e is actually a partition of $V^{j}.$ That is, for all
child clusters $C^{j}$ of $C$, when $\msfbuild(C^{j},\cP^{j},i)$
is called, $\cP^{j}$ is a partition of $V(C^{j})=V^{j}$ and $P^{j}\subseteq\cP\subseteq\cP_{M}$.
\end{proof}
As all the steps are valid, we obtain a hierarchical decomposition
denoted by $\cH$ with the following basic properties.
\begin{prop}
We have the following:\label{prop:basic cH}
\begin{enumerate}
\item $\msfbuild(G,\cP_{M},1)$ returns a hierarchical decomposition $\cH$.
\item For any cluster $C\in\cH$, $V(C)$ respects $\cP_{M}$. 
\item A cluster $C\in\cH$ is a leaf cluster iff $C$ has at most $s_{high}$
edges. Moreover, each leaf cluster contains at least $s_{low}/3$
nodes.
\item With probability $1-p$, all non-root clusters $C\in\cH$ are such
that $\phi(C)=\Omega(\alpha/s_{low})$. 
\end{enumerate}
\end{prop}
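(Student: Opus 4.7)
The overall plan is to induct on the level in the recursion tree of $\msfbuild$, combining the structural guarantees of \ref{prop:respect P_M} with the input/output specification of the extended expansion decomposition (\ref{thm:expdecomp respect}) and the grouping guarantee of \ref{lem:group}. Parts 1--3 are essentially bookkeeping done by induction, while part 4 reduces to a single union bound over all recursive calls to $\textsf{ExpDecomp}$.

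For part 1, I would argue inductively that the invocation $\msfbuild(G,\cP_M,1)$ produces a rooted tree whose root cluster is $G$ and such that, for every non-leaf $C$, the children $\{C^j\}_j$ produced in Steps 2.a--2.b satisfy $V(C)=\Disjunion_j V(C^j)$; this is immediate because \ref{thm:expdecomp respect} guarantees $\cQ=\{V^1,\dots,V^k\}$ is a partition of $V(C)$. Combined with the fact that recursion terminates (the base case $|E(C)|\le s_{high}$ is eventually reached since edges are only removed going down), this matches \ref{def:hie decomp}. For part 2, I would induct on the level: the root case holds because $\cP_M$ is a partition of $V(G)$; for the inductive step, \ref{prop:respect P_M} ensures that the partition $\cP$ passed to $\msfbuild(C,\cP,\cdot)$ is a subset of $\cP_M$, and then \ref{thm:expdecomp respect}(1) says $\cQ$ respects $\cP$, so each $V^j$ is the union of sets in $\cP\subseteq\cP_M$ and therefore respects $\cP_M$. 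Part 3 then follows quickly: the ``leaf iff $|E(C)|\le s_{high}$'' direction is explicit in Step 1, while the node-count bound uses part 2 together with \ref{lem:group}, since every $V'\in\cP_M$ has $|V'|\ge s_{low}/3$ and every cluster $V(C)$ is a non-empty union of such $V'$'s (non-emptiness holding because $\textsf{ExpDecomp}$ returns non-empty parts).

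For part 4, I would union bound over all calls to $\textsf{ExpDecomp}$. Each call is made with failure probability $p/(2n)$, and the total number of clusters is $O(n)$ (since leaves contain at least $s_{low}/3\ge 1$ vertices and partition the vertex set, with internal nodes bounded by the depth $d$ times the number of leaves), so the union bound yields total failure probability at most $p$. Conditioned on success, \ref{thm:expdecomp respect}(3) gives $\phi(C[V^j])=\Omega(\alpha/s_{low})$ for every non-root $V^j$; the relevant size parameter is $s_{low}$ because, inductively via part 2 and \ref{lem:group}, every $V'$ in the partition $\cP$ fed into $\textsf{ExpDecomp}$ has $|V'|=\Theta(s_{low})$. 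The main subtlety I foresee is that the child cluster $C^j=(V^j,E(C[V^j])-E_i)$ is obtained by removing the ``medium weight'' non-$\msf$ edges in $E_i$ from the induced subgraph, so one must verify that this edge removal does not degrade the conductance by more than a constant factor. I would handle this by observing that the edges in $E_i$ that sit inside a single $V^j$ act like an additional small perturbation of the induced subgraph; one can absorb the resulting constant-factor loss into the hidden constant of $\Omega(\cdot)$, using that the volume of $C^j$ only shrinks by at most the number of such removed edges, which is controllable at the level of the overall hierarchy. This is the one place where I expect to have to do a slightly careful calculation rather than a purely structural induction.
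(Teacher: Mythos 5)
Parts 1--3 of your proposal coincide with the paper's proof: part 1 is exactly the observation that \ref{thm:expdecomp respect} returns a partition of $V(C)$ in Step 2.a, part 2 is \ref{prop:respect P_M} combined with the fact that $\cQ$ respects $\cP\subseteq\cP_M$, and part 3 combines Step 1 with the lower bound $|U|\ge s_{low}/3$ for every $U\in\cP_M$ from \ref{lem:group}. The union bound in part 4 ($O(n)$ calls to $\textsf{ExpDecomp}$, each with failure parameter $p/2n$) is also the paper's argument verbatim.

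The one place you depart from the paper is the ``subtlety'' you flag at the end of part 4, and there your proposed repair does not work. You are right that the child cluster $C^{j}=(V^{j},E(C[V^{j}])-E_{i})$ is not the induced subgraph $C[V^{j}]$ whose conductance \ref{thm:expdecomp respect} certifies (the paper itself notes after \ref{def:hie decomp} that $E(C')\subsetneq E(C[V(C')])$ is possible); the paper's proof of part 4 simply asserts that every non-root cluster ``is outputted from \ref{thm:expdecomp respect}'' and does not perform the extra step you anticipate. But your fix --- treating the removed $E_{i}$-edges as a small volume perturbation absorbed into the $\Omega(\cdot)$ --- is unsound: conductance is not stable under deletion of a few adversarially placed edges, because for a particular cut $S$ in $C[V^{j}]$ \emph{all} of its $\delta(S)$ crossing edges could lie in $E_{i}$, driving $\delta_{C^{j}}(S)$ to $0$ while $vol_{C^{j}}(S)$ barely changes. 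Bounding the total volume lost does not bound the loss in cut size, and $|E_{i}|$ can be as large as $n/(d-2)$, which is far larger than the $O(\alpha\gamma n)$ quantities you could hope to charge against. If you want to prove the statement literally for the graph $(V(C),E(C))$, you need a genuinely different argument for why deleting $E_{i}\cap E(C[V^{j}])$ preserves conductance; otherwise you should follow the paper and prove (only) what its argument actually delivers, namely the conductance bound for $C[V(C')]$ with $C$ the parent of $C'$.
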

\begin{proof}
(1): There are two conditions we need to show about $\cH$. First,
the root cluster clearly corresponds to the graph $G$ itself. Next,
for any non-leaf cluster $C$, let $C^{1},\dots,C^{k}$ be the children
of $C$. We have that $\{V(C^{1}),\dots,V(C^{k})\}$ is a partition
of $V(C)$ by \ref{thm:expdecomp respect} used in Step 2.a.

(2): This follows from \ref{prop:respect P_M}.

(3): The first statement is by Step 1. For the second statement, for
all clusters $C\in\cH$, $V(C)$ respects $\cP_{M}$. So $|V(C)|\ge s_{low}/3$. 

(4): By Step 2.a, all non-root clusters $C$ is outputted from \ref{thm:expdecomp respect}.
Since \ref{thm:expdecomp respect} is called at most $2n$ times,
the claim holds with probability $1-2n\cdot\frac{p}{2n}=1-p$.
\end{proof}
In the outputted hierarchical decomposition $\cH$, observe that $C\in\cH$
is a level-$i$ cluster iff $\msfbuild(C,\cP,i)$ is called in the
level-$i$ recursion when we call $\msfbuild(G,\cP_{M},1)$. For any
$i$, let $E_{\ge i}=\Disjunion_{j\ge i}E_{j}$. Note that, for $i>d'$,
$E_{\ge i}=\emptyset$. By Step 2.b of \ref{alg:MSF decomp build},
observe the following:
\begin{prop}
For any level-$i$ cluster $C$, $E(C)\subseteq E_{\ge i}\disjunion E(\cC_{M})$.\label{thm:edge of level-i}\end{prop}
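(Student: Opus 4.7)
The plan is to argue by induction on the level $i$, tracking precisely which edges get discarded when passing from a cluster to its children in $\msfbuild$.

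\medskip
\noindent\textbf{Base case ($i=1$).} The unique level-$1$ cluster is the root, corresponding to $G$, so $E(C)=E$. Recall that $E_j$ is defined, for $1\le j\le d'$, as the edges with weights in $(m-j\tfrac{m}{d'},m-(j-1)\tfrac{m}{d'}]$ that do not belong to $E(\cC_M)$. As $j$ ranges over $\{1,\dots,d'\}$, these weight ranges partition $\{1,\dots,m\}$, so every edge of $G$ either belongs to $E(\cC_M)$ or to exactly one $E_j$ with $1\le j\le d'$. Hence $E = E_{\ge 1}\disjunion E(\cC_M)$, which establishes the base case.

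\medskip
\noindent\textbf{Inductive step.} Suppose the claim holds for a level-$i$ cluster $C$, i.e. $E(C)\subseteq E_{\ge i}\disjunion E(\cC_M)$. If $C$ is a leaf there is nothing more to prove, so assume $C$ is a non-leaf and let $C^j=(V^j,E^j)$ be any child cluster, which by construction is at level $i+1$. By Step 2.b of \ref{alg:MSF decomp build}, $E^j = E(C[V^j])-E_i$. The key observation is that $E_i$ is disjoint from $E(\cC_M)$ by definition, so $E_{\ge i}\disjunion E(\cC_M)=E_i\,\dot\cup\, E_{\ge i+1}\,\dot\cup\,E(\cC_M)$. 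Combining with the inductive hypothesis,
\[
E^j \;=\; E(C[V^j])-E_i \;\subseteq\; E(C)-E_i \;\subseteq\; E_{\ge i+1}\disjunion E(\cC_M),
\]
which is exactly the statement for level $i+1$.

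\medskip
The only subtlety worth pointing out, and really the only place the construction is used nontrivially, is that the subtraction $-E_i$ in Step 2.b does not accidentally remove any $M$-cluster edge; this is guaranteed by the definition of $E_i$ as excluding $E(\cC_M)$. Everything else is bookkeeping, and no probabilistic or expansion properties are needed for this particular proposition.
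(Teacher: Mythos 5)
Your proof is correct and is exactly the straightforward induction that the paper leaves implicit (the paper states this proposition as an observation following from Step 2.b of the construction, with no written proof). Both the base case (the weight ranges partition $\{1,\dots,m\}$, so $E=E_{\ge 1}\disjunion E(\cC_M)$) and the inductive step (subtracting $E_i$, which is disjoint from $E(\cC_M)$, moves from $E_{\ge i}$ to $E_{\ge i+1}$) check out.
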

\begin{lem}
$\cH$ has depth at most $d$.\label{lem:msfdecomp depth}\end{lem}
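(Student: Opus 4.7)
The plan is to show that every cluster created at level $d$ must automatically satisfy the leaf condition $|E(C)|\le s_{high}$, so that the recursion terminates by depth $d$.

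First I will invoke \ref{thm:edge of level-i} to constrain the edges available at deep levels. Since $d'=d-2$, the sets $E_j$ are empty for every $j\ge d-1$, hence $E_{\ge d-1}=\emptyset$; so any non-leaf cluster $C^p$ at level $d-1$ satisfies $E(C^p)\subseteq E(\cC_M)$, i.e.\ every edge of $C^p$ lies inside a single $M$-cluster from $\cP_M$.

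Next I will analyze the $\textsf{ExpDecomp}$ call at such $C^p$. By \ref{thm:expdecomp respect}, its output $\cQ=\{V^1,\dots,V^k\}$ partitions $V(C^p)$ into parts that respect $\cP^p$ and whose induced subgraphs $C^p[V^j]$ are connected; by \ref{prop:respect P_M}, $\cP^p\subseteq\cP_M$, so each $V^j$ is a union of entire $M$-cluster vertex sets. The key observation--and in my view the main obstacle, since this is where the combinatorial structure has to come together--is that if some $V^j$ were a union of two or more distinct $M$-cluster vertex sets, then because $E(C^p)\subseteq E(\cC_M)$ no edge of $C^p$ could go between these $M$-clusters in $V^j$, so $C^p[V^j]$ would be disconnected, contradicting the guarantee of $\textsf{ExpDecomp}$. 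Hence every $V^j$ must equal exactly one $M$-cluster $V_i\in\cP_M$, with $|V_i|\le s_{low}$.

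Finally, the child cluster $C=C^j$ produced at level $d$ then has $E(C)=E(C^p[V^j])-E_{d-1}=E(C^p[V^j])$ (using $E_{d-1}=\emptyset$), which is contained in the edges of the $M$-cluster tree on $V_i$ and therefore has at most $|V_i|-1\le s_{low}-1\le s_{high}$ edges. So $C$ passes the leaf test in Step~1 of $\msfbuild$, and no cluster at level $d+1$ is ever created; this gives depth at most $d$. The degenerate case in which $V^j$ is a single vertex (possible only when $s_{low}$ is very small) is immediate, since then $|E(C)|=0\le s_{high}$.
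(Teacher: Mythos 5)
Your proof is correct and follows essentially the same route as the paper's: both use \ref{thm:edge of level-i} to show that a level-$(d-1)$ non-leaf cluster has all its edges in $E(\cC_M)$, and then use the connectivity guarantee of \textsf{ExpDecomp} to conclude that each level-$d$ cluster lies inside a single $M$-cluster, hence has at most $s_{low}\le s_{high}$ edges and is a leaf. The only difference is cosmetic: you route the "single $M$-cluster" step through \ref{prop:respect P_M}, while the paper argues it directly from $C$ being a forest.
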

\begin{proof}
Recall that $d=d'+2$. Let $C'$ be a level-$(d'+2)$ cluster and
$C$ be the parent cluster of $C'$. Since $C$ is a level-$(d'+1)$
cluster, by \ref{thm:edge of level-i}, $E(C)\subseteq E(\cC_{M})$.
This means that, $C$ is a forest. As $C'$ is connected by \ref{thm:expdecomp respect},
$C'$ cannot intersect more than two $M$-clusters. So $|V(C')|\le s_{low}$
and, hence, $|E(C')|\le s_{low}$. As $s_{low}\le s_{high}$, $C'$
must be returned as a leaf cluster by Step 1. 
\end{proof}
For any level-$i$ non-leaf cluster $C$, let $E_{i}(C)=E_{i}\cap E(C)$
and let $\partial^{C}$ be the set of edges whose endpoints are in
different child clusters. By Step 2.b and Step 2.c, observe the following:
\begin{prop}
For any level-$i$ non-leaf cluster $C$, the set of $C$-own edges
is $E^{C}=E_{i}(C)\cup\partial^{C}$.\label{prop:C-own edges}
\end{prop}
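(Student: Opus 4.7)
\medskip
\noindent\textbf{Proof plan.} The statement is a purely set-theoretic consequence of the two assignments in Step 2.b and Step 2.c of \ref{alg:MSF decomp build}, so I plan to verify it by a direct computation of $E^C$.

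First, I would partition $E(C)$ according to whether an edge lies inside one of the vertex sets $V^{j}$ of the children or crosses between two different $V^{j}$'s. Since $\cQ = \{V^{1},\dots,V^{k}\}$ is a partition of $V(C)$ (established in \ref{prop:basic cH}(1)), this gives the disjoint decomposition
\[
E(C) \;=\; \bigl(\bigsqcup_{j} E(C[V^{j}])\bigr) \;\sqcup\; \partial^{C}.
\]
Next, I would apply the assignment $E^{j} = E(C[V^{j}]) - E_{i}$ from Step 2.b to obtain
\[
\bigsqcup_{j} E^{j} \;=\; \bigl(\bigsqcup_{j} E(C[V^{j}])\bigr) \setminus E_{i},
\]
using that $E_{i}$ is a single set of edges and the $E(C[V^{j}])$'s are pairwise disjoint.

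Now I would combine these with the definition $E^{C} = E(C) - \bigsqcup_{j} E^{j}$ from Step 2.c. Subtracting the set above from the earlier decomposition of $E(C)$, the crossing part $\partial^{C}$ survives entirely (it is disjoint from $\bigsqcup_{j} E(C[V^{j}])$, hence from $\bigsqcup_{j} E^{j}$), while within $\bigsqcup_{j} E(C[V^{j}])$ only the edges that had been removed by the $-E_{i}$ operation survive. Concretely,
\[
E^{C} \;=\; \partial^{C} \;\cup\; \Bigl( E_{i} \cap \bigsqcup_{j} E(C[V^{j}]) \Bigr).
\]

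Finally, to match the claimed form $E_{i}(C) \cup \partial^{C}$, I would observe that
\[
E_{i}(C) \cup \partial^{C} \;=\; \bigl(E_{i}\cap E(C)\bigr) \cup \partial^{C}
\;=\; \Bigl(E_{i} \cap \bigsqcup_{j} E(C[V^{j}])\Bigr) \cup \bigl(E_{i}\cap \partial^{C}\bigr) \cup \partial^{C},
\]
and since $E_{i}\cap\partial^{C} \subseteq \partial^{C}$, this simplifies to exactly the expression for $E^{C}$ derived above. There is no real obstacle here; the only thing to be careful about is keeping the union/intersection bookkeeping straight and remembering that $\partial^{C}$ and $\bigsqcup_{j} E(C[V^{j}])$ are disjoint by definition.
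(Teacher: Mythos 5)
Your proof is correct and follows the same route the paper intends: the paper offers no explicit argument, simply stating "By Step 2.b and Step 2.c, observe the following," and your set-theoretic bookkeeping is exactly the fleshed-out version of that observation. The one subtlety worth having checked — that $\bigsqcup_{j}E(C[V^{j}])$ may strictly contain $\bigsqcup_{j}E^{j}$ because of the $-E_{i}$ in Step 2.b — is handled correctly in your computation.
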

We note that \ref{prop:C-own edges} is \emph{not} true for a leaf
cluster $C$ because $E^{C}$ may contains some $M$-cluster edges
(i.e. $E(\cC_{M})$).
\begin{lem}
$|\left\{ e\in E\mid w(e)\neq w'(e)\right\} |\le\alpha d\gamma\cdot n$.\label{lem:bound reweight}\end{lem}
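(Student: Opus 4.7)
The plan is to identify exactly which edges can have their weight changed, bound the number of such edges level by level using the guarantee of $\textsf{ExpDecomp}$, and then multiply by the depth.

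First, I observe that $w'(e)$ is set only in Step 2.d of $\msfbuild$, and only for $e \in E^{C}$ when $C$ is a non-leaf cluster. By \ref{fact:C-edge partition}, the sets $E^{C}$ over all clusters $C \in \cH$ partition $E$, so each edge has its weight touched by Step 2.d at most once. Moreover, by \ref{lem:msfdecomp depth}, the re-weighting done at level $i > d' = d - 2$ is of the form $w'(e) \gets \max\{w(e), m - i\frac{m}{d'} + 0.5\}$ with $m - i\frac{m}{d'} + 0.5 \le 0$, so no change occurs. Hence only levels $1 \le i \le d'$ can produce re-weighted edges.

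Next, I use \ref{prop:C-own edges}, which says that for any level-$i$ non-leaf cluster $C$, the set of $C$-own edges is $E^{C} = E_i(C) \cup \partial^{C}$. For $e \in E_i(C) = E_i \cap E(C)$ we have $w(e) > m - i\frac{m}{d'}$, and since weights are integers, $w(e) \ge m - i\frac{m}{d'} + 1 > m - i\frac{m}{d'} + 0.5$, so Step 2.d leaves $w(e)$ unchanged. Therefore the only edges whose weight can actually change at level $i$ under cluster $C$ lie in $\partial^{C}$, i.e.\ the set of edges of $C$ that cross between child clusters $C^{1}, \dots, C^{k}$ produced by $\textsf{ExpDecomp}(C, \cP, \alpha, p/2n)$ in Step 2.a.

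Now I bound $\sum_C |\partial^C|$ at each level. By Property 2 of \ref{thm:expdecomp respect} applied to $C$, the number of edges crossing between the parts of $\cQ$ is at most $\alpha \gamma |V(C)|$, so $|\partial^{C}| \le \alpha \gamma |V(C)|$. For a fixed level $i$, the vertex sets $V(C)$ of the level-$i$ non-leaf clusters are pairwise disjoint subsets of $V$ (by Property 2 of \ref{prop:basic cH} and the definition of a hierarchical decomposition), so
\[
\sum_{C:\text{ non-leaf, level-}i} |\partial^{C}| \;\le\; \alpha \gamma \sum_{C} |V(C)| \;\le\; \alpha \gamma n.
\]
Summing this over the at most $d' \le d$ contributing levels and using that each edge is re-weighted at most once yields
\[
|\{e \in E : w(e) \neq w'(e)\}| \;\le\; \sum_{i=1}^{d'} \sum_{C} |\partial^{C}| \;\le\; d \cdot \alpha \gamma n \;=\; \alpha d \gamma n,
\]
as claimed. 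There is no real obstacle here; the only thing to be careful about is checking that edges in $E_i(C)$ are genuinely left alone by Step 2.d (which requires the integrality of weights together with the strict-vs.-non-strict inequality in the definition of $E_i$), so that we really only need to count $|\partial^{C}|$.
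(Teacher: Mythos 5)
Your proof is correct and follows essentially the same route as the paper's: decompose $E^{C}=E_{i}(C)\cup\partial^{C}$ via \ref{prop:C-own edges}, note that edges in $E_{i}(C)$ are too heavy to be re-weighted by Step 2.d, bound $|\partial^{C}|\le\alpha\gamma|V(C)|$ by \ref{thm:expdecomp respect}, and sum over node-disjoint level-$i$ clusters and the at most $d$ levels. The additional observations (each edge is touched at most once by \ref{fact:C-edge partition}, and levels above $d'$ contribute nothing) are harmless refinements of the same argument.
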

\begin{proof}
For any level-$i$ cluster $C\in\cH$, we claim that $|\left\{ e\in E^{C}\mid w(e)\neq w'(e)\right\} |\le\alpha\gamma\cdot|V(C)|$.
Having this claim, the lemma follows because the recursion depth is
at most $d$ by \ref{lem:msfdecomp depth}, and, for any depth $i$,
any two level-$i$ clusters $C$ and $C'$ are node-disjoint.

Suppose that $C$ is a level-$i$ cluster. If $C$ is a leaf cluster,
then $\left\{ e\in E^{C}\mid w(e)\neq w'(e)\right\} =\emptyset$.
So we assume $C$ is a non-leaf cluster. By \ref{prop:C-own edges}
$E^{C}=E_{i}(C)\cup\partial^{C}$. For any edge $e\in E_{i}(C)$,
we have $w(e)\ge m-i\frac{m}{d'}+1$, so $w'(e)=w(e)$ by Step 2.d.
So $\left\{ e\in E(C)\mid w(e)\neq w'(e)\right\} \subseteq\partial(C)$.
By \ref{thm:expdecomp respect}, $|\partial^{C}|\le\alpha\gamma\cdot|V(C)|$.
So this concludes the claim.\end{proof}
\begin{lem}
For level $i$, $|\Disjunion_{C:\textnormal{non-leaf, level-}i}E^{C}|\le n/(d-2)+\alpha\gamma n$.\end{lem}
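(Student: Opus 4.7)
The strategy is to split $\Disjunion_{C}E^{C}$ (union over level-$i$ non-leaf clusters) into its two constituent parts via \ref{prop:C-own edges}, and bound each part separately. Specifically, for each level-$i$ non-leaf cluster $C$ we have $E^{C}=E_{i}(C)\cup\partial^{C}$, and we will show $\sum_{C}|E_{i}(C)|\le n/(d-2)$ and $\sum_{C}|\partial^{C}|\le \alpha\gamma n$.

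First, I note that by construction, distinct level-$i$ clusters are vertex-disjoint (immediate from \ref{prop:basic cH}(1) and induction on the hierarchy), and hence their edge sets $E(C)$ are pairwise disjoint. This means the disjoint union in the statement is really a disjoint union as sets, and moreover that the sets $E_{i}(C)=E_{i}\cap E(C)$ are pairwise disjoint subsets of $E_{i}$. Consequently
\[
\sum_{C:\text{non-leaf, level-}i}|E_{i}(C)|\le|E_{i}|.
\]
The plan is then to bound $|E_{i}|$ using that the edge weights are distinct integers in $\{1,\dots,m\}$ and that $E_{i}$ lies in a weight window of length $m/d'$ (with $d'=d-2$). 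This gives $|E_{i}|\le m/d'$, which together with the max-degree-$3$ hypothesis (so $m\le 3n/2$) yields the desired order of $n/(d-2)$; the $3/2$ constant can be absorbed into the $\alpha\gamma n$ slack term or handled by a slightly tighter integer count.

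For the boundary edges, I would use the output guarantee of $\textsf{ExpDecomp}$ invoked in Step 2.a of \ref{alg:MSF decomp build}. By Property~2 of \ref{thm:expdecomp respect}, the partition $\cQ$ of $V(C)$ satisfies $|\partial^{C}|\le\alpha\gamma|V(C)|$. Summing over the vertex-disjoint level-$i$ non-leaf clusters,
\[
\sum_{C:\text{non-leaf, level-}i}|\partial^{C}|\le\alpha\gamma\sum_{C}|V(C)|\le\alpha\gamma n.
\]
Adding the two bounds gives the claim.

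The proof is essentially bookkeeping once \ref{prop:C-own edges} is available, so there is no genuine obstacle. The only mild subtlety is confirming the $n/(d-2)$ constant in bounding $|E_{i}|$, which relies on the max-degree-$3$ assumption to convert between $m$ and $n$; this is a routine calculation rather than a substantive difficulty.
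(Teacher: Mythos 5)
Your proof is correct and follows essentially the same route as the paper: split $E^{C}=E_{i}(C)\cup\partial^{C}$ via \ref{prop:C-own edges}, bound the first part by $|E_{i}|$ using node-disjointness of level-$i$ clusters, and bound the second by Property 2 of \ref{thm:expdecomp respect}. Your extra care about $|E_{i}|\le m/d'$ versus $n/d'$ is warranted (the paper simply asserts $|E_{i}|\le n/d'$, which is off by the factor $m/n\le 3/2$), though note the surplus cannot literally be absorbed into $\alpha\gamma n$ — with the parameters used later $\alpha\gamma n\ll n/(d-2)$ — and is harmless only because the lemma is only ever invoked to conclude an $O(n/\gamma)$ bound.
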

\begin{proof}
Let $\cC_{i}$ be the set of level-$i$ non-leaf clusters. By \ref{prop:C-own edges},
$\Disjunion_{C\in\cC_{i}}E^{C}=\Disjunion_{C\in\cC_{i}}E_{i}(C)\cup\partial^{C}\subseteq E_{i}\cup\Disjunion_{C\in\cC_{i}}\partial^{C}$.
We have $|E_{i}|\le n/d'=n/(d-2)$. Also, by \ref{thm:expdecomp respect},
$|\partial^{C}|\le\alpha\gamma\cdot|V(C)|$ and hence $|\Disjunion_{C\in\cC_{i}}\partial^{C}|\le\alpha\gamma n$
because any two level-$i$ clusters $C$ and $C'$ are node-disjoint.\end{proof}
\begin{lem}
\label{lem:respect MSF}For any cluster $C\in\cH$ and any set of
edges $D$, 
\[
\msf(C-D)=\Disjunion_{C':\textnormal{child of }C}\msf(C'-D)\disjunion(\msf(C-D)\cap(E^{C}-D)).
\]
\end{lem}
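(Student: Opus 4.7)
The plan is to reduce the claim to proving $\msf(C-D) \cap E(C') = \msf(C'-D)$ for every child $C'$ of $C$. Since $E^C = E(C) - \Disjunion_{C'} E(C')$ by construction, we have the disjoint partition $E(C-D) = (E^C-D) \disjunion \Disjunion_{C'}(E(C')-D)$; intersecting with $\msf(C-D)$ and using the per-child identity yields exactly the claimed equation (the union on the right is disjoint because children have pairwise disjoint vertex sets and $E^C$ is disjoint from every $E(C')$). The forward inclusion $\msf(C-D) \cap E(C') \subseteq \msf(C'-D)$ follows routinely from \ref{fact:sparsify} applied with $E_1 = E(C')-D$ and $E_2 = E(C-D) \setminus E_1$: we get $\msf(C-D) \subseteq \msf(C'-D) \cup \msf(E_2)$, and intersecting with $E(C')$ kills the second term because $E_2 \cap E(C') = \emptyset$.

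The reverse inclusion $\msf(C'-D) \subseteq \msf(C-D)$ is the substantive step, and I plan to prove it by the cycle property. Let $C$ sit at level $i$ and suppose for contradiction some $e \in \msf(C'-D)$ fails to lie in $\msf(C-D)$. Then there is a cycle $\Gamma$ in $C-D$ through $e$ on which $e$ is uniquely $w'$-heaviest; $\Gamma$ cannot lie entirely in $C'-D$ (else $e \notin \msf(C'-D)$), and because distinct children of $C$ have disjoint vertex sets, $\Gamma$ must traverse at least one edge $f \in E^C$. Step 2.d of \ref{alg:MSF decomp build} guarantees $w'(f) \geq m - i\,m/d' + 0.5$, so it suffices to show $w'(e) \leq m - i\,m/d'$, which contradicts $e$ being $\Gamma$-heaviest.

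To bound $w'(e)$, I use \ref{thm:edge of level-i}, which gives $E(C') \subseteq E_{\geq i+1} \cup E(\cC_M)$, and split into two cases. If $e \in E_{\geq i+1}$, then $w(e) \leq m - i\,m/d'$; moreover every reweighting that can touch $e$ occurs at some level $i'' \geq i+1$, so the threshold $m - i''\,m/d' + 0.5$ is at most $m - (i+1)\,m/d' + 0.5 < m - i\,m/d'$, and consequently $w'(e) \leq m - i\,m/d'$ as desired. If instead $e \in E(\cC_M)$, I plan to use \ref{prop:basic cH}(2): every cluster vertex set respects $\cP_M$, so $M$-cluster edges are never split across siblings and thus never lie in any $\partial^{C''}$; combined with $E_{i''} \cap E(\cC_M) = \emptyset$ by definition, this forces $M$-cluster edges to be $C''$-own only for leaf $C''$, hence never reweighted, so $w'(e) = w(e)$. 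Since $w' \geq w$ pointwise and $e$ is uniquely $w'$-heaviest on $\Gamma$, it is also uniquely $w$-heaviest on $\Gamma$, which is a cycle of $G$; the cycle property in $(G,w)$ then yields $e \notin \msf(G,w)$, contradicting $e \in E(\cC_M) \subseteq \msf(G,w)$.

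The main obstacle is precisely this $M$-cluster-edge subcase: such edges carry their original (possibly large) weights, so the numerical weight-separation that disposes of $E_{\geq i+1}$ edges does not apply. The resolution is to leverage $\cC_M$-membership in $\msf(G)$ together with the monotonicity $w' \geq w$, transferring the failed cycle-property test back into the original graph $G$, where it contradicts $M$-membership. All other steps are essentially bookkeeping using \ref{fact:sparsify} and the definition of $E^C$.
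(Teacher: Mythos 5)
Your proof is correct, but it takes a genuinely different route from the paper's. The paper simulates Kruskal's algorithm on $C-D$: the $M$-cluster edges inside the children are forced into the forest (they lie in $\msf(G)$ and are never reweighted), the remaining child edges lie in $E_{\ge i+1}$ and hence, by the weight separation of Step 2.d, are all scanned before any $C$-own edge, so after processing the child edges Kruskal has built exactly $\Disjunion_{C'}\msf(C'-D)$ and only $E^{C}$ remains to be scanned. You instead reduce to the per-child identity $\msf(C-D)\cap E(C')=\msf(C'-D)$, obtain one inclusion from \ref{fact:sparsify}, and prove the other by the cycle property: a cycle witnessing $e\notin\msf(C-D)$ must cross $E^{C}$, and this is incompatible either with the weight separation (when $e\in E_{\ge i+1}$, where you correctly note that any reweighting of $e$ happens at level $\ge i+1$ and so cannot lift it above the level-$i$ threshold) or with $e\in E(\cC_{M})\subseteq\msf(G)$ together with $w'=w$ on $E(\cC_{M})$ (when $e$ is an $M$-cluster edge). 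Both proofs thus rest on the same two structural facts --- \ref{thm:edge of level-i} combined with Step 2.d, and the unreweighted $M$-cluster edges sitting inside $\msf(G)$ --- but your exchange argument makes the uniformity in the deleted set $D$ completely explicit, whereas the paper dispatches $D$ with a one-sentence remark; the price is somewhat more bookkeeping. One caveat you share with the paper: the reweighting can assign the value $m-i\frac{m}{d'}+0.5$ to several edges, so ``uniquely heaviest'' (and likewise the paper's Kruskal scan order) tacitly assumes a consistent tie-breaking rule for $w'$.
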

\begin{proof}
We only prove that $\Disjunion_{C':\textnormal{child of }C}\msf(C')\disjunion(\msf(C)\cap E^{C})=\msf(C)$.
The lemma follows by observing that the argument holds true even when
the set of edges $D$ are removed from all clusters. We write
\begin{eqnarray*}
E(C) & = & E^{C}\disjunion\Disjunion_{C':\textnormal{child of }C}E(C')\\
 & = & E^{C}\disjunion(\Disjunion_{C':\textnormal{child of }C}E(C')\cap E(\cC_{M}))\disjunion(\Disjunion_{C':\textnormal{child of }C}E(C')-E(\cC_{M})).
\end{eqnarray*}

First, we know that edges in $\Disjunion_{C':\textnormal{child of }C}E(C')\cap E(\cC_{M})$
are tree-edges in $\msf(C)$ because $E(\cC_{M})\subseteq\msf(G)$
and $C$ is a subgraph of $G$ where some edges not in $E(\cC_{M})$
have their weight increased. This means that we can construct $\msf(C)$
using an instance $I$ of Kruskal's algorithm where the initial forest
is the edges in $\Disjunion_{C':\textnormal{child of }C}E(C')\cap E(\cC_{M})$. 

Next, suppose that $C$ is a level-$i$ cluster. So $\Disjunion_{C':\textnormal{child of }C}E(C')-E(\cC_{M})\subseteq E_{\ge i+1}$
by \ref{thm:edge of level-i}. By Step 2.d, any $C$-own edge $e\in E^{C}$
is heavier than any $E_{\ge i+1}$. So the instance $I$ will scan
all edges in $\Disjunion_{C':\textnormal{child of }C}E(C')-E(\cC_{M})$
before any edges in $E^{C}$. After finishing scanning $\Disjunion_{C':\textnormal{child of }C}E(C')-E(\cC_{M})$,
$I$ has constructed, as a part of $\msf(C)$, the following: 
\begin{align*}
 & \msf\left((\Disjunion_{C':\textnormal{child of }C}E(C')\cap E(\cC_{M}))\disjunion(\Disjunion_{C':\textnormal{child of }C}E(C')-E(\cC_{M}))\right)\\
 & =\msf(\Disjunion_{C':\textnormal{child of }C}E(C'))\\
 & =\Disjunion_{C':\textnormal{child of }C}\msf(C') & \mbox{as }C'\mbox{'s are node disjoint}.
\end{align*}
Since there are only the edges in $E^{C}$ that $I$ have not scanned
yet, this means that 
\[
\Disjunion_{C':\textnormal{child of }C}\msf(C')\disjunion(\msf(C)\cap E^{C})=\msf(C).
\]
\end{proof}
\begin{lem}
The $\msf$ decomposition algorithm $\msfdecomp(G,p,\alpha,d,s_{low},s_{high})$
runs in time $\tilde{O}(nd\gamma\log\frac{1}{p})$.\label{lem:MSF decomp runtime}\end{lem}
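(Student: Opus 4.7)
The plan is to decompose the total cost into (i) a one-time preprocessing phase and (ii) the recursion driven by $\msfbuild$, and bound each by $\tilde{O}(nd\gamma\log\tfrac{1}{p})$.

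For the preprocessing phase, I would note that since $G$ has max degree $3$, we have $m = O(n)$, so computing $M = \msf(G)$ by Kruskal or Prim takes $O(n\log n)$ time; applying Frederickson's grouping (\ref{lem:group}) to $M$ with parameter $s_{low}$ costs another $O(n)$; and bucketing the edges $E \setminus E(\cC_M)$ into the $d'$ weight classes $E_1,\dots,E_{d'}$ takes $O(n)$ time. All of this is comfortably within the claimed bound.

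For the recursion, I would chargeeach call $\msfbuild(C,\cP,i)$ primarily to Step~2(a), the invocation of $\textsf{ExpDecomp}(C,\cP,\alpha,p/2n)$; by \ref{thm:expdecomp respect} this runs in time $O(|V(C)|\gamma\log(2n/p)) = \tilde{O}(|V(C)|\gamma\log\tfrac{1}{p})$. The remaining work inside the call (Steps~2(b)--2(d): constructing child clusters $C^j$, identifying $E^C$, and rewriting weights on $E^C$) is linear in $|E(C)|$, and since every edge $e$ appears in $E^C$ for exactly one cluster $C$ (Fact \ref{fact:C-edge partition}), the total cost of these steps across the \emph{entire} recursion is only $O(m) = O(n)$.

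The key structural observation is that at any fixed depth $i$, the clusters are pairwise vertex-disjoint, so $\sum_{C \text{ at level } i} |V(C)| \le n$. Therefore the total time spent on $\textsf{ExpDecomp}$ calls at level $i$ is $\tilde{O}(n\gamma\log\tfrac{1}{p})$. Summing over levels $i = 1,\dots,d$ (valid by \ref{lem:msfdecomp depth}, which bounds the depth of $\cH$ by $d$) yields $\tilde{O}(nd\gamma\log\tfrac{1}{p})$, as required. I do not foresee any real obstacle here: the only point that requires care is to ensure the $\log$ factors from $\textsf{ExpDecomp}$ are absorbed into the $\tilde{O}(\cdot)$ and that the recursion really does have depth $d$, both of which are already in place.
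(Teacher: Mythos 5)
Your proposal is correct and follows essentially the same argument as the paper: the cost is dominated by the $\textsf{ExpDecomp}$ calls, each costing $\tilde{O}(|V(C)|\gamma\log\frac{n}{p})$, and node-disjointness of same-level clusters plus the depth bound $d$ give the total. One small imprecision: Steps 2(b)--2(d) cost $O(|E(C)|)$ per cluster and an edge can lie in $E(C)$ for up to $d$ nested clusters (Fact~\ref{fact:C-edge partition} only partitions the \emph{$C$-own} edges), so their total is $O(md)=O(nd)$ rather than $O(m)$ — still comfortably within the claimed bound.
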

\begin{proof}
The bottleneck is the time for calling $\msfbuild(G,\cP_{M},1)$.
For any level-$i$ cluster $C\in\cH$, when $\msfbuild(C,\cP,i)$
is called, this takes time $\tilde{O}(\alpha\gamma|V(C)|\log\frac{n}{p})$
excluding the time in the further recursion. Again, the lemma follows
because the recursion depth is at most $d$ by \ref{lem:msfdecomp depth},
and, for any depth $i$, any two level-$i$ clusters $C$ and $C'$
are node-disjoint.
\end{proof}
\ref{prop:basic cH}, \ref{lem:msfdecomp depth}, \ref{lem:bound reweight}
and \ref{lem:respect MSF} concludes the correctness of \ref{thm:MSF decomposition}.
\ref{lem:MSF decomp runtime} bounds the running time.

\section{Dynamic $\protect\msf$ Algorithm\label{sec:Dynamic MSF}}

In this section, we prove the main theorem:

\begin{thm}
\label{thm:dyn MST final}There is a fully dynamic $\msf$ algorithm
on an $n$-node $m$-edge graph that has preprocessing time $O(m^{1+O(\sqrt{\log\log m/\log m})}\log\frac{1}{p})=O(m^{1+o(1)}\log\frac{1}{p})$
and worst-case update time $O(n^{O(\log\log\log n/\log\log n)}\log\frac{1}{p})=O(n^{o(1)}\log\frac{1}{p})$
with probability $1-p$.
\end{thm}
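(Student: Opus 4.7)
The plan is to construct a recursive fully-dynamic MSF algorithm that, at every level of recursion, uses the three main components developed in the paper: the $\msf$ decomposition from \Cref{thm:MSF decomposition}, the dynamic expander pruning with failure detection from \Cref{thm:pruning detect failure}, and the reduction from \Cref{thm:reduc restricted dec} that turns the ``few non-tree edges'' setting into the ``few edges on fewer nodes'' setting. The recursion depth will be chosen as some slowly-growing function $L=L(n)$, and the total update cost will be the product of $L$ with the per-level cost.

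\emph{Preprocessing.} Given an input graph $G$ of size $m$, first reduce to maximum degree $3$ by a standard vertex split (keeping weights preserved on the tree edges and adding infinite-weight edges on the split gadgets). Then invoke $\msfdecomp(G,p',\alpha,d,s_{low},s_{high})$ with $\alpha=1/\gamma^3$, $d=\gamma$, $s_{low}=\gamma$, $s_{high}=n/\gamma$, and $p'=\Theta(p/n)$, obtaining the reweighted graph $G'$ and the hierarchy $\cH$. By property~(3) of \Cref{thm:MSF decomposition}, $\msf(G')$ decomposes along $\cH$ as
\[
  \msf(G')=\bigdot_{C\in\cH}\bigl(\msf(C)\cap E^{C}\bigr),
\]
and this identity is preserved under arbitrary deletions (with insertions the reweighted graph absorbs new edges at the top). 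For each non-leaf cluster $C\in\cH$, property~(8) says $\phi(C)=\Omega(\alpha/s_{low})=1/n^{o(1)}$, so each such $C$ is an expander at the relevant scale; initialize an instance of the dynamic expander pruning from \Cref{thm:pruning detect failure} on each cluster to maintain a pruning set $P_C$. At the top we glue the per-cluster answers via an ``auxiliary'' graph whose vertices are the clusters and whose edges are the $C$-own edges from property~(3); this auxiliary graph has few non-tree edges, exactly the setting of \Cref{thm:reduc restricted dec}.

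\emph{Update.} Upon each edge update, identify the unique leaf cluster $C$ containing the edge (insertions go to the top level and are forwarded via the reweighting), feed the deletion to the corresponding pruning instance, and enlarge the pruning set $P_C$. Use \Cref{lem:non-tree cover} to maintain the $\msf$ restricted to $C$ after removing a constant-size neighborhood around $P_C$ (so that every non-tree edge has exactly one endpoint in the pruned set). The residual problem is to maintain $\msf$ on the union of (i) the $\msf$'s of the pruned complements $G[V(C)-P_C]$ (each connected by the pruning guarantee of \Cref{thm:pruning detect failure}), (ii) the $C$-own edges $E^C$ at all levels, and (iii) a small set of edges incident to pruned vertices. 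Properties~(2), (6), (7) of \Cref{thm:MSF decomposition} bound the number of edges in this residual graph by $n/(d-2)+\alpha d\gamma n=o(n)$, and the number of non-tree edges is bounded by the same quantity. Hence we can apply the reduction of \Cref{thm:reduc restricted dec} with $k=n^{1-1/L}$ to produce a \emph{smaller} graph of $O(k)$ edges on $O(k)$ vertices, and recursively apply the whole dynamic MSF algorithm to it. If at any point the pruning algorithm reports failure, we fall back to re-running the preprocessing on the offending cluster (by Las Vegas analysis this happens with negligible total probability).

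\emph{Correctness and complexity.} Correctness follows by induction on the recursion level from \Cref{lem:respect MSF}, combined with the pruning-set invariant of \Cref{thm:pruning detect failure} ensuring $G_\tau[V-W_\tau]$ stays connected whenever the starting cluster had the promised expansion, and with the reduction's own correctness. For the running time, let $T(n,p)$ denote the update time at problem size $n$ with failure probability $p$. Each level of the decomposition contributes $n^{O(1/L+\epsilon\ell^\ell)}$ from the pruning (with $\ell=\sqrt{\log(1/\epsilon)/\log\log(1/\epsilon)}$, as in the proof of \Cref{thm:dynamic pruning subpoly}), and the reduction in \Cref{thm:reduc restricted dec} shrinks the size by a factor of $n^{1/L}$ at a cost bounded by $\mathrm{polylog}(n)\cdot T(n^{1-1/L},p')$. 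Thus
\[
  T(n,p)\;=\;n^{O(1/L+\epsilon\ell^\ell)}\log\tfrac{1}{p}\;+\;\mathrm{polylog}(n)\cdot T\!\left(n^{1-1/L},\,\tfrac{p}{\log n}\right).
\]
Choosing $L=\sqrt{\log n/\log\log n}$ and $\epsilon=1/L$ balances the two terms and, after unrolling the recursion of depth $L$, yields $T(n,p)=n^{O(\log\log\log n/\log\log n)}\log\frac{1}{p}$. The preprocessing time telescopes to $m^{1+O(\sqrt{\log\log m/\log m})}\log\frac{1}{p}$ by the same recurrence applied to $\msfdecomp$'s $\tilde{O}(nd\gamma)$ cost (\Cref{lem:MSF decomp runtime}) together with the top-tree initialization.

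\emph{Main obstacle.} The principal difficulty is the parameter tuning that simultaneously (i) keeps the conductance guarantee $\alpha/s_{low}$ large enough for the dynamic pruning of \Cref{thm:pruning detect failure} to apply at every recursion level (this constrains the admissible $\epsilon(n)=o(1)$), (ii) keeps the number of reweighted edges and cross-edges in \Cref{thm:MSF decomposition} small enough that the residual graph fits the ``few non-tree edges'' hypothesis of \Cref{thm:reduc restricted dec}, and (iii) ensures that the shrinkage factor $n^{1/L}$ per recursive call dominates the per-level overhead. A secondary subtlety is that the Las Vegas guarantee must be preserved through the recursion: the $\log\frac{1}{p}$ factor is pushed through by scaling failure probabilities by the recursion depth and using the failure-detection mechanism of \Cref{thm:pruning detect failure} to trigger rebuilds only when a genuine violation is observed.
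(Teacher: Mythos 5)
Your high-level architecture matches the paper's: MSF decomposition, per-cluster dynamic expander pruning, the contraction-based reduction of \Cref{thm:reduc restricted dec}, and a recursion of depth roughly $\sqrt{\log n/\log\log n}$ with shrinkage factor $\gamma=n^{\Theta(\sqrt{\log\log n/\log n})}$ per level. However, there is a genuine gap at the heart of the update step: you never construct the \emph{compressed clusters} and the \emph{sketch graph}, and the place where you invoke \Cref{lem:non-tree cover} would not go through as written. \Cref{lem:non-tree cover} requires that every non-tree edge have exactly one endpoint in a small fixed set $S$ of nodes. In the paper that set is the collection of \emph{super nodes} obtained by contracting each large child cluster of $C$ into a single vertex; it is emphatically not the pruning set $P_C$, which grows over time and whose incident edges are instead simply dumped into the sketch graph as $\incident^{C}(P^{C})$. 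Your proposal applies \Cref{lem:non-tree cover} ``after removing a constant-size neighborhood around $P_C$,'' which provides no such covering set: the non-tree edges of $G[V(C)-P_C]$ are spread over the whole cluster. The contraction of large children, together with the three-way split of the $C$-own edges $E^{\overline{C}}=E_1^{\overline{C}}\disjunion E_2^{\overline{C}}\disjunion E_3^{\overline{C}}$ by the number of endpoints on super nodes, is exactly what routes each piece to the right data structure ($\cA_{few}$ recursively for $E_1$, \Cref{lem:non-tree cover} for $E_2$, the $\tilde O(n)$-update multigraph algorithm of \Cref{lem:MSF in multigraph} on an $O(\gamma)$-node graph for $E_3$). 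Without this, your ``residual problem'' has neither the few-non-tree-edges structure nor the covering structure you need.

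A second, related gap is in where the recursion actually lives. In the paper the recursive calls to the size-$m$ algorithm occur in two places: once on each \emph{leaf (small) cluster}, which has at most $s_{high}=n/\gamma$ edges, and once inside each instance of $\cA_{few}$, which by \Cref{thm:reduc restricted dec} contracts down to a graph with $O(k)$ edges where $k$ is the non-tree-edge count. The recurrence $t_{pre}(m,p)=\tilde O(m\gamma^2\log\frac1p)+O(\gamma)\cdot t_{pre}(O(m/\gamma),O(p/\log m))$ counts the $\gamma$ small clusters plus the per-level compressed-cluster instances; your version of the recurrence, which posits a single recursive call of size $n^{1-1/L}$, does not account for the fan-out and so does not justify the claimed preprocessing bound. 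Finally, your fallback of ``re-running the preprocessing on the offending cluster'' upon a pruning failure is not what the Las Vegas analysis supports: a pruning failure may certify that the cluster was never an expander (a failure of the decomposition itself), and the paper restarts the entire algorithm, charging this to the $O(\gamma p)$ per-step failure probability of \Cref{lem:fail low prob}.
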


By using a standard reduction or a more powerful reduction from \ref{thm:reduc restricted dec},
it is enough to show the following:

\begin{lem}
\label{lem:dec MST final}There is a decremental $\msf$ algorithm
$\cA$ on an $n$-node $m$-edge graph $G$ with max degree $3$ undergoing
a sequence of edge deletions of length $T=\Theta(n^{1-O(\log\log\log n/\log\log n)})$.
$\cA$ has preprocessing time $O(n^{1+O(\sqrt{\log\log n/\log n})}\log\frac{1}{p})$
and worst-case update time $O(n^{O(\log\log\log n/\log\log n)}\log\frac{1}{p})$
with probability $1-p$. 
\end{lem}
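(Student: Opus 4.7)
My plan is to compose the three improved subroutines developed earlier in the paper and to apply the reduction of Section~\ref{sec:contraction} \emph{recursively}. The overall framework mirrors that of Wulff-Nilsen~\cite{Wulff-Nilsen16a}, but the critical difference is that the dynamic $\msf$ algorithm on ultra-sparse graphs is instantiated by the very algorithm we are constructing, not by a 2-dimensional topology tree.

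\textbf{Setup and preprocessing.} Fix a shrinkage parameter $\gamma = n^{o(1)}$ to be chosen at the end. Invoke $\msfdecomp$ (Theorem~\ref{thm:MSF decomposition}) with $\alpha = 1/\gamma^{c}$ for a suitable constant $c$, depth $d = \gamma$, $s_{low} = \gamma$, and $s_{high} = n/\gamma$. This yields a re-weighted graph $G'$ (agreeing with $G$ on all but $O(\alpha d \gamma n) = o(n)$ edges) together with a hierarchical decomposition $\cH$ of depth $\le \gamma$ whose non-root clusters have conductance $\Omega(\alpha/s_{low})$, which will satisfy the hypothesis of Theorem~\ref{thm:pruning detect failure} at every level. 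The re-weighted edges are absorbed at the root as a small set of extra non-tree edges, so it suffices to maintain $\msf(G')$ throughout.

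\textbf{Running the hierarchy.} For each non-leaf cluster $C \in \cH$ I launch one instance of the dynamic pruning algorithm of Theorem~\ref{thm:pruning detect failure} on $C$; as $G$ undergoes deletions, $C$ inherits its relevant subsequence. Whenever a vertex $v$ is newly added to the pruning set $P_C$, I detach $v$ from $C$ and \emph{push} its (at most $3$) incident edges up to the parent cluster $C^{par}$, where they are inserted as a batch into a \emph{representative graph} $\overline{C^{par}}$. Thus $\overline{C}$ consists of the $C$-own edges $E^{C}$ plus the interface edges received from children's pruning sets. By Property~3 of Theorem~\ref{thm:MSF decomposition}, $\msf(G') = \bigsqcup_{C \in \cH}\,(\msf(\overline{C}) \cap E^{C})$, and this identity is preserved under edge deletions. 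So maintaining $\msf(\overline{C})$ at every cluster is sufficient.

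\textbf{Recursive maintenance of each $\overline{C}$.} By Property~7 of Theorem~\ref{thm:MSF decomposition} combined with the bound on $|P_C|$ from Theorem~\ref{thm:pruning detect failure}, the number of non-tree edges of $\overline{C}$ is at most $k_C = O(n_C/\gamma)$, where $n_C = |V(C)|$. Moreover, $\overline{C}$ undergoes at most one batch insertion of size $B = O(1)$ per level per outer deletion. I therefore apply Theorem~\ref{thm:reduc restricted dec} with parameters $(n_C, k_C, B)$ to reduce maintaining $\msf(\overline{C})$ to a decremental $\msf$ problem on a graph with only $O(k_C) = O(n_C/\gamma)$ edges and max degree $3$ (re-establishing bounded degree by a standard constant-degree transformation on each sub-instance). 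Into this inner decremental slot I plug the very algorithm I am defining, at a smaller parameter. Since the sub-instance has $n_C/\gamma$ nodes, the recursion shrinks by a factor of $\gamma$ per level and terminates after $O(\log n / \log \gamma)$ levels, where the base case is handled by the $\tilde{O}(n)$-update-time algorithm of Lemma~\ref{lem:MSF in multigraph}.

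\textbf{Running time and Las Vegas guarantee.} Writing $T(n)$ for the worst-case per-update time, Theorem~\ref{thm:reduc restricted dec} together with the $n^{o(1)}$ costs of pruning and of the decomposition-level bookkeeping gives the recurrence
\[
T(n) \;\le\; O(\log n)\cdot T(O(n/\gamma)) \;+\; n^{o(1)} \cdot \log(1/p),
\]
which unrolls to $T(n) = \gamma^{O(\log n / \log \gamma)} \cdot n^{o(1)} \cdot \log(1/p)$; choosing $\gamma$ around $n^{\Theta(\log\log\log n/\log\log n)}$ balances the recursion and yields the target $T(n) = n^{O(\log\log\log n/\log\log n)}\log(1/p)$. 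A parallel analysis for preprocessing gives $P(n) = n^{1 + O(\sqrt{\log\log n/\log n})}\log(1/p)$ (the square-root form comes from the $\gamma = n^{O(\sqrt{\log\log n/\log n})}$ factor in $\msfdecomp$ itself, Theorem~\ref{thm:MSF decomposition}). For the Las Vegas guarantee, whenever any instance of Theorem~\ref{thm:pruning detect failure} reports failure, I restart preprocessing; a union bound over the $\poly(n)$ pruning instances, each with failure probability $p'$, keeps the global failure probability below $p$.

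\textbf{Main obstacle.} The hardest part is the quantitative bookkeeping across the recursion. I must certify simultaneously at every recursion level that (i) the pruning-set growth rate $|P_C|$ never exceeds the budget used to bound $k_C$, so that the batch-insertion size $B$ propagated to the parent stays $O(1)$ (this requires that the per-step \emph{push-up} from all cluster-levels at a given recursion depth telescope, which uses max-degree $3$ and the $d$-bound on depth), (ii) the conductance threshold $\alpha_0(n)$ required by Theorem~\ref{thm:pruning detect failure} is met at every recursive sub-instance (forcing $\alpha$ to be chosen noticeably larger than $n^{-o(1)}$ at every level), and (iii) the constant-degree transformation used before each recursive call does not blow up $n_C$ by more than a constant factor. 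Once these invariants are verified, the final bounds follow by summing a geometric series in $\gamma$.
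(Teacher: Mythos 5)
Your high-level architecture matches the paper's: $\msfdecomp$ with the same parameter choices, dynamic pruning per large cluster, contraction of large children, and recursive use of \ref{thm:reduc restricted dec} so that the instance size shrinks by $\gamma$ per recursion level. But there are two genuine gaps. First, your structural identity $\msf(G')=\Disjunion_{C}(\msf(\overline{C})\cap E^{C})$ is not correct as stated: the compressed cluster discards the $C$-own edges incident to the pruning set $P^{C}$, and some of these may be edges of $\msf(G')$. What actually holds is only a one-sided containment, $\msf(C)\cap E^{C}\subseteq\bigcup_{i}\msf(\overline{C}_{i}(P^{C}))\cup\incident^{C}(P^{C})$, whose proof needs the existence of a connected $W^{C}\subseteq P^{C}$ from \ref{thm:pruning detect failure} together with \ref{fact:sparsify} and \ref{fact:contract}; consequently the edges $\incident^{C}(P^{C})$ must be explicitly retained as extra non-tree edges of a global sketch graph rather than silently dropped or merely ``pushed to the parent.'' Your push-to-parent device also interacts badly with Property~3 of \ref{thm:MSF decomposition}, which depends on $C$-own edges being heavier than all child edges.

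Second, and more quantitatively serious, your recurrence $T(n)\le O(\log n)\,T(O(n/\gamma))+n^{o(1)}$ is not justified by your construction. Each deletion in $G$ causes the pruning sets to grow by up to $\pi=n^{\Theta(\log\log\log n/\log\log n)}$ nodes in each of up to $d=\gamma$ clusters, so the compressed clusters suffer $\Theta(\pi\gamma)$ edge changes per step, not the $O(1)$ batch you assert in your ``main obstacle'' paragraph. If all of these were fed into recursive instances, the multiplicative branching factor per recursion level would be $\Theta(\pi\gamma)$ rather than $O(\log n)$, and $(\pi\gamma)^{\Theta(\sqrt{\log n/\log\log n})}$ is not $n^{o(1)}$. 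The paper avoids this by splitting $E^{\overline{C}}$ into $E_{1}^{\overline{C}},E_{2}^{\overline{C}},E_{3}^{\overline{C}}$ according to how many endpoints touch super nodes: $E_{1}^{\overline{C}}$ is independent of $P^{C}$, so the recursive structure $\cA_{few}(\overline{C}_{1})$ sees only $O(1)$ updates per step, while the $\Theta(\pi\gamma)$ volatile updates land in $\overline{C}_{2},\overline{C}_{3}$, which are handled by the non-recursive $\tilde{O}(\gamma)$-time algorithms of \ref{lem:non-tree cover} and \ref{lem:MSF in multigraph}; removed-but-not-deleted edges are parked as junk edges so that the top-level $\cA_{few}(H,B,p)$ receives only $2$ deletions plus one batch insertion of size $B=O(\pi\gamma)$, whose cost is additive. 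Without some equivalent of this three-way split your recursion does not close.
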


We note that essentially all the ideas in this section, in particular
the crucial definition of \emph{compressed clusters}, already appeared
in Wulff-Nilsen \cite{Wulff-Nilsen16a}. In this section, we only
make sure that, with our improved tools from previous sections, we
can integrate all of them using the same approach as in \cite{Wulff-Nilsen16a}.
Obviously, the run time analysis must change because our algorithm
is faster and need somewhat more careful analysis. Although the correctness
follows as in \cite{Wulff-Nilsen16a}, the terminology changes a bit
because $\msf$ decomposition from \ref{thm:MSF decomposition} is
presented in a more modular way. 

The high-level idea in \cite{Wulff-Nilsen16a} of the algorithm $\cA$
is simple. To maintain $\msf(G)$, we maintain a graph $H$, called
the \emph{sketch graph}, where at any time $\msf(G)=\msf(H)$ and $H$
contains only few non-tree edges with high probability. Then we just
maintain $\msf(H)$ using another algorithm for graphs with few non-tree
edges. 

\paragraph{Organization.}
The rest of this section is for proving \ref{lem:dec MST final}.
In \ref{sec:MSF_algorithm}, we describe the whole algorithm which combines all the tools from previous sections.
The preprocessing algorithm is in \ref{sec:MSF_preprocess} and the update algorithm is in \ref{sec:MSF_update}. 
We summarize all the main notations in \ref{table:def}.
Next, in \ref{sec:MSF_correct}, we show that the sketch graph $H$ is indeed maintained 
such that $\msf(G)=\msf(H)$ (shown in \ref{sec:sketch preserve}) 
and $H$ has few non-tree edges (shown in \ref{sec:sketch sparse}). 
Note that this implies that $\msf(G)$ is correctly maintained.
Lastly, we analyze the running time in \ref{sec:MSF_time}. 
We first bound the preprocessing time in \ref{sec:time_pre}, the time
needed for maintaining the sketch graph $H$ itself in \ref{sec:time_sketch}, 
and the time needed for maintaining $\msf(H)$ in $H$ in \ref{sec:time_msf(H)}.
We put everything together and conclude the proof in \ref{sec:wrap up}.

\subsection{The Algorithm}\label{sec:MSF_algorithm}

For any number $m$ and $p\in(0,1)$, in this section, the goal is
to describe the decremental $\msf$ algorithm $\cA(m,p)$ for any
$m$-edge graph $G=(V,E,w)$ with max degree 3 such that $\cA(m,p)$
can handle $T(m)$ edge deletions and, with probability $1-p$, has
preprocessing and update time $t_{pre}(m,p)$ and $t_{u}(m,p)$. We
will show that $t_{pre}(m,p)=O(m^{1+O(\sqrt{\log\log m/\log m})}\log\frac{1}{p})$,
$t_{u}(m,p)=O(m^{O(\log\log\log m/\log\log m)}\log\frac{1}{p})$ and
$T(m)=\Theta(m^{1-O(\log\log\log m/\log\log m)})$. This will imply
\ref{lem:dec MST final}.

By induction on $m$, we assume that, for any $m_{0}\le m-1$ and
$p_{0}\in(0,1)$, we have obtained the decremental $\msf$ algorithm
$\cA(m_{0},p_{0})$ that can run on any $m_{0}$-edge graph $G_{0}$
with max degree $3$ undergoing a sequence edge deletions of length
$T(m_{0})$. Let $t_{pre}(m_{0},p_{0})$ and $t_{u}(m_{0},p_{0})$
denote the preprocessing and update time of $\cA$ on $G_{0}$, respectively,
that hold with probability at least $1-p_{0}$.

By this assumption, \ref{thm:reduc restricted dec} implies the following.
For any number $m_{1}$, $k_{1}$, $B_{1}$ and $p_{1}\in(0,1)$ where
$k_{1}\le(m-1)/15$, there is a fully dynamic algorithm $\cA_{few}(m_{1},k_{1},B_{1},p_{1})$
that can run on any (multi-)graph $G_{1}$ with at most $m_{1}$ edges
and at most $k_{1}$ non-tree edges. Moreover $\cA_{few}(m_{1},k_{1},B_{1},p_{1})$
can handle inserting a batch of edges of size $B_{1}$. Let $t{}_{pre}^{few}(m_{1},k_{1},B_{1},p_{1})$,
$t{}_{ins}^{few}(m_{1},k_{1},B_{1},p_{1})$, $t{}_{del}^{few}(m_{1},k_{1},B_{1},p_{1})$
denote the preprocessing time, the batch insertion time, and the deletion
time of $\cA_{few}(m_{1},k_{1},B_{1},p_{1})$ respectively, that hold
with probability at least $1-p_{1}$. Below, we will slightly abuse
notation. For any graph $G_{2}$, and parameters $B_{2}$ and $p_{2}$,
we denote $\cA_{few}(G_{2},B_{2},p_{2})$ as an instance of $\cA_{few}(m_{2},k_{2},B_{2},p_{2})$
running on $G_{2}$ with at most $m_{2}$ edges and $k_{2}$ non-tree
edges.

In the following subsections, we will first describe how we preprocess
the input graph $G$ for $\cA(m,p)$ in \ref{sec:MSF_preprocess}. In the process, we introduce
several definitions related to \emph{compressed clusters }which were
defined in \cite{Wulff-Nilsen16a} and will be the central definitions
of our algorithm. Then, we describe how we update in \ref{sec:MSF_update}.

\subsubsection{Preprocessing and Definitions Related to Compressed Clusters}
\label{sec:MSF_preprocess}

Let $n$ be the number of nodes in $G$. We can assume that $G$ is
initially connected otherwise we run the algorithm on each connected
component of $G$. So $n=\Theta(m)$ initially. Since we will handle
only $T(m)=o(m)$ edge deletions, we have $n=\Theta(m)$ at all time. 

Let $\gamma=n^{O(\sqrt{\log\log n/\log n})}$ be the factor from \ref{thm:MSF decomposition}.
We run the $\msf$ decomposition algorithm $\msfdecomp(G,\alpha,p,d,s_{low},s_{high})$
where $\alpha=1/\gamma^{3}$, $d=\gamma$, $s_{low}=\gamma$, and
$s_{high}=n/\gamma$. So we obtain a re-weighted graph $G'=(V,E,w')$
together with its hierarchical decomposition $\cH$ with conductance guarantee
$\alpha_{0}=\Omega(\alpha/s_{low})=\Omega(1/\gamma^{4})$. We denote
by $E^{\neq}=\{e\in E\mid w(e)\neq w'(e)\}$ the set of re-weighted
edges. Let $E^{\neq}(w)$ and $E^{\neq}(w')$ be the set of weighted
edges from $E^{\neq}$ where the weight of $e$ is $w(e)$ and $w'(e)$
respectively. By \ref{thm:MSF decomposition}, we know that a cluster
$C\in\cH$ is a leaf cluster iff $E(C)\le s_{high}$ (before any edge
deletion). For convenience, we call each leaf cluster a \emph{small
cluster} and non-leaf cluster a \emph{large cluster}. For any child
cluster $C'$ of $C$, we say $C'$ is a small (large) child of $C$
iff $C'$ is a small (large) cluster. 
\begin{prop}
Any cluster $C\in\cH$ has at most $O(n/s_{high})$ large child clusters.\label{fact:bound large children}
\begin{proof}
All the large children of $C$ are edge-disjoint and each of them
contains at least $s_{high}$ edges.
\end{proof}
\end{prop}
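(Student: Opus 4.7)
The plan is to proceed directly from the definition of a hierarchical decomposition (\ref{def:hie decomp}) together with Property 5 of \ref{thm:MSF decomposition}, which characterizes leaf versus non-leaf clusters by an edge-count threshold $s_{high}$.

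First, I would observe that since $C'_1,\dots,C'_k$ are the children of $C$ in $\cH$, their vertex sets partition $V(C)$, i.e.\ $V(C)=\Disjunion_i V(C'_i)$. In particular, the vertex sets of distinct children are pairwise disjoint, which immediately forces their edge sets $E(C'_i)$ to be pairwise disjoint as well (any edge of $C'_i$ has both endpoints in $V(C'_i)$, and no vertex lies in two different children).

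Next, I would invoke Property 5 of \ref{thm:MSF decomposition}: a child cluster $C'_i$ is \emph{large} (i.e.\ non-leaf) exactly when $|E(C'_i)|>s_{high}$. Since all large children are edge-disjoint subgraphs of $C$, the sum of their edge counts is bounded by $|E(C)|\le|E(G)|$. Because $G$ has maximum degree $3$, we have $|E(G)|=O(n)$. Hence the number of large children is at most
\[
\frac{|E(C)|}{s_{high}}\;\le\;\frac{O(n)}{s_{high}}\;=\;O(n/s_{high}).
\]

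No obstacle is anticipated here; this is purely a counting statement combining the vertex-disjointness built into \ref{def:hie decomp} with the size lower bound from the leaf/non-leaf dichotomy in \ref{thm:MSF decomposition}. The only subtlety worth double-checking is whether ``large'' is defined with respect to the initial graph (so that the threshold $|E(C'_i)|>s_{high}$ is available at the moment we count), which the excerpt explicitly clarifies just before the proposition.
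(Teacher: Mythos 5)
Your proof is correct and follows exactly the same route as the paper's one-line argument: the large children are pairwise edge-disjoint (since their vertex sets partition $V(C)$) and each has more than $s_{high}$ edges, so their number is at most $|E(C)|/s_{high}\le O(n)/s_{high}$ using the degree-$3$ bound. You have merely spelled out the details the paper leaves implicit, including the correct observation that "large" is judged by the initial edge count.
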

Let $G_{\smalltext}=\Disjunion_{C:\smalltext}C$. be the union of
all small clusters. We maintain the $\msf$ $M_{\smalltext}$ of $G_{\smalltext}$
by separately initializing $\cA(C,p)$ on each small cluster $C$. 
(Note that we $E(C) \le s_{high} \le m-1$ and so we can initialize $\cA$ on $C$ by our assumption.) 
For any large cluster $C$, let $M_{\smalltext}(C)=\Disjunion_{C':\textnormal{small child of }C}\msf(C')$.
As every small cluster has a unique parent and small clusters are
node-disjoint, we have the following: 
\begin{prop}
$M_{\smalltext}=\Disjunion_{C:\largetext}M_{\smalltext}(C)$. \label{fact:disjoint M_small(C) }
\end{prop}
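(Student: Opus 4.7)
The plan is to exploit two structural facts about the hierarchical decomposition $\cH$ produced by $\msfdecomp$: (i) every small cluster is a leaf of $\cH$ and has a (necessarily large) unique parent; and (ii) distinct small clusters have disjoint vertex sets. Both are already alluded to just after the statement, so the proposition reduces to verifying these and invoking the routine fact that $\msf$ distributes over a disjoint union of graph components.

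For (ii), I would do a one-line induction on depth in $\cH$: by \ref{def:hie decomp}, the vertex sets of the children of any cluster partition the vertex set of that cluster, so by induction the leaves (i.e.\ small clusters) lying in the subtree rooted at any cluster are pairwise node-disjoint. Applied at the root, this shows that $G_{\smalltext} = \Disjunion_{C':\smalltext} C'$ is a disjoint union of graphs, so
\[
  M_{\smalltext} \;=\; \msf(G_{\smalltext}) \;=\; \Disjunion_{C':\smalltext} \msf(C').
\]

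For (i), Property 5 of \ref{thm:MSF decomposition} together with the parameter choice $s_{high} = n/\gamma < n \le |E(G)|$ (using that $G$ is initially connected so $|E(G)|\ge n-1$, and $\gamma>1$) implies that the root cluster $G$ of $\cH$ is itself large, so every small cluster is non-root and has a unique parent in $\cH$. The parent of a leaf has at least one child and hence is itself non-leaf, i.e.\ large. Therefore the map $C' \mapsto \text{parent}(C')$ partitions the family of small clusters into the families $\{C' : C' \text{ small child of } C\}$ indexed by large $C \in \cH$. Regrouping the previous identity along this partition yields
\[
  M_{\smalltext} \;=\; \Disjunion_{C:\largetext}\; \Disjunion_{C':\text{ small child of }C} \msf(C') \;=\; \Disjunion_{C:\largetext} M_{\smalltext}(C),
\]
which is the claim. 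I do not foresee any real obstacle; the only edge case worth noting is that the root of $\cH$ could hypothetically be small, which is ruled out by the preprocessing parameters as above.
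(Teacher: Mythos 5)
Your proof is correct and follows exactly the route the paper takes: the paper justifies this proposition with the single observation that every small cluster has a unique (large) parent and that small clusters are node-disjoint, and your argument simply fleshes out those two facts (disjointness of leaves via the partition property of \ref{def:hie decomp}, and the root being large so that every leaf has a large parent) before regrouping the disjoint union by parent. The only nitpick is the chain $n/\gamma < n \le |E(G)|$, which should read $n/\gamma < n-1 \le |E(G)|$; this does not affect the conclusion.
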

For each large cluster $C\in\cH$ excluding the root cluster, we initialize
the dynamic pruning algorithm $\pruning$ on $C$ using \ref{thm:pruning detect failure}
with a conductance parameter $\alpha_{0}$. Whenever some edge in
$C$ is deleted, $\pruning$ will update a set $P_{0}^{C}\subseteq V(C)$
of nodes in $C$. Initially, $P_{0}^{C}=\emptyset$. Let $\pi$ denote
the update time of $\pruning$ on each $C$. Given that the sequence
of edge deletions in $C$ is at most $O(\alpha_{0}^{2}|E(C)|)$ (as
we will show later), as $\alpha_{0}=1/n^{O(\sqrt{\frac{\log\log n}{\log n}})}$,
we have 
\[
\pi=n^{O(\log\log(\sqrt{\frac{\log n}{\log\log n}})/\log(\sqrt{\frac{\log n}{\log\log n}}))} \log\frac{1}{p}
=n^{O(\log\log\log n/\log\log n)} \log\frac{1}{p}
\]
by \ref{thm:pruning detect failure}.

For each large cluster (including the root cluster) $C\in\cH$, it is
more convenient to define $P^{C}$ as a union of $P_{0}^{C'}$ over
all large child cluster $C'$ of $C$. That is, $P^{C}=\Disjunion_{C':\textnormal{large child of }C}P_{0}^{C'}$.
We call $P^{C}$ the \emph{total pruning set} of $C$. Recall the definition
of $C$-own edges $E^{C}$ from \ref{def:hie decomp}. For any set
$U\subseteq V(C)$, denote by $\incident^{C}(U)=\{(u,v)\in E^{C}\mid u\in U$
or $v\in U\}$ the set of $C$-own edges incident to $U$.

Now, we define an important definition called \emph{compressed clusters}.
\begin{defn}\label{def:compressed cluster}
For any large cluster $C$ and any set of nodes $U\subseteq\Disjunion_{C':\text{large child of }C}V(C')$,
the \emph{compressed cluster of $C$ with respect to $U$}, denoted
by $\overline{C}(U)$ is obtained from $C$ by 1) replacing  each
small child $C'$ by $\msf(C')$, and 2) contracting nodes in each
large child cluster $C'$ into a single node (called \emph{super node}),
and 3) removing edges (used to) incident to $U$, i.e. removing $\incident^{C}(U)$.
\end{defn}
For convenience, we define the contraction in the step 2 above such
that all $C$-own edges are preserved. That is, all the self loops
are removed except the ones which are $C$-own edges. Let $E^{\overline{C}(U)}=E^{C}-\incident^{C}(U)$
be the set of $C$-own edges which is not incident to $U$. The following
observation shows some basic structure of $\overline{C}(U)$:
\begin{prop}
\label{fact:edge of compressed}For any large cluster $C\in\cH$ and
$U\subseteq V(C)$, we have 
\begin{itemize}
\item $E(\overline{C}(U))=M_{\smalltext}(C)\disjunion E^{\overline{C}(U)}$,
and
\item $M_{\smalltext}(C)\subseteq\msf(\overline{C}(U))$.
\end{itemize}
\end{prop}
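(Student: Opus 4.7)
My plan for the first identity is to unravel \ref{def:compressed cluster} directly. Contracting each large child eliminates its internal edges as self-loops (retaining only the $C$-own ones by convention); replacing each small child $C'$ by $\msf(C')$ deletes exactly $E(C')\setminus\msf(C')$; and the final step removes $\incident^{C}(U)$. What survives is $M_{\smalltext}(C)$ inside the small children together with $E^{C}\setminus\incident^{C}(U)=E^{\overline{C}(U)}$, and these two sets are disjoint because $E^{C}$ is disjoint from every child's edge set by definition.

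For the second claim, I plan to assemble $\overline{C}(U)$ from $C$ in three successive steps and track $\msf$ across each. \textbf{Step A}: let $C_{1}=C-\incident^{C}(U)$. Since $\incident^{C}(U)\subseteq E^{C}$ is disjoint from each child's edge set, applying \ref{lem:respect MSF} with $D=\incident^{C}(U)$ yields $\msf(C_{1})\supseteq M_{\smalltext}(C)\cup\bigcup_{C^{*}\text{ large}}\msf(C^{*})$ and pins $\msf(C_{1})\cap E(C')=\msf(C')$ for every child $C'$. \textbf{Step B}: for each small child $C'$, remove $E(C')\setminus\msf(C')$ from $C_{1}$; the previous identity shows these deleted edges were not in $\msf(C_{1})$, so the MSF is unchanged. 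Call the result $C_{2}$. \textbf{Step C}: contract each large child $C^{*}$ to a super node (keeping $C$-own self-loops), obtaining $\overline{C}(U)$. The crux lies entirely in Step C, and I will handle it via the auxiliary fact below.

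\textbf{Auxiliary fact}: if $H$ is a connected subgraph of a graph $G$ (with distinct edge weights) and $\msf(H)\subseteq\msf(G)$, then $\msf(G/H)=\msf(G)\setminus\msf(H)$. The inclusion $\msf(G/H)\subseteq\msf(G)\setminus\msf(H)$ is immediate from \ref{fact:contract} combined with the observation that edges of $\msf(H)$ become self-loops in $G/H$; equality then follows from a short count, using that contracting a connected subgraph preserves the number of connected components to get $|\msf(H)|+|\msf(G/H)|=|\msf(G)|$. I will apply this fact iteratively to the large children of $C$; the required hypothesis $\msf(C^{*})\subseteq\msf(\text{current graph})$ is supplied by Step A and persists after each contraction because the large children are pairwise node-disjoint. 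The upshot is $\msf(\overline{C}(U))=\msf(C_{1})\setminus\bigcup_{C^{*}}\msf(C^{*})$, which by Step A contains $M_{\smalltext}(C)$. The main obstacle to a naive proof is that contraction can in general create shortcuts that knock previously good edges out of the MSF (a simple $4$-node triangle-plus-chord example witnesses this); the hypothesis $\msf(H)\subseteq\msf(G)$ in the auxiliary fact is precisely what forbids such shortcuts, and \ref{lem:respect MSF} is exactly what supplies it for each large child.
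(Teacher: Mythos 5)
Your proposal is correct and takes essentially the same route as the paper: both parts unravel \ref{def:compressed cluster} step by step, and the second part rests on the same key point, namely that Property 3 of \ref{thm:MSF decomposition} (your \ref{lem:respect MSF}) puts each large child's $\msf$ inside $\msf(C)$, which is exactly what makes contracting the large children harmless. The only differences are cosmetic: you delete $\incident^{C}(U)$ first rather than last (letting you invoke \ref{lem:respect MSF} with $D=\incident^{C}(U)$ instead of the fact that edge deletions never evict surviving $\msf$ edges), and you make explicit, via your auxiliary fact, the contraction step that the paper dispatches in one sentence.
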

\begin{proof}
For the first statement, we partition edges in the cluster $C$ into
$C$-child edges in small children of $C$, $C$-child edges in large
children of $C$, and $C$-own edges. Recall the definitions from \ref{def:hie decomp}. That is, 
\[
E(C)=\Disjunion_{C':\text{small child of }C}E(C')\disjunion\Disjunion_{C':\text{large child of }C}E(C')\cup E^{C}.
\]
We show how $E(C)$ is changed during the process of constructing $\overline{C}(U)$.
First, replacing each small child $C'$ of $C$ by $\msf(C')$ is
to replace $\Disjunion_{C':\text{small child of }C}E(C')$ by $M_{\smalltext}(C)$.
Second, contracting the large children of $C$ is to remove $\Disjunion_{C':\text{large child of }C}E(C')$.
Third, as $U\subseteq\Disjunion_{C':\text{large child of }C}V(C')$,
removing edges incident to $U$ is to replace $E^{C}$ by $E^{\overline{C}(U)}$.
So we have that $E(\overline{C}(U))=M_{\smalltext}(C)\disjunion E^{\overline{C}(U)}$.

For the second statement, by \ref{thm:MSF decomposition}, we have
\[
\msf(C)\supseteq\Disjunion_{C':\text{child of }C}\msf(C')=M_{\smalltext}(C)\disjunion\Disjunion_{C':\text{large child of }C}\msf(C').
\]
Let $\widehat{C}$ be obtained from $C$ after replacing each small
child $C'$ by $\msf(C')$ and contracting large children. Note that
$\overline{C}(U)$ can be obtained from $\widehat{C}$ by removing
all edges incident to $U$. We claim that $M_{\smalltext}(C)\subseteq\msf(\widehat{C})$.
Indeed, contracting large children of $C$ is the same as contracting
edges in $\Disjunion_{C':\text{large child of }C}\msf(C')$. But $\Disjunion_{C':\text{large child of }C}\msf(C')\subseteq\msf(C)$,
so the remaining $\msf$-edges do not change. So $M_{\smalltext}(C)\subseteq\msf(\widehat{C})$.
As $U\subseteq\Disjunion_{C':\text{large child of }C}V(C')$, the
set of edges incident to $U$ is disjoint from $M_{\smalltext}(C)$,
so $M_{\smalltext}(C)\subseteq\msf(\overline{C}(U))$.
\end{proof}
Let $S_{\overline{C}(U)}\subseteq V(\overline{C}(U))$ be the set
of {\em super nodes} in $\overline{C}(U)$. By \ref{fact:bound large children}
$C$ has at most $O(n/s_{high})=O(\gamma)$ large child clusters,
so we have the following:
\begin{prop}
$|S_{\overline{C}(U)}|=O(\gamma)$.\label{prop:super node few}
\end{prop}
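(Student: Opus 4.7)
The plan is essentially a direct bookkeeping argument: unpack the definition of $\overline{C}(U)$ to see that super nodes are in bijection with large child clusters of $C$, and then apply the already-established bound on the number of large children.

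More concretely, I would first observe that by step (2) of \ref{def:compressed cluster}, each large child cluster $C'$ of $C$ contributes exactly one super node to $\overline{C}(U)$, and conversely every super node arises this way. (Step (1), replacing small children by their MSF, does not create super nodes; step (3), removing edges incident to $U$, only deletes edges.) Therefore $|S_{\overline{C}(U)}|$ equals the number of large child clusters of $C$.

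Next I would invoke \ref{fact:bound large children}, which says that any cluster $C \in \cH$ has at most $O(n/s_{high})$ large child clusters. By our choice of parameters in the preprocessing step, $s_{high} = n/\gamma$, so $n/s_{high} = \gamma$ and the bound becomes $O(\gamma)$, giving $|S_{\overline{C}(U)}| = O(\gamma)$ as claimed.

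There is no real obstacle here — the only thing to be careful about is that the ``contraction'' in step (2) genuinely produces one node per large child (not more, not fewer), which is immediate from the convention stated just after \ref{def:compressed cluster} that self-loops are kept only when they are $C$-own edges, so each large child collapses to a single vertex regardless of how many internal or $C$-own edges it carries. The whole argument is therefore a two-line proof whose entire content is a pointer to \ref{fact:bound large children} together with the parameter substitution $s_{high} = n/\gamma$.
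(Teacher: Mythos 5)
Your proof is correct and matches the paper's own argument exactly: the paper also derives the bound by identifying super nodes with the large child clusters of $C$ and then applying \ref{fact:bound large children} together with the parameter choice $s_{high}=n/\gamma$. Nothing further is needed.
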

Next, we partition $E^{\overline{C}(U)}=E_{1}^{\overline{C}(U)}\disjunion E_{2}^{\overline{C}(U)}\disjunion E_{3}^{\overline{C}(U)}$
where $E_{i}^{\overline{C}(U)}$ is the set of edges $e\in E^{\overline{C}(U)}$
where $(i-1)$ endpoints of $e$ are incident to $S_{\overline{C}(U)}$.
Let $\overline{C}_{i}(U)=(V(\overline{C}),M_{\smalltext}(C)\disjunion E_{i}^{\overline{C}(U)})$
for all $i=1,2,3$. The reason that it is useful to partition $E^{\overline{C}(U)}$
into three parts is because, for $i\in\{2,3\}$, there is a small
set of nodes that ``cover'' all non-tree edges in $\overline{C}_{i}(U)$:
\begin{prop}
For $i\in\{2,3\}$, all non-tree edges in $\overline{C}_{i}(U)$ are
incident to $S_{\overline{C}}$. \label{prop:super node cover}\end{prop}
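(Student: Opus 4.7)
My plan is to deduce \ref{prop:super node cover} from \ref{fact:edge of compressed} together with a general monotonicity property of the minimum spanning forest under edge removal.

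First I would isolate the monotonicity fact: if $G' = (V, E')$ with $E' \subseteq E$ is a subgraph of $G = (V, E)$ on the same vertex set, and $e \in E'$ satisfies $e \in \msf(G)$, then $e \in \msf(G')$. This follows immediately from the cycle property of the MSF (with distinct edge weights): $e \in \msf(G)$ means $e$ is never the unique maximum edge of any cycle in $G$; since every cycle in $G'$ is also a cycle in $G$, the same property holds in $G'$, so $e \in \msf(G')$.

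Second, I observe that for every $i \in \{2,3\}$, the graph $\overline{C}_i(U) = (V(\overline{C}), M_{\smalltext}(C) \disjunion E_i^{\overline{C}(U)})$ is a spanning subgraph of $\overline{C}(U)$, whose edge set is $M_{\smalltext}(C) \disjunion E^{\overline{C}(U)} = M_{\smalltext}(C) \disjunion (E_1^{\overline{C}(U)} \disjunion E_2^{\overline{C}(U)} \disjunion E_3^{\overline{C}(U)})$ by definition of the partition. By \ref{fact:edge of compressed} we have $M_{\smalltext}(C) \subseteq \msf(\overline{C}(U))$, and since every edge of $M_{\smalltext}(C)$ belongs to $E(\overline{C}_i(U))$, the monotonicity fact above yields $M_{\smalltext}(C) \subseteq \msf(\overline{C}_i(U))$.

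Finally, any non-tree edge $e$ of $\overline{C}_i(U)$ lies in $E(\overline{C}_i(U)) \setminus \msf(\overline{C}_i(U))$, hence in $E(\overline{C}_i(U)) \setminus M_{\smalltext}(C) = E_i^{\overline{C}(U)}$. By the very definition of $E_i^{\overline{C}(U)}$, for $i \in \{2, 3\}$ each edge in $E_i^{\overline{C}(U)}$ has at least $i-1 \geq 1$ endpoints in $S_{\overline{C}}$, so $e$ is incident to $S_{\overline{C}}$. I do not anticipate a real obstacle here; the only subtlety worth double-checking is that the cycle-property argument applies to spanning subgraphs (same vertex set, subset of edges), which is precisely the situation, and that the edge-weighting used in $\overline{C}_i(U)$ is the one inherited from $\overline{C}(U)$ (namely $w'$), so both MSFs are computed with respect to the same weights.
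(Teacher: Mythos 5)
Your proof is correct and follows essentially the same route as the paper: deduce $M_{\smalltext}(C)\subseteq\msf(\overline{C}_{i}(U))$ from \ref{fact:edge of compressed}, conclude that non-tree edges lie in $E_{i}^{\overline{C}(U)}$, and invoke the definition of $E_{i}^{\overline{C}(U)}$ for $i\in\{2,3\}$. The only difference is that you make explicit (via the cycle property) the monotonicity step that MSF edges of $\overline{C}(U)$ remain MSF edges in the spanning subgraph $\overline{C}_{i}(U)$, which the paper leaves implicit when it cites \ref{fact:edge of compressed}.
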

\begin{proof}
By \ref{fact:edge of compressed}, we have that $M_{\smalltext}(C)\subseteq\msf(\overline{C}_{i}(U))$.
So all non-tree edges in $\overline{C}_{i}(U)$ can only be edges
in $E_{i}^{\overline{C}(U)}$. By definition of $E_{i}^{\overline{C}(U)}$
for $i\in\{2,3\}$, each edge $e\in E_{i}^{\overline{C}(U)}$ is incident
to $S_{\overline{C}}$.
\end{proof}
In the algorithm, for each large cluster $C$, what we really maintain
are always the compressed clusters with respect to $P^{C}$. We define
them with respect to any set $U$ just for the analysis. So we denote
$\overline{C}=\overline{C}(P^{C})$ and call it simply the \emph{compressed
cluster of $C$}. Also, $E^{\overline{C}}$, $E_{i}^{\overline{C}}$
and $\overline{C}_{i}$ are similarly defined, for $i=1,2,3$. Although
two arbitrary clusters $C$ and $D$ in $\cH$ may be not edge-disjoint,
we have that this is the case for compressed clusters.
\begin{prop}
\label{prop:compressed edge disj}Compressed clusters are edge-disjoint.
That is, for any large clusters $C,D\in\cH$, $E(\overline{C})\cap E(\overline{D})=\emptyset$.\end{prop}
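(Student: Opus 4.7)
The plan is to split into two cases based on how $C$ and $D$ are related in the tree $\cH$. Since the children of every non-leaf cluster partition its vertex set (\Cref{def:hie decomp}), two distinct clusters in $\cH$ are either in strict ancestor--descendant relation or are vertex-disjoint. A straightforward induction on $\msfbuild$ also gives $E(C')\subseteq E(C)$ whenever $C'$ is a descendant of $C$ in $\cH$, since each recursive call keeps only a subset of the edges of the parent cluster; combined with \Cref{fact:edge of compressed} this yields $E(\overline{X})\subseteq E(X)$ for every large cluster $X$ (the compressed cluster only contains edges inherited from $X$ itself, either as $C$-own edges or as tree edges of small children).

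If $C$ and $D$ are incomparable, then $V(C)\cap V(D)=\emptyset$, hence $E(C)\cap E(D)=\emptyset$, and the remark above immediately gives $E(\overline{C})\cap E(\overline{D})=\emptyset$.

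The only remaining case is when one cluster, say $C$, is a strict ancestor of $D$. Because $D$ is large and small clusters are leaves, $D$ cannot live inside any small child of $C$; so $D$ lies in the subtree rooted at some large child $C^{*}$ of $C$, and therefore $E(\overline{D})\subseteq E(D)\subseteq E(C^{*})$. I then use \Cref{fact:edge of compressed} to decompose $E(\overline{C})=M_{\smalltext}(C)\disjunion E^{\overline{C}}$ and handle the two pieces separately. For the $C$-own part, $E^{\overline{C}}\subseteq E^{C}$ and by the very definition $E^{C}=E(C)-\Disjunion_{i}E(C'_{i})$ in \Cref{def:hie decomp}, $E^{C}$ is disjoint from $E(C^{*})$, hence from $E(\overline{D})$. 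For the small-subtree part, every edge of $M_{\smalltext}(C)$ lies in some small child $C'$ of $C$, and $C'$ is vertex-disjoint from its sibling $C^{*}$, so $E(C')\cap E(\overline{D})=\emptyset$; summing over small children of $C$ gives $M_{\smalltext}(C)\cap E(\overline{D})=\emptyset$.

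I do not anticipate any real obstacle: the argument is purely structural bookkeeping using (i) the tree structure of $\cH$ and the fact that siblings are vertex-disjoint, (ii) the definition of $E^{C}$ as the edges of $C$ not lying in any child, and (iii) the containment $E(\overline{X})\subseteq E(X)$. The only point where one has to be a bit careful is to rule out $D$ being a descendant of a small child of $C$, which follows immediately from $D$ being large and small clusters being leaves.
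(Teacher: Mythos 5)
Your proof is correct, but it goes by a different route than the paper's. The paper dispenses with any case analysis on the tree: it writes $E(\overline{C})=M_{\smalltext}(C)\disjunion E^{\overline{C}}$ (\ref{fact:edge of compressed}) and then observes that the $M_{\smalltext}$-parts are disjoint across large clusters (each $\msf(C')$ sits in a distinct small cluster with a unique parent), while the $E^{\overline{C}}$-parts are disjoint because $E^{\overline{C}}\subseteq E^{C}$ and the sets of $C$-own edges over all clusters partition $E$ (\ref{fact:C-edge partition}); the mixed terms are covered by the same partition fact, since $M_{\smalltext}(C)$ consists of own-edges of leaf clusters whereas $E^{\overline{D}}$ consists of own-edges of a non-leaf cluster. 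You instead split according to the relative position of $C$ and $D$ in $\cH$: in the incomparable case you use vertex-disjointness of the clusters together with $E(\overline{X})\subseteq E(X)$, and in the ancestor case you push $E(\overline{D})$ into the large child $C^{*}$ of $C$ containing $D$ and check the two pieces of $E(\overline{C})$ against $E(C^{*})$ separately. Both arguments are sound; yours is more self-contained and makes the tree structure explicit (and is careful about the one genuinely needed observation, that a large descendant cannot sit under a small child), while the paper's is shorter because reducing everything to the single global fact $\Disjunion_{C\in\cH}E^{C}=E$ makes the relative position of $C$ and $D$ irrelevant.
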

\begin{proof}
$M_{\smalltext}(C)$ and $M_{\smalltext}(D)$ are disjoint by \ref{fact:disjoint M_small(C) }.
For \emph{any} set $U\subseteq V(C)$ and $U'\subseteq V(D)$, $E^{\overline{C}(U)}$
and $E^{\overline{D}(U')}$ are disjoint. This follows because $E^{\overline{C}(U)}\subseteq E^{C}$,
$E^{\overline{D}(U')}\subseteq E^{D}$, and $E^{C}\cap E^{D}=\emptyset$
by \ref{fact:C-edge partition}. So, by \ref{fact:edge of compressed},
$E(\overline{C}(U))\cap E(\overline{D}(U'))=\emptyset$.
\end{proof}
For each large cluster $C$, we maintain $\msf(\overline{C}_{1})$
using $\cA_{few}(\overline{C}_{1},1,p)$. Next, we maintain $\msf(\overline{C}_{2})$
using an instance of the algorithm $\cA_{2}$ from \ref{lem:non-tree cover}.
Note the constraint in \ref{lem:non-tree cover} is satisfied.
Indeed, \ref{prop:super node cover} implies that every non-tree edge
in $\overline{C}_{2}$ has exactly one endpoint in the set of super
nodes $S_{\overline{C}}$. Moreover, every node $u\in V(\overline{C}_{2})\setminus S_{\overline{C}}$
has degree at most $3$ just because $G$ has max degree $3$. Next,
we maintain $\msf(\overline{C}_{3})$ using instances of the algorithm
$\cA_{3}$ from \ref{lem:MSF in multigraph}. 
\begin{rem}
\label{rem:nodes in C_3}Note that edges in $\overline{C}_{3}$ consist
of $M_{\smalltext}(C)$ and $E_{3}^{\overline{C}}$, and they do not
share endpoints. So $\msf(\overline{C}_{3})=M_{\smalltext}(C)\disjunion\msf(E_{3}^{\overline{C}})$.
As $M_{\smalltext}(C)$ is already maintained, it is enough to maintain
$\msf(E_{3}^{\overline{C}})$. So, actually, we run $\cA_{3}$ on
the graph consisting of edges from $E_{3}^{\overline{C}}$. This graph
has only $O(\gamma)$ nodes by \ref{prop:super node few}.
\end{rem}
Now, we describe the main object of our algorithm. The sketch graph is 
$H = (V,E(H))$ where 
\begin{align}
E(H)=E^{\neq}(w)\cup M_{\smalltext}\cup\bigcup_{C:\largetext}(\incident^{C}(P^{C})\cup\bigcup_{i=1,2,3}\msf(\overline{C}_{i})\cup J^{C}).
\label{eq:sketch graph}
\end{align}
where $J^{C}\subseteq E^{\overline{C}}$ is called a set of \emph{junk
edges} of a large cluster $C$. Initially, $J^{C}=\emptyset$ for
all large cluster $C$. We will describe how $J^{C}$ is updated later.
Note that $H$ can be a multigraph because of $E^{\neq}(w)$.
\begin{rem}
For each edge $\msf(\overline{C}_{i})$, we include its original endpoints
into $E(H)$ and not the endpoint in the compressed cluster $\overline{C}$
where some nodes are already contracted as one node. This can be done easily by associating
the original endpoints of each edge whenever we contract some nodes.
\end{rem}
The last step of our preprocessing algorithm is to initialize
$\cA_{few}(H,B,p)$ on $H$ and obtain $\msf(H)$. We summarize the
preprocessing algorithm in \ref{alg:MSF prep}.

\begin{algorithm}
\caption{\label{alg:MSF prep}Preprocessing algorithm. 
}

\begin{enumerate}
\item $(G',\cH)=\msfdecomp(G,\alpha,p,d,s_{low},s_{high})$ where $\alpha=1/\gamma^{3}$,
$d=\gamma$, $s_{low}=\gamma$, and $s_{high}=n/\gamma$.
\item Initialize $\cA(C,p)$ for each small cluster $C$ and obtain
$M_{\smalltext}=\msf(G_{\smalltext})$ where $G_{\smalltext}=\Disjunion_{C:\smalltext}C$.
\item For each large cluster $C$ (excluding root cluster), initialize $\pruning$
on $C$ with a conductance parameter $\alpha_{0}=\Omega(1/\gamma^{4})=1/n^{O(\sqrt{\frac{\log\log n}{\log n}})}$.
\item For each large cluster $C$, 

\begin{enumerate}
\item construct $\overline{C}_{1}$, $\overline{C}_{2}$, and $\overline{C}_{3}$.
\item Initialize $\cA_{few}(\overline{C}_{1},1,p)$ and obtain $\msf(\overline{C}_{1})$.
\item Initialize $\cA_{i}(\overline{C}_{i})$ and obtain $\msf(\overline{C}_{i})$,
for $i=2,3$.
\end{enumerate}
\item Construct the sketch graph $H$ where $V(H)=V$ and 
\[
E(H)=E^{\neq}(w)\cup M_{\smalltext}\cup\bigcup_{C:\largetext}(\incident^{C}(P^{C})\cup\bigcup_{i=1,2,3}\msf(\overline{C}_{i})\cup J^{C})
\]
where, for each large cluster $C$, $P^{C}=\emptyset$ and $J^{C}=\emptyset$
initially.
\item Initialize $\cA_{few}(H,B,p)$ on $H$ where $B=O(\pi d)$ and obtain
$\msf(H)$.\end{enumerate}
\end{algorithm}

\begin{table}
	\begin{tabular}{|>{\raggedright}p{0.2\textwidth}|>{\raggedright}p{0.75\textwidth}|}
		\hline 
		\textbf{Notation} & \textbf{Description}\tabularnewline
		\hline 
		\multicolumn{2}{c}{}\tabularnewline
		\multicolumn{2}{c}{\textbf{Algorithms}}\tabularnewline
		\hline 
		$\msfdecomp$ & $\msf$ decomposition algorithm from \ref{thm:MSF decomposition}\tabularnewline
		\hline 
		$\pruning$ & Dynamic expander pruning algorithm from \ref{thm:pruning detect failure}\tabularnewline
		\hline 
		$\cA_{few}$ & Dynamic $\msf$ for graphs with few non-tree edges obtained by induction
		and using \ref{thm:reduc restricted dec}\tabularnewline
		\hline 
		$\cA_{2}$ & from \ref{lem:non-tree cover}\tabularnewline
		\hline 
		$\cA_{3}$ & from \ref{lem:MSF in multigraph}\tabularnewline
		\hline 
		\multicolumn{2}{c}{}\tabularnewline
		\multicolumn{2}{c}{\textbf{Graphs, Edges and Parameters}}\tabularnewline
		\hline 
		$G=(V,E,w)$ & The input graph\tabularnewline
		\hline 
		$\gamma$ & The factor $\gamma=n^{O(\sqrt{\log\log n/\log n})}$ from \ref{thm:MSF decomposition}\tabularnewline
		\hline 
		$G'=(V,E,w')$ & The re-weighted graph where $(G',\cH)=\msfdecomp(G,\alpha,p,d,s_{low},s_{high})$
		where $\alpha=1/\gamma^{3}$, $d=\gamma$, $s_{low}=\gamma$, and
		$s_{high}=n/\gamma$\tabularnewline
		\hline 
		$\cH$ & The hierarchical decomposition of $G'$\tabularnewline
		\hline 
		$E(C)$ & The set of edges in a cluster $C\in\cH$. Note that, possibly, $(u,v)\notin E(C)$
		but $u,v\in V(C)$\tabularnewline
		\hline 
		$E^{C}$ & The set of \emph{$C$-own} edges. $E^{C}=E(C)-\Disjunion_{C':\textnormal{child of }C}E(C')$\tabularnewline
		\hline 
		$E(C)-E^{C}$ & The set of \emph{$C$-child} edges\tabularnewline
		\hline 
		$E^{\neq}$  & $E^{\neq}=\{e\in E\mid w(e)\neq w'(e)\}$\tabularnewline
		\hline 
		$E^{\neq}(w),$ $E^{\neq}(w')$ & The set of edges in $E^{\neq}$ with weight assigned by $w$ and $w'$,
		respectively\tabularnewline
		\hline 
		$G_{\smalltext}$ & $\Disjunion_{C:\smalltext}C$\tabularnewline
		\hline 
		$M_{\smalltext}$ & $\msf(G_{\smalltext})=\Disjunion_{C:\smalltext}\msf(C)$\tabularnewline
		\hline 
		$H$ & The sketch graph. See \ref{eq:sketch graph}.\tabularnewline
		\hline 
		$\pi$  & The update time of $\pruning$. $\pi=n^{O(\log\log\log n/\log\log n})\log\frac{1}{p}$.\tabularnewline
		\hline 
		\multicolumn{2}{c}{}\tabularnewline
		\multicolumn{2}{c}{\textbf{Inside a large cluster $C$}}\tabularnewline
		\hline 
		$M_{\smalltext}(C)$ & $\Disjunion_{C':\textnormal{small child of }C}\msf(C)$\tabularnewline
		\hline 
		$P_{0}^{C}$ & The \emph{pruning set }$P_{0}^{C}\subseteq V(C)$ of $C$ maintained
		by an instance of $\pruning$ that was initialized in $C$\tabularnewline
		\hline 
		$P^{C}$ & The \emph{total pruning set }of $C$. $P^{C}=\Disjunion_{C':\textnormal{large child of }C}P_{0}^{C'}$.\tabularnewline
		\hline 
		$\incident^{C}(U)$ & The set of $C$-own edges incident to $U$. $\incident^{C}(U)=\{(u,v)\in E^{C}\mid u\in U$
		or $v\in U\}$.\tabularnewline
		\hline 
		$\ensuremath{\overline{C}(U)}$ & The compressed cluster of $C$ with respect to $U$. See \ref{def:compressed cluster}.\tabularnewline
		\hline 
		$\ensuremath{\overline{C}}$ & The compressed cluster of $C$. $\ensuremath{\overline{C}}=\overline{C}(P^{C})=(V(\overline{C}),M_{\smalltext}(C)\disjunion E^{\overline{C}})$. \tabularnewline
		\hline 
		$E^{\ensuremath{\overline{C}}}$ & The set of $\overline{C}$-own edges. $E^{\ensuremath{\overline{C}}}=E^{C}-\incident^{C}(P^{C})$
		is the set of $C$-own edges \emph{not }incident to $P^{C}$.\tabularnewline
		\hline 
		$E_{i}^{\overline{C}}$ & The set of edges in $e\in E^{\overline{C}}$ where $(i-1)$ endpoints
		of $e$ are incident to super nodes in $\overline{C}$. Note that
		$E^{\ensuremath{\overline{C}}}=E_{1}^{\ensuremath{\overline{C}}}\disjunion E_{2}^{\ensuremath{\overline{C}}}\disjunion E_{3}^{\ensuremath{\overline{C}}}$.\tabularnewline
		\hline 
		$\ensuremath{\overline{C}}_{i}$ & $\ensuremath{\overline{C}}_{i}=(V(\overline{C}),M_{\smalltext}(C)\disjunion E_{i}^{\overline{C}})$\tabularnewline
		\hline 
		$J^{C}$ & The set of junk edges in $C$\tabularnewline
		\hline 
	\end{tabular}
	
	\caption{Definitions in \ref{sec:Dynamic MSF} \label{table:def}}
\end{table}

\subsubsection{Update}
\label{sec:MSF_update}

Now, we describe how to update the sketch graph $H$ given an edge
deletion of $G$, so that at every step we have $\msf(G) = \msf(H)$ and $H$ is extremely sparse.
We will handle at most $T(m)=m/(3\pi d\gamma)\le n/\pi d\gamma$
edge deletions. From now, we just write $T=T(m)$.

We describe how $H$ changes by showing, in the following order, how
we update 1) $E^{\neq}(w)$, 2) $M_{\smalltext}$ and, for each large
cluster $C$, 3) $\incident^{C}(P^{C})$, 4) $\msf(\overline{C}_{i})$
for $i=1,2,3$, and lastly 5) $J^{C}$.

Let $e$ be a given edge of $G$ to be deleted. We set $G\gets G-e$
and $G'\gets G'-e$. 
By \ref{fact:C-edge partition}, there is the unique cluster $C$ where $e$ is
a $C$-own edge (i.e. $e\in E^{C}$). We set $E^{C}\gets E^{C}-e$.
In particular, all ancestor clusters $C'$ of $C$ are changed: $E(C')\gets E(C')-e$ accordingly.
For each large cluster $C$ where $e\in E(C)$, the total pruning set 
$P^{C}=\Disjunion_{C':\textnormal{large child of }C}P_{0}^{C'}$
is updated by the instances of $\pruning$ that was initialized
in each large child $C'$ of $C$. Recall that $P^{C}$ only grows. If $e\in E^{\neq}(w)$,
then we set $E^{\neq}(w)\gets E^{\neq}(w)-e$. This determines the
changes of $E^{\neq}(w)$, $M_{\smalltext}$, and $\incident^{C}(P^{C})$ 
for each large cluster $C$. 

For any large $C$ and $i=1,2,3$, recall that $\overline{C}_{i}=\overline{C}_{i}(P^{C})=(V(\overline{C}),M_{\smalltext}(C)\disjunion E_{i}^{\overline{C}(P^{C})})$
is determined by $P^{C}$ and $M_{\smalltext}(C)$. The description
above already determines how $\overline{C}_{i}$ changes.
Hence, $\msf(\overline{C}_{i})$ is determined as well.
Finally, for $J^C$, whenever some edge $f$ is removed from $\msf(\overline{C}_{i})$,
for some $\overline{C}$ and $i\in\{1,2,3\}$ but $f$ is actually
not deleted from $G$ yet, then we include $f$ into $J^{C}$ as a
junk edge. 
\begin{rem}
We call these edges junk edges because of the following reason. Even
if $H$ did not include junk edges, then we can show that the algorithm
is still correct, i.e. $\msf(H)=\msf(G)$. However, junk edges are
needed for the performance reason: Given an edge deletion in $G$,
there can be $O(\pi d)$ many edges removed from $\bigcup_{C:\largetext,i=2,3}\msf(\overline{C}_{i})$.
But removing that many edges in $H$ will take too much time when
we recursively maintain $\msf(H)$ in $H$. So we just mark these
removed edges as junk edges, but do not actually remove them from
$H$.
\end{rem}
By the way we maintain junk edges, we have:
\begin{prop}
Given an edge $e$ to be deleted from $G$, only $e$ can be removed
from 
$$\bigcup_{C:\largetext}(\incident^{C}(P^{C})\cup\bigcup_{i=1,2,3}\msf(\overline{C}_{i})\cup J^{C}).$$\label{prop:few deletion in H}\end{prop}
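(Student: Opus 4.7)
The plan is to argue component-by-component: I would fix an edge $f$ that lies in the union at time $\tau$ but not at time $\tau+1$, track which of the three types of subsets it sits in (the pruning-incident set $\incident^{C}(P^{C})$, some $\msf(\overline{C}_{i})$, or some junk set $J^{C}$), and show in each case that either $f=e$ or $f$ has merely \emph{moved} to another subset of the same union, contradicting the assumption that $f$ left it.

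First I would handle the easy monotone parts. The junk set $J^{C}$ is only ever augmented by the update rule, and is never shrunk except possibly when $e$ itself is currently in $J^{C}$, in which case the only edge removed is $e$. For $\incident^{C}(P^{C})$, the crucial monotonicity is that $P^{C}=\bigcup_{C'}P_{0}^{C'}$ only grows (pruning sets are monotone by \ref{thm:pruning detect failure}), so if $f\in\incident^{C}(P^{C})$ at time $\tau$ and $f\notin\incident^{C}(P^{C})$ at time $\tau+1$, then $f$ must have left $E^{C}$; by \ref{fact:C-edge partition} $e$ is the unique $C$-own edge being deleted and no other $C$-own edge disappears, so $f=e$.

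Next I would tackle the hardest case, $\msf(\overline{C}_{i})$. The guiding principle is the decremental fact: in going from step $\tau$ to step $\tau+1$, an edge can leave an $\msf$ only if it leaves the underlying edge set (no weight changes occur and replacement edges are new additions, not removals). So if $f\in\msf(\overline{C}_{i})$ leaves, then $f\notin E(\overline{C}_{i})=M_{\smalltext}(C)\disjunion E_{i}^{\overline{C}}$ at time $\tau+1$. I would split into subcases based on which side of this disjoint union $f$ used to be on. If $f\in M_{\smalltext}(C)$, then $f$ was a tree edge of some small child $\msf(C')$, and since each small child is maintained decrementally by $\cA(C',p)$, the only way $f$ leaves $M_{\smalltext}(C)$ is if $f$ is actually deleted from $C'$, i.e.\ $f=e$. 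If $f\in E_{i}^{\overline{C}}$, then $f$ was a $C$-own edge; it can leave $E_{i}^{\overline{C}}$ either because (a) $f=e$, or (b) a new endpoint of $f$ was added to $P^{C}$, in which case $f\in\incident^{C}(P^{C})$ at time $\tau+1$ and hence $f$ is still in the union—contradicting the assumption that $f$ left. The $E_{i}$ classification itself cannot change the conclusion: an endpoint of $f$ is in $S_{\overline{C}}$ iff it lies in a large child of $C$, which is a time-invariant property of the hierarchy, so $f$ cannot migrate from $E_{i}^{\overline{C}}$ to $E_{j}^{\overline{C}}$ with $j\neq i$.

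Finally I would assemble: any edge $f$ that leaves the full union must, by the case analysis, equal $e$, since all the alternative reasons for leaving a single summand are accounted for by motion into another summand of the same union (pruning sends edges from $\msf(\overline{C}_{i})$ or $E_{i}^{\overline{C}}$ into $\incident^{C}(P^{C})$) or by $\cA_{few}/\cA_{2}/\cA_{3}$ inserting replacement tree edges rather than removing extra ones. The only subtlety I anticipate needing to pin down is the small-cluster case: I want to verify that our decremental algorithm $\cA(C',p)$ on a small child really does guarantee that tree edges leave $\msf(C')$ only upon their deletion; this is standard for decremental $\msf$ but worth citing explicitly. With that, the proposition follows.
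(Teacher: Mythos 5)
Your handling of $J^{C}$ and of $\incident^{C}(P^{C})$ is fine (monotonicity of $P^{C}$ plus the fact that $E^{C}$ only loses actually-deleted edges), but the core of your argument for the $\msf(\overline{C}_{i})$ terms rests on a premise that does not hold. You invoke "the decremental fact" that an edge can leave an $\msf$ only by leaving the underlying edge set; however, $\overline{C}_{i}$ is \emph{not} decremental. When the deleted edge $e$ lies in a small child $C'$ of $C$ and a replacement edge $g$ enters $\msf(C')$, that $g$ is \emph{inserted} into $M_{\smalltext}(C)\subseteq E(\overline{C}_{i})$, and since $M_{\smalltext}(C)$ is forced into $\msf(\overline{C}_{i})$ (\ref{fact:edge of compressed}), an insertion can in principle evict a tree edge of $E_{i}^{\overline{C}}$ from $\msf(\overline{C}_{i})$ without that edge leaving $E(\overline{C}_{i})$ and without it being deleted from $G$. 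Ruling this out requires a separate argument (e.g.\ that deleting $e$ and inserting $g$ preserves the component structure of $M_{\smalltext}(C)$, so the contracted instance on $E_{i}^{\overline{C}}$ is unchanged in net), which you do not supply; your case split simply assumes the only exits from $\msf(\overline{C}_{i})$ are exits from $E(\overline{C}_{i})$.

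The larger point is that none of this is needed: the proposition is true essentially by construction, and the paper's entire proof is the clause ``by the way we maintain junk edges.'' The junk rule states that \emph{whenever} an edge $f$ is removed from some $\msf(\overline{C}_{i})$ but is not the edge being deleted from $G$, $f$ is placed into $J^{C}$ — so every such $f$, for whatever reason it left $\msf(\overline{C}_{i})$ (pruning, displacement by an inserted $M_{\smalltext}$ edge, or anything else), remains in the displayed union. Combined with your (correct) observations that $P^{C}$ only grows and that $J^{C}$ only sheds actually-deleted edges, this gives the statement in three sentences. I would recommend replacing your case analysis of $\msf(\overline{C}_{i})$ with a direct appeal to the junk-edge rule; if you prefer to keep your finer analysis (which is the content behind \ref{lem:few non-tree edge in H} and the remark preceding the proposition, namely that the junk edges are in fact contained in $\bigcup_{C}\incident^{C}(P^{C})$), you must close the insertion gap described above.
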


\paragraph{Reporting Failure.}

During the sequence of updates, our algorithm might report ``failure''.
Once there is a failure, we terminate the whole algorithm and then
restart from the preprocessing. Here, we list the events such that
if they happen, the algorithm will report failure. First, we report
failure if any instance of $\cA$ or $\cA_{few}$ takes time more
than the time bound which is guaranteed to hold with high probability.
More formally, this is when an instance $\cA(m_{0},p_{0})$, for some
$m_{0},p_{0}$ takes time more than $t_{u}(m_{0},p_{0})$ for some
update, or when an instance $\cA_{few}(m_{1},k_{1},B_{1},p_{1})$
takes time more than $t_{del}^{few}(m_{1},k_{1},B_{1},p_{1})$ for
some edge deletion or more than $t_{ins}^{few}(m_{1},k_{1},B_{1},p_{1})$
for some batched insertion of size $B$. Second, we also report failure
whenever some instance of $\pruning$ from \ref{thm:pruning detect failure}
reports failure. It will be shown later in \ref{lem:fail low prob}
that failure happens with very low probability.

\subsection{Correctness}
\label{sec:MSF_correct}

In this section, we suppose that the algorithm does not fails. 
(Actually, we only need that no instance of $\pruning$ fails.) 
Then the sketch graph $H$ is maintained with the two desired properties.
First, we show in \ref{sec:sketch preserve} that $H$ ``preserves'' the $\msf$ i.e. $\msf(H) = \msf(G)$.
Second, we show in \ref{sec:sketch sparse} that $H$ is extremely sparse i.e. $|E(H)-\msf(H)| = O(n/\gamma)$.

Before proving the main goals, we prove a small technical lemma which 
ensures that the sequence of updates in each large cluster
is not too long for the dynamic expander pruning algorithm from \ref{thm:pruning detect failure}.
\begin{lem}
For each large cluster $C$, there is at most $O(\alpha_{0}^{2}|E(C)|)$
edge deletions in $C$.\end{lem}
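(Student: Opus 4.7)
My plan is to prove this via a direct parameter calculation, bounding the number of deletions inside $C$ trivially by the total number of deletions $T$ that the whole decremental algorithm is allowed to process, and then comparing $T$ with $\alpha_{0}^{2}|E(C)|$.

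First I would observe that the number of edge deletions occurring inside any cluster $C$ is at most the total number of deletions the algorithm handles, which is $T=m/(3\pi d\gamma)$ as declared at the start of \ref{sec:MSF_update}. So it suffices to show $T=O(\alpha_{0}^{2}|E(C)|)$ for every large cluster $C$. By the definition of ``large'' (from \ref{thm:MSF decomposition}, property 5), any large cluster satisfies $|E(C)|>s_{high}=n/\gamma$, and this bound on $|E(C)|$ is the worst case because larger clusters only make the target larger. Substituting $\alpha_{0}=\Omega(1/\gamma^{4})$ I obtain $\alpha_{0}^{2}|E(C)|=\Omega(n/\gamma^{9})$, so the inequality reduces to $T=O(n/\gamma^{9})$.

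Plugging in $d=\gamma$ and $m=\Theta(n)$ (the latter holds at all times because we only delete $o(n)$ edges), I get $T=\Theta(n/(\pi\gamma^{2}))$, so the required inequality becomes $\pi=\Omega(\gamma^{7})$, which is precisely why the time budget for deletions was set to $m/(3\pi d\gamma)$ in the first place.

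The last step is verifying $\pi\ge\gamma^{7}$ for large enough $n$, which is a plain calculation with the exponents. Recall $\pi=n^{\Theta(\log\log\log n/\log\log n)}$ and $\gamma^{7}=n^{O(\sqrt{\log\log n/\log n})}$, so I need
\[
\frac{\log\log\log n}{\log\log n}\;\gg\;\sqrt{\frac{\log\log n}{\log n}},
\]
equivalently $(\log\log\log n)^{2}\cdot\log n\gg (\log\log n)^{3}$. This is clear because $\log n$ is exponential in $\log\log n$ and hence dominates any polynomial in $\log\log n$, so this inequality holds for all sufficiently large $n$. I do not expect any genuine obstacle here; the only subtlety is bookkeeping the constants and double-checking that the chain of parameter choices ($\alpha=1/\gamma^{3}$, $s_{low}=\gamma$, $d=\gamma$, $s_{high}=n/\gamma$, $\alpha_{0}=\Omega(1/\gamma^{4})$) is indeed what forces the time-budget $T=m/(3\pi d\gamma)$ to be consistent with the $O(\alpha_{0}^{2}\cdot m)$ deletion limit of the dynamic expander pruning routine from \ref{thm:pruning detect failure}.
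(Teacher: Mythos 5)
Your proof is correct and follows essentially the same route as the paper's: bound the deletions in $C$ by the global budget $T\le n/(\pi d\gamma)$, use $|E(C)|\ge s_{high}=n/\gamma$ and $\alpha_{0}=\Omega(1/\gamma^{4})$ to get $\alpha_{0}^{2}|E(C)|=\Omega(n/\gamma^{9})$, and observe that $\pi\ge\gamma^{7}$ for large $n$ since $n^{\Theta(\log\log\log n/\log\log n)}$ dominates $n^{O(\sqrt{\log\log n/\log n})}$. Your explicit verification of the exponent comparison is a correct elaboration of what the paper leaves implicit in the phrase "for large enough $n$."
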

\begin{proof}
As $|E(C)|\ge s_{high}=n/\gamma$ and $\alpha_{0}=\Omega(1/\gamma^{4})$,
so $\alpha_{0}^{2}|E(C)|=\Omega(n/\gamma^{9})$. But the total length
of update sequence is $T\le n/(\pi d\gamma)\le n/\gamma^{9}=O(\alpha_{0}^{2}|E(C)|)$
for large enough $n$.
\end{proof}

From now, in this section we assume that no instance of $\pruning$ fails.

\subsubsection{Sketch Graph Preserves $\msf$}\label{sec:sketch preserve}

Now, the goal is to prove the following:
\begin{lem}
$\msf(H)=\msf(G)$.\label{lem:can work in H}
\end{lem}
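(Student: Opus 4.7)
My plan is to prove the stronger set-theoretic containment $\msf(G) \subseteq E(H)$ and then invoke the standard fact that if $F \subseteq H \subseteq G$ and $F = \msf(G)$ then $\msf(H) = F$. Note that $E(H) \subseteq E(G)$ at all times, because whenever an edge $e$ is deleted from $G$ it is also removed from every defining set of $H$, and junk edges are still present in $G$ by definition. The multigraph structure of $H$, in which each $e \in E^{\neq}$ may appear both with weight $w(e)$ and with weight $w'(e) \ge w(e)$, is innocuous: the MSF always prefers the lighter copy, so $\msf(H)$ effectively uses weight $w(e)$ on $E^{\neq}$ and $w(e) = w'(e)$ elsewhere. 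A first reduction is to $\msf(G')$: edges of $\msf(G) \cap E^{\neq}$ lie in $E^{\neq}(w) \subseteq E(H)$ by construction, and any $e \in \msf(G) \setminus E^{\neq}$ has $w(e) = w'(e)$, while every cycle $K$ through $e$ contains some $f$ with $w(f) > w(e)$; since $w'(f) \ge w(f) > w'(e)$, the cycle property is preserved and $e \in \msf(G')$. It therefore suffices to show $\msf(G') \subseteq E(H)$.

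Applying Property 3 of \ref{thm:MSF decomposition} recursively along $\cH$ (with $D$ taken to be the set of edges deleted so far, so that the property applies to the current state) yields the decomposition
\[
\msf(G') \;=\; \bigcup_{C \in \cH} \bigl(\msf(C) \cap E^C\bigr).
\]
For a small leaf cluster $C$, the term $\msf(C) \cap E^C = \msf(C)$ is contained in $M_{\smalltext} \subseteq E(H)$. For a large cluster $C$ I will split $E^C = \incident^C(P^C) \disjunion E^{\overline{C}}$: the first part is in $E(H)$ by construction, and the crux is to show $\msf(C) \cap E^{\overline{C}} \subseteq E(H)$, which I do via the intermediate inclusion $\msf(C) \cap E^{\overline{C}} \subseteq \msf(\overline{C})$. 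Recall $\overline{C} = \overline{C}(P^C)$ is obtained from $C$ by three operations: (a) replacing each small child $C'$ by $\msf(C')$, which only discards edges that are non-tree in $C'$ and hence, by MSF decomposition, non-tree in $C$, so MSF-membership of edges in $E^{\overline{C}}$ is preserved; (b) contracting each large child's vertex set, which is equivalent to contracting the spanning tree $\msf(C') \subseteq \msf(C)$ of that child—since contracting tree edges of an MSF preserves the MSF-membership of every other edge (a standard consequence of \ref{fact:contract} together with the cycle criterion), this again preserves MSF-membership on $E^{\overline{C}}$; and (c) removing the edges in $\incident^C(P^C)$, which are disjoint from $E^{\overline{C}}$, and deleting edges from a graph can only preserve the MSF-membership of the surviving edges.

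Having established $\msf(C) \cap E^{\overline{C}} \subseteq \msf(\overline{C})$, I finish by observing that $E(\overline{C}) = M_{\smalltext}(C) \cup E^{\overline{C}} = \bigcup_{i=1,2,3} E(\overline{C}_i)$, so by the sparsification fact \ref{fact:sparsify}, $\msf(\overline{C}) \subseteq \bigcup_i \msf(\overline{C}_i) \subseteq E(H)$. Combining everything gives $\msf(G') \subseteq E(H)$, hence $\msf(G) \subseteq E(H)$, hence $\msf(H) = \msf(G)$. The step I expect to require the most care is part (b) of the crux claim: contracting the vertices of a large child is analogous to---but not literally---contracting its MSF edges, so I must handle precisely the $C$-own edges that turn into self-loops after contraction (which the paper preserves by convention) and invoke the right tree-contraction statement for MSF. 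I also note that the junk edges $J^C$ play no role in correctness---the argument only needs $\incident^C(P^C)$, $\msf(\overline{C}_i)$, $M_{\smalltext}$, and $E^{\neq}(w)$---so $J^C$ is solely a device for the running-time analysis.
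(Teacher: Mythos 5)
There is a genuine gap at step (b) of your crux claim, and it is precisely the point where the entire expander-pruning machinery must enter the proof. You assert that contracting the vertex set $V(C')$ of a large child is ``equivalent to contracting the spanning tree $\msf(C')\subseteq\msf(C)$ of that child.'' This is only true while $C'$ is connected. After edge deletions a large child $C'$ may fall apart into several components, so $\msf(C')$ is a spanning \emph{forest}, and contracting all of $V(C')$ into one super node identifies vertices that are \emph{not} connected in $C'$. Such a contraction can create spurious cycles through the super node and thereby evict a genuine edge of $\msf(C)\cap E^{\overline{C}}$ from the MSF of the intermediate graph; \ref{fact:contract} does not apply, because the operation is not a contraction of two nodes joined by tree edges. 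Your step (c) (``deleting edges only preserves membership of survivors'') cannot repair this, since your argument is a chain of one-directional implications: once membership is lost at step (b), nothing in your chain restores it. The worry you flag about self-loops is not the real difficulty; the real difficulty is disconnection of large children.

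The missing ingredient is the guarantee of \ref{thm:pruning detect failure}: for each large child $C'$ there exists $W_{0}^{C'}\subseteq P_{0}^{C'}$ with $C'[V(C')-W_{0}^{C'}]$ connected. The paper's proof (\ref{claim:pruning work} and \ref{claim:can prune too much}) is organized around exactly this: it first contracts only the connected sets $V(C')-W_{0}^{C'}$ (a legitimate tree contraction, so \ref{fact:contract} applies), then removes the edges incident to $W^{C}$ --- which is why those edges must be charged to $\incident^{C}(\cdot)$ and kept in $H$ --- and only then does the absorption of $W_{0}^{C'}$ into the super node become harmless, because no $C$-own edge attached to $W_{0}^{C'}$ survives. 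A second, smaller omission: the compressed clusters in $H$ are built with respect to $P^{C}$, not the witness $W^{C}$, and one must separately argue (as in \ref{claim:can prune too much}) that over-pruning from $W^{C}$ to $P^{C}$ only moves edges from $\msf(\overline{C}_{i}(W^{C}))$ into $\incident^{C}(P^{C})$, which is also retained in $H$. Your final conclusion $\msf(C)\cap E^{\overline{C}}\subseteq\msf(\overline{C})$ does turn out to be true, but only via this interleaved argument; as written, your proof of it does not use the pruning guarantee anywhere and hence would ``prove'' the lemma even for an adversarial $P^{C}$, for which the statement is false.
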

As the algorithm maintains $\msf(H)$, we can conclude from this lemma that $\msf(G)$ is correctly maintained.

Let $H'=M_{\smalltext}\cup\bigcup_{C:\largetext}(\incident^{C}(P^{C})\cup\bigcup_{i=1,2,3}\msf(\overline{C}_{i})\cup J^{C})$,
i.e. $H=H'\disjunion E^{\neq}(w)$. We first show that it suffices
to show that $\msf(G')\subseteq H'$.
\begin{lem}
	If $\msf(G')\subseteq H'$, then $\msf(G)=\msf(H)$.\label{lem:can work in H'}\end{lem}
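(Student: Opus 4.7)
The plan is to prove the lemma in three steps: first establish that $\msf(G) \subseteq E(H)$ as a subset of $E$, then handle the multi-edge subtleties in $H$, and finally conclude via the standard fact that an MSF is preserved by any intermediate subgraph containing it.

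\textbf{Step 1 (Cycle property transfer).} I will first show the auxiliary containment
\[
\msf(G)\ \subseteq\ \bigl(\msf(G')\setminus E^{\neq}\bigr)\ \cup\ E^{\neq}.
\]
Let $e\in\msf(G)$ with $e\notin E^{\neq}$, so $w'(e)=w(e)$. For any cycle $C$ in $G'$ containing $e$, the same cycle exists in $G$, and by the cycle property (applied to $e\in\msf(G)$) some edge $f\in C$ has $w(f)>w(e)$. By Property~1 of \ref{thm:MSF decomposition}, $w'(f)\ge w(f)>w(e)=w'(e)$, so $e$ is not heaviest in $C$ under $w'$. Hence $e\in\msf(G')$. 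This handles the part of $\msf(G)$ outside $E^{\neq}$; the part inside $E^{\neq}$ is trivially contained in $E^{\neq}$. Combining with the hypothesis $\msf(G')\subseteq H'$ gives $\msf(G)\subseteq H'\cup E^{\neq}=E(H)$ as an unweighted edge set (using that $E^{\neq}(w)$ and $E^{\neq}$ coincide as sets of edges).

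\textbf{Step 2 (Handling parallel edges in $H$).} The only edges that can appear twice in the multigraph $H$ are those $e\in E^{\neq}$ that lie in both $E^{\neq}(w)$ (with weight $w(e)$) and in $H'$ (with weight $w'(e)>w(e)$); every other edge occurs once with weight $w(e)=w'(e)$. Since any MSF of a multigraph keeps only the strictly lighter of two parallel edges, $\msf(H)$ coincides with the MSF of the simple weighted graph $H_w=(V,E(H),w)$, where every edge carries its original $w$-weight.

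\textbf{Step 3 (Concluding).} Now $H_w$ is a subgraph of $G$ having the same weight function, and by Step~1 we have $\msf(G)\subseteq E(H_w)$. A standard consequence of the cycle property (equivalently, \ref{fact:sparsify} with $E_1=\msf(G)$ and $E_2=E(H_w)$) is that if a subgraph of $G$ contains $\msf(G)$, then its MSF equals $\msf(G)$. Therefore $\msf(H)=\msf(H_w)=\msf(G)$, proving the lemma.

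The main conceptual point, and the only place where care is required, is Step~2: we must check that the two ways the same underlying edge $e\in E^{\neq}$ can enter $H$ (once with weight $w(e)$ via $E^{\neq}(w)$, once with weight $w'(e)$ via an MSF-edge in some $\msf(\overline{C}_i)$ or $J^C$ in $H'$) do not corrupt the MSF — but this is handled automatically because the lighter copy always wins, effectively reverting the re-weighting $w\mapsto w'$ on the edges of $H$ that matter. Steps~1 and~3 are then routine applications of the cycle property.
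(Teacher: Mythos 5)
Your proof is correct, but it takes a genuinely different route from the paper's. The paper's argument is a short multigraph trick: it forms $G''=G'\disjunion E^{\neq}(w)$, observes that $G''=G\disjunion E^{\neq}(w')$, so that $\msf(G)=\msf(G'')$ (adding heavier parallel copies never changes the MSF), and then passes from $\msf(G'\disjunion E^{\neq}(w))$ to $\msf(H'\disjunion E^{\neq}(w))=\msf(H)$ using only $\msf(G')=\msf(H')$ and \ref{fact:sparsify}. You instead prove the pointwise containment $\msf(G)\subseteq(\msf(G')\setminus E^{\neq})\cup E^{\neq}$ via the cycle property, deduce $\msf(G)\subseteq E(H)$, and finish with the standard ``a subgraph containing the MSF has the same MSF'' fact after resolving parallel edges. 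What your route buys is that Step~1 makes explicit the structural reason the whole construction works --- re-weighting can only evict MSF edges that lie in $E^{\neq}$, which is precisely why $E^{\neq}(w)$ is re-inserted into $H$ --- whereas the paper's argument hides this inside the symmetric description of $G''$ and never reasons about individual cycles; the price is that you must separately handle the parallel-edge bookkeeping that the multigraph formulation absorbs for free. One nit in your Step~2: an edge $e\in E^{\neq}(w)\setminus H'$ occurs once in $H$ with weight $w(e)\neq w'(e)$, so the clause ``every other edge occurs once with weight $w(e)=w'(e)$'' is not literally true; but since such an edge still carries its $w$-weight in $H$, the identification $\msf(H)=\msf(H_w)$ and the rest of the argument are unaffected.
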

\begin{proof}
	Suppose that $\msf(G')\subseteq H'$. Observe that $H'$ is a subgraph
	of $G'$, so $\msf(G')=\msf(H')$. Let $G''=G'\disjunion E^{\neq}(w)$
	be a multi-graph obtained from $G'$ by inserting $E^{\neq}(w)$ into
	$G'$. Note that $G''=G\disjunion E^{\neq}(w')$. Since $G''$ can
	obtained from $G$ by inserting a parallel edge heavier than edges
	in $G$, we have $\msf(G)=\msf(G'')$ So 
	\begin{align*}
	\msf(G) & =\msf(G'')\\
	& =\msf(G'\disjunion E^{\neq}(w))\\
	& =\msf(H'\disjunion E^{\neq}(w)) & \mbox{as }\msf(G')=\msf(H')\\
	& =\msf(H) & \mbox{because }H=H\disjunion E^{\neq}(w).
	\end{align*}
\end{proof}

The following lemma implies that $\msf(G')\subseteq H'$ because the
root cluster of $\cH$ corresponds to $G'$.
\begin{lem}
For any cluster $C\in\cH$, $\msf(C)\subseteq H'$.\label{lem:msf in H'}\end{lem}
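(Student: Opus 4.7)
The plan is to prove \ref{lem:msf in H'} by induction on the depth of $C$ in $\cH$, working upward from the leaves. For the base case, a small cluster $C$ is a leaf, so $\msf(C)$ is one of the summands of $M_{\smalltext} = \Disjunion_{D:\smalltext} \msf(D)$, and $M_{\smalltext} \subseteq H'$ by the construction of the sketch graph (\ref{eq:sketch graph}).

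For the inductive step, let $C$ be large and assume the claim for every child. Applying Property 3 of \ref{thm:MSF decomposition} with $D = \emptyset$ yields $\msf(C) = \Disjunion_{C':\text{child of }C}\msf(C') \disjunion (\msf(C) \cap E^C)$. The first piece lies in $H'$ by the inductive hypothesis, so the remaining task is to show $\msf(C) \cap E^C \subseteq H'$. I will partition $E^C = \incident^C(P^C) \disjunion E^{\overline{C}}$ and further split $E^{\overline{C}} = E_1^{\overline{C}} \disjunion E_2^{\overline{C}} \disjunion E_3^{\overline{C}}$ by super-node incidence. Edges of $\msf(C) \cap \incident^C(P^C)$ lie in $H'$ trivially. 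The whole lemma therefore reduces to the key sub-claim: for each $i \in \{1,2,3\}$, $\msf(C) \cap E_i^{\overline{C}} \subseteq \msf(\overline{C}_i)$, which is contained in $H'$ by \ref{eq:sketch graph}.

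I plan to prove the sub-claim by the contrapositive, borrowing the Kruskal-with-initial-forest view developed in the proof of \ref{lem:respect MSF}. Take $e \in E_i^{\overline{C}}$ with $e \notin \msf(\overline{C}_i)$; my goal is $e \notin \msf(C)$. By hypothesis, at the moment $e$ is processed by Kruskal on $\overline{C}_i$, the two endpoints of $e$ are joined by a path $\pi$ using initial-forest edges (the $M$-cluster edges inside $M_{\smalltext}(C)$) and previously added edges from $M_{\smalltext}(C) \cup E_i^{\overline{C}}$ whose $w'$-weight is strictly below $w'(e)$. I will lift $\pi$ to a path $\tilde{\pi}$ in $C$ between the same endpoints: edges of $\pi$ that live in $M_{\smalltext}(C) \cup E_i^{\overline{C}}$ are already edges of $E(C)$, and each super-node traversal along $\pi$ (for some large child $C'$) is replaced by a path in $\msf(C')$ between the two $V(C')$-endpoints at which $\pi$ enters and leaves the super node. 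Since those endpoints come from $E^{\overline{C}}$-edges they avoid $P^C$, so they lie in $V(C') \setminus P_0^{C'} \subseteq V(C') \setminus W$, which by the pruning guarantee of \ref{thm:pruning detect failure} is a single connected piece spanned by $\msf(C')$. Every edge used in $\tilde{\pi}$ is available to the Kruskal run on $C$ before $e$ is processed: all $M$-cluster edges (those in $\pi$ and those in the $\msf(C')$-lifts) are forced into the initial forest via $E(\cC_M) \subseteq \msf(G)$; every non-$M$-cluster edge of $\msf(C')$ lies in $E_{\ge i+1}$ and hence has $w' \le m - im/d' < w'(e)$ by Step 2.d of $\msfbuild$; and the non-initial edges of $\pi$ are strictly $w'$-lighter than $e$ by construction. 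Thus the endpoints of $e$ are already in the same component of $C$ when $e$ is processed, giving $e \notin \msf(C)$.

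The main obstacle is combining this lifting with a case analysis that rules out the self-loop degeneracy: if both endpoints of $e$ lay in the same large-child super node $V(C')$, then $e$ would be a self-loop in $\overline{C}_i$ and the ``path $\pi$'' would not exist in any useful sense. I plan to eliminate this case by observing that $e \in \msf(C)$ together with $\msf(C') \subseteq \msf(C)$ forces any two $V(C')$-endpoints of $e$ to lie in distinct components of $C'$, so at least one of them sits in $W \subseteq P_0^{C'} \subseteq P^C$, contradicting $e \in E^{\overline{C}}$. A related subtlety is that the pruning only guarantees the existence of some unspecified $W \subseteq P_0^{C'}$ with $C'[V(C') - W]$ connected; the lifting nonetheless works because every endpoint avoiding $P^C$ lies in $V(C') \setminus P_0^{C'} \subseteq V(C') \setminus W$, which is contained in this single connected piece. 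Note that the junk-edge set $J^C$ is never invoked here, consistent with the paper's remark that junk edges serve only efficiency, not correctness.
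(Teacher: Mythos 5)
Your proof is correct, but it takes a genuinely different route from the paper's. The paper handles the inductive step with two set-containment claims: \ref{claim:pruning work} passes from $C$ to the contracted graphs $\overline{C}_i(W^{C})$ via \ref{fact:contract} and \ref{fact:sparsify}, where $W^{C}$ is the union of the connectivity-certifying subsets $W_0^{C'}$ of the pruning sets, and \ref{claim:can prune too much} then trades $W^{C}$ for the actual pruning set $P^{C}$ at the cost of adding $\incident^{C}(P^{C})$. You instead argue edge-by-edge via the contrapositive of the refined inclusion $\msf(C)\cap E_i^{\overline{C}}\subseteq\msf(\overline{C}_i)$, lifting a light cycle in $\overline{C}_i$ to a light closed walk in $C$ by routing through $\msf(C')$ inside each contracted large child; the guarantee of \ref{thm:pruning detect failure} is invoked exactly where the paper invokes it, namely to certify that the entry/exit points (which avoid $P^{C}$, hence $W_0^{C'}$) lie in a single tree of $\msf(C')$. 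Your route avoids the intermediate objects $\overline{C}_i(W^{C})$ and the second claim entirely, and your explicit treatment of the self-loop degeneracy is a step the paper's contraction formalism absorbs silently. The price is that you must redo the weight bookkeeping (Step 2.d of \ref{alg:MSF decomp build} together with \ref{thm:edge of level-i}) that the paper packages once inside \ref{lem:respect MSF}, and you must justify that $M$-cluster edges, which may be heavier than $e$, legitimately sit in the Kruskal initial forest; you do both correctly, so the argument goes through.
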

\begin{proof}
We prove by induction on the hierarchy $\cH$ in a bottom-up manner.
For the base case, for each leaf cluster $C$, $\msf(C)\subseteq M_{\smalltext}\subseteq H'$
by definition. Next, we will prove that, for any large cluster $C$,
$\msf(C)\subseteq H'$, given that $\msf(C')\subseteq H'$ for all
child clusters $C'$ of $C$. By \ref{thm:MSF decomposition}, we
have that $\msf(C)=\Disjunion_{C':\textnormal{child of }C}\msf(C')\disjunion(\msf(C)\cap E^{C})$.
So it suffices to show that $\msf(C)\cap E^{C}\subseteq H'$. 

Recall that the total pruning set of $C$ is $P^{C}=\Disjunion_{C':\textnormal{large child of }C}P_{0}^{C'}$.
By \ref{thm:pruning detect failure}, for each large child $C'$
of $C$, there exists a set $W_{0}^{C'}\subseteq P_{0}^{C'}$ where
$C'[V(C')-W_{0}^{C'}]$ is connected, because we assume that no instance of $\pruning$ fails.
Let $W^{C}=\Disjunion_{C':\textnormal{large child of }C}W_{0}^{C'}$.
We need the following two claims:
\begin{claim}
$\msf(C)\cap E^{C}\subseteq\bigcup_{i=1,2,3}\msf(\overline{C}_{i}(W^{C}))\cup\incident^{C}(W^{C})$.\label{claim:pruning work}\end{claim}
\begin{proof}
Since $\msf(C)=\Disjunion_{C':\textnormal{child of }C}\msf(C')\disjunion(\msf(C)\cap E^{C})$
after any edge deletions by \ref{thm:MSF decomposition}, one can
identify $\msf(C)\cap E^{C}$ by running Kruskal's algorithm on $C$
where the initial forest consists of edges in $\Disjunion_{C':\textnormal{child of }C}\msf(C')$. 

In other words, let $\widehat{C}$ denote the graph obtained from
$C$ by replacing  each the small child $C'$ by $\msf(C')$. Let
$D_{1}$ be the graph obtained from $\widehat{C}$ by contracting
each connected component in $\Disjunion_{C':\textnormal{child of }C}\msf(C')$
into a single node. By the property of Kruskal's algorithm, we have
$\msf(C)\cap E^{C}=\msf(D_{1})$.

Let $D_{2}$ be the graph obtained from $\widehat{C}$ by contracting,
for each large child $C'$ of $\widehat{C}$, the set $V(C')-W_{0}^{C'}$
into a single node. Using the fact that $C'[V(C')-W_{0}^{C'}]$ is
connected, we know that $V(C')-W_{0}^{C'}$ is a subset of a connected
component in $\msf(C')$. That is, $D_{1}$ can be obtained from $D_{2}$
by further contracting nodes. By \ref{fact:contract}, $\msf(D_{1})\subseteq\msf(D_{2})$.
Since we know $\msf(D_{1})\subseteq E^{C}$, we have $\msf(D_{1})\subseteq\msf(D_{2})\cap E^{C}$. 

Observe that $\overline{C}(W^{C})$ is exactly the graph that can
be obtained from $D_{2}$ by removing the nodes in $W^{C}=\Disjunion_{C':\textnormal{large child of }C}W_{0}^{C'}$.
Let $E'$ be the edges in $D_{2}$ with some endpoint incident to
$W^{C}$. So $\msf(D_{2}-E')=\msf(\overline{C}(W^{C}))$. Having all
these, we can conclude
\begin{align*}
\msf(C)\cap E^{C} & \subseteq\msf(D_{2})\cap E^{C}\\
 & =\msf((D_{2}-E')\disjunion E')\cap E^{C}\\
 & \subseteq(\msf(D_{2}-E')\cup E')\cap E^{C} & \mbox{by \ref{fact:sparsify}}\\
 & =(\msf(\overline{C}(W^{C}))\cup E')\cap E^{C}\\
 & \subseteq\msf(\overline{C}(W^{C}))\cup\incident^{C}(W^{C}) & \mbox{as }E'\cap E^{C}=\incident^{C}(W^{C})\\
 & =\msf(\bigcup_{i=1,2,3}E(\overline{C}_{i}(W^{C})))\cup\incident^{C}(W^{C}) & \mbox{ by }\bigcup_{i=1,2,3}E(\overline{C}_{i}(W^{C}))=E(\overline{C}(W^{C}))\\
 & \subseteq\bigcup_{i=1,2,3}\msf(\overline{C}_{i}(W^{C}))\cup\incident^{C}(W^{C}) & \mbox{by \ref{fact:sparsify}.}
\end{align*}
\end{proof}
\begin{claim}
For $i=1,2,3$, $\msf(\overline{C}_{i}(W^{C}))\cup\incident^{C}(W^{C})\subseteq\msf(\overline{C}_{i}(P^{C}))\cup\incident^{C}(P^{C})$.
\label{claim:can prune too much}\end{claim}
\begin{proof}
Let $E'=E(\overline{C}_{i}(W^{C}))-E(\overline{C}_{i}(P^{C}))$. Note
that $E'\subseteq\incident^{C}(P^{C})-\incident^{C}(W^{C})$. Indeed,
for any edge $e\in E(\overline{C}_{i}(W^{C}))-E(\overline{C}_{i}(P^{C}))$,
$e$ must be a $C$-own edge in $\overline{C}_{i}$ that is incident
to $P^{C}$ because $e$ is removed if $P^{C}$ is pruned. Also,
$e$ is not incident to $W^{C}$ because $e$ is not removed if $W^{C}$
is pruned. So we have

\begin{align*}
\msf(\overline{C}_{i}(W^{C})) & =\msf(\overline{C}_{i}(P^{C})\cup E')\\
 & \subseteq\msf(\overline{C}_{i}(P^{C}))\cup E' & \mbox{by \ref{fact:sparsify}}\\
 & \subseteq\msf(\overline{C}_{i}(P^{C}))\cup(\incident^{C}(P^{C})-\incident^{C}(W^{C})).
\end{align*}
Applying union of $\incident^{C}(W^{C})$ on both sides completes
the claim.
\end{proof}
By the above two claims, we have that 
\begin{align*}
\msf(C)\cap E^{C} & \subseteq\bigcup_{i=1,2,3}\msf(\overline{C}_{i}(W^{C}))\cup\incident^{C}(W^{C}) & \mbox{by \ref{claim:pruning work}}\\
 & \subseteq\bigcup_{i=1,2,3}\msf(\overline{C}_{i}(P^{C}))\cup\incident^{C}(P^{C}) & \mbox{by \ref{claim:can prune too much}}\\
 & \subseteq H' & \mbox{as }\overline{C}_{i}(P^{C})=\overline{C}_{i}\mbox{ by definition,}
\end{align*}
which completes the proof.
\end{proof}
By \ref{lem:can work in H'} and \ref{lem:msf in H'}, this implies
\ref{lem:can work in H}. That is, $\msf(H) = \msf(G)$.

\subsubsection{Sketch Graph is Extremely Sparse}\label{sec:sketch sparse}

The goal here is to prove the following:
\begin{lem}
	$|E(H)-\msf(H)|=O(n/\gamma)$.
	\label{lem:few non-tree edge in H}
\end{lem}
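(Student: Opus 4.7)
The plan is to reduce the bound on non-tree edges to bounding $|E(H) \setminus M_{\smalltext}|$. The reduction uses $M_{\smalltext} \subseteq \msf(H)$: indeed $M_{\smalltext} = \msf_w(G_{\smalltext})$ is the MSF of a subgraph of $G$ whose edge weights agree with those of $G$ (because $\msfdecomp$ only re-weights $C$-own edges of large clusters, not edges inside leaf clusters; see Step 2.d of \ref{alg:MSF decomp build}), so $M_{\smalltext} \subseteq \msf_w(G) = \msf(G) = \msf(H)$ by \ref{lem:can work in H}. Hence $|E(H) - \msf(H)| \le |E(H) \setminus M_{\smalltext}|$, and the goal becomes bounding the latter by $O(n/\gamma)$.

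Decomposing $|E(H) \setminus M_{\smalltext}|$ from the definition of $H$, and using that $\incident^C(P^C)$, $J^C$, and $\msf(\overline{C}_i) \setminus M_{\smalltext}(C)$ are $C$-own edges of large clusters (disjoint from $M_{\smalltext}$ by \ref{fact:C-edge partition}) and that $M_{\smalltext}(C) \subseteq \msf(\overline{C}_i)$ by \ref{fact:edge of compressed}, I obtain
\[
|E(H) \setminus M_{\smalltext}| \le |E^{\neq}(w)| + \sum_{C:\largetext}\Big(|\incident^C(P^C)| + \sum_{i=1,2,3}|\msf(\overline{C}_i) \setminus M_{\smalltext}(C)| + |J^C|\Big).
\]
The first summand is at most $\alpha d \gamma n = n/\gamma$ by Property 2 of \ref{thm:MSF decomposition} with our parameter choices.

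The main obstacle is bounding the three large-cluster sums. The key input is a pruning accounting argument: by \ref{thm:pruning detect failure} each pruning update runs in time $\pi$ and therefore adds at most $O(\pi)$ nodes to the pruning set, and each deletion in $G$ triggers pruning updates in at most $d$ ancestor large clusters (one per level, by \ref{lem:msfdecomp depth}), so over the $T$ deletions the total number of pruned nodes is $O(\pi d T)$; substituting $T \le n/(\pi d \gamma)$ yields $\sum_{C:\largetext} |P^C| = O(n/\gamma)$. Max degree 3 then gives $\sum_C |\incident^C(P^C)| \le 3 \sum_C |P^C| = O(n/\gamma)$, and since a junk edge arises only when pruning removes an $\msf(\overline{C}_i)$-edge still in $G$, $\sum_C |J^C|$ is also $O(n/\gamma)$. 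For the remaining sum, $|\msf(\overline{C}_i) \setminus M_{\smalltext}(C)| = |\msf(\overline{C}_i)| - |M_{\smalltext}(C)|$ is at most the number of extra components in $\overline{C}_i$ beyond those contributed by $M_{\smalltext}(C)$. Using the structure of $\overline{C}_i$ (super nodes are isolated in $\overline{C}_1$ since $E_1^{\overline{C}}$ avoids them; the small-child and super-node parts are separated in $\overline{C}_3$ by \ref{rem:nodes in C_3}) together with \ref{prop:super node few} and Property 7 of \ref{thm:MSF decomposition} (which bounds the total $C$-own edges across large clusters and hence the number of small-child components), this contribution is also $O(n/\gamma)$, completing the bound.
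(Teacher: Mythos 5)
Your decomposition of $E(H)$ and your accounting for $E^{\neq}(w)$, $\incident^{C}(P^{C})$ and $J^{C}$ coincide with the paper's proof; the two places where you deviate need comment.

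First, your justification of the reduction to $|E(H)\setminus M_{\smalltext}|$ contains a genuine error: from ``$M_{\smalltext}$ is the $\msf$ of a subgraph of $G$ whose weights agree with those of $G$'' you cannot conclude $M_{\smalltext}\subseteq\msf(G)$. The $\msf$ of a subgraph is in general \emph{not} contained in the $\msf$ of the ambient graph (the heaviest edge of a triangle is the $\msf$ of the subgraph consisting of that edge alone, yet it is a non-tree edge of the triangle); the correct direction of \ref{fact:sparsify} goes the other way. Fortunately you do not need $M_{\smalltext}\subseteq\msf(H)$: the inequality $|E(H)|-|\msf(H)|\le|E(H)\setminus M_{\smalltext}|$ already follows from the weaker fact that $M_{\smalltext}$ is a forest contained in $H$, since any forest that is a subgraph of $H$ has at most $|\msf(H)|$ edges. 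This is also what the paper implicitly relies on -- it never asserts $M_{\smalltext}\subseteq\msf(H)$.

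Second, for $\sum_{C,i}|\msf(\overline{C}_{i})\setminus M_{\smalltext}(C)|$ you take a genuinely different route. The paper (\ref{lem:few non tree M_123}) embeds all the $\msf(\overline{C}_{i})$ for fixed $i$ into the $\msf$ of one auxiliary graph $G'_{i}$ via an auxiliary hierarchy $\cH_{i}$ and bounds $|\msf(G'_{i})\setminus M_{\smalltext}|$ by the number of components of $M_{\smalltext}$. Your per-cluster count -- $|\msf(\overline{C}_{i})|-|M_{\smalltext}(C)|$ equals the number of components of $M_{\smalltext}(C)$ on $V(\overline{C})$ minus the number of components of $\overline{C}_{i}$, which is legitimate since $M_{\smalltext}(C)\subseteq\msf(\overline{C}_{i})$ by \ref{fact:edge of compressed} -- is more elementary and avoids $\cH_{i}$ entirely. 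However, the bookkeeping you cite does not close the argument: Property 7 of \ref{thm:MSF decomposition} bounds $C$-own \emph{edges} and says nothing about the number of small-child components. What you actually need is Property 6 (every leaf cluster has at least $s_{low}/3=\gamma/3$ nodes; since leaves are node-disjoint there are $O(n/\gamma)$ of them, and the $T$ deletions create at most $T=O(n/\gamma)$ further components), together with a bound on the total number of super nodes over all large clusters: at each level the large clusters are edge-disjoint with at least $s_{high}=n/\gamma$ edges each, so there are $O(\gamma)$ per level and $O(\gamma^{2})=o(n/\gamma)$ in total. With those two counts in place of your citation, your argument goes through and yields the same $O(n/\gamma)$ bound.
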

To prove this, we need to define some definitions. Fix $i\in\{1,2,3\}$.
Let $G'_{i}=(V,M_{\smalltext}\disjunion\Disjunion_{C:\largetext}E_{i}^{\overline{C}})$.
Recall that $E_{i}^{\overline{C}}$ is the $C$-own edges in $\overline{C}$
incident whose $(i-1)$ endpoints are incident to super nodes in $\overline{C}$.
Note that $G'_{i}$ is a subgraph of $G'$. We can also define a corresponding
hierarchical decomposition $\cH_{i}$ of $G'_{i}$. For each small
cluster $C$ in $\cH$, let $C_{i}=(V(C),\msf(C))=(V(C),M_{\smalltext}[V(C)])$
be a small cluster in $\cH_{i}$. For each large cluster $C$ in $\cH$,
let $C_{i}$ be a large cluster in $\cH_{i}$ where $V(C_{i})=V(C)$
and the set of $C_{i}$-own edges is $E^{C_{i}}=E_{i}^{\overline{C}}$.
From this definition, for every cluster $C\in\cH$, there is a corresponding
cluster $C_{i}\in\cH_{i}$. For each large cluster $C\in\cH$, $\overline{C}_{i}$
is a subgraph of the compressed cluster $\overline{C}$. We have the
following relation between $\overline{C}_{i}$ and $C_{i}$: 
\begin{lem}
	For any large cluster $C\in\cH$ and $i\in\{1,2,3\}$, $\msf(\overline{C}_{i})\subseteq\msf(C_{i})$.\label{lem:C_i bar in C_i}\end{lem}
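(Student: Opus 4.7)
The plan is to show that $\overline{C}_i$ can be obtained from $C_i$ by a sequence of node contractions, and then appeal to \ref{fact:contract} to conclude $\msf(\overline{C}_i) \subseteq \msf(C_i)$.

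First, I would explicitly write out the edge sets of both graphs. By the construction of $\cH_i$, the cluster $C_i$ has own edges $E^{C_i} = E_i^{\overline{C}}$, and its children in $\cH_i$ correspond bijectively to the children of $C$ in $\cH$: each small child $C'$ contributes $\msf(C')$, and each large child $C'$ contributes the recursively-defined $E(C'_i)$ (where every edge of $E(C'_i)$ has both endpoints in $V(C'_i)$). Thus
\begin{align*}
E(C_i) \;=\; E_i^{\overline{C}} \;\cup\; M_{\smalltext}(C) \;\cup\; \bigcup_{C':\text{large child of }C} E(C'_i).
\end{align*}
On the other hand, by the definitions in \ref{sec:MSF_preprocess} we have $E(\overline{C}_i) = M_{\smalltext}(C) \disjunion E_i^{\overline{C}}$. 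Observe that the set of $C$-own edges incident to $P^C$ is automatically absent from both graphs: in $\overline{C}_i$ this follows from $E_i^{\overline{C}} \subseteq E^{\overline{C}} = E^C - \incident^C(P^C)$, and in $C_i$ because $E^{C_i} = E_i^{\overline{C}}$ by definition.

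Second, I would check that contracting, for each large child $C'$ of $C$, all nodes of $V(C'_i) = V(C')$ into a single super node transforms $C_i$ into $\overline{C}_i$ (up to self-loops). Under this contraction: (i) edges in $\bigcup_{C' \text{ large}} E(C'_i)$ become self-loops (since each such edge has both endpoints in some single $V(C'_i)$) and contribute nothing to the $\msf$; (ii) edges in $M_{\smalltext}(C)$ have both endpoints inside small children, so they are unaffected; and (iii) edges in $E_i^{\overline{C}}$ keep their identities but see their endpoints lying in large children relabeled to the corresponding super nodes — which is exactly their endpoint structure in $\overline{C}_i$, and is consistent with the partition $E^{\overline{C}} = E_1^{\overline{C}} \disjunion E_2^{\overline{C}} \disjunion E_3^{\overline{C}}$ indexed by super-node endpoint count. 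Therefore, the contracted multigraph agrees with $\overline{C}_i$ on all non-self-loop edges, so they have the same $\msf$.

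Finally, applying \ref{fact:contract} iteratively (one node-contraction at a time, pairing up nodes inside each large child until it is reduced to a single super node) yields that the $\msf$ of the contracted graph is contained in $\msf(C_i)$, under the convention that contracted edges remember their original identities. Combining this with the equality of $\msf$ established above gives $\msf(\overline{C}_i) \subseteq \msf(C_i)$, as required.

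The main obstacle I anticipate is the bookkeeping step in the middle paragraph: one must carefully argue that nothing is lost or misidentified during contraction. In particular, the subtle case of an edge in $E_3^{\overline{C}}$ whose two endpoints lie in the same large child $V(C')$ (which becomes a self-loop at the super node) must be handled; this is harmless since self-loops never appear in any $\msf$, so they can be safely ignored on both sides of the containment.
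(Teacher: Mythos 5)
Your proposal is correct and follows exactly the paper's argument: the paper's proof is the single observation that $\overline{C}_{i}$ is obtained from $C_{i}$ by contracting each large child $C'_{i}$ into a single node, followed by an appeal to \ref{fact:contract}. Your additional bookkeeping (verifying the edge sets, noting that edges internal to large children become discardable self-loops, and that $C$-own self-loops never enter any $\msf$) is a sound elaboration of the same one-line argument.
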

\begin{proof}
	Observe that $\overline{C}_{i}$ can be obtained from $C_{i}$ by
	contracting each large child $C'_{i}$ of $C_{i}$ into a single node.
	By \ref{fact:contract}, $\msf(\overline{C}_{i})\subseteq\msf(C_{i})$. \end{proof}
\begin{lem}
	For any cluster $C_{i}\in\cH_{i}$, $\msf(C_{i})\subseteq\msf(G'_{i})$
	.\label{lem:C_i in G'_i}\end{lem}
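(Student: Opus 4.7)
The plan is to show that $\msf(C'_i) \subseteq \msf(C^*_i)$ for every parent--child pair $(C^*_i, C'_i)$ in $\cH_i$, and then chain these inclusions from $C_i$ up to the root of $\cH_i$ (which is $G'_i$ itself) to obtain $\msf(C_i) \subseteq \msf(G'_i)$.

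The child-to-parent containment will follow from an analog of property 3 of \ref{thm:MSF decomposition}, namely
\[
\msf(C^*_i) \;=\; \Disjunion_{C'_i \text{ child of } C^*_i} \msf(C'_i) \,\disjunion\, \bigl(\msf(C^*_i) \cap E^{C^*_i}\bigr).
\]
I intend to prove this identity by adapting the Kruskal-based argument used in the proof of \ref{lem:respect MSF}. First, every $M$-cluster edge in $E(C^*_i)$ lies in $M_{\smalltext}[V(C^*)] \subseteq \msf(G)$; since these edges are never re-weighted while all other weights are only increased when passing from $w$ to $w'$, each such edge remains the lightest across some cut in $C^*_i$ (under $w'$) and so lies in $\msf(C^*_i)$ by the cut property. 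These edges therefore form the initial forest in a Kruskal computation of $\msf(C^*_i)$. Second, the own edges satisfy $E^{C^*_i} = E_i^{\overline{C^*}} \subseteq E^{C^*}$ by \ref{fact:edge of compressed}, so by Step 2.d of \ref{alg:MSF decomp build} they carry $w'(e) \ge m - i^* \cdot m/d' + 0.5$, where $i^*$ is the level of $C^*$ in $\cH$. On the other hand, every non-$M$-cluster edge in $E(C^*_i) \setminus E^{C^*_i}$ lies in $E_i^{\overline{D}} \subseteq E^D$ for some large cluster $D$ which is a proper descendant of $C^*$ in $\cH$; by \ref{thm:edge of level-i} such edges belong to $E_{\ge i^*+1}$ and hence have $w'$-value strictly less than $m - i^* \cdot m/d' + 0.5$. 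Thus Kruskal processes every non-own subtree edge of $C^*_i$ before any own edge of $C^*_i$, and the proof of \ref{lem:respect MSF} transfers essentially verbatim, yielding the claimed decomposition of $\msf(C^*_i)$.

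The main obstacle is to verify the weight-separation step above in the setting of $\cH_i$, since the own edges of $C^*_i$ are selected according to the contracted structure of $\overline{C^*}$ (by how many endpoints touch super nodes) rather than being all of $E^{C^*}$. What saves us is the inclusion $E_i^{\overline{C^*}} \subseteq E^{C^*}$, together with the fact that the weight function on $G'_i$ is just the restriction of $w'$ to a subset of edges; hence the weight bounds proved in \ref{lem:respect MSF} for $\cH$ carry over unchanged to $\cH_i$. Once the decomposition identity is established, a straightforward upward induction along $\cH_i$ yields the desired inclusion $\msf(C_i) \subseteq \msf(G'_i)$.
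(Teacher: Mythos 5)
Your overall strategy---establish the parent--child decomposition identity in $\cH_i$ and chain it upward to the root---is exactly the paper's, but you re-derive the identity from scratch where the paper gets it for free. Property 3 of \ref{thm:MSF decomposition} is deliberately stated for an \emph{arbitrary} deletion set $D$: since $G'_i$ is a subgraph of $G'$ and one checks that $E(C_i)=E(C)\cap E(G'_i)$ for every cluster $C$, each $C_i$ is precisely $C-D$ for the single fixed set $D=E(G')-E(G'_i)$, and the identity $\msf(C_i)=\Disjunion_{C'_i}\msf(C'_i)\disjunion(\msf(C_i)\cap E^{C_i})$ is an immediate instance of that property. The Kruskal argument does not need to be repeated.

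If you do insist on re-running the Kruskal argument, two steps need repair. First, the assertion $M_{\smalltext}[V(C^*)]\subseteq\msf(G)$ is false in general: $M_{\smalltext}$ is a union of MSFs of small clusters, which are subgraphs of $G$, and the MSF of a subgraph need not sit inside $\msf(G)$; what is true (and what your cut-property argument actually uses) is only that the $E(\cC_M)$-edges are in $\msf(G)$, are never re-weighted, and hence survive into $\msf(G')$ and into $\msf(C^*_i)$. Second, your case analysis of the non-own child edges is incomplete: besides the edges in $E_i^{\overline{D}}$ for large proper descendants $D$, the set $E(C^*_i)\setminus E^{C^*_i}$ also contains the non-$E(\cC_M)$ edges of $\msf(D)$ for \emph{small} descendants $D$, which lie in no $E_i^{\overline{D}}$. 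These still satisfy the required weight separation (they lie in $E_{\ge i^*+1}$ and leaf-cluster edges are never re-weighted), so the argument can be patched, but as written the claim that every such edge "lies in $E_i^{\overline{D}}$ for some large cluster $D$" is wrong.
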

\begin{proof}
	By \ref{thm:MSF decomposition} and the fact that $G'_{i}$ is a subgraph
	of $G'$. We have the following: for any cluster $C_{i}\in\cH_{i}$,
	$\msf(C_{i})=\Disjunion_{C_{i}':\textnormal{child of }C_{i}}\msf(C_{i}')\disjunion(\msf(C_{i})\cap E^{C_{i}})$.
	In particular, $\msf(C_{i}')\subseteq\msf(C_{i})$ for any child cluster
	$C'_{i}$ of $C_{i}$. Therefore, $\msf(C_{i})\subseteq\msf(G'_{i})$
	because $G'_{i}$ is the root cluster of $\cH_{i}$. \end{proof}
\begin{lem}
	$|\bigcup_{C:\largetext}\msf(\overline{C}_{i})\setminus M_{\smalltext}|\le\frac{n}{s_{low}/3}+T$.\label{lem:few non tree M_123}\end{lem}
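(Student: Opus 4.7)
The plan is to sandwich $\bigcup_{C:\largetext}\msf(\overline{C}_i)$ strictly inside $\msf(G'_i)$ and use $M_{\smalltext}$ as a ``base forest'' contained in $\msf(G'_i)$. Once this is set up, the bound reduces to counting how many components the forest $(V,M_{\smalltext})$ can have, which is an easy accounting using Property~6 of \ref{thm:MSF decomposition} and the total number of deletions.

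First I would observe the upper containment $\bigcup_{C:\largetext}\msf(\overline{C}_i)\subseteq \msf(G'_i)$ by chaining \ref{lem:C_i bar in C_i} and \ref{lem:C_i in G'_i}. Next comes what I expect to be the key (and main) step: showing $M_{\smalltext}\subseteq \msf(G'_i)$. This is where the specific construction of $\cH_i$ matters. For every small cluster $C\in\cH$, the corresponding $C_i\in\cH_i$ is defined so that $E(C_i)=\msf(C)$, which is already a forest, hence $\msf(C_i)=\msf(C)$. Applying \ref{lem:C_i in G'_i} gives $\msf(C)\subseteq \msf(G'_i)$, and since distinct small clusters of $\cH$ have disjoint vertex sets (they are leaves of the hierarchy tree), taking the edge-disjoint union over all small $C$ yields $M_{\smalltext}\subseteq \msf(G'_i)$.

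With both containments in hand, the third step is purely combinatorial: because $M_{\smalltext}$ is a forest contained in the spanning forest $\msf(G'_i)$, if $c$ denotes the number of components of $(V,M_{\smalltext})$ and $c'$ denotes the number of components of $G'_i$, then $|\msf(G'_i)|=n-c'$ and $|M_{\smalltext}|=n-c$, so
\[
\Bigl|\bigcup_{C:\largetext}\msf(\overline{C}_i)\setminus M_{\smalltext}\Bigr|\;\le\;|\msf(G'_i)\setminus M_{\smalltext}|\;=\;c-c'\;\le\;c.
\]

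Finally, to bound $c$, I would argue as follows. The small clusters partition $V$, each has at least $s_{low}/3$ vertices (Property~6 of \ref{thm:MSF decomposition}), and each is initially connected (from the expansion-decomposition guarantee in \ref{sub:decomp given partition}); so initially every $\msf(C)$ is a spanning tree of $V(C)$ and $(V,M_{\smalltext})$ has at most $n/(s_{low}/3)$ components. Each of the $T$ edge deletions performed on $G$ removes at most one edge from $M_{\smalltext}$ (since it either lies in the unique small cluster containing it or in no small cluster at all) and can therefore increase the component count by at most one. Hence $c\le n/(s_{low}/3)+T$ at all times, giving the claimed bound. The only delicate point is the inclusion $M_{\smalltext}\subseteq \msf(G'_i)$; everything else is a short counting argument.
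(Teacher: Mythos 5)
Your proof is correct and follows essentially the same route as the paper: the same two containments $\bigcup_{C:\largetext}\msf(\overline{C}_{i})\subseteq\msf(G'_{i})$ and $M_{\smalltext}\subseteq\msf(G'_{i})$ via \ref{lem:C_i bar in C_i} and \ref{lem:C_i in G'_i}, followed by the same component-counting argument using Property~6 of \ref{thm:MSF decomposition} and the bound of $T$ deletions. Your write-up is merely a bit more explicit about why $\msf(C_i)=\msf(C)$ for small clusters and about the $c-c'\le c$ accounting.
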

\begin{proof}
	By \ref{lem:C_i bar in C_i} and \ref{lem:C_i in G'_i}, we have $\bigcup_{C:\largetext}\msf(\overline{C}_{i})\subseteq\bigcup_{C:\largetext}\msf(C_{i})\subseteq\msf(G'_{i})$.
	So it suffices to bound $|\msf(G'_{i})\setminus M_{\smalltext}|$.
	Next, observe that $M_{\smalltext}=\Disjunion_{C_{i}:\smalltext}\msf(C_{i})\subseteq\msf(G'_{i})$
	by the definition of small $C_{i}$ and \ref{lem:C_i in G'_i}. Therefore,
	$|\msf(G'_{i})\setminus M_{\smalltext}|$ is at most the number of
	connected components in $M_{\smalltext}$. 
	
	Before the first edge deletion, we have that all small clusters are
	connected and each small cluster has at least $s_{low}/3$ nodes by
	\ref{thm:MSF decomposition}. So there are at most $\frac{n}{s_{low}/3}$
	connected components in $M_{\smalltext}$ at that time. After $T$
	edge deletions, the number of connected components can be increased
	by at most $T$. So $|\msf(G'_{i})\setminus M_{\smalltext}|\le\frac{n}{s_{low}/3}+T$.
\end{proof}

\paragraph{Proof of \ref{lem:few non-tree edge in H}.}

Now, we can bound the number of the non-tree edges of $H$.
\begin{proof}
	Recall that $E(H)=E^{\neq}(w)\cup M_{\smalltext}\cup\bigcup_{C:\largetext}(\incident^{C}(P^{C})\cup\bigcup_{i=1,2,3}\msf(\overline{C}_{i})\cup J^{C})$.
	First, by \ref{thm:MSF decomposition}, $|E^{\neq}(w)|\le\alpha d\gamma n=O(n/\gamma)$.
	Next, by \ref{lem:few non tree M_123}, $|\bigcup_{i=1,2,3;C:\largetext}\msf(\overline{C}_{i})\setminus M_{\smalltext}|=O(\frac{n}{s_{low}/3}+T)=O(n/\gamma)$. 
	
	Next, for each edge update on each large cluster, $\pruning$ spends time
	by at most $\pi$ by the definition of $\pi$. Therefore, $\sum_{C:\largetext}|P^{C}|\le T\times\pi d$
	because, for each deletion of an edge $e$, $e$ is contained in at
	most $d$ clusters as the depth of $\cH$ is at most $d$ by \ref{thm:MSF decomposition},
	and, for each large cluster $C$ whose edge is deleted, $|P^{C}|$
	can grow by at most $\pi$. Hence, $|\bigcup_{C:\largetext}\incident^{C}(P^{C})|=O(\sum_{C:\largetext}|P^{C}|)=O(T\pi d)=O(n/\gamma).$
	Finally, we bound $|\bigcup_{C:\largetext}J^{C}|$. By definition,
	$J^{C}$ contains edges that are removed from $\msf(\overline{C}_{i})$,
	over all $\overline{C}_{i}$ and $i\in\{1,2,3\}$, but are not deleted
	from $G$ yet. So $J^{C}\subseteq\incident^{C}(P^{C})$. Hence $|\bigcup_{C:\largetext}J^{C}|\le|\bigcup_{C:\largetext}\incident^{C}(P^{C})|=O(n/\gamma)$
	as well. 
\end{proof}

\subsection{Running Time }\label{sec:MSF_time}

In this section, we assume again that the algorithm does not fails. 
Under this assumption, we analyze the running time of the algorithm.
We bound the preprocessing time in \ref{sec:time_pre}, the time
needed for maintaining the sketch graph $H$ itself in \ref{sec:time_sketch}, 
and finally, in \ref{sec:time_msf(H)}, the time needed for maintaining $\msf(H)$ in $H$ 
that changes more than one edge per time step.

Recall the following notations. An instance $\cA(m_{0},p_{0})$, for any
$m_{0}$ and $p_{0}$, has preprocessing time $t_{pre}(m_{0},p_{0})$ and deletion time $t_{u}(m_{0},p_{0})$.
Also, an instance $\cA_{few}(m_{1},k_{1},B_{1},p_{1})$, for any $m_{1}$, $k_1$, $B_1$, and $p_{1}$, 
has preprocessing time $t{}_{pre}^{few}(m_{1},k_{1},B_{1},p_{1})$, batch insertion time 
$t{}_{ins}^{few}(m_{1},k_{1},B_{1},p_{1})$, and deletion time $t{}_{del}^{few}(m_{1},k_{1},B_{1},p_{1})$.

\subsubsection{Preprocessing}\label{sec:time_pre}

Next, we bound the preprocessing time, which in turn is needed for
bounding the update time later in \ref{lem:final update time}.
\begin{lem}
Given an $n$-node $m$-edge graph $G$ with max degree 3 and a parameter
$p$, \ref{alg:MSF prep} takes $O(m^{1+O(\sqrt{\log\log n/\log n})}\log\frac{1}{p})$
time.\label{lem:final prep time}\end{lem}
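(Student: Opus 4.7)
I would bound the time spent in each of the six steps of \ref{alg:MSF prep} separately and sum, applying the inductive hypothesis $t_{pre}(m',p')=O((m')^{1+O(\sqrt{\log\log m'/\log m'})}\log\frac{1}{p'})$ for $m'<m$ to handle the two sources of recursion: initializing $\cA$ on small clusters in Step~2, and the inner $\cA$-calls triggered by \ref{thm:reduc restricted dec} inside $\cA_{few}$ in Steps~4b and~6. Recall that $\gamma=n^{O(\sqrt{\log\log n/\log n})}$, $d=\gamma$, $s_{low}=\gamma$, $s_{high}=n/\gamma$, and $m=O(n)$.

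First I would dispose of the non-recursive contributions. By \ref{thm:MSF decomposition}, Step~1 costs $\tilde O(nd\gamma\log\frac{1}{p})=\tilde O(n\gamma^{2}\log\frac{1}{p})$. By \ref{thm:pruning detect failure}, Step~3 costs $\tilde O(|V(C)|\log\frac{1}{p})$ per large non-root cluster, summing across vertex-disjoint level-$i$ clusters and $d$ levels to $\tilde O(nd\log\frac{1}{p})$. Steps~4a, 4c, and~5 together cost $\tilde O(m)$ using \ref{prop:compressed edge disj}, \ref{lem:non-tree cover}, and \ref{lem:MSF in multigraph}. All of these fit in $O(n^{1+O(\sqrt{\log\log n/\log n})}\log\frac{1}{p})$.

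Next, for Steps~4b and~6 the key structural fact is that the graph fed to the inner $\cA$-call has size $O(n/\gamma)$. For Step~6 this is \ref{lem:few non-tree edge in H}: $H$ has $k_{H}=O(n/\gamma)$ non-tree edges, so by \ref{thm:reduc restricted dec} this step costs $t_{pre}(O(n/\gamma),p')+O(m\log^{2}m)$. For Step~4b on a large non-leaf cluster $C$, the number of non-tree edges in $\overline{C}_{1}$ is at most $|E_{1}^{\overline{C}}|\le|E^{C}|$, and \ref{prop:C-own edges} combined with property~7 of \ref{thm:MSF decomposition} gives $|E^{C}|\le n/(d-2)+\alpha\gamma|V(C)|=O(n/\gamma)$. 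Since each large non-leaf cluster has $|E(C)|\ge s_{high}=n/\gamma$ and the edges of level-$i$ clusters are disjoint, there are $O(\gamma)$ such clusters per level and $O(\gamma^{2})$ overall; so the recursive portion of Step~4b sums to $O(\gamma^{2})\cdot t_{pre}(O(n/\gamma),p')$ and the overhead to $\tilde O(m+n\gamma)$ (using $\sum_{C}|V(C)|\le nd$ and edge-disjointness). Plugging in the inductive bound and using that $\gamma=n^{o(1)}$ so $\epsilon(n/\gamma)=(1+o(1))\epsilon(n)$ for $\epsilon(x)=c\sqrt{\log\log x/\log x}$, this evaluates to $O(n\gamma\cdot(n/\gamma)^{\epsilon(n/\gamma)}\log\frac{1}{p})=O(n^{1+O(\sqrt{\log\log n/\log n})}\log\frac{1}{p})$, absorbing the $\log(1/p')=\log(1/p)+O(\log\log n)$ slack from \ref{thm:reduc restricted dec}.

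Finally, I would handle Step~2 by unrolling the preprocessing recursion triggered by calling $\cA(C,p)$ on each small cluster. Since $|E(C)|\le s_{high}=n/\gamma$, the recursion depth before reaching $O(1)$-size clusters is $O(\log n/\log\gamma)=O(\sqrt{\log n/\log\log n})$; at each recursion level, the clusters being processed are edge-disjoint with total edge count at most $m$, so the per-level cost (dominated by the Step~1 call at that level) is $\tilde O(m\gamma^{2}\log\frac{1}{p})$. Summing across the $O(\sqrt{\log n/\log\log n})$ levels yields $O(m^{1+O(\sqrt{\log\log m/\log m})}\log\frac{1}{p})$ as claimed. The main obstacle throughout is ensuring that the implicit constant inside the $O(\cdot)$ in the exponent stays fixed through the induction: this holds because every bound above replaces an $n^{1+\epsilon(n)}$-type factor by at most itself times a $\gamma^{O(1)}=n^{O(\sqrt{\log\log n/\log n})}$ factor, so a sufficiently large constant chosen once at the outset closes the induction.
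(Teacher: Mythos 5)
Your high-level plan matches the paper's, but the accounting for Step 4b has a gap that becomes fatal once the bound is fed back into the recursion. You charge each of the $O(\gamma)$ large clusters at a given level the worst-case cost $t_{pre}(O(n/\gamma),p')$, arriving at $O(\gamma^{2})\cdot t_{pre}(O(n/\gamma),p')$ for the recursive portion. The resulting recurrence $t_{pre}(m)\le c\gamma^{2}\,t_{pre}(O(m/\gamma))+\tilde O(m\gamma^{2})$ does not solve to $m^{1+o(1)}$: unrolling to depth $j$ produces $(c\gamma^{2})^{j}$ subproblems of size $O(m/\gamma^{j})$, i.e.\ total size $O(m)\cdot(c\gamma)^{j}$, which at depth $d_{0}=\log_{\gamma}m$ is already $\Omega(m^{2})$. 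Equivalently, in the language of your closing paragraph, each level adds a fixed $+B$ to the constant $A$ in the exponent $1+A\sqrt{\log\log n/\log n}$, and over $\Theta(\sqrt{\log n/\log\log n})$ levels this accumulates to a \emph{constant added to the exponent itself}; no choice of $A$ "at the outset" closes that induction. The fix (and what the paper does) is to sum the Step-4b costs \emph{within a level} using superadditivity of $t_{pre}$: property 7 of \ref{thm:MSF decomposition} bounds the union $\sum_{C:\text{level-}i}|E_{1}^{\overline{C}}|\le|\Disjunion_{C}E^{C}|=O(n/\gamma)$, so an entire level costs a single $t_{pre}(O(n/\gamma),p')$, not $\gamma$ of them. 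Steps 2, 4b and 6 together then yield $t_{pre}(m,p)\le O(\gamma)\cdot t_{pre}(O(m/\gamma),O(p/\log m))+\tilde O(m\gamma^{2}\log\tfrac{1}{p})$, whose per-level overhead beyond size preservation is only a constant $c$, and $c^{d_{0}}=m^{O(1/\sqrt{\log m\log\log m})}$ is absorbable; this is exactly \ref{prop:solve recursion}.

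A second, related issue is your treatment of Step 2 by "unrolling" with "total edge count at most $m$ per level": each node of the recursion tree also spawns the Step-4b and Step-6 sub-instances (which share edges with the small clusters through $M_{\smalltext}$), so the total size at depth $j$ is $O(m)\cdot c^{j}$ rather than $m$, and these sub-instances belong to the same recursion as Step 2 rather than a separate one. Both points are handled cleanly by setting up the single recurrence above and solving it explicitly, rather than by a fixed-constant induction or a level-by-level unrolling of one branch of the recursion.
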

\begin{proof}
Consider each step in \ref{alg:MSF prep}. In Step 1, we just run
the $\msf$ decomposition which takes $\tilde{O}(nd\gamma\log\frac{1}{p})=\tilde{O}(n\gamma^{2}\log\frac{1}{p})$
by \ref{thm:MSF decomposition}. In Step 2, we initialize $\cA(C,p)$
for each small cluster $C$. This takes time

\[
\sum_{C:\smalltext}t{}_{pre}(|E(C)|,p)\le\frac{n}{s_{high}}\cdot t{}_{pre}(O(s_{high}),p)=\gamma\cdot t{}_{pre}(O(n/\gamma),p).
\]
where the inequality is because $t_{pre}(m,p)$ is at least linear
in $m$, and $|E(C)|\le s_{high}$ for each small cluster $C$.
Step 3 takes total time $\tilde{O}(n\log(1/p)\cdot d)=\tilde{O}(m\gamma \log(1/p))$ because $\pruning$ from \ref{thm:pruning detect failure} 
initializes on a large cluster $C$ in time $\tilde{O}(|V(C)| \log(1/p))$ and the depth of the decomposition is $d$.
In Step 4.a, the total time for constructing all compressed clusters
$\overline{C}$ and $\overline{C}_{1},\overline{C}_{2},\overline{C}_{3}$
is just $O(n)$.

In Step 4.b, for each large cluster $C$, we initialize $\cA_{few}(\overline{C}_{1},1,p)$.
Note that the set of non-tree edges in $\overline{C}_{1}$ is contained
in $E(\overline{C}_{1})-M_{\smalltext}(C)=E_{1}^{\overline{C}}$ by
\ref{fact:edge of compressed}. So the initialization takes $t_{pre}^{few}(|E(\overline{C}_{1})|,|E_{1}^{\overline{C}}|,1,p)$
time. In total this takes time 
\begin{align*}
\sum_{C:\largetext}t_{pre}^{few}(|E(\overline{C}_{1})|,|E_{1}^{\overline{C}}|,1,p) & =\sum_{C:\largetext}(t{}_{pre}(O(|E_{1}^{\overline{C}}|),O(p/\log n))+\tilde{O}(|E(\overline{C}_{1})|)) & \mbox{by \ref{thm:reduc restricted dec}}\\
 & =\tilde{O}(n)+\sum_{C:\largetext}t{}_{pre}(O(|E_{1}^{\overline{C}}|),O(p/\log n))\\
 & =\tilde{O}(n)+\sum_{1\le i\le d}\sum_{C:\text{large, level-}i}t{}_{pre}(O(|E_{1}^{\overline{C}}|),O(p/\log n))\\
 & \le\tilde{O}(n)+\sum_{1\le i\le d}t{}_{pre}(O(n/\gamma),O(p/\log n))\\
 & =\tilde{O}(n)+\gamma\cdot t_{pre}(O(n/\gamma),O(p/\log n)) & \mbox{by }d=\gamma,
\end{align*}
where the inequality follows because $\sum_{C:\text{large, level-}i}|E_{1}^{\overline{C}}|\le\sum_{C:\text{large, level-}i}|E^{C}|\le n/(d-2)+\alpha\gamma n=O(n/\gamma)$
by \ref{thm:MSF decomposition} and $t_{pre}(m,p)$ is at least linear
in $m$. 

In Step 4.c, by \ref{lem:MSF in multigraph} and \ref{lem:non-tree cover},
the total time for initializing $\cA_{2}(\overline{C}_{2})$ and $\cA_{3}(\overline{C}_{3})$
over all large clusters $C$, is $\tilde{O}(n)$ because compressed
cluster are edge-disjoint and $\cA_{2}$ and $\cA_{3}$ have near-linear
preprocessing time.

In Step 5, we initialize $\cA_{few}(H,B,p)$. This takes time 
\begin{align*}
 & t_{pre}^{few}(|E(H)|,|E(H)-\msf(H)|,B,p)\\
 & =t_{pre}(O(|E(H)-\msf(H)|),O(p/\log n))+\tilde{O}(|E(H)|\log n)\\
 & =t_{pre}(O(n/\gamma),p)+\tilde{O}(n) & \mbox{by \ref{lem:few non-tree edge in H}.}
\end{align*}
Note that $m=\Theta(n)$. Now, we conclude that the total preprocessing
time is 

\[
t_{pre}(m,p)=\tilde{O}(m\gamma^{2}\log\frac{1}{p})+O(\gamma)\times t_{pre}(O(m/\gamma),O(p/\log m)).
\]
To solve this recurrence, we use the following fact:
\begin{fact}
Let $f(n)$ and $g(n)$ be a function where $g(n)=\Omega(n)$. If
$f(n)\le c\cdot a\cdot f(n/a)+g(n)$, then $f(n)=\tilde{O}(g(n)\cdot c^{\log_{a}n})$.\label{prop:solve recursion}
\end{fact}
Recall that $\gamma=n^{O(\sqrt{\log\log n/\log n})}$. Let $d_{0}=\log_{O(\gamma)}m=O(\frac{\log m}{\sqrt{\log m\log\log m}})=O(\sqrt{\frac{\log m}{\log\log m}})$.
After solving the recurrence, we have 
\begin{align*}
t_{pre}(m,p) & =\tilde{O}(m\gamma^{2}\log\frac{\log^{d_{0}}m}{p}\times c_{0}^{d_{0}}) & \mbox{for some constant }c_{0}\\
 & =\tilde{O}(m\gamma^{2}\log\frac{1}{p}\times m^{O(\sqrt{1/\log m\log\log m})})\\
 & =O(m^{1+O(\sqrt{\log\log m/\log m})}\log\frac{1}{p})
\end{align*}

\end{proof}

\subsubsection{Maintaining the Sketch Graph}\label{sec:time_sketch}

We bound the time for maintaining the sketch graph $H$. For convenience,
we first show the following lemma.
\begin{lem}
For some large cluster $C$ and $i\in\{2,3\}$, $\cA_{i}(\overline{C}_{i})$
takes $\tilde{O}(\gamma)$ time to update $\msf(\overline{C}_{i})$
for each edge update in $E(\overline{C}_{i})$.\label{lem:2 and 3 fast}\end{lem}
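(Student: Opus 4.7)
The plan is to dispatch the two cases $i=2$ and $i=3$ separately, each by directly invoking the bound from the relevant previously stated lemma and then plugging in that the relevant parameter is $O(\gamma)$.

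For $i=2$, I would recall that $\cA_2$ is the decremental $\msf$ algorithm from \ref{lem:non-tree cover}, whose update time on a graph $G'$ is $\tilde{O}(|S|)$, where $S$ is a vertex set containing one endpoint of every non-tree edge, with all vertices outside $S$ having constant degree. In our setting $G' = \overline{C}_2$, we take $S = S_{\overline{C}}$, the set of super nodes. Its validity as such a set follows from \ref{prop:super node cover} (every non-tree edge of $\overline{C}_2$ is incident to $S_{\overline{C}}$) and from the fact that $G$ has maximum degree $3$, so all non-super-node vertices in $\overline{C}_2$ have constant degree. Plugging in $|S_{\overline{C}}| = O(\gamma)$ from \ref{prop:super node few} gives update time $\tilde{O}(\gamma)$.

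For $i=3$, I would appeal to \ref{rem:nodes in C_3}: since $M_{\smalltext}(C)$ is maintained elsewhere and the tree edges $M_{\smalltext}(C)$ and non-tree edges $E_3^{\overline{C}}$ in $\overline{C}_3$ do not share endpoints, it suffices to run $\cA_3$ on the subgraph induced by $E_3^{\overline{C}}$. This subgraph has only $O(\gamma)$ vertices by \ref{prop:super node few} (each such edge has both endpoints among the $O(\gamma)$ super nodes). Then \ref{lem:MSF in multigraph} gives worst-case update time $\tilde{O}(n')$ on any $n'$-node multigraph, which in our case is $\tilde{O}(\gamma)$.

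Neither step involves a real obstacle — the lemma is essentially a bookkeeping step that records, in one place, the per-update cost we will sum over when later bounding the total time to maintain the sketch graph $H$. The only thing that requires care is making sure the hypotheses of \ref{lem:non-tree cover} really are met by $\overline{C}_2$ with $S = S_{\overline{C}}$ (namely, the non-tree-edge-incidence property and the constant-degree property outside $S$), which is why I would cite \ref{prop:super node cover} and the max-degree-$3$ assumption on $G$ explicitly.
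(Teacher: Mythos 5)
Your proposal is correct and follows essentially the same route as the paper: for $i=2$ it applies \ref{lem:non-tree cover} with $S=S_{\overline{C}}$ and $|S_{\overline{C}}|=O(\gamma)$ from \ref{prop:super node few}, and for $i=3$ it uses \ref{rem:nodes in C_3} together with \ref{lem:MSF in multigraph} on the $O(\gamma)$-node graph induced by $E_3^{\overline{C}}$. Your extra care in verifying the hypotheses of \ref{lem:non-tree cover} (via \ref{prop:super node cover} and the max-degree-$3$ assumption) matches what the paper does at the point where $\cA_2$ is first initialized, so nothing is missing.
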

\begin{proof}
For $i=2$, by \ref{prop:super node few} the set of super nodes $S_{\overline{C}}$
has size $|S_{\overline{C}}|=O(\gamma)$. So \ref{lem:non-tree cover},
$\cA_{2}(\overline{C}_{2})$ has $\tilde{O}(\gamma)$ update time.
For $i=3$, by \ref{rem:nodes in C_3}, the graph that $\cA_{3}(\overline{C}_{3})$
actually runs on is induced by the set of edges in $E_{3}^{\overline{C}}$,
and this graph has $O(\gamma)$ nodes. So each update takes $\tilde{O}(\gamma)$
time by \ref{lem:MSF in multigraph}. 
\end{proof}
The next lemma bounds the time for maintaining the sketch graph $H$.

\begin{lem}
Suppose that the algorithm does not fail. For each edge deletion in $G$, $H$ can
be updated in time $t_{u}(n/\gamma,p)+t_{del}^{few}(n,O(n/\gamma),1,p)+t{}_{ins}^{few}(n,O(n/\gamma),1,p)+\tilde{O}(\pi\gamma^{2})$.
Moreover, there are at most $2$ edge deletions and $B=O(\pi\gamma)$
edge insertions in $H$.\label{lem:maintain H}\end{lem}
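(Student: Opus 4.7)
The plan is to walk through each piece of $E(H)=E^{\neq}(w)\cup M_{\smalltext}\cup\bigcup_{C:\largetext}(\incident^{C}(P^{C})\cup\bigcup_{i=1,2,3}\msf(\overline{C}_{i})\cup J^{C})$, bound the time needed to update it, and count the resulting updates to $H$. By \ref{fact:C-edge partition} the deleted edge $e$ is a $C^{*}$-own edge for exactly one cluster $C^{*}\in\cH$, so I split on whether $C^{*}$ is small or large.

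If $C^{*}$ is small, one call to $\cA(C^{*},p)$ processes the deletion in time $t_{u}(|E(C^{*})|,p)\le t_{u}(n/\gamma,p)$ and changes $M_{\smalltext}$ by at most one removal and one replacement. Since $M_{\smalltext}(C')$, for $C'$ the unique large parent of $C^{*}$, is a common subset of the edge sets of $\overline{C'}_{1},\overline{C'}_{2},\overline{C'}_{3}$, these two changes must be propagated to each of them. Updating $\msf(\overline{C'}_{1})$ via $\cA_{few}(\overline{C'}_{1},1,p)$ costs $t_{del}^{few}(n,O(n/\gamma),1,p)+t_{ins}^{few}(n,O(n/\gamma),1,p)$, using $|E_{1}^{\overline{C'}}|\le|E^{C'}|=O(n/\gamma)$ from \ref{thm:MSF decomposition}, whereas the $\overline{C'}_{2}$ and $\overline{C'}_{3}$ updates cost $\tilde{O}(\gamma)$ each by \ref{lem:2 and 3 fast}. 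If instead $C^{*}$ is large, the direct removal of $e$ requires only one update to a single $\overline{C^{*}}_{i}$, which fits in the same budget (one $\cA_{few}$ call if $i=1$, else $\tilde{O}(\gamma)$).

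The rest of the work comes from pruning. The edge $e$ lies in $E(C)$ for at most $d=\gamma$ ancestor clusters, so at most $\gamma$ instances of $\pruning$ receive the deletion; each call takes at most $\pi$ time and reports at most $\pi$ new pruning vertices, giving total pruning time $O(\pi\gamma)$ and a total growth of $\sum_{C:\largetext}|P^{C}|$ of $O(\pi\gamma)$. Since $G$ has maximum degree $3$, this produces at most $O(\pi\gamma)$ new edges across all the sets $\incident^{C}(P^{C})$. The key structural observation is that every such new edge is incident to a vertex of a large child of its cluster $C$, hence to a super node of $\overline{C}$, and therefore lies in $E_{2}^{\overline{C}}\cup E_{3}^{\overline{C}}$; consequently, pruning-induced removals touch only $\overline{C}_{2}$ and $\overline{C}_{3}$, never any $\overline{C}_{1}$. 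Combined with the case analysis, $\cA_{few}$ is invoked at most once per edge type on a single $\overline{C}_{1}$, while the remaining $O(\pi\gamma)$ compressed-cluster updates all hit $\cA_{2}$ or $\cA_{3}$ and total $O(\pi\gamma)\cdot\tilde{O}(\gamma)=\tilde{O}(\pi\gamma^{2})$ by \ref{lem:2 and 3 fast}. Summing gives the claimed time bound.

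For the count of changes to $H$: the junk-edge mechanism of \ref{prop:few deletion in H} absorbs every MSF removal from an $\overline{C}_{i}$ whose underlying edge still lies in $G$, so the only deletions from $H$ are the at most two copies of $e$ that can live simultaneously in $E^{\neq}(w)$ and in one of $M_{\smalltext}$, $\incident^{C^{*}}(P^{C^{*}})$, or $\msf(\overline{C^{*}}_{i})$. The insertions consist of at most one $M_{\smalltext}$ replacement, $O(\pi\gamma)$ new edges entering $\incident^{C}(P^{C})$, and $O(\pi\gamma)$ replacement edges entering $\msf(\overline{C}_{i})$ across all clusters, yielding $B=O(\pi\gamma)$. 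The main obstacle is disciplined bookkeeping: making the $t_{del}^{few}$ and $t_{ins}^{few}$ terms each appear only once relies crucially on the structural observation that pruning never perturbs any $\overline{C}_{1}$.
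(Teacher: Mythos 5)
Your proposal is correct and follows essentially the same route as the paper's proof: the unique owning cluster / edge-disjointness of compressed clusters localizes the direct effect of the deletion, the observation that $E(\overline{C}_{1})$ is independent of $P^{C}$ (your super-node argument is just a spelled-out version of the paper's remark) confines the $\cA_{few}$ calls to a single $\overline{C}_{1}$, the depth bound $d=\gamma$ controls the pruning volume, and \ref{prop:few deletion in H} gives the $2$ deletions and $B=O(\pi\gamma)$ insertions. No gaps.
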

\begin{proof}
Recall that $E(H)=E^{\neq}(w)\cup M_{\smalltext}\cup\bigcup_{C:\largetext}(\incident^{C}(P^{C})\cup\bigcup_{i=1,2,3}\msf(\overline{C}_{i})\cup J^{C})$.
Suppose that we are given an edge deletion in $G$.

Given the edge deletion in $G$, there is at most one small cluster
$C$ where $\msf(C)\subseteq M_{\smalltext}$ is changed. $\msf(C)$
can be updated by $\cA(C,p)$ in time $t_{u}(|E(C)|,p)=t_{u}(n/\gamma,p)$.
In $M_{\smalltext}$, there are at most 1 edge deletion and at most
1 edge insertion.

As we assume that no instance of $\pruning$ fails, 
$\bigcup_{C:\largetext}\incident^{C}(P^{C})$ can be updated
in time $O(\pi d)=O(\pi\gamma)$. This is
because 1) $\cH$ has depth at most $d$ by \ref{thm:MSF decomposition}
and so each edge is contained in at most $d$ clusters, and 2) for
each large cluster $C$ whose edge is deleted, $\pruning$ spends
time at most $\pi$. Therefore, the size of $\bigcup_{C:\largetext}\incident^{C}(P^{C})$
can grow by at most $O(\pi\gamma)$ as well.

Next, we bound the time for maintaining $\bigcup_{i=1,2,3;C:\largetext}\msf(\overline{C}_{i})$.
For $i=1$, we have that there is at most one compressed cluster $\overline{C}$
where the deleted edge $e\in E(\overline{C}_{1})$ because of edge-disjointness
by \ref{prop:compressed edge disj}. Observe that $E(\overline{C}_{1})$
is determined only by $E^{C}$ and $M_{\smalltext}(C)$, and not $P^{C}$.
If $e\in M_{\smalltext}(C)$, then this generates at most one edge
deletion and at most one insertion in $E(\overline{C}_{1})$. Else,
$e\in E^{C}$, then there is one deletion in $E(\overline{C}_{1})$.
So the total cost spent by $\cA_{few}(\overline{C}_{1},1,p)$ is at most
\[
t{}_{del}^{few}(|E(\overline{C}_{1})|,|E_{1}^{\overline{C}}|,1,p)+t{}_{ins}^{few}(|E(\overline{C}_{1})|,|E_{1}^{\overline{C}}|,1,p)=t{}_{del}^{few}(n,O(n/\gamma),1,p)+t{}_{ins}^{few}(n,O(n/\gamma),1,p).
\]

For $i=2,3$, we have that $E(\overline{C}_{i})$ depends also on
$P^{C}$. Hence, there are at most $O(\pi\gamma)$ edge updates in
$\bigcup_{C:\largetext}E(\overline{C}_{i})$.
For each edge update in $E(\overline{C}_{i})$ for some large cluster
$C$, $\cA_{i}(\overline{C}_{i})$ takes $\tilde{O}(\gamma)$ time
to update $\msf(\overline{C}_{i})$ by \ref{lem:2 and 3 fast}. Therefore
the total time for updating $\bigcup_{i=2,3;C:\largetext}\msf(\overline{C}_{i})$
is $O(\pi\gamma)\times\tilde{O}(\gamma)=\tilde{O}(\pi\gamma^{2})$.
The time for updating $\bigcup_{C:\largetext}J^{C}$ is subsumed by
other steps. Therefore, the total update time is at most $t_{u}(n/\gamma,p)+t^{few}_{del}(n,O(n/\gamma),1,p)+t^{few}_{ins}(n,O(n/\gamma),1,p)+\tilde{O}(\pi\gamma^{2})$. 

To bound the edge changes in $H$, by \ref{prop:few deletion in H},
there is no edge removed from $\bigcup_{C:\largetext}(\incident^{C}(P^{C})\cup\bigcup_{i=1,2,3}\msf(\overline{C}_{i})\cup J^{C})$
except the deleted edge itself. So there are at most $2$ edge deletions
(from edges in $E^{\neq}(w)$ or $M_{\smalltext}$) and $B=O(\pi\gamma)$
edge insertions in $H$.
\end{proof}

\subsubsection{Maintaining $\protect\msf$ of the Sketch Graph}\label{sec:time_msf(H)}

Finally, we bound the time to maintain $\msf(H)$ which is the same
as $\msf(G)$ by \ref{lem:can work in H}.
\begin{lem}
Suppose that the algorithm does not fail.
The algorithm for \ref{lem:dec MST final} has  update time
$$O(m^{O(\log\log\log m/\log\log m)}\log\frac{1}{p}).$$\label{lem:final update time}\end{lem}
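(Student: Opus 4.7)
The plan is to collect every source of work spent per update into a single recurrence in $(m,p)$ and then solve it. For each deletion in $G$, \ref{lem:maintain H} accounts for the work needed to refresh $H$ itself, namely $t_u(n/\gamma,p)+t^{few}_{del}(n,O(n/\gamma),1,p)+t^{few}_{ins}(n,O(n/\gamma),1,p)+\tilde{O}(\pi\gamma^{2})$, and it shows that $H$ changes by at most two deletions and a batch of $B=O(\pi\gamma)$ insertions. On top of this we also run $\cA_{few}(H,B,p)$ to maintain $\msf(H)=\msf(G)$; by \ref{lem:few non-tree edge in H} this instance has $|E(H)|=O(n)$ edges and $k=O(n/\gamma)$ non-tree edges, so handling the $O(1)$ deletions costs $t^{few}_{del}(n,O(n/\gamma),B,p)$ and handling the batch costs $t^{few}_{ins}(n,O(n/\gamma),B,p)$.

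Next, I would expand each $t^{few}_{\cdot}(n,O(n/\gamma),B,p)$ using \ref{thm:reduc restricted dec}. With $k=O(n/\gamma)$, $B=O(\pi\gamma)$, $p'=\Theta(p/\log n)$, and the already proven preprocessing bound $t_{pre}(m',q)=O(m'^{\,1+O(\sqrt{\log\log m'/\log m'})}\log\tfrac{1}{q})$ from \ref{lem:final prep time}, each of the four terms in \ref{thm:reduc restricted dec} simplifies as follows: the $\frac{B\log k}{k}t_{pre}(15k,p')$ term becomes $O(\pi\gamma^{1+o(1)}n^{o(1)}\log n\log\tfrac{1}{p})$, the $B\log^{2}n$ term is $O(\pi\gamma\log^{2}n)$, the $\frac{k\log k}{T(k)}$ term equals $(n/\gamma)^{O(\log\log\log n/\log\log n)}\log n$ by the inductive bound on $T$, and the $\log k\cdot t_{u}(15k,p')$ term gives $O(\log n)\cdot t_{u}(O(n/\gamma),O(p/\log n))$. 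The additive $\tilde{O}(\pi\gamma^{2})$ from \ref{lem:maintain H} and the single call $t_u(n/\gamma,p)$ for maintaining $M_{\smalltext}$ on small clusters are both dominated by the above. So the update time obeys, for some constants,
\begin{align*}
t_u(n,p)\le O(\log n)\cdot t_u(O(n/\gamma),O(p/\log n))+\tilde{O}\!\left(\pi\gamma^{1+o(1)}\log\tfrac{1}{p}\right)+(n/\gamma)^{O(\log\log\log n/\log\log n)}\log n.
\end{align*}

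The key size observation to push the bookkeeping through is that $\pi=n^{O(\log\log\log n/\log\log n)}\log\tfrac{1}{p}$ by the choice of $\alpha_0$ and \ref{thm:pruning detect failure}, while $\gamma=n^{O(\sqrt{\log\log n/\log n})}$, so $\gamma=n^{o(\log\log\log n/\log\log n)}$ and the entire additive term is $\Phi(n,p):=n^{O(\log\log\log n/\log\log n)}\log\tfrac{1}{p}$. The recurrence therefore has the clean form $t_u(n,p)\le O(\log n)\cdot t_u(O(n/\gamma),O(p/\log n))+\tilde{O}(\Phi(n,p))$.

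Finally I would solve this recurrence by unrolling it $d_0=\log_{\gamma} n=O(\sqrt{\log n/\log\log n})$ times (or by invoking \ref{prop:solve recursion} after absorbing the mild increase in the probability parameter). Each level multiplies the carried additive term by $O(\log n)$ and replaces $p$ by $p/\operatorname{polylog}(n)$, so the total multiplicative blow-up is $(\log n)^{d_0}=\exp(O(\sqrt{\log n\log\log n}))=\gamma=n^{o(\log\log\log n/\log\log n)}$, while the base case $t_u(O(1),\cdot)$ is constant. Summing the geometric-in-nature contributions,
\begin{equation*}
t_u(m,p)\le O((\log m)^{d_0})\cdot \Phi(m,p)\cdot O(d_0)=m^{O(\log\log\log m/\log\log m)}\log\tfrac{1}{p},
\end{equation*}
which is exactly the claimed update time. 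The main obstacle in executing the plan is the careful bookkeeping of the polynomial-in-$\gamma$ and polylogarithmic factors (and the decaying probability parameter) through $d_0$ levels of recursion, and in particular verifying that the $\pi\cdot\operatorname{poly}(\gamma)$ contributions at every level stay within $\Phi(n,p)$; the comparison $\gamma=n^{o(\log\log\log n/\log\log n)}$ is what makes this possible.
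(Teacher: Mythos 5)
Your proposal is correct and follows essentially the same route as the paper: it combines \ref{lem:maintain H} and \ref{lem:few non-tree edge in H} with the reduction of \ref{thm:reduc restricted dec} and the preprocessing bound of \ref{lem:final prep time} to obtain the recurrence $t_u(m,p)\le O(\log m)\cdot t_u(O(m/\gamma),O(p/\log m))+O(m^{O(\log\log\log m/\log\log m)}\log\frac{1}{p})$, and then unrolls it to depth $d_0=O(\sqrt{\log m/\log\log m})$, using that $(\log m)^{d_0}=m^{O(\sqrt{\log\log m/\log m})}$ is absorbed into the additive term. The only cosmetic difference is your explicit packaging of the additive cost as $\Phi(n,p)$ and the explicit verification that $\gamma=n^{o(\log\log\log n/\log\log n)}$, both of which the paper does implicitly.
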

\begin{proof}
By \ref{lem:maintain H}, we need to spend time at most 
\[
t_{u}(n/\gamma,p)+t{}_{del}^{few}(n,O(n/\gamma),1,p)+t{}_{ins}^{few}(n,O(n/\gamma),1,p)+\tilde{O}(\pi\gamma^{2})
\]
for maintaining the sketch graph $H$ itself. Moreover, there are
only $2$ edge deletions and $B=O(\pi\gamma)$ edge insertions in $H$.
Given these updates, we can update $\msf(H)$ using $\cA_{few}(H,B,p)$
in time 
\begin{eqnarray*}
 &  & 2t{}_{del}^{few}(|E(H)|,|E(H)-\msf(H)|,B,p)+t{}_{ins}^{few}(|E(H)|,|E(H)-\msf(H)|,B,p)\\
 & = & 2t{}_{del}^{few}(n,O(n/\gamma),B,p)+t{}_{ins}^{few}(n,O(n/\gamma),B,p)
\end{eqnarray*}
by \ref{lem:few non-tree edge in H}. 
Note that $m=\Theta(n)$. Write $k=\Theta(m/\gamma)$. We have that the total
update time to maintain $\msf(H)$ is

\begin{eqnarray*}
 &  & t_{u}(m,p)\\
 & \le & t_{u}(k,p)+3t{}_{del}^{few}(O(m),k,B,p)+2t{}_{ins}^{few}(O(m),k,B,p)+\tilde{O}(\pi\gamma^{2})\\
 & = & O(\frac{B\log k}{k}\cdot t{}_{pre}(O(k),O(p/\log m))+B\log^{2}m+\frac{k\log k}{T(k)}+\log k\cdot t{}_{u}(O(k),O(p/\log m)))+\tilde{O}(\pi\gamma^{2})\\
 & = & O(\log k\cdot t{}_{u}(O(k),O(p/\log m)))+O(m^{O(\log\log\log m/\log\log m)}\log\frac{1}{p}).
\end{eqnarray*}
The first equality is because of \ref{thm:reduc restricted dec}.
To show the last equality, note first that by \ref{lem:final prep time},
\begin{align*}
\frac{B\log k}{k}\cdot t{}_{pre}(O(k),p') & =O(\frac{B\log k}{k}\cdot k^{1+O(\sqrt{\log\log m/\log m})}\log\frac{1}{p})\\
 & =\tilde{O}(\pi\gamma k^{O(\sqrt{\log\log m/\log m})}\log\frac{1}{p}) & B=O(\pi\gamma)\\
 & =O(m^{O(\log\log\log m/\log\log m)}\log\frac{1}{p}).
\end{align*}
Next, note that $T(m)=m/(3\pi d\gamma)=m^{1-\Theta(\log\log\log m/\log\log m)}$.
So $T(k)=k^{1-\Theta(\log\log\log k/\log\log k)}$ and we have $\frac{k\log k}{T(k)}=\frac{k\log k}{k^{1-\Theta(\log\log\log k/\log\log k)}}=O(m^{O(\log\log\log m/\log\log m)})$.
Also, $O(B\log^{2}m+\pi\gamma^{2})=O(m^{O(\log\log\log m/\log\log m)})$.
Therefore, we have 
\[
t_{u}(m,p)=O(\log m\cdot t{}_{u}(O(m/\gamma),O(p/\log m)))+O(m^{O(\log\log\log m/\log\log m)}\log\frac{1}{p}).
\]
So solve this recurrence, note that $d_{0}=\log_{O(\gamma)}m=O(\frac{\log m}{\sqrt{\log m\log\log m}})=O(\sqrt{\frac{\log m}{\log\log m}})$
is the depth of the recursion. After solving the recurrence, we have
\begin{eqnarray*}
t_{u}(m,p) & = & O(\log^{d_{0}}m)\times O(m^{O(\log\log\log m/\log\log m)}\log\frac{\log^{d_{0}}m}{p})\\
 & = & O(m^{O(\log\log\log m/\log\log m)}\log\frac{1}{p}).
\end{eqnarray*}

\end{proof}

\subsubsection{Wrapping Up}\label{sec:wrap up}

As a last step, we show that the algorithm fails with low probability.
\begin{lem}
	\label{lem:fail low prob}For each update, the algorithm fails with
	probability $O(\gamma p)$.\end{lem}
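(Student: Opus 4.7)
The plan is to enumerate, for a single edge deletion, every randomized subroutine that could trigger failure, bound each one's failure probability by $p$, and then apply a union bound. Recall that the algorithm declares failure only when (i) some instance of $\cA$ or $\cA_{few}$ spends more than its stated worst-case time bound on the current operation, or (ii) some instance of $\pruning$ (from \ref{thm:pruning detect failure}) reports failure.

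First I will identify the instances that are actually touched by a single update. An edge deletion of $e$ lies in exactly one $C$-own edge set $E^C$, and it can appear in $E(C')$ only for ancestors $C'$ of $C$; since $\cH$ has depth $d = \gamma$ by \ref{thm:MSF decomposition}, at most $d$ clusters are affected. Concretely this gives: at most one small cluster whose $\cA(C,p)$ processes the deletion; at most $d$ large clusters whose $\pruning$ instance is called to update $P_0^{C'}$; at most $d$ large clusters whose $\cA_{few}(\overline{C}_1,1,p)$ sees the (at most two) updates to $E(\overline{C}_1)$ from \ref{lem:maintain H}; and finally one call to $\cA_{few}(H,B,p)$ to absorb the $O(\pi\gamma)$ insertions and two deletions that result. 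The $\cA_2$ and $\cA_3$ instances are deterministic (\ref{lem:non-tree cover}, \ref{lem:MSF in multigraph}) and contribute no failure probability.

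Next I will bound the failure of each such touched instance by $p$. The $\pruning$ instance for a large cluster was initialized with probability parameter $p$ in Step 3 of \ref{alg:MSF prep}, so by \ref{thm:pruning detect failure} the probability that it reports failure on this update is at most $p$. For each $\cA(C,p)$ and each $\cA_{few}(\cdot,\cdot,\cdot,p)$ instance, the inductive hypothesis (for $\cA$) and \ref{thm:reduc restricted dec} (for $\cA_{few}$) guarantee that the corresponding time bound holds with probability at least $1-p$ on each operation. Therefore each individual instance touched by this update violates its time bound (and thus causes an explicit failure) with probability at most $p$.

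The main (and essentially the only) step is the union bound: summing $p$ over the $O(\gamma)$ instances touched gives failure probability $O(\gamma p)$ per update, as required. There is no delicate probabilistic coupling here since \emph{each} subroutine's guarantee is a per-operation high-probability statement; the only thing to check carefully is that the count of touched instances really is $O(\gamma)$, which follows from the depth bound $d=\gamma$ on $\cH$ together with edge-disjointness of the $E^C$ sets (\ref{fact:C-edge partition}) to certify that an edge belongs to at most $d$ clusters. Once this is set up, the conclusion is immediate.
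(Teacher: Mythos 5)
Your overall strategy --- enumerate the randomized subroutines touched by one update, bound each failure probability by $p$, and union bound over the $O(\gamma)$ of them --- is the same as the paper's. But there is a genuine gap in how you handle the $\pruning$ instances. You write that, because each $\pruning$ instance was initialized with parameter $p$, ``by \ref{thm:pruning detect failure} the probability that it reports failure on this update is at most $p$.'' That is not what \ref{thm:pruning detect failure} says: its guarantee that the algorithm never fails with probability $1-p$ is \emph{conditional} on the initial graph being an expander, i.e.\ $\phi(G_0)\ge\alpha_0(n)$. If a large cluster $C$ fails to satisfy $\phi(C)=\Omega(\alpha_0)$, the theorem gives no bound whatsoever on the probability that its $\pruning$ instance reports failure. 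The expansion of the clusters is itself a random event: \ref{thm:MSF decomposition} only guarantees that all non-root clusters have conductance $\Omega(\alpha/s_{low})$ with probability $1-p$. So your step silently assumes an event that holds only with probability $1-p$.

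The fix is the conditioning argument the paper uses: let $E'_2$ be the event that every large cluster has conductance $\Omega(\alpha_0)$, and bound
\[
\Pr[\text{some } \pruning \text{ fails}] \;\le\; \Pr[\neg E'_2] + \Pr[\text{some } \pruning \text{ fails} \mid E'_2] \;\le\; p + O(\gamma p),
\]
where the second term is your union bound over the $O(d)=O(\gamma)$ large clusters whose $\pruning$ instance is invoked. This costs only an extra additive $p$, so your final bound $O(\gamma p)$ survives, but the step must be made explicit. A minor, harmless overcount: you charge $d$ instances of $\cA_{few}(\overline{C}_1,1,p)$ per deletion, whereas by edge-disjointness of compressed clusters (\ref{prop:compressed edge disj}) only one such instance is actually fed an update; this does not affect the asymptotic bound.
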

\begin{proof}
	There are two types of events that cause the algorithm to fail: 1)
	``$\cA$ or $\cA_{few}$ fails'': some instance of $\cA$ or $\cA_{few}$
	takes time longer than the guaranteed time bound, 2) ``$\pruning$
	fails'': some instance of $\pruning$ reports failure. Now, we fix
	some time step and will bound the probability of the occurrence of
	each type of events. 
	
	First, let $E_{1}$ be the event that some instance of $\cA$ or $\cA_{few}$
	fails. We list the instances of $\cA$ and $\cA_{few}$ first. Consider
	\ref{alg:MSF prep}. For each small cluster $C$, there is the instance
	$\cA(C,p)$ (in Step 2). For each large cluster $C$, there is the
	instance $\cA_{few}(\overline{C}_{1},1,p)$ (in Step 4.a). Lastly,
	there is the instance $\cA_{few}(H,B,p)$ on the sketch graph $H$.
	Now, these instances can fail every time we feed the update operations
	to them. So we list how we feed the update operations to them. Look
	inside the proof of \ref{lem:maintain H}. Given an edge $e$ to be
	deleted, there is at most one small cluster $C$ where we feed the
	edge deletion of $e$ to $\cA(C,p)$. Also, there is at most one
	compressed cluster $\overline{C}$ that we need to feed the update
	to $\cA_{few}(\overline{C}_{1},1,p)$. There are at most one insertion
	and one deletion. For the sketch graph $H$, by \ref{lem:maintain H}
	there are at most two edge deletions and one batch of edge insertion
	fed to $\cA_{few}(H,B,p)$. In total, there are $O(1)$ many operations
	that we feed to the instances of $\cA$ or $\cA_{few}$. Each time,
	an instance can fail with probability at most $p$ by the definition
	of the parameter $p$ in $\cA(C,p)$, $\cA_{few}(\overline{C}_{1},1,p)$
	and $\cA_{few}(H,B,p)$. So $\Pr[E_{1}]=O(p)$. 
	
	Second, let $E_{2}$ be the event that some instance of $\pruning$
	reports failure. Let $E'_{2}$ be the event that all large clusters
	$C$ are such that $\phi(C)=\Omega(\alpha_{0})$. We have that
	\begin{eqnarray*}
		\Pr[E_{2}] & = & \Pr[E_{2}\mid\neg E'_{2}]\Pr[\neg E'_{2}]+\Pr[E_{2}\mid E'_{2}]\Pr[E'_{2}]\\
		& \le & \Pr[\neg E'_{2}]+\Pr[E_{2}\mid E'_{2}].
	\end{eqnarray*}
	By \ref{thm:MSF decomposition}, $\Pr[\neg E'_{2}]\le p$. Next, for
	each large cluster $C$, if $\phi(C)=\Omega(\alpha_{0})$, then the
	instance of $\pruning$ on $C$ fails at some step with probability
	at most $p$ by \ref{thm:pruning detect failure}. Moreover, there
	are only $O(d)=O(\gamma)$ many large clusters that are updated for each step.
	By union bound, $\Pr[E_{2}\mid E'_{2}]\le O(\gamma p)$. This implies
	that $\Pr[E_{2}]=O(\gamma p)$. This conclude that the algorithm
	fails with probability at most $\Pr[E_{1}]+\Pr[E_{2}]=O(\gamma p)$
	at each step.
\end{proof}

Finally, we conclude the proof of \ref{lem:dec MST final} 
and which implies \ref{thm:dyn MST final}, our main result.

\paragraph{Proof of \ref{lem:dec MST final}. }

By \ref{lem:can work in H}, we have that the sketch graph $H$ is
such that $\msf(H)=\msf(G)$. As the instance $\cA_{few}(H,B,p)$
maintains $\msf$ in $H$, we conclude that $\msf(G)$ is correctly
maintained. The algorithm has preprocessing time 
\[
t_{pre}(m,p)=O(m^{1+O(\sqrt{\log\log m/\log m})}\log\frac{1}{p})
\]
by \ref{lem:final prep time}. Given a sequence of edge deletions
of length 
\[
T(m)=m/(3\pi d\gamma)=\Theta(m^{1-O(\log\log\log m/\log\log m)}),
\]
the algorithm take time 
\[
t_{u}(m,p)=O(m^{O(\log\log\log m/\log\log m)}\log\frac{1}{p})
\]
for each update with probability $1-O(\gamma p)$, by \ref{lem:final update time,lem:fail low prob}. 
As noted before, we have $m=\Theta(n)$
throughout the sequence of updates. Finally, we can obtain the proof
of \ref{lem:dec MST final} by slightly adjusting the parameter $p$
so that update time bound holds with the probability $1-p$ instead
of $1-O(\gamma p)$.

\section{Open Problems}

\paragraph{Dynamic $\protect\msf$.}

First, it is truly intriguing whether there is a \emph{deterministic} algorithm
that is as fast as our algorithm. The current best update time of
deterministic algorithms is still $\tilde{O}(\sqrt{n})$ \cite{Frederickson85,EppsteinGIN97,Kejlberg-Rasmussen16}
(even for dynamic connectivity). Improving this bound to $O(n^{0.5-\Omega(1)})$
will already be a major result. Secondly, can one improve the $O(n^{o(1)})$
update time to $O(\polylog(n))$? There are now several barriers in
our approach and this improvement should require new ideas. Lastly, it is
also very interesting to simplify our algorithm.

\paragraph{Expander-related Techniques.}

The combination of the expansion decomposition and dynamic expander
pruning might be useful for other dynamic graph problems. 
Problems whose static algorithms are based on low-diameter
decomposition (e.g. low-stretch spanning tree) are possible candidates.
Indeed, it is conceivable that the expansion decomposition together
with dynamic expander pruning can be used to maintain low diameter decomposition
under edge updates, but additional work maybe required.

\paragraph{Worst-case Update Time Against Adaptive Adversaries.}

Among major goals for dynamic graph algorithm are (1) to reduce gaps between \emph{worst-case} and ~\emph{amortized} update
time, and (2) to reduce gaps between update time of algorithms that work against \emph{adaptive} adversaries
and those that require \emph{oblivious} adversaries. Upper bounds known for
the former case (for both goals) are usually much higher than those for
the latter. However, worst-case bounds are crucial in real-time applications,
and being against adversaries is often needed when algorithms are
used as subroutines of static algorithms. Note that of course deterministic
algorithms always work against adaptive adversaries.

%

The result in this paper is a step towards both goals. The best amortized
bound for dynamic $\msf$ is $O(\polylog(n))$ \cite{HolmLT01,HolmRW15}.
For dynamic $\sf$ problem, the result by \cite{KapronKM13,GibbKKT15}
implies the current best algorithm against oblivious adversaries with
$O(\polylog(n))$ worst-case update time. Our dynamic $\msf$ algorithm
is against adaptive adversaries and has $O(n^{o(1)})$ worst-case
update time. This significantly reduces the gaps on both cases.

It is a challenging goal to do the same for other fundamental problems.
For example, dynamic 2-edge connectivity has $O(\polylog(n))$
amortized update time \cite{HolmLT01} but only $O(\sqrt{n})$ worst-case
bound \cite{Frederickson97,EppsteinGIN97}. Dynamic APSP has $\tilde{O}(n^{2})$
amortized bound \cite{DemetrescuI03} but only $\tilde{O}(n^{2+2/3})$
worst-case bound \cite{AbrahamCK17}. There are fast algorithms against
oblivious adversaries for dynamic maximal matching \cite{BaswanaGS11},
spanner \cite{BaswanaKS12}, and cut/spectral sparsifier \cite{AbrahamDKKP16}.
It will be exciting to have algorithms against adaptive adversaries with comparable update time for these problems.

\paragraph{Acknowledgement.} 
This project has received funding from the European Research Council (ERC) under the European Union's Horizon 2020 research and innovation programme under grant agreement No 715672. Nanongkai and Saranurak were also partially supported by the Swedish Research Council (Reg. No. 2015-04659).
	
	\bibliographystyle{plain}
	\bibliography{references}

	\pagebreak{}
	
	\appendix

\section{Reduction from One-shot Expander Pruning to LBS Cuts\label{sec:LBS cut to Prune}}

In this section, we show the proof of \ref{lem:reduc to LBS}. 
\begin{thm}
[Restatement of \ref{lem:reduc to LBS}]\label{thm:reduc to LBS full}Suppose
there is a $(c_{size}(\sigma),c_{con}(\sigma))$-approximate LBS cut
algorithm with running time $t_{LSB}(n,vol(A),\alpha,\sigma)$ when
given $(G,A,\sigma,\alpha)$ as inputs where $G=(V,E)$ is an $n$-node
graph, $A\subset V$ is a set of nodes, $\sigma$ is an overlapping
parameter, and $\alpha$ is a conductance parameter. Then, there is
a one-shot expander pruning algorithm as in \ref{thm:local pruning}
with input $(G,D,\alpha_{b},\epsilon)$ that has \emph{time limit}
\[
\overline{t}=O((\frac{|D|}{\alpha_{b}})^{\epsilon}\cdot\frac{c_{size}(\alpha_{b}/2)}{\epsilon}\cdot t_{LSB}(n,\frac{\Delta|D|}{\alpha_{b}},\alpha_{b},\alpha_{b}))
\]
and \emph{conductance guarantee} 
\[
\alpha=\frac{\alpha_{b}}{5c_{con}(\alpha_{b}/2)^{1/\epsilon-1}}.
\]
More precisely, there is an algorithm $\cA$ that can do the following:
\begin{itemize}
\item $\cA$ is given $G$,$D,\alpha_{b},\epsilon$ as inputs: $G=(V,E)$
is an $n$-node $m$-edge graph with maximum degree $\Delta$, $\alpha_{b}$
is a conductance parameter, $\epsilon\in(0,1)$ is a parameter, and
$D$ is a set of edges where $D\cap E=\emptyset$ where $|D|=O(\alpha_{b}^{2}m/\Delta)$.
Let $G_{b}=(V,E\cup D)$. 
\item Then, in time $\overline{t}=O((\frac{|D|}{\alpha_{b}})^{\epsilon}\cdot\frac{c_{size}(\alpha_{b}/2)}{\epsilon}\cdot t_{LSB}(n,\frac{\Delta|D|}{\alpha_{b}},\alpha_{b},\alpha_{b}))$,
$\cA$ either reports $\phi(G_{b})<\alpha_{b}$, or output a set of
\emph{pruning nodes }$P\subset V$. Moreover, if $\phi(G_{b})\ge\alpha_{b}$,
then we have

\begin{itemize}
\item $vol_{G}(P)\le2|D|/\alpha_{b}$, and
\item a pruned graph $H=G[V-P]$ has high conductance: $\phi(H)\ge\alpha=\frac{\alpha_{b}}{5c_{con}(\alpha_{b}/2)^{1/\epsilon-1}}.$.
\end{itemize}
\end{itemize}
\end{thm}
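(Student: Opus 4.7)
The plan is to invoke the given LBS cut algorithm iteratively in $L = \lceil 1/\epsilon \rceil$ levels, each time pruning away the returned low-conductance cut and tracking the ``damaged'' set $A$, initialized to $A = end(D)$ (so $vol_G(A) \le 2|D|$ since the edges of $D$ were removed from $G$ and hence do not contribute to degrees in $G$). Fix the overlapping parameter $\sigma = \alpha_b/2$ throughout. At level $i$ I will call the LBS cut algorithm on $(G[V-P], A \setminus P, \sigma, \alpha_i)$, where $\alpha_1, \ldots, \alpha_L$ is a sequence of thresholds chosen so that the $(L-1)$-fold compounding of the $c_{con}(\sigma)$ approximation factor yields the target final conductance $\alpha = \alpha_b/(5 c_{con}(\sigma)^{L-1})$. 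If the call returns a cut $S$ with $\phi_{G[V-P]}(S) < \alpha_i$, I prune it by setting $P \leftarrow P \cup S$, update $A$ by adding the $G$-endpoints of the newly exposed boundary edges of $S$, and repeat at the same level. If instead Case 2 is reported (no $(\alpha_i/c_{con}(\sigma), A \setminus P, \sigma)$-sparse overlapping cut exists), I advance to level $i+1$. If at any point $vol_G(P)$ exceeds $2|D|/\alpha_b$, I abort and report ``$\phi(G_b) < \alpha_b$'', which is correct because an $\alpha_b$-expander $G_b$ cannot acquire so much low-conductance volume by losing only $|D|$ edges.

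The accounting rests on two observations. First, each pruned cut $S$ has boundary $\delta_{G[V-P]}(S) < \alpha_i \cdot vol(S)$, so $vol_G(A)$ grows by at most $O(\alpha_i \cdot vol(S))$ when these boundary endpoints are absorbed. Second, the LBS guarantee $c_{size}(\sigma) \cdot vol(S) \ge \opt(G[V-P], \alpha_i/c_{con}(\sigma), A \setminus P, \sigma)$ lower-bounds $vol(S)$ in terms of the largest $(A, \sigma)$-overlapping sparse cut that still exists, so progress is always substantial relative to the residual damage. A telescoping argument across the $L$ levels, together with the abort condition, then bounds $vol_G(P) \le 2|D|/\alpha_b$ and bounds the total number of LBS cut calls by $O(c_{size}(\sigma)/\epsilon \cdot (|D|/\alpha_b)^\epsilon)$. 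Since each call costs at most $t_{LBS}(n, O(\Delta|D|/\alpha_b), \alpha_b, \alpha_b)$ (because $vol_{G_b}(A) = O(\Delta|D|/\alpha_b)$ throughout), this yields the claimed time bound $\overline{t}$.

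The main technical obstacle is proving the conductance guarantee $\phi(G[V-P]) \ge \alpha$ upon successful termination. The LBS algorithm only certifies nonexistence of $(A, \sigma)$-\emph{overlapping} sparse cuts, so I must separately rule out low-conductance cuts $S \subseteq V - P$ with $vol(S \cap A) < \sigma \cdot vol(S)$. For such an $S$, I will lift it to a cut $S'$ of $G_b$ by re-attaching the edges of $D$ that are incident only to $S \cap A$; this yields $\delta_{G_b}(S') \le \delta_{G[V-P]}(S) + vol_G(S \cap A) < (\alpha + \sigma) \cdot vol_{G_b}(S') \le \alpha_b \cdot vol_{G_b}(S')$, contradicting $\phi(G_b) \ge \alpha_b$. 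Choosing the threshold sequence $\alpha_1, \ldots, \alpha_L$ and the constant $5$ in the definition of $\alpha$ so that every step of this lifting argument is quantitatively valid, together with properly propagating the approximation factor $c_{con}$ across levels, is where the most care will be required.
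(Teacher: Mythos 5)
Your overall architecture matches the paper's: $\sigma=\alpha_b/2$, a geometric sequence of conductance thresholds $\alpha_1>\dots>\alpha_L$ with $L\approx 1/\epsilon$ compounding $c_{con}$, the ``damaged'' terminal set $B_H=(A\cup A_H)\cap V_H$, the volume bound via sparsity of the union of pruned pieces, and---crucially---the lifting lemma showing every $\alpha'$-sparse cut with $\alpha'<\alpha_b/2$ in an induced subgraph must be $(B_H,\sigma)$-overlapping, which is exactly how the paper converts the LBS algorithm's ``no overlapping sparse cut'' certificate into a genuine conductance lower bound. All of that is right.

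The genuine gap is in the rule for advancing levels, and it breaks the bound on the number of LBS calls. You advance from level $i$ to level $i+1$ only when Case~2 is reported, and you prune \emph{every} cut returned in Case~1, regardless of its volume. Two problems follow. First, once Case~2 is reported at level $1$ you already have $\phi(H)\ge\alpha_1/c_{con}=\alpha_2>\alpha_L$, so levels $2,\dots,L$ are vacuous and the multi-level structure does nothing. Second, and fatally for the running time, nothing bounds the number of prunings within level $1$: the LBS guarantee only gives $c_{size}\cdot vol(S)\ge\opt(H,\alpha_1/c_{con},B_H,\sigma)$, and when that $\opt$ is small the returned cut may have tiny volume, so the algorithm can be forced through $\Theta(|D|/\alpha_b)$ iterations before your volume abort triggers. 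That yields time $O\bigl(\frac{|D|}{\alpha_b}\cdot t_{LSB}\bigr)$, not the claimed $O\bigl((\frac{|D|}{\alpha_b})^{\epsilon}\cdot\frac{c_{size}}{\epsilon}\cdot t_{LSB}\bigr)$; your ``telescoping argument'' cannot recover the $\epsilon$ exponent from this scheme.

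The missing ingredient is the volume-threshold mechanism. The paper fixes $\bar s_1=2|D|/\alpha_b+1$ and $\bar s_{\ell}=\bar s_{\ell-1}/\bar s_1^{\epsilon}$, and at level $\ell$ it prunes a returned cut $S$ only if $vol(S)\ge\bar s_{\ell+1}/c_{size}$; if $S$ is smaller, the LBS guarantee certifies $\opt(H,\alpha_{\ell+1},B_H,\sigma)<\bar s_{\ell+1}$, which becomes the invariant at level $\ell+1$, and the algorithm descends. The invariant $\opt(I,\alpha_\ell)<\bar s_\ell$ then caps the number of prunings per level at $\bar s_\ell/(\bar s_{\ell+1}/c_{size})=c_{size}\bar s_1^{\epsilon}$, since otherwise the union of the pruned pieces would itself be an $\alpha_\ell$-sparse cut of volume $\ge\bar s_\ell$. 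This is what produces the recursion depth $O(L\,c_{size}\,\bar s_1^{\epsilon})=O\bigl(\frac{c_{size}}{\epsilon}(\frac{|D|}{\alpha_b})^{\epsilon}\bigr)$ and hence the stated $\overline t$. Without it the theorem's time bound is not established. (A minor slip elsewhere: $vol_G(end(D))\le 2\Delta|D|$, not $2|D|$; you use the correct $O(\Delta|D|/\alpha_b)$ later.)
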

Observe that $\epsilon$ is a trade-off parameter such that, on one
hand when $\epsilon$ is small, the algorithm is fast but has a bad
conductance guarantee in the output, on the other hand when $\epsilon$
is big, the algorithm is slow but has a good conductance guarantee.

\subsection{The Reduction\label{sec:local decomp alg}}

Throughout this section, let $\epsilon$ be the parameter and let
$G$,$D,p,\alpha_{b}$ be the inputs of the algorithm where $G=(V,E)$
is an $n$-node graph, $D$ is a set of edges where $D\cap E=\emptyset$.
$p$ is a failure probability parameter, and $\alpha_{b}$ is a conductance
parameter where $\alpha_{b}<\frac{1}{\gamma^{\omega(1)}}$. We call
$G_{b}=(V,E\cup D)$ the \emph{before graph}. We want to compute the
set of \emph{pruning nodes }$P\subset V$ with properties according
to \ref{thm:local pruning}. 

We now define some more notations. Let $A$ be the set of endpoints
of $D$. Let $\cA_{cut}$ be the deterministic algorithm for finding
LBS cuts from \ref{thm:LBS cut alg}. We set the overlapping parameter
$\sigma=\alpha_{b}/2$ for $\cA_{cut}$. Let $c_{size}=c_{size}(\sigma),c_{con}=c_{con}(\sigma)$
be the approximation ratios of $\cA_{cut}$.

Let $\bar{s}_{1},\dots,\bar{s}_{L}$ be such that $\bar{s}_{1}=2|D|/\alpha_{b}+1$,
$\bar{s}_{L}\le1$, and $\bar{s}_{\ell}=\bar{s}_{\ell-1}/(\bar{s}_{1})^{\epsilon}$
for $1<\ell<L$. Hence, $L\le1/\epsilon$. We denote $\alpha=\frac{\alpha_{b}}{5c_{con}^{L-1}}$.
Let $\alpha_{1},\dots,\alpha_{L}$ be such that $\alpha_{L}=\alpha$
and $\alpha_{\ell}=\alpha_{\ell+1}c_{con}$ for $\ell<L$. Hence,
$\alpha_{L}<\dots<\alpha_{2}<\alpha_{1}=\alpha_{b}/5<\alpha_{b}/4$.

For any graphs $H=(V_{H},E_{H})$, $I=(V_{I},E_{I})$, and a number
$\ell$, the main procedure $\decomp(H,I,\ell)$ is defined as in
\ref{alg:local decomp}. For any $\alpha'$ and $B\subset V_{H}$,
recall that $\opt(H,\alpha')$ is the size of the largest $\alpha'$-sparse
cut $S$ in $H$ where $|S|\le|V_{H}-S|$, and $\opt(H,\alpha',B,\sigma)$
is the size of the largest $\alpha'$-sparse $(B,\sigma)$-overlapping
cut $S$ in $H$ where $|S|\le|V_{H}-S|$. By definition, $\opt(H,\alpha')\ge\opt(H,\alpha',B,\sigma)$.

The algorithm is simply to run $\decomp(G,G,1)$ with time limit $\bar{t}$.
If $\decomp(G,G,1)$ takes time more than $\overline{t}$, then we
reports that $\phi(G_{b})<\alpha_{b}$ (FAIL).

\begin{algorithm}
\caption{\label{alg:local decomp}$\protect\decomp(H,I,\ell)$ where $H=(V_{H},E_{H})$
and $I=(V_{I},E_{I})$}

\begin{enumerate}
\item Set $B_{H}=(A\cup A_{H})\cap V_{H}$ where $A_{H}$ is the set of
endpoints of edges in $\partial_{G}(V_{H})$
\item If $vol_{H}(V_{H}-B_{H})<\frac{3}{\sigma}vol_{H}(B_{H})$, then report
$\phi(G_{b})<\alpha_{b}$ (FAIL). 
\item If $\ell=L$, then return.
\item If $\cA_{cut}(H,\alpha_{\ell},B_{H},\sigma)$ reports $\opt(H,\alpha_{\ell}/c_{con},B_{H},\sigma)=0$,
i.e. there is no $(\alpha_{\ell}/c_{con})$-sparse $(B_{H},\sigma)$-overlapping
cut, then return. 
\item Else, $\cA_{cut}(H,\alpha_{\ell},B_{H},\sigma)$ outputs an $\alpha_{\ell}$-sparse
cut $S$ in $H$ where $\opt(H,\alpha_{\ell}/c_{con},B_{H},\sigma)/c_{size}\le vol(S)\le vol(V_{H})/2$.

\begin{enumerate}
\item If $|S|\ge\bar{s}_{\ell+1}/c_{size},$ then include $S$ into pruning
set $P$ and recurse on $\decomp(H[V_{H}\setminus S],I,\ell)$. 
\item Else, recurse on $\decomp(H,H,\ell+1)$. \end{enumerate}
\end{enumerate}
\end{algorithm}

\subsubsection{Upper Bounding $vol(P)$}

In this section, we prove that if $\phi(G_{b})\ge\alpha_{b}$ and
the algorithm does not fail by other reasons, then we have $vol_{G}(P)\le2|D|/\alpha_{b}$.
We will show that the algorithm might fail only if $\phi(G_{b})<\alpha_{b}$
in \ref{sec:Validity}.

Let us list all sets of nodes $P_{1},\dots,P_{t}$ that are outputted
by \ref{alg:local decomp} in either Step 2 or Step 5.a and constitute
the pruning set $P=\bigcup_{i}P_{i}$. Note that $P_{i}\cap P_{i'}=\emptyset$
for any $i,i'$. The sets $P_{1},\dots,P_{t}$ is ordered by the time
they are outputted. Let $H{}_{1},\dots,H{}_{t}$ the corresponding
subgraphs such that $P_{i}$ is ``cut from'' $H_{i}$, i.e. $H{}_{1}=G$,
$H{}_{2}=G[V-P{}_{1}],\dots,H{}_{t}=G[V-\bigcup_{i=1}^{t-1}P_{i}]$.
Note the following fact: 
\begin{fact}
For any $i\le t$, $\phi_{H_{i}}(P_{i})<\alpha_{b}/4$.\label{claim:sparse for t-1}\end{fact}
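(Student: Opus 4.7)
The plan is to track the correspondence between the graph $H$ appearing in the recursive call of $\decomp(H,I,\ell)$ and the subgraph $H_i$ defined in the outer analysis, and then read off the conductance bound directly from the guarantee of the LBS cut subroutine $\cA_{cut}$.

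First, I would establish the invariant that at any point in the execution of $\decomp(H,I,\ell)$, the subgraph $H$ equals $G[V \setminus Q]$, where $Q$ is the union of all sets that have already been added to the pruning set $P$ up to that point. This is immediate by induction on the recursion tree: the initial call is $\decomp(G,G,1)$ with $Q = \emptyset$; the recursive call in Step~5.b, $\decomp(H,H,\ell+1)$, does not change $H$ and does not add anything to $P$; the recursive call in Step~5.a, $\decomp(H[V_H \setminus S],I,\ell)$, simultaneously removes $S$ from $H$ and appends $S$ to $P$. Thus, just before the $i$-th output $P_i$ is produced in Step~5.a, we have $Q = \bigcup_{j<i} P_j$, so the argument $H$ of the current recursive call coincides exactly with $H_i = G[V - \bigcup_{j<i} P_j]$.

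Next I would invoke the guarantee of the LBS cut algorithm $\cA_{cut}$ in Step~5.a. When $P_i$ is output, it is the set $S$ returned by $\cA_{cut}(H, \alpha_{\ell}, B_H, \sigma)$ for some level $\ell = \ell_i \in \{1,\dots,L-1\}$. By \ref{def:most balanced sparse cut alg-1} (Case~1) and the approximation guarantees of $\cA_{cut}$, this set satisfies $\phi_H(S) < \alpha_{\ell_i}$. Combined with the invariant from the previous paragraph, this gives $\phi_{H_i}(P_i) < \alpha_{\ell_i}$.

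Finally, I would conclude by recalling that the levels were set so that $\alpha_L < \alpha_{L-1} < \dots < \alpha_1 = \alpha_b/5$. In particular, $\alpha_{\ell_i} \leq \alpha_1 = \alpha_b/5 < \alpha_b/4$, which yields $\phi_{H_i}(P_i) < \alpha_b/4$ as claimed. There is no real obstacle here: the only thing to verify carefully is the identification $H = H_i$ at the moment of output, which is why I would state that invariant explicitly rather than treat it as folklore. Everything else is just reading off the parameter chain $\alpha_\ell \le \alpha_1 = \alpha_b/5$ fixed at the start of \Cref{sec:local decomp alg}.
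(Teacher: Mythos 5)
Your proof is correct and takes essentially the same route as the paper's (one-line) argument: every $P_i$ is a cut $S$ returned by $\cA_{cut}$ in Step~5.a, hence $\alpha_{\ell}$-sparse in the current graph $H$, which coincides with $H_i$, and $\alpha_{\ell}\le\alpha_1=\alpha_b/5<\alpha_b/4$. The explicit invariant identifying $H$ with $H_i$ is a reasonable elaboration of what the paper leaves implicit.
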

\begin{proof}
Since all $P_{i}$'s are returned in Step 5.a and $\alpha_{\ell}\le\alpha_{1}<\alpha_{b}/4$.
\end{proof}
Next, we have the following:
\begin{prop}
$\delta_{G}(P)\le\sum_{i=1}^{t}\delta_{H{}_{i}}(P_{i})$\label{fact:sum crossing edges}\end{prop}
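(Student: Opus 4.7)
The plan is to exhibit an injection from edges in $\delta_G(P)$ into the multiset $\bigsqcup_{i=1}^t \delta_{H_i}(P_i)$, which directly gives the inequality. Since the $P_i$'s are pairwise disjoint and their union is $P$, each edge $e = (u,v) \in \delta_G(P)$ has exactly one endpoint in $P$, say $u$, and this $u$ belongs to a unique $P_{i(e)}$. I will map $e$ to itself, but viewed as an edge of $H_{i(e)}$, and then verify that $e \in \delta_{H_{i(e)}}(P_{i(e)})$.

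The verification consists of two short claims. First, both endpoints of $e$ lie in $V(H_{i(e)}) = V \setminus \bigcup_{j < i(e)} P_j$: the endpoint $u$ is in $P_{i(e)}$, hence has not been removed at any earlier step; the endpoint $v$ is in $V \setminus P$, hence has never been removed. Therefore $e$ is an edge of the induced subgraph $H_{i(e)}$. Second, $e$ crosses $P_{i(e)}$ in $H_{i(e)}$: one endpoint $u$ is in $P_{i(e)}$ while the other endpoint $v$ is in $V(H_{i(e)})$ but not in $P_{i(e)} \subseteq P$ (because $v \notin P$). Hence $e \in \delta_{H_{i(e)}}(P_{i(e)})$.

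Finally, the map $e \mapsto (i(e), e)$ is clearly injective, because the target already records both the level index and the identity of the edge. Summing over levels yields $|\delta_G(P)| \le \sum_{i=1}^t |\delta_{H_i}(P_i)|$, which is exactly the claimed bound. There is no real obstacle here; the only subtlety is to note that edges with both endpoints in $P$ but in different $P_i$'s (if any) simply do not appear in $\delta_G(P)$ and are ignored by the argument, so no double counting issue arises. This proposition is a purely combinatorial bookkeeping fact and will be used later together with \ref{claim:sparse for t-1} to bound $\mathrm{vol}_G(P)$ in terms of $|D|/\alpha_b$.
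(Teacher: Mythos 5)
Your proof is correct and follows essentially the same route as the paper: identify, for each edge of $\delta_G(P)$, the unique index $j$ with its $P$-endpoint in $P_j$, and check that the edge survives in $H_j$ and crosses $P_j$ there (the paper phrases this as the containment $\partial_G(P)\subseteq\bigcup_{i=1}^t\partial_{H_i}(P_i)$ rather than as an injection, but the content is identical). No issues.
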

\begin{proof}
We will prove that $\partial_{G}(P)\subseteq\bigcup_{i=1}^{t}\partial_{H_{i}}(P_{i})$.
Let $(u,v)\in\partial_{G}(P)$. Suppose that $u\in P_{j}$ for some
$j\le t$. Then $v\in V-P\subseteq V(H{}_{j})-P_{j}$ because $V(H_{j})=V-\bigcup_{i=1}^{j-1}P_{i}$.
Therefore, $(u,v)\in\partial_{H{}_{j}}(P_{j})\subset\bigcup_{i=1}^{t}\partial_{H{}_{i}}(P_{i})$.\end{proof}
\begin{prop}
For any $t'\le t$, if $vol_{G}(\bigcup_{i=1}^{t'}P{}_{i})\le vol(G)/2$,
then $\phi_{G}(\bigcup_{i=1}^{t'}P_{i})<\alpha_{b}/4$.\label{prop:small then sparse}\end{prop}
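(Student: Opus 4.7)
The plan is to bound $\delta_G(\bigcup_{i=1}^{t'} P_i)$ from above by $(\alpha_b/4)\cdot vol_G(\bigcup_{i=1}^{t'} P_i)$, then invoke the hypothesis $vol_G(\bigcup_{i=1}^{t'} P_i)\le vol(G)/2$ to conclude that this volume is the smaller side of the cut, so that it appears in the denominator of $\phi_G$.

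First I would combine the two preceding facts. By \ref{fact:sum crossing edges} we have $\delta_G(\bigcup_{i=1}^{t'} P_i)\le\sum_{i=1}^{t'}\delta_{H_i}(P_i)$, and by \ref{claim:sparse for t-1} each term satisfies $\delta_{H_i}(P_i)<(\alpha_b/4)\cdot\min\{vol_{H_i}(P_i),vol_{H_i}(V(H_i)-P_i)\}$. The key observation I would verify is that $vol_{H_i}(P_i)$ is in fact the smaller side: this follows because $P_i$ is output in Step~5.a of \ref{alg:local decomp} from $\cA_{cut}$, which guarantees $vol(S)\le vol(V_H)/2$. Hence $\delta_{H_i}(P_i)<(\alpha_b/4)\cdot vol_{H_i}(P_i)$.

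Next I would lift from the local volumes in $H_i$ to the global volume in $G$. Since $H_i$ is an induced subgraph of $G$, we have $vol_{H_i}(P_i)\le vol_G(P_i)$. Combining with the chain above,
\[
\delta_G(\textstyle\bigcup_{i=1}^{t'}P_i)\le\sum_{i=1}^{t'}\delta_{H_i}(P_i)<\frac{\alpha_b}{4}\sum_{i=1}^{t'}vol_G(P_i)=\frac{\alpha_b}{4}\cdot vol_G(\textstyle\bigcup_{i=1}^{t'}P_i),
\]
where the final equality uses that the $P_i$'s are pairwise disjoint.

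Finally, the assumption $vol_G(\bigcup_{i=1}^{t'}P_i)\le vol(G)/2$ implies $\min\{vol_G(\bigcup_{i=1}^{t'}P_i),\,vol_G(V-\bigcup_{i=1}^{t'}P_i)\}=vol_G(\bigcup_{i=1}^{t'}P_i)$, so dividing the displayed inequality by this quantity yields $\phi_G(\bigcup_{i=1}^{t'}P_i)<\alpha_b/4$, as required. I do not expect any serious obstacle here; the only subtlety is the bookkeeping that $vol_{H_i}(P_i)$ is indeed the smaller side of the cut $(P_i,V(H_i)-P_i)$ inside $H_i$, which we get for free from the volume-balance clause in the LBS cut guarantee used in Step~5 of \ref{alg:local decomp}.
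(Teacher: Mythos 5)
Your proof is correct and follows essentially the same route as the paper's: both combine \ref{fact:sum crossing edges} with \ref{claim:sparse for t-1}, use disjointness of the $P_i$'s and the fact that volumes only shrink in the induced subgraphs $H_i$, and then invoke the half-volume hypothesis to identify $vol_G(\bigcup_i P_i)$ as the denominator of $\phi_G$. (The side-of-the-cut bookkeeping you flag is not even needed, since $\delta_{H_i}(P_i)<(\alpha_b/4)\min\{vol_{H_i}(P_i),vol_{H_i}(V(H_i)-P_i)\}\le(\alpha_b/4)\,vol_{H_i}(P_i)$ holds regardless of which side is smaller.)
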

\begin{proof}
We have
\begin{align*}
\phi_{G}(\bigcup_{i=1}^{t'}P_{i}) & =\frac{\delta_{G}(\bigcup_{i=1}^{t'}P_{i})}{vol_{G}(\bigcup_{i=1}^{t'}P_{i})} & \mbox{as }vol(\bigcup_{i=1}^{t'}P{}_{i})\le vol(G)/2\\
 & =\frac{\delta_{G}(\bigcup_{i=1}^{t'}P_{i})}{\sum_{i=1}^{t'}vol_{G}(P_{i})} & \mbox{as }P_{i}\mbox{'s are disjoint}\\
 & \le\frac{\sum_{i=1}^{t}\delta_{H{}_{i}}(P_{i})}{\sum_{i=1}^{t'}vol_{H_{i}}(P_{i})} & \mbox{by \ref{fact:sum crossing edges}}\\
 & \le\max_{i\le t'}\frac{\delta_{H{}_{i}}(P_{i})}{vol_{H_{i}}(P_{i})}\\
 & =\max_{i\le t'}\phi_{H{}_{i}}(P_{i})<\alpha_{b}/4 & \mbox{by \ref{claim:sparse for t-1}.}
\end{align*}

\end{proof}
The following lemma is the key observation:
\begin{lem}
Suppose that $\phi(G_{b})\ge\alpha_{b}$. If a cut $S$ is $G$ where
$vol(S)\le vol(G)/2$ is such that $\phi_{G}(S)<\alpha_{b}/2$, then
$vol_{G}(S)\le2|D|/\alpha_{b}$.\label{thm:sparse cut small}\end{lem}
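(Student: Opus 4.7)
The plan is to exploit the relationship between conductance in $G_b$ and $G$, using the fact that the only edges distinguishing $G_b$ from $G$ are the $|D|$ edges in $D$. The two quantities I want to compare are $\delta_{G_b}(S)$ versus $\delta_G(S)$ (which differ by at most $|D|$), and the denominators of the two conductance expressions.

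First, I would observe a key monotonicity: since $vol_G(S) \le vol_G(V-S)$ by assumption, and adding $D$ only increases volumes, we have $vol_{G_b}(V-S) \ge vol_G(V-S) \ge vol_G(S)$. Combined with $vol_{G_b}(S) \ge vol_G(S)$, this gives
\[
\min\{vol_{G_b}(S),\, vol_{G_b}(V-S)\} \ge vol_G(S).
\]
This lets me bypass the annoying case split on whether $S$ is the small side of the cut in $G_b$ as well.

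Next I would chain the two inequalities. From $\phi(G_b) \ge \alpha_b$ and the observation above,
\[
\delta_{G_b}(S) \ge \alpha_b \cdot \min\{vol_{G_b}(S),\, vol_{G_b}(V-S)\} \ge \alpha_b \cdot vol_G(S).
\]
On the other hand, every edge of $G_b$ crossing $S$ is either an edge of $G$ crossing $S$ or an edge of $D$, so $\delta_{G_b}(S) \le \delta_G(S) + |D|$. Using the hypothesis $\phi_G(S) < \alpha_b/2$ together with $vol_G(S) \le vol_G(V-S)$, we get $\delta_G(S) < (\alpha_b/2) \cdot vol_G(S)$, hence
\[
\alpha_b \cdot vol_G(S) \;\le\; \delta_{G_b}(S) \;\le\; \tfrac{\alpha_b}{2}\, vol_G(S) + |D|,
\]
and rearranging yields $vol_G(S) \le 2|D|/\alpha_b$, as required.

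There is no real obstacle here; the only subtlety is the volume-monotonicity observation that avoids a case analysis on which side of the cut is smaller in $G_b$ (the roles can swap when $D$ is concentrated on $S$). Once that is in place, the proof is a one-line sandwich between the lower bound from $\phi(G_b)\ge\alpha_b$ and the upper bound from $\phi_G(S)<\alpha_b/2$ plus the trivial bound $\delta_{G_b}(S)-\delta_G(S)\le|D|$.
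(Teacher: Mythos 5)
Your proof is correct and follows essentially the same route as the paper's: both rest on $\delta_{G_b}(S)\le\delta_G(S)+|D|$, the conductance lower bound of $G_b$ applied to $S$, and the hypothesis $\phi_G(S)<\alpha_b/2$; the paper phrases it as a contradiction while you argue directly, and you make explicit the volume-monotonicity step ($\min\{vol_{G_b}(S),vol_{G_b}(V-S)\}\ge vol_G(S)$) that the paper leaves implicit. No issues.
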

\begin{proof}
Suppose otherwise that $vol(S)>2|D|/\alpha_{b}$. Then, as $G=G_{b}-D$,
we have 
\[
\delta_{G}(S)\ge\delta_{G_{b}}(S)-|D|>\alpha_{b}vol(S)-\alpha_{b}vol(S)/2=\alpha_{b}vol(S)/2,
\]
which means, $\phi_{G}(S)>\alpha_{b}/2>\alpha'$, a contradiction.
\end{proof}
Next, we argue against a corner case where $vol(P)>vol(G)/2$.
\begin{lem}
If $\phi(G_{b})\ge\alpha_{b}$ and $vol_{G}(V)=\omega(|D|/\alpha_{b})$,
then $vol(P)\le vol(G)/2$.\label{lem:less than half}\end{lem}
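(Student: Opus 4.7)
The plan is to argue by contradiction: assume $vol_G(P) > vol_G(V)/2$ and let $t'$ be the smallest index with $vol_G(\bigcup_{i\le t'}P_i) > vol_G(V)/2$. Write $Q := \bigcup_{i<t'} P_i$, $Q' := Q \cup P_{t'}$, and $R := V \setminus Q'$. The goal is to exhibit $R$ as a small yet sparse cut of $G$, and then invoke \ref{thm:sparse cut small} to contradict the hypothesis $vol_G(V) = \omega(|D|/\alpha_b)$.

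First I would bound $vol_G(Q)$: by the minimality of $t'$ one has $vol_G(Q) \le vol_G(V)/2$, so \ref{prop:small then sparse} gives $\phi_G(Q) < \alpha_b/4 < \alpha_b/2$, and \ref{thm:sparse cut small} then yields $vol_G(Q) \le 2|D|/\alpha_b$. In particular $\delta_G(Q) \le (\alpha_b/4)\cdot(2|D|/\alpha_b) = |D|/2$.

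Next I would control $vol_G(P_{t'})$ by transferring the ``small side'' guarantee of Step~5 of \ref{alg:local decomp} from the induced subgraph $H_{t'}$ back to $G$. Step~5 ensures $vol_{H_{t'}}(P_{t'}) \le vol_{H_{t'}}(V(H_{t'}))/2 \le vol_G(V)/2$, and the difference $vol_G(P_{t'}) - vol_{H_{t'}}(P_{t'})$ counts only $G$-edges going from $P_{t'}$ to $Q$, which is at most $\delta_G(Q) \le |D|/2$. Combining with the bound on $vol_G(Q)$ gives
\[
vol_G(Q') \;\le\; \tfrac{2|D|}{\alpha_b} + \tfrac{vol_G(V)}{2} + \tfrac{|D|}{2} \;=\; \tfrac{vol_G(V)}{2} + O(|D|/\alpha_b).
\]
Using the hypothesis $vol_G(V) = \omega(|D|/\alpha_b)$, this implies $vol_G(R) = vol_G(V) - vol_G(Q') \ge (1-o(1))\,vol_G(V)/2$, while $vol_G(R) < vol_G(V)/2$ holds by the choice of $t'$.

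Finally I would upper-bound $\phi_G(R) = \delta_G(Q')/vol_G(R)$ exactly as in \ref{fact:sum crossing edges}: every $G$-edge leaving $Q'$ lies in some $\partial_{H_i}(P_i)$, so by \ref{claim:sparse for t-1},
\[
\delta_G(Q') \;\le\; \sum_{i\le t'}\delta_{H_i}(P_i) \;<\; (\alpha_b/4)\sum_{i\le t'} vol_{H_i}(P_i) \;\le\; (\alpha_b/4)\,vol_G(Q').
\]
Since $vol_G(Q')/vol_G(R) \le 1+o(1)$, this yields $\phi_G(R) < \alpha_b/2$, so \ref{thm:sparse cut small} applied to $R$ (of $G$-volume less than $vol_G(V)/2$) forces $vol_G(R) \le 2|D|/\alpha_b$. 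This contradicts $vol_G(R) \ge (1-o(1))vol_G(V)/2 = \omega(|D|/\alpha_b)$. The main obstacle will be the middle step: without a careful transfer of the ``small side'' property from $H_{t'}$ to $G$, the set $R$ could a priori have volume much less than $vol_G(V)/2$ and the final invocation of \ref{thm:sparse cut small} would not bite; the slack term $O(|D|/\alpha_b)$ is precisely small enough to be absorbed by the $\omega(|D|/\alpha_b)$ hypothesis.
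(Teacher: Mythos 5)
Your proof is correct and follows essentially the same route as the paper's: pick the index where the cumulative pruned volume first crosses $vol(G)/2$, bound the prefix via \ref{prop:small then sparse} and \ref{thm:sparse cut small}, use the Step-5 small-side guarantee on the crossing piece, show the complement is an $(\alpha_{b}/2)$-sparse cut via \ref{fact:sum crossing edges} and \ref{claim:sparse for t-1}, and contradict the $\omega(|D|/\alpha_{b})$ hypothesis. The only difference is bookkeeping: the paper bounds $vol_{G}(P_{k})\le vol_{G}(P_{<k})+vol_{G}(Q)$ and shows the remainder has volume at least $vol_{G}(V)/3$, whereas you bound $vol_{G}(Q')\le vol_{G}(V)/2+O(|D|/\alpha_{b})$ directly and absorb the slack with the $\omega(|D|/\alpha_{b})$ assumption; both yield the same contradiction.
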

\begin{proof}
Suppose that $vol(P)>vol(G)/2$. We will show that $vol_{G}(V)=O(|D|/\alpha_{b})$
which is a contradiction. Let $k\le t$ be the such that $vol(\bigcup_{i=1}^{k-1}P_{i})\le vol_{G}(V)/2<vol(\bigcup_{i=1}^{k}P_{i})$.
Let $P_{<k}=\bigcup_{i=1}^{k-1}P_{i}$. We partition the vertices
into 3 sets: $P_{<k},P_{k}$ and $Q=V-(P_{<k}\cup P_{k})$. Note that
$H_{k}=G[V-P_{<k}]$.

First, we list some properties of $P_{<k}$. Because $vol(P_{<k})<vol_{G}(V)/2$,
by \ref{prop:small then sparse}, we have $\phi_{G}(P_{<k})<\alpha_{b}/4$
and hence, by \ref{thm:sparse cut small}, $vol_{G}(P_{<k})<2|D|/\alpha_{b}\ll vol_{G}(V)/6$.

Next, we list some properties of $P_{k}$. We have $\phi_{H_{k}}(P_{k})<\alpha_{b}/4$
and 
\begin{align*}
vol_{G}(P_{k}) & \le E_{G}(P_{<k},P_{k})+vol_{H_{k}}(P_{k})\\
 & \le vol_{G}(P_{<k})+vol_{H_{k}}(Q)\\
 & \le vol_{G}(P_{<k})+vol_{G}(Q).
\end{align*}

Last, we list properties of $Q$. We have that $\delta_{G}(Q)=\delta_{G}(P_{<k}\cup P_{k})\le\delta_{G}(P_{<k})+\delta_{H_{k}}(P_{k})$
using the same argument as in \ref{fact:sum crossing edges}. We claim
that $vol_{G}(P_{<k})+vol_{G}(P_{k})\le2vol_{G}(Q)$. Because 
\begin{align*}
vol_{G}(V) & =vol_{G}(P_{<k})+vol_{G}(P_{k})+vol_{G}(Q)\\
 & \le2vol_{G}(P_{<k})+2vol_{G}(Q)\\
 & \le vol(G)/3+2vol_{G}(Q),
\end{align*}
and so we have $vol_{G}(Q)\ge vol_{G}(V)/3$ and hence $vol_{G}(P_{<k})+vol_{G}(P_{k})\le\frac{2}{3}vol_{G}(V)\le2vol_{G}(Q)$.
Therefore, we have the following: 
\begin{align*}
\phi_{G}(Q) & =\frac{\delta_{G}(Q)}{vol_{G}(Q)}\le\frac{\delta_{G}(P_{<k})+\delta_{H_{k}}(P_{k})}{(vol_{G}(P_{<k})+vol_{G}(P_{k}))/2}\le2\max\{\phi_{G}(P_{<k}),\phi_{H_{k}}(P_{k})\}<\alpha_{b}/2.\\
\end{align*}
This means that $vol_{G}(Q)<2|D|/\alpha_{b}$ by \ref{thm:sparse cut small}.
So we can conclude the contradiction: 
\[
vol_{G}(V)=vol_{G}(P_{<k})+vol_{G}(P_{k})+vol_{G}(Q)=O(|D|/\alpha_{b}).
\]

\end{proof}
By the above lemma, we immediately have a strong bound on $vol(P)$.
Note that $|D|<\alpha_{b}^{2}m/30\Delta$ implies that $vol_{G}(V)=\omega(|D|/\alpha_{b})$
\begin{cor}
If $\phi(G_{b})\ge\alpha_{b}$, then $vol_{G}(P)\le2|D|/\alpha_{b}$.\label{thm:bound prune vol}\end{cor}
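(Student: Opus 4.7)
The plan is to chain together the three preceding results (\ref{prop:small then sparse}, \ref{thm:sparse cut small}, \ref{lem:less than half}) in essentially one line, with only a tiny amount of glue. The whole point of the preceding lemmas was to set this up.

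First I would invoke \ref{lem:less than half} to rule out the degenerate case where the pruning set could contain more than half of the volume of $G$. The hypothesis $vol_G(V) = \omega(|D|/\alpha_b)$ needed by that lemma is a consequence of the standing assumption $|D| = O(\alpha_b^2 m/\Delta)$ on the input (as already noted in the paragraph preceding \ref{thm:bound prune vol}), so this gives $vol_G(P) \le vol_G(V)/2$ unconditionally under $\phi(G_b) \ge \alpha_b$.

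Second, having established that $P = \bigcup_{i=1}^{t} P_i$ satisfies $vol_G(P) \le vol_G(V)/2$, I would apply \ref{prop:small then sparse} with $t' = t$ to conclude $\phi_G(P) < \alpha_b/4$. Finally, since $\alpha_b/4 < \alpha_b/2$, the cut $P$ meets the hypothesis of \ref{thm:sparse cut small} (a cut of at most half the total volume whose conductance in $G$ is strictly less than $\alpha_b/2$), so that lemma immediately yields $vol_G(P) \le 2|D|/\alpha_b$, which is exactly the desired bound.

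There is no real obstacle here; all the work was done in preparing \ref{prop:small then sparse}, \ref{thm:sparse cut small}, and \ref{lem:less than half}. The only subtlety to double-check is that \ref{prop:small then sparse} is stated for any prefix $\bigcup_{i=1}^{t'} P_i$ with $t' \le t$, so the case $t' = t$ really is covered, and that the assumption on $|D|$ is strong enough to trigger the $\omega(|D|/\alpha_b)$ condition in \ref{lem:less than half}; both are straightforward.
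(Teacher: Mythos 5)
Your proof is correct and is exactly the paper's argument: \ref{lem:less than half} gives $vol_G(P)\le vol(G)/2$, then \ref{prop:small then sparse} with $t'=t$ gives $\phi_G(P)<\alpha_b/4$, and \ref{thm:sparse cut small} yields the volume bound. The side remark that $|D|=O(\alpha_b^2 m/\Delta)$ triggers the $\omega(|D|/\alpha_b)$ hypothesis is also the same observation the paper makes just before the corollary.
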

\begin{proof}
By \ref{lem:less than half}, $vol(P)\le vol(G)/2$ and hence $\phi_{G}(P)<\alpha_{b}/4$
by \ref{prop:small then sparse}, this implies that $vol(P)\le2|D|/\alpha_{b}$
by \ref{thm:sparse cut small}.
\end{proof}

\subsubsection{Upper bounding $vol_{H}(B_{H})$\label{sec:Validity}}

First, we prove that $|D|<\alpha_{b}^{2}m/30\Delta$ implies that
$\frac{3}{\sigma}vol_{H}(B_{H})\le vol_{H}(V_{H}-B_{H})$ unless $\phi(G_{b})<\alpha_{b}$.
\begin{prop}
If $\phi(G_{b})\ge\alpha_{b}$, then $vol_{H}(B_{H})\le4\Delta|D|/\alpha_{b}.$\label{prop:bound B_H}\end{prop}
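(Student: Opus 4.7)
The plan is to bound $|B_H|$ combinatorially and then convert to a volume bound using the maximum-degree assumption. Since $B_H = (A \cup A_H) \cap V_H$, we immediately have $|B_H| \le |A| + |A_H \cap V_H|$. The first term is at most $2|D|$, because $A$ is the set of endpoints of edges in $D$. For the second term, each edge of $\partial_G(V_H)$ contributes exactly one endpoint to $A_H \cap V_H$, so $|A_H \cap V_H| \le \delta_G(V_H, V \setminus V_H) \le vol_G(V \setminus V_H)$.

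The next step is to relate $V \setminus V_H$ to the pruning set $P$. I will first verify the simple invariant that whenever $\decomp(H, I, \ell)$ is invoked during the execution of $\decomp(G, G, 1)$, the set $V \setminus V_H$ is contained in the cumulative pruning set $P$ at that moment. This is immediate at the initial call (where $V_H = V$ and $P = \emptyset$), and is preserved inductively: Step 5.a simultaneously adds $S$ to $P$ and removes $S$ from $V_H$, while Steps 3, 4, and 5.b do not alter $V_H$. With this invariant in hand, \ref{thm:bound prune vol} (applied under the hypothesis $\phi(G_b) \ge \alpha_b$) gives $vol_G(V \setminus V_H) \le vol_G(P) \le 2|D|/\alpha_b$, hence $|A_H \cap V_H| \le 2|D|/\alpha_b$.

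Combining the two bounds yields $|B_H| \le 2|D| + 2|D|/\alpha_b \le 4|D|/\alpha_b$ (using $\alpha_b \le 1$). Since $H$ is always an induced subgraph of $G$ (inspection of Step 5.a), each node in $V_H$ has degree at most $\Delta$ in $H$, and so $vol_H(B_H) \le \Delta \cdot |B_H| \le 4\Delta |D|/\alpha_b$, as desired.

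The only subtle point I foresee is a possible circular dependency, since the present proposition is ultimately used to certify that Step 2 does not falsely trigger, while its proof invokes \ref{thm:bound prune vol}. Fortunately, the proof of \ref{thm:bound prune vol} rests on \ref{claim:sparse for t-1}, which in turn relies only on the parameter inequality $\alpha_\ell \le \alpha_1 < \alpha_b/4$ used in the LBS call at Step 5.a, and not on the validity of Step 2. Hence the invocation is sound, and the argument goes through without circularity.
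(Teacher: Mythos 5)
Your proof is correct and follows essentially the same route as the paper: split $B_H$ into the contribution from $A$ (at most $2|D|$ endpoints of $D$) and from $A_H\cap V_H$ (bounded via $\delta_G(P)\le vol_G(P)\le 2|D|/\alpha_b$ using \ref{thm:bound prune vol}), then convert to volume with the degree bound $\Delta$. The only cosmetic difference is that you count vertices first and multiply by $\Delta$ at the end, whereas the paper bounds volumes directly; your remark on non-circularity is a sensible sanity check but does not change the argument.
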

\begin{proof}
We have
\begin{align*}
vol_{H}(B_{H}) & \le vol_{H}(A)+vol_{H}(A_{H})\\
 & \le2\Delta|D|+\Delta|E_{G}(P,V-P)|\\
 & \le2\Delta|D|+\Delta vol_{G}(P)\\
 & \le2\Delta|D|+2\Delta|D|/\alpha_{b} & \mbox{by \ref{thm:bound prune vol}}\\
 & \le4\Delta|D|/\alpha_{b}
\end{align*}
\end{proof}
\begin{lem}
Suppose that $|D|<\alpha_{b}^{2}m/30\Delta$. If $\phi(G_{b})\ge\alpha_{b}$,
then the condition in Step 2 of \ref{alg:local decomp} never holds.\label{lem:no fail Step 2}\end{lem}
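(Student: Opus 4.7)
The condition in Step~2 that causes failure is $vol_{H}(V_{H}-B_{H})<\tfrac{3}{\sigma}vol_{H}(B_{H})$, with $\sigma=\alpha_{b}/2$. The plan is to separately upper bound $vol_H(B_H)$ (this is already done, and just needs to be plugged in) and lower bound $vol_H(V_H-B_H)$ by something essentially as large as $2m$, and then check that the hypothesis $|D|<\alpha_b^2 m/(30\Delta)$ makes the two sides comparable in the right direction.

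First, Proposition~\ref{prop:bound B_H} (which uses Corollary~\ref{thm:bound prune vol}, valid because $\phi(G_b)\ge\alpha_b$) gives $vol_H(B_H)\le 4\Delta|D|/\alpha_b$, hence
\[
\tfrac{3}{\sigma}vol_H(B_H)=\tfrac{6}{\alpha_b}\cdot vol_H(B_H)\le\tfrac{24\Delta|D|}{\alpha_b^2}<\tfrac{24\Delta}{\alpha_b^2}\cdot\tfrac{\alpha_b^2 m}{30\Delta}=\tfrac{4m}{5}.
\]

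Second, I will lower bound $vol_H(V_H)$. In the recursion, $V_H=V\setminus P'$ for some subset $P'\subseteq P$ of the pruned vertices accumulated so far (the ones cut off before reaching this call). Writing $vol_H(V_H)=vol_G(V_H)-|E_G(V_H,V\setminus V_H)|$, we have $vol_G(V_H)\ge 2m-vol_G(P')\ge 2m-vol_G(P)\ge 2m-2|D|/\alpha_b$ by Corollary~\ref{thm:bound prune vol}, and $|E_G(V_H,V\setminus V_H)|=\delta_G(P')\le \phi_G(P')\cdot vol_G(P')< (\alpha_b/4)\cdot(2|D|/\alpha_b)=|D|/2$, where $\phi_G(P')<\alpha_b/4$ follows from Proposition~\ref{prop:small then sparse} once we observe that $vol_G(P')\le vol_G(P)\le vol_G(V)/2$ by Lemma~\ref{lem:less than half} and the hypothesis $|D|<\alpha_b^2 m/(30\Delta)$ (which in turn yields $vol_G(V)=2m=\omega(|D|/\alpha_b)$). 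Combining and using $\alpha_b\le 1$,
\[
vol_H(V_H)\ge 2m-\tfrac{2|D|}{\alpha_b}-\tfrac{|D|}{2}\ge 2m-\tfrac{3|D|}{\alpha_b}>2m-\tfrac{\alpha_b m}{10\Delta}\ge\tfrac{19m}{10}.
\]

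Putting the two bounds together,
\[
vol_H(V_H-B_H)=vol_H(V_H)-vol_H(B_H)\ge \tfrac{19m}{10}-\tfrac{4\Delta|D|}{\alpha_b}>\tfrac{19m}{10}-\tfrac{2\alpha_b m}{15}\ge m>\tfrac{4m}{5}>\tfrac{3}{\sigma}vol_H(B_H),
\]
so the test in Step~2 never triggers. The only slightly subtle point is ensuring that the hypotheses of Proposition~\ref{prop:small then sparse} and Lemma~\ref{lem:less than half} are satisfied at \emph{this moment} of the recursion (i.e.\ for the partial pruning set $P'$, not just the final $P$); but since the arguments used there depend only on the monotone growth of $P$ together with the bound $|D|<\alpha_b^2 m/(30\Delta)$, both apply uniformly throughout the execution, and the rest is routine arithmetic.
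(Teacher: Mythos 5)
Your proof is correct and uses the same essential ingredients as the paper's: the bound $vol_H(B_H)\le 4\Delta|D|/\alpha_b$ from Proposition~\ref{prop:bound B_H} and the bound $vol_G(P)\le 2|D|/\alpha_b$ from Corollary~\ref{thm:bound prune vol}, combined with the hypothesis $|D|<\alpha_b^2m/(30\Delta)$. The paper organizes the same arithmetic as a contradiction (assuming the Step~2 test holds, it derives $vol_G(V)\le 60\Delta|D|/\alpha_b^2$, contradicting the hypothesis), whereas you argue directly by bounding the two sides of the test separately; this is only a cosmetic difference.
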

\begin{proof}
Suppose that the condition in Step 2 holds. Let $P$ be the pruning
set that \ref{alg:local decomp} outputted so far. We have that $H=G[V_{H}]=G[V-P]$
such that $vol_{H}(V_{H}-B_{H})<\frac{3}{\sigma}vol_{H}(B_{H})$ where
$B_{H}=(A\cup A_{H})\cap V_{H}$, $A$ is the endpoints of $D$, and
$A_{H}$ is the set of endpoints of edges in $\partial_{G}(V_{H})=\partial_{G}(P)$.
This implies that 

\[
vol_{H}(V_{H})=vol_{H}(V_{H}-B_{H})+vol_{H}(B_{H})<(1+\frac{3}{\sigma})vol_{H}(B_{H}).
\]
Together, we have that $vol_{H}(V-P)=vol_{H}(V_{H})\le\frac{16\Delta}{\sigma}(|D|/\alpha_{b})$
by \ref{prop:bound B_H}. This implies 
\begin{align*}
vol_{G}(V) & =vol_{G}(P)+\delta_{G}(P)+vol_{H}(V-P)\\
 & \le2vol_{G}(P)+vol_{H}(V-P)\\
 & \le60\Delta|D|/\alpha_{b}^{2}. & \mbox{by \ref{thm:bound prune vol} and \ensuremath{\sigma}=\ensuremath{\alpha_{b}}/2}.
\end{align*}
But this is contradiction because it means $|D|\ge\alpha_{b}^{2}m/30\Delta$.
\end{proof}
This implies that the parameters for $\cA_{cut}$ are valid when it
is called. 
\begin{lem}
Whenever $\cA_{cut}(H,\alpha_{\ell},B_{H},\sigma)$ is called, we
have that $\sigma\ge\frac{3vol(B)}{vol(V_{H}-B_{H})}$ satisfying
the requirement for $\cA_{cut}$ as stated in \ref{thm:LBS cut alg}.\end{lem}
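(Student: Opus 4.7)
The claim is essentially immediate from inspecting \ref{alg:local decomp}: the only place where $\cA_{cut}$ is invoked on input $(H,\alpha_\ell,B_H,\sigma)$ is Step~4 (and then used again implicitly in Step~5), and any such invocation is reached only after Step~2 has been passed without failure. The plan is therefore to (a) trace the control flow in $\decomp(H,I,\ell)$ to isolate the precondition provided by Step~2, and (b) translate that precondition into the two requirements on $(G,A,\sigma,\alpha)$ stated in \ref{thm:LBS cut alg}, namely $2vol(A)\le vol(V-A)$ and $\sigma\in[\tfrac{2vol(A)}{vol(V-A)},1]$, applied with $A=B_H$ and $V=V_H$.

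First I would observe that, since the algorithm did not fail at Step~2, we have
\[
vol_H(V_H-B_H)\;\ge\;\tfrac{3}{\sigma}\,vol_H(B_H).
\]
Recall that throughout the run we have fixed $\sigma=\alpha_b/2$, and by hypothesis $\alpha_b\le 1$, so $\sigma\le 1/2<1$. Thus $3/\sigma\ge 6>2$, which immediately yields
\[
vol_H(V_H-B_H)\;\ge\;2\,vol_H(B_H),
\]
establishing condition~(i) of \ref{thm:LBS cut alg}. Rearranging the Step~2 inequality also gives
\[
\sigma\;\ge\;\frac{3\,vol_H(B_H)}{vol_H(V_H-B_H)}\;\ge\;\frac{2\,vol_H(B_H)}{vol_H(V_H-B_H)},
\]
which, combined with $\sigma\le 1$, places $\sigma$ in the interval required by condition~(ii). (In particular, the ``$B$'' in the lemma statement is a typo for $B_H$, and the factor $3$ rather than $2$ gives the needed slack between the Step~2 guard and the LBS cut requirement.)

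The only subtlety, and hence the main thing to be careful about, is that Step~5.b recurses on $\decomp(H,H,\ell+1)$ \emph{without} re-invoking Step~2 inside that call; however, Step~2 is the very first thing executed in the recursive invocation, so by induction on the depth of the recursion, every call to $\cA_{cut}$ is immediately preceded (within the same invocation of $\decomp$) by a successful passage of Step~2 with the current $H$ and $B_H$. Consequently, the two conditions above hold at every call site, which is exactly the requirement imposed by \ref{thm:LBS cut alg}. No additional estimates are needed: the argument is a one-line unwinding of the Step~2 guard combined with the fact that $\sigma=\alpha_b/2\le 1$.
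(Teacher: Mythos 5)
Your proof is correct and follows essentially the same route as the paper: both arguments simply unwind the Step~2 guard of \ref{alg:local decomp} to get $vol_H(V_H-B_H)\ge\frac{3}{\sigma}vol_H(B_H)$ and then combine this with $\sigma=\alpha_b/2\le 1$ to verify the two preconditions of \ref{thm:LBS cut alg}. Your additional remarks (the induction over recursive calls and the explicit check of condition~(i)) are just elaborations of what the paper leaves implicit.
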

\begin{proof}
Observe that $\cA_{cut}$ can be called only when the condition in
Step 2 of \ref{alg:local decomp} is false: $vol(V_{H}-B_{H})\ge\frac{3}{\sigma}vol(B_{H})$.
That is, $\sigma\ge\frac{3vol(B_{H})}{vol(V_{H}-B_{H})}$ and $4vol(B_{H})\le vol(V_{H}-B_{H})$.
\end{proof}

\subsubsection{Lower Bounding Conductance}

Next, given that $\phi(G_{b})\ge\alpha_{b}$, we would like to prove
an important invariant: if $\decomp(H,I,\ell)$ is called, then $\opt(I,\alpha_{\ell})<\bar{s}_{\ell}$.
In order to prove this, we need two lemmas.
\begin{lem}
Suppose that $\phi(G_{b})\ge\alpha_{b}$. If $\decomp(H,I,1)$ is
called, then $\opt(I,\alpha_{1})<2|D|/\alpha_{b}+1=\bar{s}_{1}$.\label{lem:invaraint 1}\end{lem}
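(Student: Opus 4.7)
The plan is to observe that whenever $\decomp(H, I, 1)$ is invoked, the second argument $I$ must be the original graph $G$. This is because the only calls made at level $\ell = 1$ are the initial call $\decomp(G, G, 1)$ and recursive calls in Step 5.a of the form $\decomp(H[V_H \setminus S], I, \ell)$, which preserve both $\ell$ and $I$; any transition to level $\ell = 2$ in Step 5.b resets $I$ to $H$, so it never feeds back into a level-$1$ call. Hence it suffices to bound $\opt(G, \alpha_1)$.

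Next I would invoke the key ingredient that has already been established: Lemma \ref{thm:sparse cut small}, which says that under the assumption $\phi(G_b) \ge \alpha_b$, every cut $S \subseteq V$ of $G$ satisfying $vol_G(S) \le vol_G(V - S)$ and $\phi_G(S) < \alpha_b/2$ must have $vol_G(S) \le 2|D|/\alpha_b$. Since by construction $\alpha_1 = \alpha_b / 5 < \alpha_b/2$, any $\alpha_1$-sparse cut is in particular $(\alpha_b/2)$-sparse, so the hypothesis of the lemma is satisfied.

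Therefore, letting $S^*$ be a witness for $\opt(G, \alpha_1)$ (i.e. a cut with $vol_G(S^*) \le vol_G(V - S^*)$, $\phi_G(S^*) < \alpha_1$, and $vol_G(S^*) = \opt(G, \alpha_1)$, if such a cut exists), we immediately obtain
\[
\opt(G, \alpha_1) \;=\; vol_G(S^*) \;\le\; \frac{2|D|}{\alpha_b} \;<\; \frac{2|D|}{\alpha_b} + 1 \;=\; \bar{s}_1.
\]
If no such $S^*$ exists then $\opt(G, \alpha_1) = 0 < \bar{s}_1$ trivially. Either way, the required inequality holds.

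I do not anticipate any real obstacle here: the lemma is essentially a one-line consequence of the earlier Lemma \ref{thm:sparse cut small} together with the identity $I = G$ at level $1$. The only subtlety to double-check is the convention on whether $\opt$ measures volume or size of the cut; in the appendix it is defined via $|S| \le |V_H - S|$, but since $|S| \le vol_H(S)$ for any graph whose isolated vertices can be ignored, the bound $vol_G(S^*) \le 2|D|/\alpha_b$ translates directly into the same upper bound on $|S^*|$, and the conclusion $\opt(G, \alpha_1) < \bar{s}_1$ holds under either convention.
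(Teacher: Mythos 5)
Your proposal is correct and matches the paper's proof: the paper likewise observes that $I=G$ when $\ell=1$ and then applies Lemma \ref{thm:sparse cut small} via $\opt(G,\alpha_{1})\le\opt(G,\alpha_{b}/2)\le2|D|/\alpha_{b}$. Your additional remarks (why $I$ must equal $G$ at level $1$, and the volume-versus-cardinality convention for $\opt$) are sound elaborations of details the paper leaves implicit.
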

\begin{proof}
Observe that when $\ell=1$, we have $I=G$. \ref{thm:sparse cut small}
implies that $\opt(G,\alpha_{1})\le\opt(G,\alpha_{b}/2)\le2|D|/\alpha_{b}$.
\end{proof}
Recall that, in an induced subgraph $H=G[V_{H}]$ in $G$, we denote
$B_{H}=(A\cup A_{H})\cap V_{H}$ where where $A$ is the endpoints
of $D$ and $A_{H}$ is the set of endpoints of edges in $\partial_{G}(V_{H})$.
\begin{lem}
Suppose that $\phi(G_{b})\ge\alpha_{b}$. $\opt(H,\alpha')=\opt(H,\alpha',B_{H},\sigma)$
for any $\alpha'<\alpha_{b}/2$ and induced subgraph $H=G[V_{H}]$.\label{thm:sparse must overlap}\end{lem}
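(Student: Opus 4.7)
The inequality $\opt(H,\alpha',B_H,\sigma)\le \opt(H,\alpha')$ is immediate from the definitions, since requiring $(B_H,\sigma)$-overlap only shrinks the admissible family of cuts. So the task is the reverse: given any $\alpha'$-sparse cut $S\subseteq V_H$ in $H$ with $vol_H(S)\le vol_H(V_H-S)$, I will show that $S$ is $(B_H,\sigma)$-overlapping, i.e.\ $vol_H(S\cap B_H)\ge \sigma\cdot vol_H(S)$. The strategy is to view $S$ as a cut in the before-graph $G_b$ and exploit $\phi(G_b)\ge \alpha_b$.

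The key structural remark is that every edge of $G_b$ that is not already in $H$ is either (i) a $D$-edge, whose endpoint inside $V_H$ must lie in $A$, or (ii) a $G$-edge crossing $\partial_G(V_H)$, whose $V_H$-endpoint must lie in $A_H$. In both cases the $V_H$-endpoint belongs to $B_H=(A\cup A_H)\cap V_H$. Summing node by node over $S$ yields the identity
\[
vol_{G_b}(S) - vol_H(S) \;=\; vol_{G_b}(B_H\cap S) - vol_H(B_H\cap S),
\]
together with the inequality $\delta_{G_b}(S)-\delta_H(S)\le vol_{G_b}(B_H\cap S)-vol_H(B_H\cap S)$, where the slack reflects $D$-edges with both endpoints inside $S$ that do not contribute to the $G_b$-boundary.

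Set $\Delta V := vol_{G_b}(B_H\cap S)-vol_H(B_H\cap S)\ge 0$. Assuming the main case $vol_{G_b}(S)\le vol_{G_b}(V)/2$ (the other case is handled by running the same argument against $V\setminus S$, using $vol_H(S)\le vol_H(V_H-S)$ to ensure it is the appropriate side), $\phi(G_b)\ge \alpha_b$ gives $\alpha_b\cdot vol_{G_b}(S)\le \delta_{G_b}(S)$. Plugging in the two displays and using $\phi_H(S)<\alpha'$ gives
\[
\alpha_b\bigl(vol_H(S)+\Delta V\bigr) \;\le\; \alpha'\, vol_H(S) + \Delta V,
\]
which rearranges to $(\alpha_b-\alpha')\,vol_H(S)\le (1-\alpha_b)\Delta V$. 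Since $\alpha'<\alpha_b/2$ and $\sigma=\alpha_b/2$, this forces $\Delta V$ (and hence $vol_{G_b}(B_H\cap S)$) to be at least $\sigma\cdot vol_H(S)$, which is the overlap bound I want, modulo converting between $vol_{G_b}$ and $vol_H$ on $B_H\cap S$.

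The main obstacle is precisely that final conversion: the bound produced by the conductance of $G_b$ is naturally stated in terms of $vol_{G_b}(B_H\cap S)$, whereas the overlap condition in $\opt(H,\alpha',B_H,\sigma)$ is measured by $vol_H(B_H\cap S)$. I plan to close the gap using that $G$ has max degree $\Delta=3$ (so the two volumes on a single vertex differ only by $\deg_G^{out}(v)+\deg_D(v)$), together with the smallness budget $|D|=O(\alpha_b^2 m/\Delta)$ and the Step 2 invariant $vol_H(V_H-B_H)\ge (3/\sigma)vol_H(B_H)$ that $\decomp$ enforces; these combine to show that $vol_H$ and $vol_{G_b}$ are within a small multiplicative factor on the relevant sets, so the bound on $\Delta V$ translates into the desired bound on $vol_H(B_H\cap S)$.
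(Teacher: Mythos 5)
Your route is the same as the paper's: every edge of $\partial_{G_{b}}(S)$ that is not in $\partial_{H}(S)$ must have its $S$-endpoint in $B_{H}=(A\cup A_{H})\cap V_{H}$, so $\delta_{G_{b}}(S)\le\delta_{H}(S)+(\text{extra volume at }B_{H}\cap S)$, and combining $\phi(G_{b})\ge\alpha_{b}$ with $\phi_{H}(S)<\alpha'<\alpha_{b}/2$ forces that extra volume to exceed $\sigma\, vol_{H}(S)$. The paper runs exactly this computation as a contradiction: it writes $\delta_{G_{b}}(S)\le\delta_{H}(S)+vol(S\cap(A\cup A_{H}))$, assumes $vol(S\cap(A\cup A_{H}))<\sigma\, vol(S)$, and derives $\phi_{H}(S)\ge\alpha_{b}/2$. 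It attempts no conversion between $vol_{G_{b}}$ and $vol_{H}$ on $B_{H}\cap S$; it simply identifies the charge with the overlap volume.

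The step you flag as the ``main obstacle'' is therefore the crux, and your proposed repair does not close it. Your algebra yields $\Delta V=vol_{G_{b}}(B_{H}\cap S)-vol_{H}(B_{H}\cap S)>\sigma\, vol_{H}(S)$, i.e., a lower bound on the volume that $B_{H}\cap S$ \emph{loses} in passing from $G_{b}$ to $H$, whereas the overlap condition needs a lower bound on the volume it \emph{retains}, $vol_{H}(B_{H}\cap S)$. These are unrelated: a vertex of $A$ all of whose incident $G_{b}$-edges are $D$-edges or $G$-edges leaving $V_{H}$ contributes fully to $\Delta V$ and nothing to $vol_{H}(B_{H}\cap S)$. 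The global constraints you invoke --- the budget on $|D|$, the Step~2 invariant, and max degree $3$ --- cannot control the ratio $vol_{G_{b}}(B_{H}\cap S)/vol_{H}(B_{H}\cap S)$ for the particular adversarial $S$ at hand, and even a per-vertex degree bound would only give a constant-factor relation, which would degrade $\sigma=\alpha_{b}/2$ by that constant and no longer match the statement being proved. The way out is the paper's: charge each extra boundary edge directly to the volume of its $S$-endpoint in $B_{H}$ and use that quantity in the overlap condition (if you insist on measuring the overlap with $vol_{H}$, as the definition of overlapping in $H$ literally prescribes, then the mismatch you found is also present, unremarked, in the paper's unsubscripted $vol$). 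A secondary point: ``running the same argument against $V\setminus S$'' does not dispose of the case $vol_{G_{b}}(S)>vol_{G_{b}}(V)/2$, since $V\setminus S\not\subseteq V_{H}$ and it is $S$, not its complement, whose overlap must be certified.
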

\begin{proof}
In words, we need to prove that any $\alpha'$-sparse cut $S\subset V_{H}$
where $|S|\le|V_{H}-S|$ must be $(B_{H},\sigma)$-overlapping in
$H$.

First, consider any cut edge $(u,v)\in\partial_{G_{b}}(S)$ in the
before graph $G_{b}$ where $u\in S$. We claim that either $(u,v)\in\partial_{H}(S)$
or $u\in A\cup A_{H}$. Indeed, if $u\notin A\cup A_{H}$, i.e. $u$
is not incident to any edge in $D$ nor $\partial_{G}(V_{H})$, so
all edges incident to $u$ are inside $H$, and hence $(u,v)\in\partial_{H}(S)$.
It follows that $\delta_{G_{b}}(S)\le\delta_{H}(S)+vol(S\cap(A\cup A_{H}))$.

Suppose that there is an $\alpha'$-sparse cut $S\subset V_{H}$ which
is not $(B_{H},\sigma)$-overlapping, i.e. $vol(S\cap(A\cup A_{H}))<\sigma vol(S)=\frac{\alpha_{b}}{2}vol(S)$.
Then we have that
\[
\delta_{H}(S)\ge\delta_{G_{b}}(S)-vol(S\cap(A\cup A_{H}))>\alpha_{b}vol(S)-\alpha_{b}vol(S)/2=\alpha_{b}vol(S)/2.
\]
That is, $\phi_{H}(S)\ge\alpha_{b}/2>\alpha'$, which is a contradiction.
\end{proof}
Now, we can prove \textbf{the main invariant}:
\begin{lem}
\label{prop:invariant local decomp}Suppose that $\phi(G_{b})\ge\alpha_{b}$.
If $\decomp(H,I,\ell)$ is called, then the invariant $\opt(I,\alpha_{\ell})<\bar{s}_{\ell}$
is satisfied. \end{lem}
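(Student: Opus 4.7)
My plan is to prove the invariant by induction on the recursion tree of nested $\decomp$ calls, rooted at the top-level call $\decomp(G,G,1)$. The base case is immediate: $I=G$ and $\ell=1$, and $\opt(G,\alpha_1)<\bar{s}_1$ is exactly \ref{lem:invaraint 1} (using $\alpha_1=\alpha_b/5<\alpha_b/2$ and the definition $\bar{s}_1=2|D|/\alpha_b+1$).

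For the inductive step, I would observe that from $\decomp(H,I,\ell)$ there is at most one recursive call, which comes either from Step 5.a or Step 5.b. The Step 5.a branch recurses into $\decomp(H[V_H\setminus S], I, \ell)$; since both the ``reference graph'' $I$ and the level $\ell$ are unchanged, the invariant $\opt(I,\alpha_\ell)<\bar{s}_\ell$ is inherited verbatim from the parent call, and nothing needs to be proved.

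The main case, and the one where the quantitative content lives, is Step 5.b: the call $\decomp(H,H,\ell+1)$, for which the new invariant to verify is $\opt(H,\alpha_{\ell+1})<\bar{s}_{\ell+1}$. Here I would use the following chain. First, by the LBS-cut guarantee in \ref{def:LBS cut}, the cut $S$ returned by $\cA_{cut}(H,\alpha_\ell,B_H,\sigma)$ satisfies $c_{size}\cdot vol(S)\ge\opt(H,\alpha_\ell/c_{con},B_H,\sigma)$. Second, reaching Step 5.b means $vol(S)<\bar{s}_{\ell+1}/c_{size}$ (reading the threshold in Step 5.a as a volume threshold, consistent with the volume-based definition of $\opt$ and of $\bar{s}_\ell$). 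Combining, and using $\alpha_{\ell+1}=\alpha_\ell/c_{con}$ from the definitions, gives
\[
\opt(H,\alpha_{\ell+1},B_H,\sigma)\;\le\;c_{size}\cdot vol(S)\;<\;\bar{s}_{\ell+1}.
\]
Finally, I would invoke \ref{thm:sparse must overlap}, whose hypothesis $\alpha_{\ell+1}<\alpha_b/2$ is satisfied because $\alpha_{\ell+1}\le\alpha_1=\alpha_b/5$; this removes the overlap restriction and upgrades the previous bound to $\opt(H,\alpha_{\ell+1})=\opt(H,\alpha_{\ell+1},B_H,\sigma)<\bar{s}_{\ell+1}$, which is exactly the invariant for the recursive call.

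The argument silently relies on the hypothesis $\phi(G_b)\ge\alpha_b$, which is needed for \ref{thm:sparse must overlap} to apply at every level; so the statement being proved is conditional on that assumption, matching the lemma. The only real conceptual obstacle is the level-increasing case above—ensuring that the loss factor $c_{con}$ absorbed in going from $\alpha_\ell$ to $\alpha_{\ell+1}$ is exactly compensated by the approximation ratio of the LBS-cut oracle, and that Step 5.b is only triggered precisely when the oracle certifies that no large $(\alpha_{\ell+1},B_H,\sigma)$-overlapping sparse cut exists. Everything else is bookkeeping through the recursion.
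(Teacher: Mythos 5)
Your proof is correct and follows essentially the same route as the paper's: the base case via \ref{lem:invaraint 1}, the observation that Step 5.a preserves the invariant verbatim, and for Step 5.b the chain combining the LBS-cut approximation guarantee, the size threshold, $\alpha_{\ell+1}=\alpha_{\ell}/c_{con}$, and \ref{thm:sparse must overlap} to drop the overlap restriction. Your remark about reading the Step 5.a threshold as a volume threshold is a fair reading of a minor notational inconsistency in the paper (which mixes $|S|$ and $vol(S)$), but it does not change the argument.
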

\begin{proof}
When $\ell=1$, $\opt(I,\alpha_{\ell})<\bar{s}_{1}$ by \ref{lem:invaraint 1}.
In particular, the invariant is satisfied when $\decomp(G,G,1)$.
The invariant for $\decomp(H[V_{H}\setminus S],I,\ell)$ which is
called in Step 5.a is the same as the one for $\decomp(H,I,\ell)$,
and hence is satisfied by induction. 

Finally, we claim that the invariant is satisfied when $\decomp(H,H,\ell+1)$
is called, i.e., $\opt(H,\alpha_{\ell+1})<\bar{s}_{\ell+1}$. By Step
5.a, $|S|<\bar{s}_{\ell+1}/c_{size}$. By Step 5, $\opt(H,\alpha_{\ell+1},B_{H},\sigma)/c_{size}\le|S|$
as $\alpha_{\ell+1}=\alpha_{\ell}/c_{con}$. Since $H$ is induced
by $V_{H}$ and $\alpha_{\ell+1}\le\alpha_{1}<\alpha_{b}/2$ satisfying
the conditions in \ref{thm:sparse must overlap}, we have $\opt(H,\alpha_{\ell+1})=\opt(H,\alpha_{\ell+1},B_{H},\sigma)$.
Therefore, $\opt(H,\alpha_{\ell+1})\le c_{size}|S|<\bar{s}_{\ell+1}$
as desired. 
\end{proof}
Finally, we bound the conductance of the components of $G^{d}$.
\begin{lem}
\label{lem:expansion local decomp}Suppose that $\phi(G_{b})\ge\alpha_{b}$.
The pruned graph $H=G[V-P]$ has conductance $\phi(H)\ge\alpha$.\end{lem}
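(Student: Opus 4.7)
The plan is to trace through the possible termination points of the single-branch recursion $\decomp(G,G,1)$ and apply the main invariant (\ref{prop:invariant local decomp}) together with \ref{thm:sparse must overlap} to the final subgraph. The key structural observation is that the recursion is non-branching: each call either returns (Step~3 or Step~4) or recurses on exactly one subproblem (Step~5.a or Step~5.b), so there is a well-defined ``last'' call. Moreover, since $V_H$ only shrinks along this recursion path (by Step~5.a) and $P$ accumulates exactly those vertices removed, the $H$ at the final call satisfies $V(H^*) = V - P$, i.e.\ $H^* = G[V-P]$. Thus it suffices to lower bound $\phi(H^*)$ by $\alpha$.

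Since we assume $\phi(G_{b})\ge\alpha_{b}$, \ref{lem:no fail Step 2} guarantees the algorithm never fails at Step~2, so the final call terminates either via Step~3 or via Step~4. I would split into these two cases.

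\emph{Case A: termination via Step~4 at some level $\ell\le L$.} In this case $\cA_{cut}(H^{*},\alpha_{\ell},B_{H^{*}},\sigma)$ reports $\opt(H^{*},\alpha_{\ell}/c_{con},B_{H^{*}},\sigma)=0$. By definition $\alpha_{\ell}/c_{con}=\alpha_{\ell+1}\le\alpha_{1}<\alpha_{b}/2$, so \ref{thm:sparse must overlap} applied to the induced subgraph $H^{*}$ upgrades this to $\opt(H^{*},\alpha_{\ell+1})=\opt(H^{*},\alpha_{\ell+1},B_{H^{*}},\sigma)=0$, meaning $H^{*}$ has no $\alpha_{\ell+1}$-sparse cut, i.e.\ $\phi(H^{*})\ge\alpha_{\ell+1}\ge\alpha_{L}=\alpha$.

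\emph{Case B: termination via Step~3 at level $L$.} Here the immediately preceding recursive call must have been a Step~5.b transition (since Step~5.a does not change $\ell$ and the initial call is at $\ell=1<L$), which invoked $\decomp(H^{*},H^{*},L)$; in particular $I=H^{*}$ at this final call. Applying \ref{prop:invariant local decomp} to this call yields $\opt(H^{*},\alpha_{L})<\bar{s}_{L}\le 1$, hence $\opt(H^{*},\alpha_{L})=0$ and $\phi(H^{*})\ge\alpha_{L}=\alpha$.

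Combining the two cases gives $\phi(G[V-P])=\phi(H^{*})\ge\alpha$. The only subtle point is bookkeeping for Case~B: correctly identifying that the $I$-argument of the level-$L$ call is $H^{*}$ itself (so that the invariant directly bounds $\opt$ on $H^{*}$). Once this is observed, the proof is a direct invocation of the already-established invariant and of \ref{thm:sparse must overlap}; no further technical work is required.
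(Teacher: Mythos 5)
Your proof is correct and follows essentially the same route as the paper's: case split on whether the final call returns at Step~3 or Step~4, apply the invariant of \ref{prop:invariant local decomp} in the first case and \ref{thm:sparse must overlap} in the second, and note $\alpha_{L}=\alpha$. The extra bookkeeping you supply (that the recursion is non-branching, that $H^{*}=G[V-P]$, and that the $I$-argument at level $L$ equals $H^{*}$) is left implicit in the paper but is accurate.
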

\begin{proof}
$H$ is in either Step 3 or 4 in \ref{alg:local decomp}. First, if
$H$ is returned in Step 3, then $\decomp(H,H,L)$ was called. By
the invariant, we have $\opt(H,\alpha_{L})<\bar{s}_{L}\le1$, i.e.
there is no $\alpha_{L}$-sparse cut in $H$. As $\alpha_{L}=\alpha$,
$\phi(H)\ge\alpha$. Second, if $H$ is returned in Step 4, then $\cA_{cut}(H,\alpha_{\ell},B,\sigma)$
reports that $\opt(H,\alpha_{\ell}/c_{con},B_{H},\sigma)=0$. As $\alpha_{\ell}/c_{con}<\alpha_{b}/2$,
$\opt(H,\alpha_{\ell}/c_{con})=0$ by \ref{thm:sparse must overlap}.
That is, $\phi(H)\ge\alpha_{\ell}/c_{con}\ge\alpha_{L}=\alpha$.
\end{proof}
Now, it is left to analyze the running time.

\subsubsection{Running time}

In this section, we prove that if $\phi(G_{b})\ge\alpha_{b}$, then
$\decomp(G,G,1)$ takes at most $\overline{t}$ time. In other words,
if $\decomp(G,G,1)$ takes more that $\bar{t}$ time, then $\phi(G_{b})<\alpha_{b}$
and the algorithm will just halt and report that $\phi(G_{b})<\alpha_{b}$
(FAIL).

To analyze the running time, we need some more notation. We define
the recursion tree $\mathcal{T}$ of $\decomp(G,G,1)$. Actually,
$\cT$ is a path.
\begin{itemize}
\item Each node of $\mathcal{T}$ represents the parameters of the procedure
$\decomp(H,I,\ell)$. 
\item $(G,G,1)$ is the root node. 
\item For each $(H,I,\ell)$, if $\decomp(H,I,\ell)$ returns the pruned
graph $H$, then $(H,I,\ell)$ is a leaf. 
\item If $\decomp(H,I,\ell)$ recurses on $\decomp(H[V_{H}\setminus S],I,\ell)$,
then there is an edge $((H,I,\ell),(H[V_{H}\setminus S],I,\ell))\in\cT$
and is called a \emph{right edge}. 
\item If $\decomp(H,I,\ell)$ recurses on $\decomp(H,H,\ell+1)$, then there
is an edge $((H,I,\ell),(H,H,\ell+1))$ in $\cT$ and is called a
\emph{down edge}. 
\end{itemize}
For any $n$-node graph $G$, $\alpha$ is a conductance parameter,
$A$ is a set of nodes in $G$, and $\sigma$ is an overlapping parameter,
let $t_{LSB}(n,vol(A),\alpha,\sigma)$ denote the running time of
the LSB cut algorithm $\cA_{cut}(G,\alpha,A\sigma)$.
\begin{lem}
\label{lem:depth implies time local}Suppose that $\phi(G_{b})\ge\alpha_{b}$.
If the depth of the recursion tree $\mathcal{T}$ is $d_{\cT}$, then
the total running time is $O(d_{\cT}\times t_{LSB}(n,\frac{\Delta|D|}{\alpha_{b}},\alpha_{b},\alpha_{b}))$.\end{lem}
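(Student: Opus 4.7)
The plan is to exploit the fact that $\cT$ is a path of depth $d_{\cT}$, so the algorithm makes exactly $d_{\cT}$ recursive invocations of $\decomp$. I would show that the cost of each call $\decomp(H,I,\ell)$ is dominated by its single invocation of $\cA_{cut}(H,\alpha_\ell,B_H,\sigma)$ in Step~4, bounded uniformly by $t_{LSB}(n,\Delta|D|/\alpha_b,\alpha_b,\alpha_b)$, and then sum over the $d_{\cT}$ nodes.

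First I would account for the non-cut work per node. Constructing $B_H=(A\cup A_H)\cap V_H$ costs $O(|A|+|A_H|)$: the set $A$ has size $O(|D|)$ by definition, and $|A_H|\le \delta_G(P)$, which by Corollary \ref{thm:bound prune vol} is $O(|D|/\alpha_b)$ under the hypothesis $\phi(G_b)\ge\alpha_b$. Verifying the Step~2 volume condition and writing down the returned cut $S$ incur only overhead linear in $vol_H(B_H)$; handing the modified graph off to the recursive child (either deleting $S$ from $H$ in Step~5a or incrementing $\ell$ in Step~5b) is just a pointer update on the local-style representation from \ref{rem:local input graph}. All of this is $\tilde{O}(vol_H(B_H))$, which is subsumed by the LBS cut call. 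Also, by \ref{lem:no fail Step 2} the Step~2 failure branch is never triggered under our hypothesis, so the algorithm indeed traverses the full path $\cT$ and does not halt early for a spurious reason.

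Next I would bound the cost of a single LBS cut call. By Theorem \ref{thm:LBS cut alg}, $\cA_{cut}(H,\alpha_\ell,B_H,\sigma)$ runs in time $\tilde{O}\!\left(vol_H(B_H)/(\alpha_\ell\sigma^2)\right)$ with $\sigma=\alpha_b/2$; and Proposition \ref{prop:bound B_H}, which crucially uses $\phi(G_b)\ge\alpha_b$, gives $vol_H(B_H)\le 4\Delta|D|/\alpha_b$ uniformly across \emph{every} node of $\cT$. Interpreting $t_{LSB}(n,\Delta|D|/\alpha_b,\alpha_b,\alpha_b)$ as the representative LBS cut running time at these parameters, each call is bounded by this quantity (any slack from $\alpha_\ell<\alpha_b$ is absorbed by the outer $(|D|/\alpha_b)^\epsilon/\epsilon$ factor appearing in the enclosing time limit of Theorem \ref{thm:reduc to LBS full}). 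Summing over the $d_{\cT}$ nodes of $\cT$ then yields the claimed total of $O(d_{\cT}\cdot t_{LSB}(n,\Delta|D|/\alpha_b,\alpha_b,\alpha_b))$.

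The main obstacle is confirming that no non-cut step silently incurs work proportional to $|V_H|$ or $|E(H)|$, since such a term would not telescope nicely along the $d_{\cT}$-long path. This requires carefully exploiting the local nature of the LBS cut subroutine together with the adjacency-list input format of \ref{rem:local input graph}, so that all bookkeeping (for $B_H$, for the volume test in Step~2, and for passing the subgraph $H[V_H\setminus S]$ to the next recursive call) touches only vertices already examined by $\cA_{cut}$ rather than all of $V_H$. Once this accounting is in place, the lemma follows immediately from the uniform per-node bound multiplied by the path length $d_{\cT}$.
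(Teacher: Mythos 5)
Your proposal is correct and follows essentially the same route as the paper's proof: the per-node cost is dominated by the single call to $\cA_{cut}$, which is uniformly bounded by $t_{LSB}(n,\Delta|D|/\alpha_b,\alpha_b,\alpha_b)$ via \ref{prop:bound B_H} (using $\phi(G_b)\ge\alpha_b$), and this bound is multiplied by the $d_{\cT}$ levels of the recursion path. Your extra bookkeeping about the non-cut work being subsumed, and your remark on the $\alpha_\ell$ versus $\alpha_b$ dependence of $t_{LSB}$, are refinements of details the paper glosses over in the same way.
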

\begin{proof}
At any level of recursion, the running time on $\decomp(H,I,\ell)$,
excluding the time spent in the next recursion level, is just the
running time of $\cA_{cut}(H,\alpha_{\ell},B_{H},\sigma)$ which is
$t_{LSB}(|V_{H}|,vol(B_{H}),\alpha_{\ell},\sigma)\le O(t_{LSB}(n,\frac{\Delta|D|}{\alpha_{b}},\alpha_{b},\alpha_{b}))$
because $|V_{H}|\le n$, $vol(B_{H})=O(\frac{\Delta|D|}{\alpha_{b}})$
\ref{prop:bound B_H}, $\alpha_{\ell}\le\alpha_{b}$, and $\sigma=\alpha_{b}/2$.
So the total running time is $O(d_{\cT}\times t_{LSB}(n,\frac{\Delta|D|}{\alpha_{b}},\alpha_{b},\alpha_{b}))$
is if there are $d_{\cT}$ levels.
\end{proof}
Now, we bound the depth $d_{\cT}$ of $\cT$. Recall $L\le1/\epsilon$
and $\bar{s}_{1}=2|D|/\alpha_{b}+1$.
\begin{lem}
\label{lem:bound depth local}Suppose that $\phi(G_{b})\ge\alpha_{b}$.
$\cT$ contains at most $L$ down edges, and $Lc_{size}\bar{s}_{1}^{\epsilon}$
right edges. That is, the depth of $\cT$ is $d_{\cT}=Lc_{size}\bar{s}_{1}^{\epsilon}=O((\frac{|D|}{\alpha_{b}})^{\epsilon}\cdot\frac{c_{size}}{\epsilon})$.\end{lem}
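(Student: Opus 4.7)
The plan is to bound the two types of edges in $\cT$ separately. For down edges, the argument is immediate: a down edge at $\decomp(H,I,\ell)$ increments $\ell$ by one, the initial call is at level $\ell = 1$, and Step 3 forces an immediate return once $\ell = L$. Hence $\cT$ contains at most $L-1 \le L$ down edges. For right edges, I will argue that at every fixed level $\ell$ there can be at most $c_{size}\,\bar{s}_1^{\epsilon}$ of them, after which either a down edge occurs, the recursion ends, or the algorithm fails; summing over the (at most $L$) levels reached gives the claimed bound $L c_{size} \bar{s}_1^{\epsilon}$.

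To bound the right edges at a single level $\ell$, I would first observe that along any maximal run of right edges the graph $I$ in the third argument of $\decomp$ is \emph{never changed} (only $H$ shrinks). So the invariant from \ref{prop:invariant local decomp}, namely $\opt(I,\alpha_{\ell}) < \bar{s}_{\ell}$, applies once for this entire run. Enumerate the right edges as $\decomp(H_{j},I,\ell) \to \decomp(H_{j}[V_{H_j}\setminus S_j],I,\ell)$ for $j=1,\dots,k$, with $H_{1}=I$ and $H_{j+1}=I[V_I \setminus P_{j}]$ where $P_j = S_1 \disjunion \cdots \disjunion S_j$. Step 5 of \ref{alg:local decomp} guarantees $\phi_{H_j}(S_j)<\alpha_{\ell}$ and, via the branch taken in Step 5.a, $vol(S_j)\geq \bar{s}_{\ell+1}/c_{size}$. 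Repeating the elementary cut-union computation in the proof of \ref{prop:small then sparse} (each $S_j$ is sparse in $H_j$, so the union $P_k$ is sparse in $I$ whenever its volume is at most $vol(I)/2$), we obtain $\phi_I(P_k) < \alpha_\ell$ so long as $vol_I(P_k) \le vol_I(V_I)/2$. By the invariant, any such $P_k$ satisfies $vol_I(P_k) < \bar{s}_{\ell}$, and since the $S_j$'s are disjoint with $vol(S_j)\geq \bar{s}_{\ell+1}/c_{size}$, this yields $k \cdot \bar{s}_{\ell+1}/c_{size} \le vol(P_k) < \bar{s}_{\ell}$, i.e.\ $k < c_{size}\,\bar{s}_\ell/\bar{s}_{\ell+1} = c_{size}\,\bar{s}_{1}^{\epsilon}$ by the definition of the $\bar{s}_\ell$-sequence.

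The main obstacle is the corner case $vol_I(P_j) > vol_I(V_I)/2$, where the union-sparsity argument above does not directly give $\phi_I(P_j) < \alpha_\ell$. I would handle it by the same volume-accounting scheme used in \ref{lem:less than half}: take the first index $j^*$ with $vol_I(P_{j^{*}}) > vol_I(V_I)/2$; for $j < j^{*}$ the previous paragraph already gives $vol(P_{j}) < \bar{s}_{\ell}$, so in particular $vol(P_{j^{*}-1}) < \bar{s}_{\ell}$ and $S_{j^*}$ satisfies $vol(S_{j^*}) \le vol(V_{H_{j^*}})/2$ by Step 5. Combining with the pruning-set bound \ref{thm:bound prune vol} (which applies because we assume $\phi(G_b)\geq \alpha_b$) and the endpoint-volume bound \ref{prop:bound B_H}, one checks that the current $H_{j^*+1}$ must trigger the failure condition in Step 2 of \ref{alg:local decomp}, which by \ref{lem:no fail Step 2} contradicts $\phi(G_b)\geq \alpha_b$. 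Thus the corner case never arises under the lemma's hypothesis, and the count $k < c_{size}\,\bar{s}_1^{\epsilon}$ is valid.

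Finally, putting the two bounds together: $\cT$ is a path with at most $L$ down edges interleaved with runs of at most $c_{size}\,\bar{s}_1^{\epsilon}$ right edges each, so $d_{\cT} \le L + L \cdot c_{size}\,\bar{s}_1^{\epsilon} = O(L\, c_{size}\,\bar{s}_1^{\epsilon})$. Substituting $L \le 1/\epsilon$ and $\bar{s}_1 = 2|D|/\alpha_b + 1$ yields $d_{\cT} = O\bigl((|D|/\alpha_b)^{\epsilon} \cdot c_{size}/\epsilon\bigr)$, as claimed.
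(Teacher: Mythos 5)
Your proposal is correct and follows essentially the same route as the paper: at most $L$ down edges from Step 3, and within each maximal run of right edges the disjoint cuts $S_j$ (each of volume at least $\bar{s}_{\ell+1}/c_{size}$ and conductance below $\alpha_\ell$ in the respective $H_j$) union to a sparse cut in $I=H_1$ whose volume would violate the invariant $\opt(I,\alpha_\ell)<\bar{s}_\ell$ if the run had length $c_{size}\bar{s}_1^{\epsilon}$. Your explicit treatment of the half-volume corner case is a detail the paper's proof silently elides (it cites the argument of \ref{prop:small then sparse}, which needs that condition), but it does not change the approach.
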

\begin{proof}
There are at most $L$ down edges in $\cT$ because Step 3 in \ref{alg:local decomp}
always terminates the recursion when $\ell=L$. Next, it suffices
to prove that there cannot be $k=c_{size}\bar{s}_{1}^{\epsilon}$
right edges between any down edges in $\cT$. Let $P=(H_{1},I,\ell),\dots,(H_{k},I,\ell)$
be a path of $\cT$ which maximally contains only right edges. Note
that $I=H_{1}$ because $(H_{1},I,\ell)$ must be either a root or
a deeper endpoint of a down edges. Suppose for contradiction that
$|P|\ge k$. 

For each $i$, let $S_{i}$ be the cut such that $H_{i+1}=H_{i}[V_{H_{i}}\setminus S_{i}]$
and $\phi_{H_{i}}(S_{i})<\alpha_{\ell}$. Since $\{S_{i}\}_{i\le k}$
are mutually disjoint. We conclude $\phi_{H_{1}}(\bigcup_{i=1}^{k}S_{i})<\alpha_{\ell}$
using the same argument as in \ref{prop:small then sparse}. However,
we also have that $|S_{i}|\ge\bar{s}_{\ell+1}/c_{size}$, for all
$i$, and hence $|\bigcup_{i=1}^{k}S_{i}|
\ge k\bar{s}_{\ell+1}/c_{size}\ge\bar{s}_{1}^{\epsilon}\bar{s}_{\ell+1}=\bar{s}_{\ell}$. So $\bigcup_{i=1}^{k}S_{i}$ contradicts the invariant for $\decomp(H_{1},I,\ell)$,
where $I=H_{1}$, which says $\opt(H_{1},\alpha_{\ell})<\bar{s}_{\ell}$.
Note that the invariant must hold by \ref{prop:invariant local decomp}.\end{proof}
\begin{cor}
Suppose that $\phi(G_{b})\ge\alpha_{b}$. $\decomp(G,G,1)$ runs in
time $\overline{t}=O((\frac{|D|}{\alpha_{b}})^{\epsilon}\cdot\frac{c_{size}}{\epsilon}\cdot t_{LSB}(n,\frac{\Delta|D|}{\alpha_{b}},\alpha_{b},\alpha_{b}))$.\label{cor:local decomp time}\end{cor}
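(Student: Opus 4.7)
The plan is to simply combine the two preceding lemmas: Lemma~\ref{lem:depth implies time local} (which converts recursion depth into total running time) and Lemma~\ref{lem:bound depth local} (which bounds the recursion depth). Both assume $\phi(G_b) \ge \alpha_b$, which is exactly the hypothesis of the corollary, so no additional conditioning is needed. Since the recursion tree $\cT$ of $\decomp(G,G,1)$ is a path, its depth equals its total number of edges, and this is what both lemmas reason about.

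Concretely, I would invoke Lemma~\ref{lem:bound depth local} to get $d_\cT \le L c_{size} \bar{s}_1^{\epsilon}$, and then substitute the parameter choices $L \le 1/\epsilon$ and $\bar{s}_1 = 2|D|/\alpha_b + 1 = \Theta(|D|/\alpha_b)$ (using $|D| \ge \alpha_b$, which is implicit since $|D| \ge 1$ and $\alpha_b \le 1$). This yields
\[
d_\cT = O\!\left(\left(\tfrac{|D|}{\alpha_b}\right)^{\epsilon} \cdot \tfrac{c_{size}}{\epsilon}\right).
\]
Plugging this bound on $d_\cT$ into Lemma~\ref{lem:depth implies time local} gives the claimed time bound $\overline{t} = O\!\left(\left(\tfrac{|D|}{\alpha_b}\right)^{\epsilon} \cdot \tfrac{c_{size}}{\epsilon} \cdot t_{LSB}\!\left(n, \tfrac{\Delta |D|}{\alpha_b}, \alpha_b, \alpha_b\right)\right)$.

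There is essentially no obstacle here; this is a bookkeeping corollary. The only mild subtlety is ensuring that the $+1$ in $\bar{s}_1 = 2|D|/\alpha_b + 1$ is absorbed into the $\Theta$, which is fine as long as $|D|/\alpha_b = \Omega(1)$ (otherwise the algorithm has nothing nontrivial to prune and the bound is trivial anyway). I would close the proof in one or two sentences chaining these two lemmas.
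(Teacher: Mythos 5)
Your proposal is correct and matches the paper's proof exactly: it chains \ref{lem:bound depth local} to bound $d_{\cT}=O((\frac{|D|}{\alpha_{b}})^{\epsilon}\cdot\frac{c_{size}}{\epsilon})$ and then substitutes into \ref{lem:depth implies time local}. Your bookkeeping of the $c_{size}$ factor and the absorption of the $+1$ in $\bar{s}_{1}$ is, if anything, slightly more careful than the paper's one-line version.
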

\begin{proof}
We have $d_{\cT}=O(\frac{|D|^{\epsilon}}{\alpha_{b}^{1+\epsilon}\epsilon})$
by \ref{lem:bound depth local}. By \ref{lem:depth implies time local},
we have that $\decomp(G,G,1)$ runs in time $O((\frac{|D|}{\alpha_{b}})^{\epsilon}\cdot\frac{c_{size}}{\epsilon}\cdot t_{LSB}(n,\frac{\Delta|D|}{\alpha_{b}},\alpha_{b},\alpha_{b}))$. 
\end{proof}
Now, we can conclude the main theorem. 
\begin{proof}
[Proof of \ref{thm:reduc to LBS full}]Suppose that $\phi(G_{b})\ge\alpha_{b}$.
By \ref{thm:bound prune vol}, we have that the pruning set $P$ has
small volume $vol_{G}(V)\le2|D|/\alpha_{b}$. By \ref{lem:expansion local decomp},
the pruned graph $H=G[V-P]$ has high conductance: $\phi(C)\ge\alpha=\frac{\alpha_{b}}{5c_{con}(\alpha_{b}/2)^{1/\epsilon-1}}$.
Finally, \ref{cor:local decomp time} $\decomp(G,G,1)$ runs in time
$\overline{t}=O((\frac{|D|}{\alpha_{b}})^{\epsilon}\cdot\frac{c_{size}(\alpha_{b}/2)}{\epsilon}\cdot t_{LSB}(n,\frac{\Delta|D|}{\alpha_{b}},\alpha_{b},\alpha_{b}))$. 

When $\decomp(G,G,1)$ reports failure, we claim that $\phi(G_{b})<\alpha_{b}$.
Indeed, given that $|D|=O(\alpha_{b}^{2}m/\Delta)$, we have that
$\decomp(G,G,1)$ does not fails in Step 2 by \ref{lem:no fail Step 2}.
So $\decomp(G,G,1)$ can return FAIL only when its running time exceeds
$\overline{t}$ which can only happens when $\phi(G_{b})<\alpha_{b}$.\end{proof}

\section{Omitted Proofs}

\subsection{Proof of \ref{thm:simplifed unit flow} (Unit Flow)\label{sec:proof unit flow}}

The proof is basically by adjusting and simplifying parameters of
the following statement from \cite{HenzingerRW17}. \footnote{We note that the original statement of \cite{HenzingerRW17} guarantees
some additional properties. We do not state them here as they are
not needed. Moreover, the result of the algorithm is a flow in the
graph $G(U)$ which is the graph $G$ where the \emph{capacity }of
each edge is $U$, for a given parameter $U$. We adapt the original
statement to use the notion of \emph{congestion }instead. It is easy
to see that a flow in $G(U)$ is a flow in $G$ with congestion $U$
and vice versa.}
\begin{lem}
[Theorem 3.1 and Lemma 3.1 in \cite{HenzingerRW17}]\label{thm:orig unit flow}There
exists an algorithm \emph{called Unit Flow }which takes the followings
as input: a graph $G=(V,E)$ with $m$ edges (with parallel edges
but no self loop), positive integers $h'$, $F'$, and $U$, a source
function $\Delta$ such that $\Delta(v)\le F'\cdot\deg(v)$ for all
$v\in V$, and a sink function $T$ where $T(v)=\deg(v)$ for all
$v\in V$. In time $O(F'h'\total{\Delta})$, the algorithm returns
a source-feasible preflow $f$ with congestion at most $U$. Moreover,
one of the followings holds.
\begin{enumerate}
\item $\total{ex_{f}}=0$ i.e. $f$ is a flow.
\item $\total{ex_{f}}\le\total{\Delta}-2m$. More specifically, for all
$v\in V$, $\deg(v)$ units of supply is absorbed at $v$. 
\item A set $S\subseteq V$ is returned, where $S$ is such that $\forall v\in S$,
$\deg(v)\le f(v)\le F'\cdot\deg(v)$ and $\forall v\notin S$, $f(v)\le\deg(v)$.
Furthermore, if $h\ge\log m$, then $\phi_{G}(S)\le\frac{20\log2m}{h'}+\frac{F'}{U}$.
\end{enumerate}
\end{lem}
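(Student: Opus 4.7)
The plan is to establish this as a bounded-height push–relabel algorithm and then read off the three possible termination regimes from the state of the algorithm. Initialize $f(u,v)=0$ for every edge and a label function $\ell:V\to\{0,1,\dots,h'\}$ with $\ell\equiv0$. Call a node $v$ \emph{active} if $f(v)>T(v)=\deg(v)$, and call a directed edge $(u,v)$ \emph{admissible} if $\ell(u)=\ell(v)+1$ and the residual capacity $U-f(u,v)>0$. While some active node $v$ has $\ell(v)<h'$, pick such a $v$ and either push along an admissible out-edge the minimum of its excess and its residual capacity, or, if no admissible out-edge exists, raise $\ell(v)$ by $1$. I would maintain three invariants throughout: (i) source-feasibility $\sum_u f(v,u)\le\Delta(v)$ (since pushes only move excess, and a node becomes active only after its $\Delta(v)$ supply has entered it); (ii) congestion $\le U$ (enforced by the residual cap); and (iii) label validity: whenever $f(u,v)<U$, then $\ell(u)\le\ell(v)+1$, which follows because relabels are only triggered when every residual out-edge $(u,v)$ already has $\ell(u)\le\ell(v)$.

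For the running time, I would use the standard push–relabel charging argument, tightened for the unit-flow setting. Relabels contribute $O(h'|V|)$ work, which is dominated by $O(F'h'|\Delta|)$ since the algorithm only touches nodes that have ever been active and each such node has received at least one unit of supply. Saturating pushes across an edge $(u,v)$ can happen at most $O(h')$ times (one per label pair $(\ell(u),\ell(v))$), giving $O(mh')$ in total. Non-saturating pushes are charged by a potential argument: each such push discharges an active node completely, and since the total supply ever delivered is $|\Delta|\le F'\cdot 2m$ while each unit of supply can trigger at most $O(h')$ non-saturating pushes as it ascends and descends the label range, the non-saturating push count is $O(F'h'|\Delta|)$.

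Now for the case analysis at termination. If no active node remains, $f$ is a sink-feasible flow and we are in case~1. If every node has $f(v)=\deg(v)=T(v)$, the total absorbed supply equals $2m$ and hence $|ex_f|=|\Delta|-2m$, which is case~2. Otherwise, the remaining active nodes are exactly the ones with $\ell(v)=h'$; let $A$ be this set. The key facts are that $\deg(v)\le f(v)$ for $v\in A$ (active) and $f(v)\le\deg(v)$ for $v\notin A$ (not active), and that $f(v)\le F'\deg(v)$ for $v\in A$ because only $\Delta(v)\le F'\deg(v)$ units are injected at $v$ and, by invariant (iii) together with the fact that a node only absorbs more than it emits while it is not active, the total supply ever resting at $v$ never exceeds $F'\deg(v)$; the sets of the form $S_k=\{v:\ell(v)\ge k\}$ for $k\ge 1$ all contain $A$ and satisfy these degree bounds.

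The main obstacle, and the only nontrivial piece, is extracting a low-conductance cut from the level sets. I would use the standard doubling (geometric) argument on the sequence $S_1\supseteq S_2\supseteq\dots\supseteq S_{h'}$. Invariant (iii) implies that any edge $(u,v)$ crossing $\partial(S_k)$ with $u\in S_k, v\notin S_k$ either has $f(u,v)\ge U-1$ (saturated) or $\ell(u)\le\ell(v)+1$, and by summing flow conservation over each $S_k$ one obtains $\sum_{k=1}^{h'}\delta(S_k)\le O(|\Delta|+U\cdot m)$-type inequalities. Combined with a dyadic grouping of levels (which contributes the $\log 2m$ factor because there are only $O(\log m)$ distinct volume scales for the nested sets $S_k$), an averaging argument then produces some $k^\ast$ with
\[
\phi_G(S_{k^\ast})\;\le\;\frac{20\log 2m}{h'}+\frac{F'}{U},
\]
where the second term accounts for the contribution of saturated edges (each carrying flow $U$, so the excess they can divert is bounded in terms of $F'/U$ against the volume of $S_{k^\ast}$). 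Outputting $S=S_{k^\ast}$ gives case~3. The delicate part is setting up this averaging so that the level chosen simultaneously contains all active nodes and achieves the claimed conductance bound; I would follow the HRW template, partitioning the $h'$ levels into $O(\log m)$ volume-doubling blocks and picking, inside the block whose excess dominates, the level that minimizes $\delta(S_k)/\operatorname{vol}(S_k)$.
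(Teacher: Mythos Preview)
The paper does not prove this lemma; it is quoted from Henzinger--Rao--Wang and used as a black box (Appendix~B.1 only derives the simplified Lemma~\ref{thm:simplifed unit flow} from it by substituting parameters). So there is no proof in the paper to compare against, and your sketch is really an attempt at the HRW argument itself.

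Your skeleton---bounded-height push--relabel, three termination regimes, level-set sweep for the cut---is correct, but you omit the one non-standard ingredient that makes the HRW analysis go through: the push operation is modified so that it never raises $f(u)$ at the destination above $F'\deg(u)$; the amount pushed is $\min\{ex(v),\,r_f(v,u),\,F'\deg(u)-f(u)\}$. Without this cap your justification of $f(v)\le F'\deg(v)$ in case~3 is wrong: in vanilla push--relabel nothing stops a low-label node from accumulating arbitrarily much supply from its neighbors, and ``a node only absorbs more than it emits while it is not active'' does not bound $f(v)$. The cap is equally essential for the $O(F'h'\,|\Delta(\cdot)|)$ running time: the standard non-saturating-push potential gives $O(n^2)$-type bounds, whereas HRW charge each unit of supply to at most $O(h')$ pushes precisely because the cap prevents any node from ever holding more than $F'\deg(v)$ units. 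Your level-set conductance sketch is in the right spirit, but the clean way the $F'/U$ term arises is also via the cap: the net flow leaving any level set $S_k$ is at most $\sum_{v\in S_k}(f(v)-\deg(v))\le (F'-1)\operatorname{vol}(S_k)$, so saturated crossing edges (each carrying $U$) number at most $F'\operatorname{vol}(S_k)/U$; the $20\log(2m)/h'$ term then comes from averaging the unsaturated crossings over $\Theta(h'/\log m)$ levels inside a volume-doubling block.
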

By restricting to only when $\total{\Delta}\le2m$ and $h'\ge\log m$,
we can simplify \ref{thm:orig unit flow} to \ref{thm:simplifed unit flow-1}
below. Note that, when $T(v)=\deg(v)$, we have $ex_{f}(v)=\max\{f(v)-T(v),0\}=ex_{f}(v)=\max\{f(v)-\deg(v),0\}$.
\begin{lem}
\label{thm:simplifed unit flow-1}There exists an algorithm \emph{called
Unit Flow }which takes the followings as input: a graph $G=(V,E)$
with $m$ edges (with parallel edges but no self loop), positive integers
$h'\geq\log m$, $F'$, and $U$, a source function $\Delta$ such
that $\Delta(v)\le F'\cdot\deg(v)$ for all $v\in V$ and $\total{\Delta}\le2m$,
and a sink function $T$ where $T(v)=\deg(v)$ for all $v\in V$.
In time $O(F'h'\total{\Delta})$, the algorithm returns a source-feasible
preflow $f$ with congestion at most $U$. Moreover, one of the followings
holds.
\begin{enumerate}
\item $\total{ex_{f}}=0$ i.e. $f$ is a flow.
\item A cut $S$ is returned where $\phi_{G}(S)\le\frac{20\log(2m)}{h'}+\frac{F'}{U}$.
Moreover, $\forall v\in S$: $ex_{f}(v)\le(F'-1)\deg(v)$ and $\forall v\notin S$:
$ex_{f}(v)=0$.
\end{enumerate}
\end{lem}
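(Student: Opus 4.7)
}
The plan is to invoke \ref{thm:orig unit flow} as a black box with exactly the same inputs $(G, h', F', U, \Delta, T)$, and then show that the two additional hypotheses of \ref{thm:simplifed unit flow-1} (namely $\total{\Delta}\le 2m$ and $h'\ge\log m$) collapse the three-case conclusion of \ref{thm:orig unit flow} into the two-case conclusion we want. The preflow $f$ and its source-feasibility, integrality, and congestion bound $cong(f)\le U$, as well as the running time $O(F'h'\total{\Delta})$, are inherited verbatim.

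First, I would handle Case~2 of \ref{thm:orig unit flow}: the guarantee there is $\total{ex_f}\le \total{\Delta}-2m$. Under the new hypothesis $\total{\Delta}\le 2m$, this forces $\total{ex_f}\le 0$, and since $ex_f(v)\ge 0$ by definition, in fact $\total{ex_f}=0$. Thus Case~2 merges into Case~1 of the new statement: $f$ is a (feasible) flow. This is the easy reduction and is essentially the reason the simplified hypothesis is chosen.

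Next, I would translate Case~3 of \ref{thm:orig unit flow} into Case~2 of the new statement. Since $h'\ge \log m$, the quoted conductance bound $\phi_G(S)\le \frac{20\log 2m}{h'}+\frac{F'}{U}$ applies directly. It remains only to convert the bounds on $f(v)$ into bounds on $ex_f(v)$ using the hypothesis $T(v)=\deg(v)$: for $v\in S$, $f(v)\le F'\deg(v)$ gives
\[
ex_f(v)=\max\{f(v)-T(v),0\}\le F'\deg(v)-\deg(v)=(F'-1)\deg(v),
\]
and for $v\notin S$, $f(v)\le\deg(v)=T(v)$ gives $ex_f(v)=0$. These are precisely the claims required in Case~2 of \ref{thm:simplifed unit flow-1}.

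Since \ref{thm:orig unit flow} is being used as a black box, there is no genuine technical obstacle in this proof; the only thing to be careful about is that the congestion bound and running time are preserved under the reinterpretation, which they are because we pass the same $U$, $h'$, $F'$ and $\Delta$ to the original algorithm and only post-process its output. The proof is therefore short: invoke \ref{thm:orig unit flow}; observe that its Case~2 cannot occur nontrivially because $\total{\Delta}\le 2m$ (so it is subsumed by Case~1); rewrite its Case~3 using $T(v)=\deg(v)$ to obtain the stated excess bounds; and report the final preflow $f$ together with either the flow certificate or the cut $S$.
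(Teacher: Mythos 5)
Your proposal is correct and matches the paper's own (very terse) derivation: the paper likewise obtains this lemma by invoking the original Unit Flow theorem verbatim, noting that $\total{\Delta}\le 2m$ makes its Case~2 collapse into Case~1 (since $\total{ex_f}\le\total{\Delta}-2m\le 0$ forces $\total{ex_f}=0$), and rewriting its Case~3 via $ex_f(v)=\max\{f(v)-\deg(v),0\}$ because $T(v)=\deg(v)$. No gaps.
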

To get \ref{thm:simplifed unit flow}, set $F=F'$ , $h=\frac{h'}{41\log(2m)}$
and $U=2hF$. So $\frac{20\log(2m)}{h'}+\frac{F'}{U}<\frac{1}{2h}+\frac{1}{2h}=\frac{1}{h}$.
Note that the condition $h'\geq\log m$ becomes $h\geq1.$ Also note
that in the input of \ref{thm:Extended Unit Flow} (thus \ref{thm:simplifed unit flow}),
$\total{\Delta}\le|T(\cdot)|\leq2m$, thus the condition that $\total{\Delta}\le2m$
in \ref{thm:simplifed unit flow-1} can be dropped. This completes
the proof of \ref{thm:simplifed unit flow}.

\end{document}